\newtheorem{theorem}{Theorem}
\newtheorem{corollary}[theorem]{Corollary}
\newtheorem{lemma}[theorem]{Lemma}
\newtheorem{proposition}[theorem]{Proposition}
\newtheorem{definition}[theorem]{Definition}
\newtheorem*{remark}{Remark}
\numberwithin{equation}{section}
\numberwithin{theorem}{section}
\newcommand{\ul}[1]{\underline{\smash{#1}}}                    
\newcommand{\mf}[1]{\mathfrak{#1}}                             
\newcommand{\mc}[1]{\mathcal{#1}}                              
\newcommand{\R}{\mathbb{R}}                                    
\newcommand{\Sph}{\mathbb{S}}                                  
\newcommand{\grad}{\nabla^\sharp}                              
\newcommand{\nasla}{\slashed{\nabla}}                          
\newcommand{\ggm}{\mathring{g}}                                
\newcommand{\ggs}{\bar{g}}                                     
\newcommand{\ggr}{E}                                           
\newcommand{\gm}{\mathring{\mf{g}}}                            
\newcommand{\gs}{\bar{\mf{g}}}                                 
\newcommand{\gb}{\hat{\mf{g}}}                                 
\newcommand{\ga}{\tilde{\gamma}}                               
\newcommand{\fbd}{\eta}                                        
\newcommand{\fb}{\hat{f}}                                      
\newcommand{\fbdb}{\hat{\eta}}                                 
\newcommand{\Nb}{\hat{N}}                                      
\newcommand{\Vb}{\hat{V}}                                      
\newcommand{\Eb}{\hat{E}}                                      
\newcommand{\Sb}{\hat{S}}                                      
\newcommand{\wb}{\hat{w}}                                      
\newcommand{\hb}{\hat{h}}                                      
\newcommand{\dftb}{\hat{\pi}}                                  
\newcommand{\urho}{\ul{\rho}}                                  
\newcommand{\ut}{\ul{t}}                                       
\newcommand{\ux}{\ul{x}}                                       
\newcommand{\uggm}{\ul{\ggm}}                                  
\newcommand{\uggs}{\ul{\ggs}}                                  
\newcommand{\uO}{\ul{\mc{O}}}                                  
\newcommand{\uaff}{\ul{\sigma}}                                
\newcommand{\isch}{\mathring{P}}                               
\begin{document}

\title[Unique continuation]{Unique continuation from infinity in\\
asymptotically Anti-de Sitter spacetimes II:\\
Non-Static Boundaries}

\author{Gustav Holzegel}
\address{Department of Mathematics\\
South Kensington Campus\\
Imperial College\\
London SW7 2AZ\\ United Kingdom}
\email{gholzege@imperial.ac.uk}

\author{Arick Shao}
\address{School of Mathematical Sciences\\
Mile End Campus\\
Queen Mary University of London\\
London E1 4NS\\ United Kingdom}
\email{a.shao@qmul.ac.uk}

\begin{abstract}
We generalize our unique continuation results recently established for a class of linear and nonlinear wave equations $\Box_g \phi + \sigma \phi = \mc{G} ( \phi, \partial \phi )$ on asymptotically anti-de Sitter (aAdS) spacetimes to aAdS spacetimes admitting non-static boundary metrics.
The new Carleman estimates established in this setting constitute an essential ingredient in proving unique continuation results for the full nonlinear Einstein equations, which will be addressed in forthcoming papers.
Key to the proof is a new geometrically adapted construction of foliations of pseudoconvex hypersurfaces near the conformal boundary. 
\end{abstract}

\maketitle

\section{Introduction} \label{sec.intro}

In \cite{hol_shao:uc_ads}, we initiated the study of unique continuation properties of $(n+1)$-dimensional asymptotically anti-de Sitter (aAdS) spacetimes $\left(\mathcal{M},g\right)$ by studying a class of tensorial linear and non-linear Klein-Gordon equations,
\begin{align}
\label{wea} \Box_g \phi + \sigma \phi = \mathcal{G}\left(\phi, \partial \phi\right) \text{,}
\end{align}
in a portion of spacetime near the conformal boundary $\mc{I}$, with $\sigma \in \mathbb{R}$ and suitable assumptions on $\mathcal{G}\left(\phi, \partial\phi\right)$.

The spacetimes $\left(\mathcal{M},g\right)$ considered in equation (\ref{wea}) encompassed a large class of Lorentzian metrics, not necessarily Einstein-vacuum, including in particular non-stationary spacetimes.
The main restriction in \cite{hol_shao:uc_ads} was the assumption that the $n$-dimensional Lorentzian metric $\gm$ induced by $g$ on the boundary $\mathcal{I}$ (after conformal transformation) was \emph{static}.
In this paper, we will remove this assumption and extend the unique continuation results of \cite{hol_shao:uc_ads} to a class of metrics which are \emph{not} required to be static on the boundary.

Our main motivation originates from general relativity, where spacetimes with non-static boundary metrics appear naturally by solving an initial boundary value problem for the vacuum Einstein equations\footnote{For convenience, here and in the remainder of the paper, we normalize the Einstein-vacuum equations $\operatorname{Ric}_g - \frac{1}{2} \operatorname{Sc}_g g + \Lambda g = 0$ by setting the cosmological constant to be $\Lambda = - \frac{ n (n - 1) }{2}$.}
\begin{align}
\label{ee} \operatorname{Ric}_g = - n g \text{.}
\end{align}
Indeed, in dimension $3+1$, Friedrich \cite{Friedrich} constructed a large class of aAdS spacetimes satisfying (\ref{ee}) for which the conformal class of the $n$-dimensional metric on the boundary can be freely prescribed a priori. A particularly interesting case arises from so-called dissipative boundary conditions. Here the resulting spacetime will generally not only possess a non-static boundary metric but also exhibit a non-trivial flux of gravitational radiation through its boundary. See also \cite{adsdissipative}.

In view of the above, extending the Carleman estimates of \cite{hol_shao:uc_ads} to spacetimes with general (dynamical) boundary metrics, such as Friedrich's spacetimes, ensures that the class of metrics satisfying (\ref{ee}) for which a unique continuation
property holds agrees with the class of metrics that arises naturally from the forward initial boundary value problem for (\ref{ee}). This is a prerequisite for proving unique continuation results for the non-linear Einstein equations in a sensible class.  
We finally remark that the class of spacetimes considered here is also natural in the context of the AdS/CFT correspondence \cite{Malda}.

\subsection{The Class of aAdS Spacetimes}

We first give an informal definition of the class of aAdS spacetimes to be considered. Unlike in \cite{hol_shao:uc_ads}, we will in this paper exhibit these spacetimes in Fefferman--Graham (FG) coordinate systems near the boundary.
Such coordinates are well-adapted to the geometric problem at hand and simplify many of the computations.
We note that this does not constitute any loss of generality; in Appendix \ref{sec.fg}, we demonstrate how to transform a metric in the coordinates used in \cite{hol_shao:uc_ads} to a metric in FG form.
  
Specifically, we consider manifolds $\mc{M} = ( 0, \rho_\ast ) \times ( T_-, T_+ ) \times \mc{S}$, with $\mc{S}$ an $(n-1)$-dimensional Riemannian manifold, and with the local coordinates of $\mc{S}$ denoted collectively by $x$.
We will equip $\mc{M}$ with metrics of the form
\begin{align}
g = \frac{d\rho^2 + \mf{g} ( \rho ) }{\rho^2} \text{,}
\end{align}
where $\mf{g}$ is a family of Lorentzian metrics on the level sets of $\rho$ with the expansion
\begin{align} \label{expand}
\mf{g} ( \rho, t, x ) = \gm ( t, x ) + \rho^2 \gs (t, x ) + \rho^3 \gb ( t, x ) + { \scriptstyle \mathcal{O}} ( \rho^3 ) \text{.}
\end{align}
Here $\gm$, $\gs$, and $\gb$ are tensors defined on the level sets of $\rho$ whose components are independent of the particular level set chosen.\footnote{Theorems \ref{theo:old} and \ref{theo:mti} below only use (\ref{expand}) with the last two terms replaced by $\mathcal{O} ( \rho^3 )$.}

As is well-known \cite{MR837196, grah_witt:conf_ads}, if $g$ satisfies \eqref{ee} and $n\geq 3$, then $- \gs$ coincides with the Schouten tensor $\isch$ of $\gm$; see also Appendix \ref{sec.vacuum}.
Furthermore, in dimension $n > 3$, the tensor $\gb$ is also locally determined by $\gm$, while for $n=3$, $\gb$ is the ``stress-energy tensor" on the boundary, which is not determined by $\gm$ but by the full spacetime Weyl tensor through the formula $\gb_{ab}=\frac{1}{3}\partial_\rho \left( \rho^2 W_{a \rho b \rho}\right)|_{\rho=0}$, see \cite{Warnick, Anderson}.
For example, for Schwarzschild-AdS spacetimes in $n=3$, one computes
\[
\gm = -dt^2 + \ga \text{,} \qquad \gs = -\frac{1}{2} ( dt^2 + \ga ) \text{,} \qquad \gb = \frac{2}{3} M ( 2 dt^2 + \ga ) \text{,}
\]
where $\ga$ is the round metric on the unit sphere.
For pure AdS spacetime, the above identities hold, but with $\gb \equiv 0$. 

The metrics of interest in this paper are precisely those allowing for $\mc{L}_{\partial_t} \gm \neq 0$, while in \cite{hol_shao:uc_ads}, we assumed $\mc{L}_{\partial_t} \gm = 0$.
An important special case of this is Einstein-vacuum metrics that are small perturbations of stationary aAdS spacetimes; in this case, $\mc{L}_{\partial_t} \gm$ is expected to be small in a suitable norm; see \cite{Enciso} for examples. Another interesting (explicit) example, which has received considerable attention in the high energy physics community, is given by the Robinson-Trautman-AdS metrics \cite{Bakas, Freitas}.

\subsection{Previous results}

We turn to the unique continuation results for (\ref{wea}) on segments $\left(\mathcal{M},g\right)$ defined above.
We first recall from  \cite{hol_shao:uc_ads} the quantities
\begin{align}
\label{eq.beta} \beta_\pm = \frac{n}{2} \pm \sqrt{\frac{n^2}{4}-\sigma}
\end{align}
associated with the mass $\sigma$ in (\ref{wea}).
Precise assumptions on the right hand side $\mathcal{G}$ in (\ref{wea}) will be made below, e.g., in \eqref{Gcond} and \eqref{Gcond2}.\footnote{From the point of view of decay near the boundary, these assumptions will allow us to treat $\mathcal{G}$ as a perturbation of the linear Klein-Gordon operator on the left hand side.}

We next define the \emph{local unique continuation property} of the spacetime $( \mc{M}, g )$ that can be established for solutions to (\ref{wea}).
While the full definition is slightly technical, see Definition \ref{def.ucp}, it essentially states that $( \mc{M}, g )$ satisfies the \emph{local unique continuation property} if any classical solution $\phi$ of \eqref{wea} which satisfies\footnote{In the case that the cross-sections $\mc{S}$ are not compact, we also assume $\phi$ to have compact support on each copy of $\mc{S}$; see Definition \ref{def.ucp} for details.}
\begin{equation} \label{condition}
\left\{
 \begin{array}{rl}
      |\rho^{-\beta_+} \phi| + |\nabla_{t,\rho,x} (\rho^{-\beta_++1} \phi)| \rightarrow 0 & \text{if } \sigma \leq \frac{n^2-1}{4},\\
       |\rho^{-(n+1)/2} \phi| + |\nabla_{t,\rho,x} (\rho^{-(n-1)/2} \phi)| \rightarrow 0 & \text{if } \sigma > \frac{n^2-1}{4}
    \end{array} \right.
\end{equation}
on the conformal boundary $\mathcal{I}=\{\rho=0\} \times \left(T_-,T_+\right) \times \mathcal{S}$ of $( \mc{M},g )$, vanishes in an open neighborhood of $\mc{I}$.\footnote{The proper Definition \ref{def.ucp} stipulates slightly weaker vanishing conditions in $L^2$. Furthermore, the vanishing condition for $\nabla_x \phi$ in \eqref{condition} is in fact not necessary and is replaced in Definition \ref{def.ucp} by a weaker finite integral condition (which is a consequence of finite energy).}

An important observation from \cite{hol_shao:uc_ads}, which demonstrates that the vanishing conditions (\ref{condition}) are somewhat natural, is that if $\frac{n^2}{4}-1 < \sigma < \frac{n^2}{4}$ in \eqref{wea}, then any classical solution $\phi$ of \eqref{wea} in $( \mc{M}, g )$ which satisfies \emph{both} Dirichlet \emph{and} Neumann boundary conditions at $\mc{I}$ also satisfies \eqref{condition}.\footnote{Recall that only for the aforementioned $\sigma$-range does one have the freedom of specifying boundary conditions for the forward boundary initial value problem. See the discussion in  \cite{hol_shao:uc_ads}.}

The key uniqueness theorem of \cite{hol_shao:uc_ads} can now be rephrased in the Fefferman--Graham coordinates introduced above as:

\begin{theorem}[Theorem 1.3 and Theorem 4.2 of \cite{hol_shao:uc_ads}] \label{theo:old}
Let
\[
( \mc{M} := ( 0, \rho_\ast ) \times ( 0, T \pi ) \times \mc{S}, g )
\]
be an aAdS spacetime segment whose boundary data $\gm$ and $\gs$ satisfy that $\gm$ is static on $\mc{I}$ (i.e., $\mc{L}_{\partial_t} \gm = 0$), as well as the following pseudoconvexity condition:
\begin{equation}
\label{psc} -\bar{\mathfrak{g}} - \frac{1}{T^2} dt^2 - \zeta \mathring{\mathfrak{g}} \text{ positive-definite on $\mc{I}$ for some bounded $\zeta \in C^\infty ( \mc{I} )$.}
\end{equation}
Then, the local unique continuation property holds on $( \mc{M}, g )$ for \eqref{wea}, provided $\mathcal{G} (\phi, \partial \phi )$ satisfies the estimate
\begin{align}
\label{Gcond_old} | \mathcal{G} ( \phi, \partial \phi ) |^2 \leq C \rho^p ( \rho^4|\nabla_{t,\rho,x} \phi|^2 + \rho^{2p} |\phi|^2 )
\end{align}
in $(\mathcal{M}, g )$ for some $p > 0$ and some constant $C > 0$.
\end{theorem}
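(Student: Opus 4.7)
The plan is to reduce this theorem to its statement in the coordinates of \cite{hol_shao:uc_ads} via the coordinate change of Appendix \ref{sec.fg}, which identifies the class of aAdS metrics considered there with those exhibited here in Fefferman--Graham form. Under this change, the boundary metric $\gm$ and the subleading tensor $\gs$ are intrinsic objects on $\mc{I}$ (the conformal boundary metric and the Schouten-type correction from Appendix \ref{sec.vacuum}), so both the static hypothesis $\mc{L}_{\partial_t}\gm = 0$ and the pseudoconvexity condition \eqref{psc} can be verified directly from the FG data. The nonlinearity bound \eqref{Gcond_old} is likewise geometric, so the hypotheses translate verbatim to those of Theorem 4.2 of \cite{hol_shao:uc_ads}.

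To orient the reader, the heart of the argument in \cite{hol_shao:uc_ads} is a Carleman estimate of the schematic shape
\begin{equation*}
\lambda \int_{\mc{M}} e^{-2\lambda f} \rho^{-a} \bigl( \rho^2 |\nabla_{t,\rho,x}(\rho^{-\beta_+ + 1}\phi)|^2 + |\rho^{-\beta_+}\phi|^2 \bigr) \, d\mathrm{vol}_g \leq C \int_{\mc{M}} e^{-2\lambda f} \rho^{-b} |\Box_g \phi + \sigma \phi|^2 \, d\mathrm{vol}_g \text{,}
\end{equation*}
valid for $\lambda$ sufficiently large and for $\phi$ supported in a thin slab around $\mc{I}$ (with compact $\mc{S}$-support when $\mc{S}$ is non-compact). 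The weight $f$ is constructed, exploiting staticity, as a product of a function of $\rho$ with a concave function of $t$ that degenerates at $t = 0, T\pi$; the positivity of the associated principal commutator bracket near $\mc{I}$ then reduces, at leading order in $\rho$, to the algebraic pseudoconvexity condition \eqref{psc}. Given this estimate, the theorem follows by applying it to a standard cutoff of $\phi$ and sending $\lambda \to \infty$: the decay \eqref{condition} kills the boundary traces at $\{\rho = 0\}$ produced by integration by parts; the nonlinearity is absorbed by moving \eqref{Gcond_old} to the left-hand side, where the $\rho^p$ gain with $p > 0$ outweighs the loss coming from the $\lambda$ prefactor; and the cutoff errors, supported where $f$ is bounded away from its infimum, are exponentially suppressed by the weight.

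The main obstacle in carrying out this plan is the construction of the Carleman weight and the verification that the principal commutator bracket is positive-definite near the timelike boundary $\mc{I}$. In the static setting treated here, the invariance under $\partial_t$ makes a separated ansatz for $f$ viable and reduces the bulk positivity, modulo lower-order corrections in $\rho$, to the condition \eqref{psc} on $\mc{I}$; one also has to be careful that the conjugated principal symbol and its subprincipal terms are compatible with the critical weights $\rho^{-\beta_+}$ dictated by the asymptotic behaviour of solutions. In the absence of staticity, which is the scope of the remainder of the present paper, this separated construction is no longer available, and the needed weight must instead be built from the new geometrically adapted foliation of pseudoconvex hypersurfaces announced in the abstract.
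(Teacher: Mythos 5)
Your proposal is correct and matches the paper's treatment: Theorem \ref{theo:old} is presented there as a rephrasing, in Fefferman--Graham coordinates, of Theorems 1.3 and 4.2 of \cite{hol_shao:uc_ads}, with the dictionary between gauges supplied by Appendix \ref{sec.fg}, and your schematic of the Carleman-estimate mechanism in \cite{hol_shao:uc_ads} accurately reflects that argument. Two small points are worth flagging. First, $\gs$ is not an intrinsic object on $\mc{I}$: it transforms like a Schouten tensor under conformal rescaling of $\gm$, and under the specific gauge change of Appendix \ref{sec.fg} it picks up the shift $\gs_{ab} \mapsto \gs_{ab} + \tfrac{1}{2}\uggs_{\rho\rho}\uggm_{ab}$ (see \eqref{eq.fg_final}); however, this shift is proportional to $\gm$, hence absorbed by the free function $\zeta$ in \eqref{psc}, so your conclusion that the pseudoconvexity hypothesis translates verbatim does hold. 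Second, and perhaps more in the spirit of the present paper, Theorem \ref{theo:old} can also be obtained self-containedly as the $\xi = 0$ specialization of Theorem \ref{thm.uc_ads}: when $\xi = 0$, equation \eqref{eq.mu} gives $\mu = T^{-1}$, the function $\eta(t) = \sin(T^{-1}t)$ is globally smooth (so the splitting into $\mc{M}_\pm$ in Definition \ref{def.M_mp} and the boundary-matching argument around \eqref{eq.match_Y} are not needed), and Definition \ref{def.pcp} reduces exactly to the condition \eqref{psc} with $\mathbf{Q}_{0,\zeta} = -\gs - T^{-2}dt^2 - \zeta\gm$; this route avoids any appeal to the earlier paper.
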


\begin{remark}
The vanishing assumptions on $\phi$ in \eqref{condition} are optimal when $\sigma \leq \frac{ n^2 - 1 }{4}$.
On the other hand, when $\sigma > \frac{ n^2 - 1 }{4}$, it is not known whether the vanishing condition \eqref{condition} is sharp.
See the introduction of \cite{hol_shao:uc_ads} for additional discussions.
\end{remark}

We remark that for $\left(\mathcal{M},g\right)$ being a segment of the exact AdS spacetime with cosmological constant $\Lambda = -\frac{n(n-1)}{2}$, any $T>1$ will guarantee that condition \eqref{psc} holds, while the condition does not hold if $T \leq 1$.
The borderline case $T = 1$ (i.e., the segment having time length $1 \cdot \pi$) corresponds precisely to the (re)focusing time of null geodesics emanating from the boundary; see Figure \ref{fig1}.
As explained in \cite{hol_shao:uc_ads}, in view of the counterexamples of \cite{alin_baou:non_unique}, this restriction $T \geq 1$ on the timespan is expected to be necessary in general; see also Theorem \ref{theo:mti2} below.

More generally, one sees that if $g$ in Theorem \ref{theo:old} is Einstein-vacuum, then a boundary metric $\gm$ with positive Schouten tensor $\isch = - \gs$ implies condition (\ref{psc}) for large enough $T$, the optimal $T$ being closely related to the refocusing time of null geodesics near the boundary.
See Appendix \ref{sec.vacuum} for quantitative statements.
\begin{figure}
\[
\input{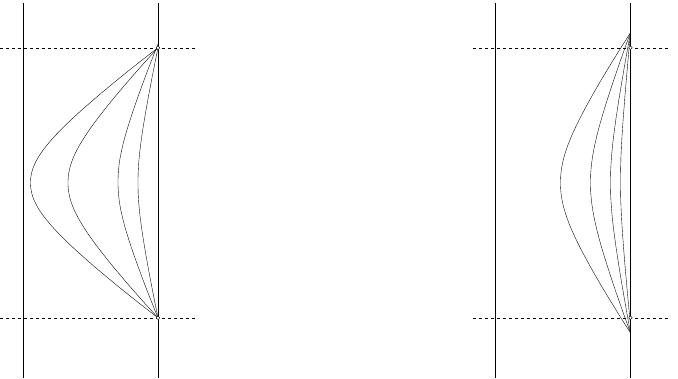_t}
\]
\caption{Illustration of the refocusing of null geodesics (left) and the pseudoconvex foliation (right) in pure AdS.} \label{fig1}
\end{figure}

One of the key difficulties in proving Theorem \ref{theo:old} derives from the fact that the conformal boundary is \emph{zero-pseudoconvex}, that is, $\mc{I}$ is ruled by null geodesics.\footnote{Technically speaking, both pseudoconvexity and zero-pseudoconvexity are properties of hypersurfaces within an ambient spacetime. However, both notions are conformally invariant, hence it is sensible to apply them to the conformal boundary $\mc{I}$.}
As a result, standard unique continuation results fail there, hence one must consider much more carefully the spacetime geometry near $\mc{I}$.

The proof in \cite{hol_shao:uc_ads} constructed a foliation of the spacetime segment by pseudoconvex hypersurfaces near $\mc{I}$, whose existence in turn depended crucially on the time-length of the segment and eventually led to condition \eqref{psc}.
From the foliation (depicted schematically for pure AdS in Figure \ref{fig1}), we deduced after suitable renormalization a Carleman estimate which implied the unique continuation property stated in Theorem \ref{theo:old}.
We also emphasize that, with applications to general relativity in mind, we actually proved both the Carleman estimates and the uniqueness theorems for a class of \emph{tensorial} wave equations in \cite{hol_shao:uc_ads}.

\subsection{The main result}

We turn to the main result of this paper, which generalizes Theorem \ref{theo:old} by removing the staticity assumption $\mc{L}_{\partial_t} \gm = 0$ and replacing condition (\ref{psc}) appropriately.
As in \cite{hol_shao:uc_ads}, the main technical difficulty is to define a foliation of pseudoconvex timelike hypersurfaces near the boundary for this class of spacetimes.
This requires a new idea, because, as it turns out, even if the $\mc{L}_{\partial_t} \gm \neq 0$ is small, the extra terms arising from $\mc{L}_{\partial_t} \gm$ cannot be treated perturbatively: pseudoconvex hypersurfaces defined for a spacetime satisfying $\mc{L}_{\partial_t} \gm = 0$ will in general cease to be pseudoconvex if the metric is perturbed such that $\mc{L}_{\partial_t} \gm \neq 0$.

We resolve this problem by starting from a general ansatz for the level sets of the foliation, which eventually connects pseudoconvexity of the level sets to the existence of particular solutions to an ordinary differential inequality (ODI), whose coefficients depend on $\mc{L}_{\partial_t} \gm$.
Very schematically, in the boundary-static case, this ODI is simply a harmonic oscillator-type ODI.
In the dynamic case, one is instead led to a damped harmonic oscillator, with the damping term determined by $\mc{L}_{\partial_t} \gm$.\footnote{As in the boundary-static case, one can actually relate the resulting ODI to (approximate) null geodesics on these spacetimes. More precisely, the geodesic equation for the $\rho$-variable, expressed with respect to the time coordinate $t$, will be (approximately) a damped harmonic oscillator with frequency determined by $\bar{\mathfrak{g}}$ and damping determined by $\mc{L}_{\partial_t} \gm$.}

Once the foliation has been defined, the proof of the Carleman estimate and the uniqueness statements proceed as in \cite{hol_shao:uc_ads} apart from minor technical difficulties which will be discussed in the bulk of the paper.
This leads to the following rough version of our main result; see Section \ref{sec.proof} for the precise statement.

\begin{theorem} \label{theo:mti}
Let $( \mc{M} := (0, \rho_\ast) \times (0, T \pi) \times \mc{S}, g )$ be an aAdS spacetime segment whose boundary data\footnote{Note the specific form of $\gm$ here amounts to both an implicit geometric assumption and an implicit choice of gauge. For more details on this, see the remark below Definition \ref{def.aads_boundary}.} $\gm = -dt^2 + \gamma_{AB} \left(t,x\right) dx^Adx^B$ and $\bar{\mathfrak{g}}$ satisfy the following:
\begin{itemize}
\item $\xi > 0$ is such that for any vector field $Y$on $\mc{I}$ tangent to $\mc{S}$,
\begin{equation}
\label{psc2} | \mc{L}_{\partial_t} \gamma ( Y, Y ) | \leq \xi \cdot \gamma ( Y, Y ) \text{.}
\end{equation}

\item There is some bounded $\zeta \in C^\infty ( \mc{I} )$ such that
\begin{equation}
\label{psc3} -\bar{\mathfrak{g}} - \frac{1}{\tau^2} dt^2 - \zeta \mathring{\mathfrak{g}} \text{ is positive-definite on $\mc{I}$,}
\end{equation}
where $\tau > T$ is a constant that is defined in terms of $T$ and $\xi$.
\end{itemize}
Then the local unique continuation property holds on $( \mc{M}, g )$ for \eqref{wea}, provided $\mathcal{G}$ satisfies for some $p > 0$ and $C > 0$ the estimate
\begin{equation}
\label{Gcond} | \mathcal{G} ( \phi, \partial \phi ) |^2 \leq C \rho^p ( \rho^4 |\nabla_{t,\rho,x} \phi |^2 + \rho^{2p} |\phi|^2 ) \text{.}
\end{equation}
\end{theorem}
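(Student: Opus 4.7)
Following the strategy of \cite{hol_shao:uc_ads}, the plan is to reduce the theorem to three successive steps: (i) the construction of a foliation of pseudoconvex timelike hypersurfaces accumulating at $\mc{I}$, (ii) a Carleman estimate for the wave operator $\Box_g + \sigma$ with weight built from the defining function of this foliation, and (iii) the extraction of local unique continuation from the Carleman estimate via standard cutoff arguments combined with the boundary vanishing \eqref{condition}. Only step (i) requires substantively new ideas; steps (ii) and (iii) then follow the template of \cite{hol_shao:uc_ads} with minor modifications.

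For step (i), I would work in the FG coordinates $(\rho, t, x)$ and make an ansatz for the leaves of the form $\{\rho^2 = c \cdot h(t)\}$, for a positive function $h$ on $(0, T\pi)$ to be determined. The pseudoconvexity condition for these leaves can be computed directly using the expansion \eqref{expand}, and reads, to leading order in $\rho$, as the positivity of a quadratic form in the $\partial_t$- and $\mc{S}$-tangential directions whose coefficients involve $\ddot h$, $\dot h$, $h$, $\gs$, and $\mc{L}_{\partial_t} \gm$. This reduces the problem to the solvability on $(0, T\pi)$ of a damped harmonic oscillator-type ODI for $h > 0$, with damping coefficient determined by $\mc{L}_{\partial_t} \gm$ (bounded in magnitude by $\xi$ via \eqref{psc2}) and with restoring coefficient determined by $\gs$. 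In the static case $\mc{L}_{\partial_t} \gm \equiv 0$ and the condition reduces to the harmonic oscillator problem underlying Theorem \ref{theo:old}, where the threshold $T > 1$ records the classical refocusing time for null geodesics in pure AdS. In the dynamic case the damping effectively shortens the usable interval, and the constant $\tau > T$ in \eqref{psc3} is chosen precisely so that the strengthened restoring force compensates for the damping and permits a positive solution $h$ on the full interval of length $T\pi$; this compensation is checked by a Sturm-type comparison with an undamped harmonic oscillator of frequency $1/\tau$.

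For step (ii), the Carleman estimate with weight $e^{-\lambda F}$, where $F$ is an appropriate power of the defining function of the foliation, is derived as in \cite{hol_shao:uc_ads} by conjugating $\Box_g + \sigma$ by the weight, splitting the conjugated operator into its self-adjoint and skew-adjoint parts, and bounding the associated commutator from below using the pseudoconvex margin built into the leaves. The non-staticity introduces new commutator contributions involving $\mc{L}_{\partial_t} \gm$, but these carry improved $\rho$-weights coming from the expansion \eqref{expand} and are absorbed into the principal terms, alongside the $\mc{G}$-contributions controlled by \eqref{Gcond}. Step (iii) is then the standard Carleman-to-uniqueness argument: truncate $\phi$ by cutoffs in $t$ and in the defining function of the foliation, apply the Carleman estimate to the truncation, and let $\lambda \to \infty$ to force $\phi \equiv 0$ in a one-sided neighborhood of $\mc{I}$; the boundary vanishing \eqref{condition} is invoked at exactly this step to ensure that the boundary terms generated by the integration by parts vanish.

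The main obstacle is step (i), and specifically the analysis of the damped ODI: one must pin down the sharp quantitative relation between $T$, $\xi$, and $\tau$ under which \eqref{psc3} translates into the existence of a positive solution $h$ with the correct sign of its damped acceleration on the full interval $(0, T\pi)$. This is what determines the precise definition of $\tau$ in the theorem and, in turn, governs how large the boundary time interval may be taken relative to the non-staticity measured by $\xi$.
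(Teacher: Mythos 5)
Your proposal follows the same three-step architecture the paper uses: construct a foliation of pseudoconvex hypersurfaces near $\mc{I}$, prove a Carleman estimate weighted by the defining function of that foliation, and run the standard cutoff argument. Your ansatz $\{\rho^2 = c\, h(t)\}$ is the same family of leaves as the paper's $\{f := \rho/\fbd(t) = \text{const}\}$ under the substitution $\fbd = h^{1/2}$, and you correctly identify the crux as a damped-oscillator ODI for the leaf shape, with damping rate controlled by $\xi$ via \eqref{psc2} and the restoring coefficient feeding into the positivity requirement \eqref{psc3}. Two points of comparison. First, where you propose establishing solvability of the ODI by Sturm-type comparison with an undamped oscillator of frequency $1/\tau$, the paper avoids any comparison argument by writing down an explicit solution, $\fbd(t) = e^{\xi t/4}\sin(\mu t)$ on $[0,\pi T/2)$ and its reflection on $[\pi T/2, \pi T]$, which satisfies $\fbd'' - \tfrac{\xi}{2}|\fbd'| + (\mu^2 + \tfrac{\xi^2}{16})\fbd = 0$; the constant $\mu$ is pinned down by \eqref{eq.mu} and $\tau = (\mu^2 + \xi^2/16)^{-1/2}$ is then read off from \eqref{eq.Q_inf}. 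Your comparison-function route is a valid alternative but gives less explicit control of $\tau$.

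Second, and more substantively, your outline implicitly treats the Carleman weight as smooth, but any $\fbd$ satisfying the sign constraint $\fbd' \gtrless 0$ on the two halves of $(0,\pi T)$ — including the explicit one above and anything produced by a Sturm argument — is only $C^2$ across $t = \pi T/2$, since $\fbd'''$ jumps there. Consequently, when the pointwise Carleman inequality is integrated and the divergence theorem applied, nonvanishing boundary contributions appear on $\{t = \pi T/2\}$ that have no analogue in the static case of \cite{hol_shao:uc_ads}. The paper handles this by integrating separately on $\mc{M}_\pm$, isolating the single mismatched term (coming from $-\tfrac12\mc{T}h_{\xi,\zeta}\cdot|\psi|^2$), and showing it is $\mc{O}(\rho^3)$-small after a fundamental-theorem-of-calculus trick so it can be absorbed into the bulk terms. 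Your sentence that steps (ii)--(iii) follow ``with minor modifications'' underestimates this: it is the one place where the non-static argument departs nontrivially from the static one inside the Carleman derivation, and any write-up along your lines would need to add this matching estimate to close.
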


We remark that there is a simple explicit formula for $\tau$; see (\ref{eq.mu}) and (\ref{eq.Q_inf}).
One can also check that for small perturbations of the pure AdS boundary, i.e.~for $\mc{L}_{\partial_t} \gm$ small and $\bar{\mathfrak{g}}$, $\mathring{\mathfrak{g}}$ close to their pure AdS values, there is a $T>1$ close to $1$ (the closeness depending on the size of the perturbation) which ensures that conditions \eqref{psc2} and \eqref{psc3} are indeed satisfied; see Section \ref{sec:pseudo_examples}. 

\begin{remark}
We remark here that in both Theorems \ref{theo:old} and \ref{theo:mti}, we are always considering an aAdS spacetime with a fixed conformal compactification, i.e., $\gm$ is fixed.
On the other hand, there is a residual gauge freedom corresponding to a redefinition of the boundary defining function $\rho$ that keeps the metric in Fefferman--Graham form.
Under this transformation, $\gm$ changes by a conformal factor, and $\gs$ has the transformation properties of a Schouten tensor.
In particular, it is possible for \eqref{psc2}, \eqref{psc3} to hold with respect to one conformal compactification but not another.
However, once both \eqref{psc2} and \eqref{psc3} hold for a particular conformal compactification, one can find a foliation of pseudconvex hypersurfaces near the boundary, which is a gauge-independent geometric statement.
\end{remark}

\subsection{The borderline case}

The slightly more geometric approach taken in this paper also reveals an interesting new result for the special case of \emph{static} boundaries addressing the ``borderline case" $T=1$ discussed below Theorem \ref{theo:old}:

\begin{theorem} \label{theo:mti2}
Let $( \mc{M} = (0, \rho_\ast) \times (0, \pi) \times \mc{S}, g )$ be an aAdS spacetime segment, assume the boundary data satisfies
$\gm = -dt^2 + \ga$ and $\gs = -\frac{1}{2} ( dt^2 + \ga )$, where $\tilde{\gamma}$ denotes the round metric on the unit sphere, and suppose $\gb$ satisfies the following pseudoconvexity condition:
\begin{align}
\label{psc5} -\gb - \zeta \gm \text{ is positive-definite for some bounded $\zeta \in C^\infty ( \mc{I} )$.}
\end{align}
Then the unique continuation property holds on $( \mc{M}, g )$ for \eqref{wea}, provided $\mathcal{G}$ satisfies for some $p>0$ and $C > 0$ the estimate
\begin{equation}
\label{Gcond2} | \mc{G} ( \phi, \partial \phi ) |^2 \leq C \rho^p ( \rho^5 |\nabla_{t,\rho,x} \phi|^2 + \rho^{2p} |\phi|^2 ) \text{.}
\end{equation}
\end{theorem}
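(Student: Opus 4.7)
The plan is to follow the same overall strategy as in Theorem \ref{theo:mti} — construct a foliation by pseudoconvex timelike hypersurfaces near $\mc{I}$, derive a Carleman estimate adapted to this foliation, and then obtain local unique continuation via the standard cutoff argument — but with a crucial refinement that exploits the third-order term $\gb$ in the Fefferman--Graham expansion \eqref{expand}. The hypothesized boundary data $\gm = -dt^2 + \ga$, $\gs = -\frac{1}{2}(dt^2 + \ga)$ is precisely the pure AdS geometry at the boundary, and $T=1$ is precisely the refocusing time of AdS null geodesics. As discussed immediately after Theorem \ref{theo:old}, this is the setting in which condition \eqref{psc3} is saturated: the ODI governing pseudoconvexity in the proof of Theorem \ref{theo:mti} degenerates to an undamped harmonic oscillator of period $\pi$ on a time interval of length $\pi$, so leading-order information about $\gs$ is insufficient, and one must descend to the next order in $\rho$ to recover strict pseudoconvexity.

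Concretely, I would revisit the foliation ansatz used in the proof of Theorem \ref{theo:mti} and expand both the metric and the defining function of the leaves one order further in $\rho$ — schematically $f = f_0(t) + \rho^2 f_2(t) + \rho^3 f_3(t,x) + \ldots$ — retaining all terms up to order $\rho^3$ in the pseudoconvexity inequality. Because $\mc{L}_{\partial_t}\gm = 0$, the leading-order ODI reduces to the boundary-static case of \cite{hol_shao:uc_ads}, and at $T = 1$ one is forced to a saturating solution $f_0$ which is essentially a cosine of period $\pi$. The order-$\rho$ terms then vanish identically, and the first nontrivial contribution appears at order $\rho^3$. A direct expansion using \eqref{expand} shows that $\gb$ enters this contribution linearly, and with a sign and tensorial structure such that the hypothesis \eqref{psc5} — that $-\gb - \zeta \gm$ is positive-definite for some bounded $\zeta$ — is precisely what is needed to make this order-$\rho^3$ pseudoconvex quantity strictly positive, after absorbing $\zeta$ into a suitable modification of $f_2$.

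Once strict pseudoconvexity at order $\rho^3$ is established, the remainder of the argument mirrors that of Theorem \ref{theo:mti}: one builds a Carleman weight from the foliation function $f$, carries out the integration-by-parts and renormalization procedure of \cite{hol_shao:uc_ads}, and concludes local unique continuation via the standard cutoff argument using \eqref{condition}. The sharper nonlinearity assumption \eqref{Gcond2} — with $\rho^5$ in place of the $\rho^4$ appearing in \eqref{Gcond} — exactly reflects that the pseudoconvex gain is now of order $\rho^3$ rather than of order $\rho$ at the boundary, so the nonlinear right-hand side must be correspondingly better in $\rho$ to be absorbed into the error terms of the Carleman estimate.

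The main obstacle I anticipate is the precise bookkeeping of the $O(\rho^3)$ pseudoconvexity calculation and verifying that $\gb$ enters with exactly the sign and symbol structure required by \eqref{psc5}; this is delicate because the leading-order contribution vanishes identically and any sign error would obstruct the entire construction. A secondary difficulty is that, because the pseudoconvex gain is weaker (order $\rho^3$ rather than order $\rho$), the error analysis in the Carleman estimate — in particular for the lowest-order and transport terms and for the absorption of the nonlinearity — must be re-examined with estimates tracked one order higher in $\rho$ throughout, with essentially no slack to spare.
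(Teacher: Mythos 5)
Your proposal correctly identifies the overall strategy, and it is essentially the one carried out in Section \ref{sec.border}: the boundary data saturates the static pseudoconvexity criterion exactly at $T=1$, one must push the positivity analysis one order deeper in $\rho$, the third-order coefficient $\gb$ then enters linearly with the right sign under \eqref{psc5}, and the $\rho^5$ weight in \eqref{Gcond2} compensates for the pseudoconvex gain now being of order $\rho^3$ rather than $\rho^2$ (you wrote ``rather than of order $\rho$'', which is a small slip). The subsequent Carleman estimate and cutoff argument also proceed as you describe, in Theorems \ref{thm.carleman_bs} and \ref{thm.uc_ads_bs}.

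There is, however, a genuine gap in the proposal that would obstruct the execution: the schematic foliation ansatz $f = f_0(t) + \rho^2 f_2(t) + \rho^3 f_3(t,x) + \ldots$ is structurally incorrect, and your claim that ``the order-$\rho$ terms then vanish identically, and the first nontrivial contribution appears at order $\rho^3$'' does \emph{not} hold for the unmodified saturated foliation. A defining function whose level sets limit onto $\mc{I}$ must vanish at $\rho = 0$, so it must be $O(\rho)$ at leading order, and the paper's modified foliation function (Definition \ref{def.f_bs}) is $\fb = \hat\eta(t)^{-1}\rho\bigl(1 - \tfrac14\rho^2\bigr)$ with $\hat\eta(t)=\sin t$: no constant term, no $\rho^2$ term, and a $t$-dependent but $x$-independent $O(\rho^3)$ correction. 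That specific $-\tfrac14\hat\eta^{-1}\rho^3$ term is the non-obvious device the proof must supply; it is designed precisely to cancel the sub-leading contributions in the Hessian and deformation tensor so that $\gb$ becomes the leading term (see the remark after Definition \ref{def.f_bs} and the expansions \eqref{eq.f_hess_bs}, \eqref{eq.Q_bs}). With that in place, the paper's borderline positivity quantity is $\hat{\mathbf{Q}}_\zeta = -\tfrac32\gb - \zeta\gm$ (equivalent to \eqref{psc5} up to rescaling $\zeta$), and importantly the cross-terms $\dftb_\zeta(\Nb,\Vb)$, $\dftb_\zeta(\Nb,\Eb_X)$ also improve by one power of $\rho$ alongside the degenerating tangential terms, which is what makes the weaker pseudoconvexity still absorb the errors. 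One further point you did not anticipate is that the borderline case is in one respect \emph{simpler}: since $\hat\eta = \sin t$ is smooth on $(0,\pi)$, the Carleman derivation (Theorem \ref{thm.carleman_bs}) avoids the $t = \tfrac{\pi T}{2}$-matching argument that was needed for the non-static foliation in Section \ref{sec.proof_carleman_integral}.
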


Note that unlike (\ref{psc}), the pseudoconvexity condition (\ref{psc5}) involves $\gb$ of the expansion (\ref{expand}). 
Observe also that the condition (\ref{Gcond2}) on $\mathcal{G}$ is more stringent than (\ref{Gcond}).
This is because the pseudoconvexity of the foliation degenerates faster toward the conformal boundary in Theorem \ref{theo:mti2} than in Theorem \ref{theo:mti}.
The proof of Theorem \ref{theo:mti2} logically proceeds along the same lines as the one of Theorem \ref{theo:mti} but requires a significant refinement of the foliation used in \cite{hol_shao:uc_ads} to exploit the higher-order pseudoconvexity.

Note that if $\gb$ is negative-definite, then the pseudoconvexity condition holds with $\zeta=0$.
In particular, a patch of the $(3+1)$-dimensional AdS-Schwarzschild spacetime with negative mass provides a simple example of a spacetime satisfying the assumptions of Theorem \ref{theo:mti2}.
This should be compared with the asymptotically flat case ($\Lambda=0$), where \emph{positive} mass ensures a foliation of pseudoconvex hypersurfaces near spacelike infinity; see  \cite{alex_schl_shao:uc_inf}.

\subsection{Overview}

We have written the paper to be essentially self-contained, although \cite{hol_shao:uc_ads} contains more extensive explanations of the basic concepts and computations. We also refer the reader to \cite{hol_shao:uc_ads}  for a more exhaustive list of references and some historical background on unique continuation problems in geometric settings.
In Section \ref{sec.aads}, we define the manifolds and their differentiable structures, as well as the class of aAdS metrics considered on these manifolds.
Appendix \ref{sec.fg} relates this class to the class considered in \cite{hol_shao:uc_ads}.
Section \ref{sec.proof} contains the precise statements and proofs of our main results: the pseudoconvexity criterion, the Carleman estimate, and the unique continuation result.
In Section \ref{sec.border}, we carry out the analogous process in the static borderline case relevant for Theorem \ref{theo:mti2} only.
Finally, in Appendix \ref{sec.vacuum}, we collect a geometric interpretation of the result and some applications in the case of Einstein-vacuum metrics with static boundary metrics.

\subsection{Acknowledgement}
The authors thank Claude Warnick for helpful discussions and sharing his notes \cite{Warnick} on the Fefferman-Graham expansions. Both authors acknowledge support through a grant from the European Research Council. Finally, we thank two anonymous referees for very useful comments leading to several improvements of the paper.

\section{Asymptotically AdS Spacetimes} \label{sec.aads}

In this section, we construct the class of asymptotically Anti-de Sitter spacetimes that we will treat in our main results.
Informally, we will consider spacetimes whose metrics near infinity are of the form
\begin{align}
\label{eq.aads_gen_pre} g &= [ r^{-2} + r^{-4} \ggs_{ \rho \rho } + \mc{O} ( r^{-5} ) ] d r^2 + \mc{O} ( r^{-3} ) \cdot d r d t + \mc{O} ( r^{-3} ) \cdot d r d x^A \\
\notag &\qquad + [ - r^2 + \ggs_{ t t } + \mc{O} ( r^{-1} ) ] dt^2 + [ \ggs_{t A} + \mc{O} ( r^{-1} ) ] d t d x^A \\
\notag &\qquad + [ r^2 \ggm_{A B} + \ggs_{A B} + \mc{O} ( r^{-1} ) ] d x^A d x^B \text{.}
\end{align}
These include AdS spacetime and, when $n \geq 3$, Schwarzschild-AdS and Kerr-AdS spacetimes \cite{Henneaux}.
A precise description of these spacetimes, in particular the specific coordinates and nature of the ``$\mc{O} ( r^k )$"-error terms, will be given below.

In contrast to \cite{hol_shao:uc_ads}, here we opt to express \eqref{eq.aads_gen_pre} in a Fefferman--Graham gauge.
Roughly, these are asymptotic expansions of the form
\begin{align}
\label{eq.aads} g &= \rho^{-2} d \rho^2 + [ - \rho^{-2} + \gs_{tt} + \mc{O} ( \rho ) ] d t^2 + [ \gs_{t A} + \mc{O} ( \rho ) ] d t d x^A \\
\notag &\qquad + [ \rho^{-2} \gm_{A B} + \gs_{A B} + \mc{O} ( \rho ) ] dx^A dx^B \text{.}
\end{align}
We stress that any expansion of the form \eqref{eq.aads_gen_pre} can be reduced to one of the form \eqref{eq.aads} via a gauge transformation (toward Fermi-type coordinates); this will be demonstrated in detail in Appendix \ref{sec.fg}.

Furthermore, we review the notions of horizontal and mixed tensor fields, introduced in \cite{hol_shao:uc_ads}, that we will use for our main results.
This will also be useful in future works, when we apply these results to tensorial quantities present within the Einstein equations.
Finally, we conclude by computing asymptotic expansions for various geometric quantities and by introducing the function $f$, whose level sets will later (under additional assumptions on $\gm$ and $\gs$) be shown to be pseudoconvex.

\subsection{Construction of the Spacetimes} \label{sec.aads_infinity}

In this subsection, we define precisely the class of aAdS spacetimes we will consider throughout this paper.
The process is a bit more elaborate than in \cite{hol_shao:uc_ads}, due to the need to include non-static boundaries.

\subsubsection{Preliminaries}

The first step is to prescribe the spacetime topology.
This is done by specifying the topology of its AdS-type boundary at infinity.

\begin{definition} \label{def.aads_manifold}
We define the following manifolds:
\begin{itemize}
\item \emph{Boundary cross-section:} Let $\mc{S}$ be an $(n - 1)$-dimensional complete manifold.

\item \emph{AdS-type boundary:} Fix $T_- < T_+$, and let $\mc{I} := (T_-, T_+) \times \mc{S}$.

\item \emph{aAdS spacetime:} Let $\rho_\ast > 0$, and let $\mc{M} := (0, \rho_\ast) \times \mc{I}$.
\end{itemize}
Also, we let $\rho$ denote the coordinate on $\mc{M}$ for the $(0, \rho_\ast)$-component, and we let $t$ denote the coordinate on both $\mc{M}$ and $\mc{I}$ for the $(T_-, T_+)$-component.
\end{definition}

Like in \cite{hol_shao:uc_ads}, we avoid employing a fully geometrically invariant approach, in that we specify most of our asymptotic assumptions in terms of coordinates.

\begin{definition} \label{def.aads_coord}
Let $\varphi = ( x^1, \dots, x^{n - 1} )$ denote a coordinate system on $\mc{S}$.
\begin{itemize}
\item Let $\varphi_t := ( t, x^1, \dots, x^{n - 1} )$ denote the coordinates on $\mc{I}$ obtained by transporting $\varphi$-coordinates along the $t$-component and appending $t$.

\item Let $\varphi_{\rho, t} := ( \rho, t, x^1, \dots, x^{n - 1} )$ be the coordinates on $\mc{M}$ obtained by transporting the $\varphi$-coordinates along the $(\rho, t)$-components and appending $( \rho, t )$.
\end{itemize}
\end{definition}

Note that the coordinate vector fields $\partial_t$ arising from the above transported coordinate systems define a \emph{global} vector field on both $\mc{I}$ and $\mc{M}$.
Similarly, the coordinate vector fields $\partial_\rho$ define a global vector field on $\mc{M}$.

\begin{definition} \label{def.aads_index}
Let $\varphi$ be a coordinate system on $\mc{S}$.
From now on, we adopt the following coordinate and indexing conventions:
\begin{itemize}
\item We use upper-case Latin indices $A, B, \dots$ to denote $\varphi$-coordinate components.
Similarly, we use $x^A, x^B, \dots$ to refer to $\varphi$-coordinate functions.

\item We use lower-case Latin indices $a, b, \dots$ to denote $\varphi_t$-coordinate components.
Similarly, we use $x^a, x^b, \dots$ to refer to $\varphi_t$-coordinate functions.

\item We use lower-case Greek indices $\alpha, \beta, \dots$ to denote $\varphi_{ \rho, t }$-coordinate components.
Similarly, we use $x^\alpha, x^\beta, \dots$ to refer to $\varphi_{ \rho, t }$-coordinate functions.
\end{itemize}
\end{definition}

Next, we define the asymptotic properties of the error terms we will encounter.
As mentioned before, the condition here is stronger than that found in \cite{hol_shao:uc_ads}.

\begin{definition} \label{def.O_scr}
Consider a smooth spacetime $( \mc{M}, g )$, where $\mc{M}$ is as in Definition \ref{def.aads_manifold}.
Let $\zeta \in C ( \mc{M} )$, and let $\varphi$ denote a coordinate system on $\mc{S}$.
We use the symbol $\mc{O}_\varphi ( \zeta )$ to denote a smooth function $u$ on an appropriate open subset of $\mc{M}$ (depending on context) such that we have the family of bounds
\begin{equation}
\label{eq.O_scr} | \partial_\rho^k \partial_{ x^{ a_1 } } \dots \partial_{ x^{ a_m } } u | \lesssim_{g, k, m, u} \rho^{-k} \zeta \text{,} \quad \text{for all $k, m \geq 0$,}
\end{equation}
where the $x^{ a_i }$'s refer to any of the coordinates in $\varphi_{\rho, t}$ except for $\rho$.

When $\varphi$ is clear from context, we omit it from notation and write $\mc{O} ( \zeta )$.
\end{definition}

\subsubsection{Admissible Spacetimes}

We can now define our class of aAdS spacetimes:

\begin{definition} \label{def.aads_gen}
Let $g$ be a smooth Lorentzian metric on $\mc{M}$.
We say that $( \mc{M}, g )$ is an \emph{admissible aAdS segment} iff the following conditions hold:
\begin{enumerate}
\item There exist symmetric covariant $2$-tensors $\ggm$, $\ggs$, $\ggr$ on $\mc{M}$, with $\ggm$ and $\ggs$ independent of $\rho$, such that $g$ can be expressed as
\begin{equation}
\label{eq.aads_gen_met} g = \rho^{-2} [ \ggm + \rho^2 \ggs + \rho^3 \ggr ] \text{.}
\end{equation}

\item $\ggm$ and $\ggs$ have the forms
\begin{equation}
\label{eq.aads_gen_g} \ggm = d \rho^2 - d t^2 + \gamma \text{,} \qquad \ggs = \varsigma \cdot d \rho^2 + \gs \text{,}
\end{equation}
where $\varsigma \in C^\infty ( \mc{I} )$, where $\gamma$ and $\gs$ are symmetric covariant $2$-tensors on $\mc{I}$, and where $\gamma$ has no components in the $t$-direction (but can depend on $t$).

\item There exists a finite family $\Xi$ of coordinate systems on $\mc{S}$ covering all of $\mc{S}$, such that for any $\varphi \in \Xi$, the components with respect to the $\varphi_{\rho, t}$-coordinates of $\ggm$, $\ggs$, $E$, and the metric dual $\ggm^{-1}$ of $\ggm$ satisfy\footnote{Since $\ggm$ and $\ggs$ are independent of $\rho$, condition \eqref{eq.aads_gen_asymp} simply implies that all $x^a$-coordinate derivatives of their components are bounded on $\mc{I}$.}
\begin{equation} \label{eq.aads_gen_asymp}
\ggm_{\alpha \beta} = \mc{O}_\varphi (1) \text{,} \qquad \ggm^{\alpha \beta} = \mc{O}_\varphi (1) \text{,} \qquad \ggs_{\alpha \beta} = \mc{O}_\varphi (1) \text{,} \qquad \ggr_{\alpha \beta} = \mc{O}_\varphi ( 1 ) \text{.}
\end{equation}
We call this $\Xi$ a \emph{bounded family of coordinates} on $( \mc{M}, g )$.
\end{enumerate}
\end{definition}

\begin{definition} \label{def.aads_boundary}
In the context of Definition \ref{def.aads_gen}, we refer to the Lorentzian manifold $( \mc{I}, \gm )$, where $\gm := - dt^2 + \gamma$ is the restriction of $\ggm$ to the $\mc{I}$-tangent directions, as the induced \emph{AdS-type boundary}.
Furthermore, we will occasionally slightly abuse notation and use $\mc{I}$ to refer to this conformal boundary $\{ \rho = 0 \}$ of $( \mc{M}, g )$.
\end{definition}

Suppose $( \mc{M}, g )$ is such an admissible aAdS segment, on which $\Xi$ is a bounded family of coordinates.
Then, the conditions \eqref{eq.aads_gen_met}--\eqref{eq.aads_gen_asymp} can be stated less formally in the following coordinate representation: with respect to any $\varphi \in \Xi$,
\begin{align} \label{eq.aads_gen}
\rho^2 g &= ( 1 + \rho^2 \ggs_{ \rho \rho } ) d \rho^2 + ( - dt^2 + \ggm_{A B} d x^A d x^B + \rho^2 \ggs_{a b} dx^a dx^b ) \\
\notag &\qquad + \mc{O}_\varphi ( \rho^3 ) \cdot d x^\alpha d x^\beta \text{.}
\end{align}
Letting $r := \rho^{-1}$, then \eqref{eq.aads_gen} takes the more familiar form \eqref{eq.aads_gen_pre}.

The spacetimes described in Definition \ref{def.aads_gen} resemble those in \cite[Definition 2.6]{hol_shao:uc_ads}, but with the following specific differences:
\begin{enumerate}
\item We apply a more restrictive prescription of ``$\mc{O}_\varphi$" error terms: unlike in \cite{hol_shao:uc_ads}, $x^a$-derivatives do not lose powers of $\rho$.
Although this is strictly more stringent than before, the condition is more natural and is necessary in order to convert \eqref{eq.aads_gen} to Fefferman--Graham form; see below.

\item On the other hand, because of (1), the metric expansion here (see \eqref{eq.aads_gen_g}, \eqref{eq.aads_gen}) is allowed to be more general than was prescribed in \cite[Definition 2.6]{hol_shao:uc_ads}, in which $\ggs_{\rho \rho} \equiv -1$.

\item We now allow for the boundary metric $\gm$ to be time-dependent.
\end{enumerate}

\begin{remark}
We note the conditions $\ggm_{tt} \equiv -1$ and $\ggm_{t A} \equiv 0$ in \eqref{eq.aads_gen_g} contain an implicit semi-global geometric assumption.
In particular, \eqref{eq.aads_gen_g} can be locally forced by defining $t$ as the affine parameters of a family of normal timelike geodesics emanating from a cross-section of $\mc{I}$.
What is nontrivially assumed, though, is that the foliation defined by this special $t$ remains globally regular on all of $\mc{I}$.

The assumptions \eqref{eq.aads_gen_g} serve to simplify computations and the construction of the pseudoconvex hypersurfaces in Section \ref{sec.aads_f}.
It is not clear to us at the moment whether the main result holds without this assumption.
\end{remark}

\begin{remark}
We also remark that this semi-global regularity condition mentioned above is gauge-dependent.
More specifically, this condition is not necessarily preserved by a conformal transformation of $\gm$ (arising from a change of the spacetime coordinate $\rho$), as described in the remark below Theorem \ref{theo:mti}.
\end{remark}

\begin{remark}
In fact, the finiteness of $\Xi$ in Definition \ref{def.aads_gen} is not strictly necessary.
However, if $\Xi$ is to be infinite, then we must also assume that the constants associated with all the ``$\mc{O}_\varphi (1)$'s" in \eqref{eq.aads_gen_asymp} are independent of $\varphi$.
For clarity and simplicity, we assume finite $\Xi$, since this is satisfied by all aAdS spacetimes of interest.
\end{remark}

Finally, we fix the following notations:

\begin{definition} \label{def.aads_nabla}
Let $\nabla$ denote the Levi-Civita connection associated with $g$, and let $\nasla$ denote the induced connections on the level sets of $(\rho, t)$, i.e., the copies of $\mc{S}$.
\end{definition}

\subsubsection{Fefferman--Graham Spacetimes}

By applying an appropriate change of coordinates, we can convert an admissible aAdS segment into ``Fefferman-Graham form", for which all the information in $g$ resides in the $\mc{I}$-tangent directions; see \cite{MR837196}.
This is described in detail in Appendix \ref{sec.fg}.
Our analysis therefore reduces to admissible Fefferman--Graham-aAdS segments, which we define as follows:

\begin{definition} \label{def.aads}
We say that $( \mc{M}, g )$ is an \emph{admissible Fefferman--Graham-aAdS} (\emph{FG-aAdS}) \emph{segment} iff the following conditions hold:
\begin{enumerate}
\item $( \mc{M}, g )$ is an \emph{admissible aAdS segment}, as in Definition \ref{def.aads_gen}.

\item Both $\ggs$ and $\ggr$ only contain components tangent to the level sets of $\rho$.
\end{enumerate}
Moreover, for such an admissible FG-aAdS segment $( \mc{M}, g )$, we let $\gm$ and $\gs$ denote the restrictions of $\ggm$ and $\ggs$, respectively, to the $\mc{I}$-tangent directions.
\end{definition}

The main content of Definition \ref{def.aads} is the less formal but more intuitive coordinate expansion \eqref{eq.aads}.
Indeed, from \eqref{eq.aads_gen} and Definition \ref{def.aads}, we observe for an admissible FG-aAdS segment $( \mc{M}, g )$: for any $\varphi \in \Xi$, where $\Xi$ is a bounded family of coordinates on $( \mc{M}, g )$, one has the expansion \eqref{eq.aads} for $g$.
Through most of this paper, we will work directly with the representation \eqref{eq.aads}, with the implicit understanding that the precise descriptions are as in Definitions \ref{def.aads_gen} and \ref{def.aads}. 

For example, by a change of the $\rho$-variable, one can express AdS spacetime as an FG-aAdS segment.
In terms of our current notations, we then have
\begin{equation}
\label{eq.aads_ads} \gm = - dt^2 + \ga \text{,} \qquad \gs = - \frac{1}{2} ( dt^2 + \ga ) \text{,}
\end{equation}
where $\ga$ denotes the canonical metric on the unit sphere $\Sph^{n - 1}$.
More generally, by employing an appropriate change of variables \cite{Henneaux}, one can show that any Kerr-AdS spacetime with $n\geq 3$ also has the expansion \eqref{eq.aads_ads}.

\subsection{Horizontal Tensor Fields} \label{sec.aads_hor}

Assume, as detailed in Definition \ref{def.aads}, an admissible FG-aAdS spacetime segment $( \mc{M}, g )$.
The other half of the formalism we will require in this article is the notion of horizontal and mixed tensor fields on $\mc{M}$.
By ``horizontal fields", we refer to fields on $\mc{M}$ which are tensor fields on each level set of $(\rho, t)$, i.e., each copy of $\mc{S}$.
By ``mixed fields", we refer to fields which are combinations of standard and horizontal tensor fields.

We will adopt the same definitions and notations as \cite{hol_shao:uc_ads}.
We briefly review these below; for more details, see \cite[Sect.~2.4]{hol_shao:uc_ads}.

\subsubsection{Horizontal and Mixed Fields}

The main objects of interest are as follows:
\begin{itemize}
\item We denote by $T^\mu_\lambda \mc{M}$ the usual $(\mu, \lambda)$-tensor bundle over $\mc{M}$, consisting of all tensors at all points of $\mc{M}$ of rank $(\mu, \lambda)$.
The space of smooth sections of $T^\mu_\lambda \mc{M}$---the \emph{tensor fields} of rank $(\mu, \lambda)$---are denoted $\Gamma T^\mu_\lambda \mc{M}$.

\item We denote by $\ul{T}^m_l \mc{M}$ the ($\mc{S}$-)\emph{horizontal bundle} over $\mc{M}$, containing all tensors of rank $(m, l)$ on each level set of $(\rho, t)$ in $\mc{M}$ (i.e., all \emph{horizontal tensors}).
We let $\Gamma \ul{T}^m_l \mc{M}$ denote the space of smooth sections of $\ul{T}^\mu_\lambda \mc{M}$, i.e., the \emph{horizontal tensor fields} of rank $(m, l)$.

\item We generalize and unify the above by defining the \emph{mixed bundles} as
\begin{equation}
\label{eq.mixed_bundle} T^\mu_\lambda \ul{T}^m_l \mc{M} := T^\mu_\lambda \mc{M} \otimes \ul{T}^m_l \mc{M} \text{.}
\end{equation}
Similarly, we let $\Gamma T^\mu_\lambda \ul{T}^m_l \mc{M}$ denote the corresponding space of smooth sections of $T^\mu_\lambda \ul{T}^m_l \mc{M}$, i.e., the \emph{mixed tensor fields}.
\end{itemize}
Recall that by duality, we can consider any $A \in \Gamma T^\mu_\lambda \ul{T}^m_l \mc{M}$ as a $C^\infty ( \mc{M} )$-multilinear map on the appropriate number of standard and horizontal vector fields and $1$-forms.
Moreover, note in particular that $\Gamma T^0_0 \mc{M} = \Gamma \ul{T}^0_0 \mc{M} = C^\infty ( \mc{M} )$.

\subsubsection{Connection and Curvature}

Next, recall that the Levi-Civita connection $\nabla$ on $( \mc{M}, g )$ induces a bundle connection $\nabla$ on any $T^\mu_\lambda \mc{M}$.
These connections in turn induce \emph{horizontal connections} $\nasla$ on the $\ul{T}^m_l \mc{M}$ by projecting to the level sets of $(\rho, t)$.
The connections $\nabla$ and $\nasla$ can then be canonically combined to obtain \emph{mixed connections}---also denoted $\nabla$ here---on the $T^\mu_\lambda \ul{T}^m_l \mc{M}$'s.

For practical purposes, the main properties of mixed connections are as follows:
\begin{itemize}
\item $\nabla$ annihilates both $g$ and the restrictions $\gamma$ of $g$ to the level sets of $(\rho, t)$ (both of which can be considered as mixed tensor fields).

\item Given a fully covariant mixed tensor field $A \in \Gamma T^0_\lambda \ul{T}^0_l \mc{M}$ and a vector field $X \in \Gamma T^1_0 \mc{M}$, then $\nabla_X A$ can be characterized by its actions on vector fields: if $Z_1, \dots, Z_\lambda \in \Gamma T^1_0 \mc{M}$, and if $Y_1, \ldots, Y_l \in \Gamma \ul{T}^1 \mc{M}$, then
\begin{equation} \label{eq.mixed_deriv} \begin{split}
\nabla_X A ( Z_1, \dots, Z_\lambda; Y_1, \dots, Y_l ) &= X [ A ( Z_1, \dots, Z_\lambda; Y_1, \dots, Y_l ) \\
&\qquad - A ( \nabla_X Z_1, \dots, Z_\lambda; Y_1, \dots, Y_l ) - \dots \\
&\qquad - A ( Z_1, \dots, \nabla_X Z_\lambda; Y_1, \dots, Y_l ) \\
&\qquad - A ( Z_1, \dots, Z_\lambda; \nasla_X Y_1, \dots, Y_l ) - \ldots \\
&\qquad - A ( Z_1, \dots, Z_\lambda; Y_1, \dots, \nasla_X Y_l ) \text{.}
\end{split} \end{equation}
\end{itemize}
For more details on the basic definitions, see \cite[Sect.~2.4.1]{hol_shao:uc_ads}.

Finally, given any $A \in \Gamma T^\mu_\lambda \ul{T}^m_l \mc{M}$:
\begin{itemize}
\item We define its \emph{mixed covariant differential} $\nabla A \in \Gamma T^\mu_{ \lambda + 1 } \ul{T}^m_l \mc{M}$ to be the mixed tensor field mapping a vector field $X$ to $\nabla_X A$.

\item In particular, we can make sense of $\Box A \in \Gamma T^\mu_\lambda \ul{T}^m_l \mc{M}$ as the $g$-trace of $\nabla^2 A$, with the trace being applied to the two $\nabla^2$-components.

\item The \emph{mixed curvature} is defined as follows: given $X, Y \in \Gamma T^1_0 \mc{M}$, we set
\begin{equation}
\label{eq.mixed_curv} \mc{R} A \in \Gamma T^\mu_{ \lambda + 2 } \ul{T}^m_l \mc{M} \text{,} \qquad \mc{R}_{XY} [A] := \nabla^2_{XY} A - \nabla^2_{YX} A \text{.}
\end{equation}
\end{itemize}
From \eqref{eq.mixed_deriv} and direct computations, we obtain the following identity:

\begin{proposition} \label{thm.mixed_curv_hor}
Let $\phi \in \Gamma \ul{T}^0_l \mc{M}$.
Then, given any spacetime vector fields $X, Y$ and horizontal vector fields $Z_1, \dots, Z_l$, we have:
\begin{equation} \label{eq.mixed_curv_hor} \begin{split}
\mc{R}_{XY} \phi (Z_1, \dots, Z_l) &= - \phi ( \nasla_X ( \nasla_Y Z_1 ) - \nasla_Y ( \nasla_X Z_1 ) - \nasla_{ [X, Y] } Z_1, \dots, Z_l ) \\
&\qquad - \dots \\
&\qquad - \phi ( Z_1, \dots, \nasla_X ( \nasla_Y Z_l ) - \nasla_Y ( \nasla_X Z_l ) - \nasla_{ [X, Y] } Z_l ) \text{.}
\end{split} \end{equation}
In particular, if both $X$ and $Y$ are also horizontal, then \eqref{eq.mixed_curv_hor} reduces to the usual Riemann curvature operator on the level sets of $(\rho, t)$.
\end{proposition}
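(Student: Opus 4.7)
The plan is to derive \eqref{eq.mixed_curv_hor} by iterating the Leibniz identity \eqref{eq.mixed_deriv} on the scalar function $\phi ( Z_1, \dots, Z_l ) \in C^\infty ( \mc{M} )$ and comparing with the expansion of $\nabla_{[X, Y]} \phi ( Z_1, \dots, Z_l )$. The key observation enabling this is that for any $\psi \in \Gamma \ul{T}^0_l \mc{M}$ and any spacetime vector field $W$, the covariant derivative $\nabla_W \psi$ is again a section of $\Gamma \ul{T}^0_l \mc{M}$, so \eqref{eq.mixed_deriv} can be applied again to $\nabla_Y \phi$ and $\nabla_X \phi$. This mimics the classical derivation of the Riemann curvature identity, but with $\nabla$ replaced by $\nasla$ in every horizontal slot.

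Concretely, I would first rewrite \eqref{eq.mixed_deriv} as the derivation identity
\begin{equation*}
X [ \phi ( Z_1, \dots, Z_l ) ] = ( \nabla_X \phi ) ( Z_1, \dots, Z_l ) + \sum_{i = 1}^l \phi ( Z_1, \dots, \nasla_X Z_i, \dots, Z_l ) \text{,}
\end{equation*}
apply $Y$ to both sides, invoke \eqref{eq.mixed_deriv} once more on each resulting term, and perform the symmetric computation with $X$ and $Y$ swapped. Subtracting, the ``cross'' terms of the form $(\nabla_X \phi)(\dots, \nasla_Y Z_i, \dots)$, $(\nabla_Y \phi)(\dots, \nasla_X Z_i, \dots)$, and $\phi(\dots, \nasla_X Z_i, \dots, \nasla_Y Z_j, \dots)$ with $i \neq j$ are each symmetric in $(X, Y)$ and cancel, leaving
\begin{equation*}
[X, Y] [ \phi ( Z_1, \dots, Z_l ) ] = ( \nabla_X \nabla_Y \phi - \nabla_Y \nabla_X \phi ) ( Z_1, \dots, Z_l ) + \sum_{i = 1}^l \phi ( Z_1, \dots, \nasla_X ( \nasla_Y Z_i ) - \nasla_Y ( \nasla_X Z_i ), \dots, Z_l ) \text{.}
\end{equation*}

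To close, applying \eqref{eq.mixed_deriv} directly to $(\nabla_{[X, Y]} \phi)(Z_1, \dots, Z_l)$ produces the alternative expression
\begin{equation*}
[X, Y] [ \phi ( Z_1, \dots, Z_l ) ] = ( \nabla_{[X, Y]} \phi ) ( Z_1, \dots, Z_l ) + \sum_{i = 1}^l \phi ( Z_1, \dots, \nasla_{[X, Y]} Z_i, \dots, Z_l ) \text{,}
\end{equation*}
and equating the two, together with the identity $\mc{R}_{XY} \phi = \nabla_X \nabla_Y \phi - \nabla_Y \nabla_X \phi - \nabla_{[X, Y]} \phi$ that follows from unpacking \eqref{eq.mixed_curv}, immediately yields \eqref{eq.mixed_curv_hor} after rearrangement. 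The final clause of the proposition is then essentially immediate: when $X$ and $Y$ are themselves horizontal, $\nasla$ acts as the induced Levi-Civita connection of the leaf metric $\gamma$, so each summand on the right of \eqref{eq.mixed_curv_hor} is precisely $\phi$ evaluated on the usual Riemann operator of $\gamma$ applied to $Z_i$. I expect the only genuine obstacle to be the careful bookkeeping needed to verify the pairwise cancellation of the numerous cross terms; no conceptual difficulty arises.
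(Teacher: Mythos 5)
Your proposal is correct and is essentially the direct computation the paper alludes to (the paper itself gives no detailed proof beyond citing \eqref{eq.mixed_deriv} and "direct computations"). The key facts you invoke are sound: for fixed $X$, $\nabla_X \phi$ is again an element of $\Gamma \ul{T}^0_l \mc{M}$ so the Leibniz rule can be iterated; the cross terms involving $\nabla_X \phi$ and $\nabla_Y \phi$ contracted against a single $\nasla Z_i$, and the doubly-deformed terms $\phi(\dots,\nasla_X Z_i,\dots,\nasla_Y Z_j,\dots)$ with $i \neq j$, are each symmetric under swapping $X \leftrightarrow Y$ (after relabeling $i \leftrightarrow j$) and so cancel in the commutator; and $\mc{R}_{XY}\phi = \nabla_X\nabla_Y\phi - \nabla_Y\nabla_X\phi - \nabla_{[X,Y]}\phi$ follows from the definition $\mc{R}_{XY} = \nabla^2_{XY} - \nabla^2_{YX}$ via torsion-freeness. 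The proof goes through.
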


\subsubsection{Index Conventions}

We will use capital Latin letters to denote \emph{horizontal multi-indices}, i.e., zero or more horizontal indices.
Repeated indices represent summations over all individual indices.

Furthermore, for horizontal tensors, we let $| \cdot |$ denote the pointwise tensor norm:
\begin{equation}
\label{eq.tensor_norm} | \phi |^2 := \phi_I \phi^I \text{.}
\end{equation}
Note that the above notational conventions also cover the purely scalar case, in which all multi-indices can essentially be ignored.

\subsection{Asymptotic Expansions} \label{sec.aads_comp}

Again, we assume an admissible FG-aAdS spacetime segment $( \mc{M}, g )$.
In this subsection, we compute asymptotic expansions associated with various geometric quantities on $\mc{M}$.

Here, and in the remainder of this paper, we assume a bounded family $\Xi$ of coordinates on $( \mc{M}, g )$, as in Definition \ref{def.aads_gen}.
Throughout, when we write $\mc{O} ( \zeta )$ (see Definition \ref{def.O_scr}), we will implicitly assume this to be with respect to some $\varphi \in \Xi$.

\subsubsection{Metric and Christoffel Symbol Expansions}

First, we list the asymptotics of the metric and its corresponding Christoffel symbols.

\begin{proposition} \label{thm.g}
With respect to any $\varphi \in \Xi$, the following hold:
\begin{itemize}
\item The components of $g$ satisfy
\begin{equation}
\label{eq.g} g_{\rho \rho} = \rho^{-2} \text{,} \qquad g_{\rho a} = 0 \text{,} \qquad g_{a b} = \rho^{-2} \gm_{ab} + \gs_{ab} + \mc{O} ( \rho ) \text{.}
\end{equation}
In particular,
\begin{equation}
\label{eq.g_t} g_{tt} = - \rho^{-2} + \gs_{tt} + \mc{O} ( \rho ) \text{,} \qquad g_{t A} = \gs_{t A} + \mc{O} ( \rho ) \text{.}
\end{equation}

\item The dual of $g$ satisfies
\begin{equation}
\label{eq.g_inv} g^{\rho \rho} = \rho^2 \text{,} \qquad g^{\rho a} = 0 \text{,} \qquad g^{a b} = \rho^2 \gm^{ab} - \rho^4 \gm^{ac} \gm^{bd} \gs_{cd} + \mc{O} ( \rho^5 ) \text{.}
\end{equation}
In particular,
\begin{equation}
\label{eq.g_inv_t} g^{tt} = - \rho^2 - \rho^4 \gs_{tt} + \mc{O} ( \rho^5 ) \text{,} \qquad g^{t A} = \rho^4 \gm^{A B} \gs_{t B} + \mc{O} ( \rho^5 ) \text{.}
\end{equation}

\item The Christoffel symbols with respect to these coordinates satisfy
\begin{align}
\label{eq.Gamma} \Gamma^\rho_{\rho \rho} = - \rho^{-1} \text{,} &\qquad \Gamma^\rho_{\rho a} = 0 \text{,} \\
\notag \Gamma^\rho_{a b} = \rho^{-1} \gm_{ab} + \mc{O} ( \rho^2 ) \text{,} &\qquad \Gamma^a_{\rho \rho} = 0 \text{,} \\
\notag \Gamma^a_{\rho b} = - \rho^{-1} \delta^a_b + \rho \gm^{ac} \gs_{cb} + \mc{O} ( \rho^2 ) \text{,} &\qquad \Gamma^a_{b c} = \mathring{\Gamma}^a_{b c} + \mc{O} ( \rho^2 ) \text{,}
\end{align}
where $\mathring{\Gamma}^a_{b c}$ denotes the corresponding Christoffel symbol associated with $\gm$.
In addition, when $\Gamma^a_{b c}$ contains a $t$-component, we have:\footnote{The presence of the (leading-order) quantity $\partial_t \gm$ in \eqref{eq.Gamma_t} is a fundamental difference between the current setting and that of \cite{hol_shao:uc_ads}.}
\begin{equation}
\label{eq.Gamma_t} \Gamma^t_{a b} = \frac{1}{2} \partial_t \gm_{a b} + \mc{O} ( \rho^2 ) \text{,} \qquad \Gamma^a_{t b} = \frac{1}{2} \gm^{a c} \partial_t \gm_{c b} + \mc{O} ( \rho^2 ) \text{.}
\end{equation}
\end{itemize}
\end{proposition}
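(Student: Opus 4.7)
The plan is that this proposition is essentially a bookkeeping computation that follows directly from the definitions of admissible FG-aAdS segment together with the $\mathcal{O}(\zeta)$-bounds of Definition \ref{def.O_scr}. I would organize it into three steps: metric components, inverse metric, and Christoffel symbols.

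For the metric expansion \eqref{eq.g}, I would unpack \eqref{eq.aads_gen_met} together with the Fefferman--Graham restriction that $\ggs$ and $\ggr$ carry no $\rho$-direction components (Definition \ref{def.aads}). This forces $\varsigma = 0$, $\ggs_{\rho a} = 0$, and $\ggr_{\rho \alpha} = 0$, giving $g_{\rho\rho} = \rho^{-2}$ and $g_{\rho a} = 0$ exactly, and $g_{ab} = \rho^{-2} \ggm_{ab} + \ggs_{ab} + \rho \ggr_{ab}$. The tangential restrictions yield $\ggm_{ab} = \gm_{ab}$, $\ggs_{ab} = \gs_{ab}$, and the $\mathcal{O}(1)$-bound on $\ggr_{ab}$ produces the $\mathcal{O}(\rho)$ tail. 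The specialized formulas for $g_{tt}$ and $g_{tA}$ in \eqref{eq.g_t} then follow from $\gm = -dt^2 + \gamma$ and $\gamma$ having no $t$-direction components.

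For the inverse metric, since $g_{\rho a} = 0$, $g^{-1}$ is block diagonal in the $\rho$ versus $(t,x)$ directions, so $g^{\rho\rho} = \rho^2$, $g^{\rho a} = 0$, and I reduce to inverting the tangential block $g_{ab} = \rho^{-2}(\gm_{ab} + \rho^2 \gs_{ab} + \mathcal{O}(\rho^3))$. Writing $(\gm + \rho^2 \gs + \mathcal{O}(\rho^3))^{-1}$ as a Neumann expansion $\gm^{-1} - \rho^2 \gm^{-1}\gs\gm^{-1} + \mathcal{O}(\rho^3)$ and multiplying by the overall $\rho^2$ gives \eqref{eq.g_inv}; specializing the indices yields \eqref{eq.g_inv_t}. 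Crucially the Neumann expansion must be pushed one order beyond the leading term so that the $\rho^4$ correction appears explicitly, since we will need it to capture the $\rho \gm^{ac}\gs_{cb}$ term in the Christoffel symbol $\Gamma^a_{\rho b}$.

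For the Christoffel symbols, I apply the standard formula $\Gamma^\mu_{\nu\lambda} = \tfrac{1}{2} g^{\mu\sigma}(\partial_\nu g_{\sigma\lambda} + \partial_\lambda g_{\sigma\nu} - \partial_\sigma g_{\nu\lambda})$ case by case. The only nontrivial $\rho$-dependence in $g$ sits in the explicit powers of $\rho$ (since $\gm, \gs$ are $\rho$-independent) plus derivatives falling on the $\mathcal{O}(\rho^k)$ tails, for which Definition \ref{def.O_scr} guarantees $\partial_\rho \mathcal{O}(\rho^k) = \mathcal{O}(\rho^{k-1})$ and $\partial_{x^a} \mathcal{O}(\rho^k) = \mathcal{O}(\rho^k)$. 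The computations of $\Gamma^\rho_{\rho\rho}$, $\Gamma^\rho_{\rho a}$, $\Gamma^\rho_{ab}$, and $\Gamma^a_{\rho\rho}$ are immediate. The identity $\Gamma^a_{bc} = \mathring{\Gamma}^a_{bc} + \mathcal{O}(\rho^2)$ follows because in $\tfrac{1}{2}g^{ad}(\partial_b g_{dc} + \partial_c g_{db} - \partial_d g_{bc})$, the leading $\rho^{-2}$ pieces of the metric combine with the $\rho^2$ piece of $g^{-1}$ to yield precisely the $\gm$-Christoffel symbol, with corrections of order $\rho^2$ arising from the $\gs$ and $\mathcal{O}(\rho)$ tails. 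The specialization \eqref{eq.Gamma_t} then drops out of this formula by using $\gm^{tt} = -1$, $\gm^{tA} = 0$ from \eqref{eq.aads_gen_g}.

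The one step requiring genuine care is $\Gamma^a_{\rho b}$, because both the $\rho^{-1}\delta^a_b$ leading term and the $\rho \gm^{ac}\gs_{cb}$ subleading term must be tracked explicitly to verify \eqref{eq.Gamma}. Here I multiply the $\rho^4$-accurate expansion of $g^{ac}$ from Step 2 against $\partial_\rho g_{cb} = -2\rho^{-3}\gm_{cb} + \mathcal{O}(1)$; the cross term $(-\rho^4 \gm^{ad}\gm^{ce}\gs_{de})(-2\rho^{-3}\gm_{cb})$ collapses via $\gm^{ce}\gm_{cb} = \delta^e_b$ to exactly $2\rho \gm^{ad}\gs_{db}$, producing the desired first-order correction. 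This is the only place where the Neumann expansion of the inverse metric must be used beyond leading order, and it is the step I would expect to be the main source of confusion if one were not careful with orders.
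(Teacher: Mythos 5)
The proposal is correct and follows the natural direct computation that the paper omits (the proposition is stated without proof there): unpack the FG restriction to get $\varsigma=0$ and $\ggs_{\rho\alpha}=\ggr_{\rho\alpha}=0$, invert the tangential block via a Neumann expansion carried to $\rho^4$, and work through the Christoffel symbols term by term. You correctly identify $\Gamma^a_{\rho b}$ as the one spot where the subleading $-\rho^4\gm^{ac}\gm^{bd}\gs_{cd}$ term of $g^{ab}$ must be retained, and your contraction with $\partial_\rho g_{cb}$ reproducing $\rho\gm^{ac}\gs_{cb}$ is exactly right.
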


\subsubsection{Curvature Coefficients}

We will also need to compute the asymptotics for the mixed curvature operator $\mc{R}$ defined in \eqref{eq.mixed_curv}.

\begin{proposition} \label{thm.curv}
Let $\phi \in \Gamma \ul{T}^0_l \mc{M}$.
Then, with respect to any $\varphi \in \Xi$, we have
\begin{equation}
\label{eq.curv} | \mc{R}_{\rho a} \phi | \lesssim_{g, l} \rho | \phi | \text{,} \qquad | \mc{R}_{a b} \phi | \lesssim_{g, l} | \phi | \text{.}
\end{equation}
\end{proposition}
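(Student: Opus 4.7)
The plan is to reduce the bound to a pointwise estimate on the horizontal commutator operator, using Proposition \ref{thm.mixed_curv_hor}. Specifically, for any horizontal vector fields $Z_1, \dots, Z_l$, each term on the right-hand side of \eqref{eq.mixed_curv_hor} is of the schematic form $\phi ( \dots, T_{XY} Z_i, \dots )$, where
\[
T_{XY} Z := \nasla_X ( \nasla_Y Z ) - \nasla_Y ( \nasla_X Z ) - \nasla_{[X, Y]} Z \text{.}
\]
Thus, once I establish the pointwise bound $| T_{XY} Z | \lesssim \rho \, |Z|$ when $\{X, Y\} = \{\partial_\rho, \partial_a\}$, and $| T_{XY} Z | \lesssim |Z|$ when $\{X, Y\} = \{\partial_a, \partial_b\}$, the claim \eqref{eq.curv} follows by applying tensor duality (as the operator in \eqref{eq.mixed_curv_hor} is linear and independent of $\phi$) and standard multilinear estimates using uniform bounds on $\ggm^{-1}$ from \eqref{eq.aads_gen_asymp}. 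The inputs to the computation of $T_{XY}$ are the horizontal projections of $\nabla_{\partial_\alpha} \partial_B$, which by Proposition \ref{thm.g} satisfy
\[
\nasla_{\partial_\rho} \partial_B = ( - \rho^{-1} \delta^D_B + \rho \gm^{DE} \gs_{EB} + \mc{O} ( \rho^2 ) ) \partial_D \text{,} \quad \nasla_{\partial_a} \partial_B = ( \Upsilon^D_{a B} + \mc{O} ( \rho^2 ) ) \partial_D \text{,}
\]
where $\Upsilon^D_{aB}$ is $\rho$-independent (it is $\mathring{\Gamma}^D_{CB}$ when $a = C$ and $\tfrac{1}{2} \gm^{DE} \partial_t \gm_{EB}$ when $a = t$).

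For the $\mc{R}_{ab}$ case, I note that $[\partial_a, \partial_b] = 0$, so $T_{ab}$ is a commutator of two operations involving only the $\mc{O}(1)$ coefficients $\Upsilon$ and their derivatives (all $\partial_{x^a}$-derivatives preserve the $\mc{O}_\varphi$-order by Definition \ref{def.O_scr}, and $\partial_\rho$ applied to $\mc{O}(\rho^2)$ produces $\mc{O}(\rho)$). A short computation yields $T_{ab} ( \partial_B ) = \mc{O} ( 1 ) \cdot \partial_D$, giving $|\mc{R}_{ab} \phi| \lesssim |\phi|$ after the norm bookkeeping above. This recovers, as a special case when both $a,b$ are $\mc{S}$-tangent, that $T_{ab}$ reduces on each leaf of $(\rho, t)$ to the Riemann tensor of the induced horizontal metric, which is uniformly bounded since the horizontal Christoffel symbols converge to those of $\gm$ as $\rho \to 0$.

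The delicate case is $\mc{R}_{\rho a}$, where the $-\rho^{-1}$ singularity in $\nasla_{\partial_\rho} \partial_B$ produces terms of order $\rho^{-1}$ in both $\nasla_{\partial_\rho} \nasla_{\partial_a} \partial_B$ and $\nasla_{\partial_a} \nasla_{\partial_\rho} \partial_B$, and I must show these cancel exactly. Direct expansion, again using $[\partial_\rho, \partial_a] = 0$, yields
\[
\nasla_{\partial_\rho} \nasla_{\partial_a} \partial_B = - \rho^{-1} \Upsilon^D_{aB} \partial_D + \mc{O} ( \rho ) \partial_D \text{,} \qquad \nasla_{\partial_a} \nasla_{\partial_\rho} \partial_B = - \rho^{-1} \Upsilon^D_{aB} \partial_D + \mc{O} ( \rho ) \partial_D \text{,}
\]
where the cancellation of the leading $\rho^{-1}$ contributions hinges critically on the fact that $\Upsilon^D_{aB}$ is independent of $\rho$ (so that $\partial_\rho \Upsilon^D_{aB} = 0$), a feature specifically enforced by the Fefferman--Graham form of $g$. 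Subtracting gives $T_{\rho a} ( \partial_B ) = \mc{O} ( \rho ) \partial_D$, and hence $|\mc{R}_{\rho a} \phi| \lesssim \rho \, |\phi|$. The main technical obstacle is precisely this cancellation: a naive power-counting would only give $|\mc{R}_{\rho a} \phi| \lesssim \rho^{-1} |\phi|$, and extracting the two extra powers of $\rho$ requires both the exact form of the leading $\rho^{-1}$ coefficient in $\Gamma^D_{\rho B}$ (namely, the Kronecker $\delta^D_B$) and the $\rho$-independence of $\Upsilon$ inherited from the FG gauge.
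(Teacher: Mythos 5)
Your proposal is correct and follows essentially the same route as the paper: expand $T_{XY}\partial_B$ from Proposition \ref{thm.mixed_curv_hor} in terms of the Christoffel expansions of Proposition \ref{thm.g}, exploit the Kronecker-$\delta$ structure of the singular $\rho^{-1}$ coefficient in $\Gamma^A_{\rho B}$ together with the $\rho$-independence of the bounded leading part to obtain the required cancellations, then pass to the tensorial bound by linearity and the uniform bounds in \eqref{eq.aads_gen_asymp}. The only slight imprecision is the parenthetical attributing the $\rho^{-1}$ cancellation itself to $\partial_\rho\Upsilon = 0$ — that cancellation comes automatically from the $\delta^D_B$ structure, while $\partial_\rho\Upsilon=0$ is needed separately to avoid an $\mc{O}(1)$ remainder — but your closing sentence already identifies both ingredients correctly, so the argument is sound.
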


\begin{proof}
The computations are analogous to those found in \cite{hol_shao:uc_ads}; however, since \eqref{eq.curv} contains some nonstandard definitions involving mixed tensor fields, we give some details for the reader's convenience.

First, using that $\nasla_\alpha \partial_A$ is the orthogonal projection of $\nabla_\alpha \partial_A$ to the $(\rho, t)$-level sets, along with the asymptotic identities in Proposition \ref{thm.g}, we obtain
\begin{equation}
\label{eq.curv_0} \nasla_\alpha \partial_A = \Gamma^B_{ \alpha A } \partial_B + \sum_{ B = 1 }^{ n - 1 } \mc{O} ( \rho^2 ) \cdot \partial_B \text{.}
\end{equation}
Differentiating \eqref{eq.curv_0} and then applying \eqref{eq.curv_0} yields the identity
\begin{align}
\label{eq.curv_1} \nasla_\alpha ( \nasla_\beta \partial_A ) - \nasla_\beta ( \nasla_\alpha \partial_A ) &= \partial_\alpha \Gamma^B_{\beta A} \partial_B - \partial_\beta \Gamma^B_{\alpha A} \partial_B + \Gamma^B_{\beta A} \Gamma^C_{\alpha B} \partial_C \\
\notag &\qquad - \Gamma^B_{\alpha A} \Gamma^C_{\beta B} \partial_C + \sum_{ C = 1 }^{n - 1} \mc{O} ( \rho ) \cdot \partial_C \text{.}
\end{align}

Since \eqref{eq.Gamma} implies
\begin{equation}
\label{eq.curv_2} \partial_\rho \Gamma^B_{a A} - \partial_a \Gamma^B_{\rho A} = \mc{O} ( \rho ) \text{,} \qquad \Gamma^B_{a A} \Gamma^C_{ \rho B } - \Gamma^B_{\rho A} \Gamma^C_{a B} = \mc{O} ( \rho ) \text{,}
\end{equation}
then combining \eqref{eq.curv_0} and \eqref{eq.curv_2} yields the first part of \eqref{eq.curv}.
Similarly, since
\begin{equation}
\label{eq.curv_3} \partial_a \Gamma^B_{b A} - \partial_b \Gamma^B_{a A} = \mc{O} ( 1 ) \text{,} \qquad \Gamma^B_{b A} \Gamma^C_{a B} - \Gamma^B_{a A} \Gamma^C_{b B} = \mc{O} ( 1 ) \text{,}
\end{equation}
by \eqref{eq.Gamma}, then \eqref{eq.curv_0} and \eqref{eq.curv_3} implies the second part of \eqref{eq.curv}.
\end{proof}

\subsection{The $f$-Foliation} \label{sec.aads_f}

As before, let $( \mc{M}, g )$ be an admissible FG-aAdS segment, see Definitions \ref{def.aads_manifold} and \ref{def.aads}.
Moreover, we now normalize the time interval as
\begin{equation}
\label{eq.t_normalize} 0 = T_- < t < T_+ = \pi T \text{.}
\end{equation}

Analogous to \cite{hol_shao:uc_ads}, we construct a function on $\mc{M}$ whose level sets will, \emph{under additional assumptions} (see Definition \ref{def.pcp}), be shown to be pseudoconvex near the conformal boundary $\mc{I}$.
For this purpose, we define the following:

\begin{definition} \label{def.f}
Fix a constant $\xi \geq 0$.
We then define $f := f_{T, \xi} : \mc{M} \rightarrow \R$ by
\begin{align} \label{eq.f}
f ( \rho, t, x^A ) = \frac{ \rho }{ \fbd (t) } \text{,}
\end{align}
where $\fbd: [0, \pi T] \rightarrow \R$ satisfies:
\begin{equation}
\label{eq.psi} \fbd (t) :=
\begin{cases}
  \exp \left( \frac{\xi}{4} t \right) \sin ( \mu t ) & t \in \left[ 0, \frac{ \pi T }{2} \right) \text{,} \\
  \exp \left( \frac{\xi}{4} ( \pi T - t ) \right) \sin ( \mu ( \pi T - t ) ) & t \in \left[ \frac{ \pi T }{2}, \pi T \right] \text{.}
\end{cases}
\end{equation}
Here, $\mu$ is the unique constant satisfying
\begin{equation}
\label{eq.mu} T = \frac{2}{\mu} - \frac{2}{\mu \pi} \arctan \left( \frac{ 4 \mu }{ \xi } \right) \text{,} \qquad \frac{1}{T} \leq \mu < \frac{2}{T} \text{.}
\end{equation}
\end{definition}

Note that the level sets of $f$ foliate a neighborhood of $\mc{I}$ in $\mc{M}$.
This specific choice of $\fbd$ will be justified in the proof of Theorem \ref{thm.pseudoconvex}.
For now, observe:
\begin{itemize}
\item $\fbd \in C^2 [0, \pi T]$, and $\fbd$ is smooth on $[ 0, \frac{ \pi T }{ 2 } )$ and $( \frac{ \pi T }{ 2 }, \pi T ]$.

\item $\fbd$ is strictly positive on $(0, \pi T)$, and $\fbd (0) = \fbd ( \pi T ) = 0$.

\item $\fbd = \mc{O} (1)$ on each of the intervals $( 0, \frac{ \pi T }{ 2 } )$ and $( \frac{ \pi T }{ 2 }, \pi T )$.
\end{itemize}

\begin{remark}
Compared to the $f$ employed in \cite{hol_shao:uc_ads}, the new element here is the parameter $\xi$, which will be used to compensate for the non-static boundary.
By choosing $\xi = 0$, we recover precisely the corresponding function $f$ used in \cite{hol_shao:uc_ads}.\footnote{In this case, $\eta (t) = \sin ( T^{-1} t )$, and hence $f$ is in fact everywhere smooth.}
\end{remark}

A technical issue here that was not encountered in \cite{hol_shao:uc_ads} is that $f$ fails to be smooth.
(In particular, $f$ fails to be $C^3$ at $\mc{M} \cap \{ t = \frac{ \pi T }{2} \}$.)
Thus, we often restrict attention to regions in which all objects are smooth:

\begin{definition} \label{def.M_mp}
We define the following regions,
\begin{equation}
\label{eq.M_mp} \mc{M}_- := \mc{M} \cap \left\{ t < \frac{ \pi T }{2} \right\} \text{,} \qquad \mc{M}_+ := \mc{M} \cap \left\{ t > \frac{ \pi T }{2} \right\} \text{,}
\end{equation}
\end{definition}

Furthermore, similar to \cite{hol_shao:uc_ads}, in our main Carleman estimate, it will often be convenient to work not with the gradient of $f$, but with the following:

\begin{definition} \label{def.S}
Let $\grad f$ denote the $g$-gradient of $f$,
\begin{equation}
\label{eq.grad} \grad f := g^{\alpha \beta} \nabla_\alpha f \cdot \partial_\beta \text{,}
\end{equation}
and let $S$ denote the following rescaling of $\grad f$:
\begin{equation}
\label{eq.S} S := f^{n - 3} \grad f \text{.}
\end{equation}
\end{definition}

\subsubsection{Asymptotic Expansions}

The next step is to compute asymptotic properties for $f$.
For this, it will be convenient to introduce a weaker notion (than the $\mathcal{O}$ of Definition \ref{def.O_scr}) of asymptotic error terms, i.e., the notion of asymptotics errors used throughout our previous paper \cite{hol_shao:uc_ads}:

\begin{definition} \label{def.O0_scr}
Let $\zeta \in C ( \mc{M} )$.
We use $\mc{O}_0 ( \zeta )$ to denote any function $u$ on an appropriate open subset of $\mc{M}_+ \cup \mc{M}_-$ such that we have the family of bounds
\begin{equation} \label{eq.O0_scr}
| \partial_{ x^{ \alpha_1 } } \dots \partial_{ x^{ \alpha_m } } u | \lesssim_{g, m, u} \rho^{-m} \zeta \text{,} \qquad m \geq 0 \text{,}
\end{equation}
where the $x^{ \alpha_i }$'s refer to any of the (spacetime) coordinates used in Definition \ref{def.aads}.
\end{definition}

\begin{remark}
The main reason for introducing the above is that while $f \neq \mc{O} ( f )$, we have $f = \mc{O}_0 ( f )$, which allows for easier bookkeeping of error terms.
\end{remark}

The subsequent proposition lists some basic asymptotic properties of $f$:

\begin{proposition} \label{thm.f_lo}
Let $f$, $\fbd$ be as in Definition \ref{def.f}.
Then, the gradient of $f$ satisfies
\begin{align}
\label{eq.f_grad} \grad f &= f \rho \partial_\rho + \fbd' f^2 [ \rho + \mc{O} ( \rho^3 ) ] \partial_t + \fbd' f^2 \cdot \sum_{ A = 1 }^{ n - 1 } \mc{O} ( \rho^3 ) \cdot \partial_{ x^A } \text{,} \\
\notag \nabla^\alpha f \nabla_\alpha f &= f^2 [ 1 - ( \fbd' )^2 f^2 + ( \fbd' )^2 f^2 \cdot \mc{O} ( \rho^2 ) ] \\
\notag &= f^2 + \mc{O}_0 ( f^4 ) \text{.}
\end{align}
In addition, $\Box f$ satisfies 
\begin{equation}
\label{eq.f_box} \Box f = - (n - 1) f + \mc{O}_0 ( f^3 ) \text{.}
\end{equation}
\end{proposition}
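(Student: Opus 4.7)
The plan is to compute the three quantities directly from the coordinate formula $f=\rho/\fbd(t)$ using the inverse-metric and Christoffel asymptotics of Proposition \ref{thm.g}. Observe first that $\partial_\rho f = 1/\fbd$, $\partial_t f = -\rho\fbd'/\fbd^2$, $\partial_{x^A} f = 0$, and that on any compact subregion of $\mc{M}_- \cup \mc{M}_+$ the quantities $\fbd, \fbd', \fbd''$ are bounded and $\fbd$ is bounded away from zero, so that factors such as $\fbd'/\fbd$ or $\fbd''\fbd$ act as harmless $\mc{O}_0(1)$ multipliers. Substituting into $\grad f = g^{\alpha\beta}\partial_\beta f\cdot\partial_\alpha$ gives a $\partial_\rho$-coefficient $g^{\rho\rho}\partial_\rho f=\rho^2/\fbd = f\rho$ (exactly), a $\partial_t$-coefficient $g^{tt}\partial_t f = [-\rho^2+\mc{O}(\rho^4)](-\rho\fbd'/\fbd^2) = \fbd' f^2[\rho+\mc{O}(\rho^3)]$, and, since $\partial_A f=0$, $\partial_{x^A}$-coefficients coming entirely from $g^{tA}\partial_t f = \mc{O}(\rho^4)(-\rho\fbd'/\fbd^2) = \fbd'f^2\mc{O}(\rho^3)$. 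Squaring (all cross-terms vanish thanks to $\partial_A f=0$) gives $g^{\rho\rho}(\partial_\rho f)^2 + g^{tt}(\partial_t f)^2 = f^2 - (\fbd')^2 f^4[1+\mc{O}(\rho^2)] = f^2 + \mc{O}_0(f^4)$, as claimed.

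For $\Box f = g^{\alpha\beta}\partial_\alpha\partial_\beta f - g^{\alpha\beta}\Gamma^\gamma_{\alpha\beta}\partial_\gamma f$, the pure-derivative term reduces (via $\partial_A f=0$ and $\partial_\rho^2 f=0$) to $g^{tt}\partial_t^2 f$, which is $\mc{O}_0(f^3)$. The principal contribution is $-g^{\alpha\beta}\Gamma^\rho_{\alpha\beta}\partial_\rho f$: from \eqref{eq.Gamma} one has $g^{\rho\rho}\Gamma^\rho_{\rho\rho} = -\rho$ and $g^{tt}\Gamma^\rho_{tt} = \rho + \mc{O}(\rho^3)$ (these two cancel at leading order), while $g^{AB}\Gamma^\rho_{AB} = \rho\,\gamma^{AB}\gamma_{AB} + \mc{O}(\rho^3) = (n-1)\rho + \mc{O}(\rho^3)$; multiplying the sum $(n-1)\rho + \mc{O}(\rho^3)$ by $-\partial_\rho f = -1/\fbd$ yields $-(n-1)f + \mc{O}_0(f^3)$. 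The remaining pieces are harmless: $-g^{\alpha\beta}\Gamma^A_{\alpha\beta}\partial_A f = 0$, and the new boundary-dynamical term $-g^{\alpha\beta}\Gamma^t_{\alpha\beta}\partial_t f$ is dominated by $-g^{AB}\Gamma^t_{AB}\partial_t f$, which by \eqref{eq.Gamma_t} equals $\tfrac{1}{2}(\rho\fbd'/\fbd^2)\rho^2\gamma^{AB}\partial_t\gamma_{AB} + \mc{O}_0(f^5)$ and is therefore $\mc{O}_0(f^3)$. Summing gives \eqref{eq.f_box}.

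The only real obstacle is bookkeeping: one must track the cancellation $g^{\rho\rho}\Gamma^\rho_{\rho\rho} + g^{tt}\Gamma^\rho_{tt} = \mc{O}(\rho^3)$ with enough precision to land the remainder inside $\mc{O}_0(f^3)$, and—specific to the non-static setting treated here—verify that the new $\partial_t\gm$-type contributions to $\Gamma^t_{ab}$ appearing in \eqref{eq.Gamma_t} do not spoil the expansion. They do not, because these terms pick up two extra powers of $\rho$ once paired with the factor $\rho$ present in $\partial_t f$. The derivative bounds required by the $\mc{O}_0$-notation of Definition \ref{def.O0_scr} are inherited directly from the $\mc{O}_\varphi$-bounds on the metric expansion, together with the boundedness (and positivity) of $\fbd, \fbd', \fbd''$ on the relevant compact subregions of $\mc{M}_- \cup \mc{M}_+$.
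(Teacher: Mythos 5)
Your proposal is correct and takes essentially the same approach as the paper: compute the coordinate derivatives of $f$ directly, substitute into the inverse-metric and Christoffel asymptotics of Proposition \ref{thm.g}, and track the powers of $\rho$ versus $f$. The only organizational difference is that the paper first records the full Hessian $\nabla_{\alpha\beta}f$ (its \eqref{eq.f_hess}) and then traces, whereas you trace the Christoffel symbols directly inside $\Box f = g^{\alpha\beta}\partial_\alpha\partial_\beta f - g^{\alpha\beta}\Gamma^\gamma_{\alpha\beta}\partial_\gamma f$; these are the same computation. Your identification of the two genuinely delicate points — the cancellation $g^{\rho\rho}\Gamma^\rho_{\rho\rho} + g^{tt}\Gamma^\rho_{tt} = \mc{O}(\rho^3)$ and the harmlessness of the new $\partial_t\gm$-contribution via $\Gamma^t_{AB}$ — matches the structure of the paper's argument (the paper flags the latter in its footnote to \eqref{eq.Gamma_t}).
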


\begin{proof}
The first step is to compute derivatives of $f$:
\begin{align}
\label{eq.f_deriv} \partial_\rho f &= f \rho^{-1} \text{,} \qquad \partial_t f = - \fbd' f^2 \rho^{-1} \text{,} \\
\label{eq.f_deriv_2} \partial^2_{\rho \rho} f = 0 \text{,} \qquad \partial^2_{\rho t} f &= - \fbd' f^2 \rho^{-2} \text{,} \qquad \partial^2_{t t} f = - f^2 \rho^{-2} [ \fbd'' \rho - 2 ( \fbd' )^2 f ] \text{.}
\end{align}
Both equations in \eqref{eq.f_grad} follow from \eqref{eq.g_inv}, \eqref{eq.g_inv_t}, and \eqref{eq.f_deriv}.
Next, from \eqref{eq.Gamma}, \eqref{eq.Gamma_t}, and \eqref{eq.f_deriv_2}, we obtain expansions for components of $\nabla^2 f$:
\begin{align}
\label{eq.f_hess} \nabla_{\rho \rho} f &= f \rho^{-2} \text{,} \\
\notag \nabla_{\rho t} f &= - \fbd' f^2 \rho^{-2} [ 2 + \rho^2 \gs_{tt} + \mc{O} ( \rho^3 ) ] \text{,} \\
\notag \nabla_{t t} f &= f \rho^{-2} [ 1 - \fbd'' f \rho + 2 ( \fbd' )^2 f^2 + \mc{O} ( \rho^3 ) + \fbd' f \cdot \mc{O} ( \rho^3 ) ] \text{,} \\
\notag \nabla_{A B} f &= - f \rho^{-2} [ \gm_{A B} + \mc{O} ( \rho^3 ) ] + \frac{1}{2} \fbd' f^2 \rho^{-2} [ \rho \partial_t \gm_{A B} + \mc{O} ( \rho^3 ) ] \text{,} \\
\notag \nabla_{\rho A} f &= - \fbd' f^2 \rho^{-2} [ \rho^2 \gs_{t A} + \mc{O} ( \rho^3 ) ] \text{,} \\
\notag \nabla_{t A} f &= f \rho^{-2} \cdot \mc{O} ( \rho^3 ) + \fbd' f^2 \rho^{-2} \cdot \mc{O} ( \rho^3 ) \text{.}
\end{align}
The final identity \eqref{eq.f_box} now follows from \eqref{eq.g_inv}, \eqref{eq.g_inv_t}, and \eqref{eq.f_hess}.
\end{proof}

\begin{corollary} \label{thm.S_lo}
$S$ satisfies the following asymptotic properties:
\begin{equation}
\label{eq.S_lo} \nabla^\alpha S_\alpha = - 2 f^{n - 2} + \mc{O}_0 ( f^n ) \text{,} \qquad S^\alpha S_\alpha = f^{2n - 4} + \mc{O}_0 ( f^{2n - 2} ) \text{.}
\end{equation}
\end{corollary}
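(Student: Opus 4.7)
The plan is to simply expand the definition $S = f^{n-3} \grad f$ and apply Proposition \ref{thm.f_lo}, tracking error orders using the $\mc{O}_0$-calculus of Definition \ref{def.O0_scr}.

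For the second identity, I would write
\[
S^\alpha S_\alpha = f^{2(n-3)} \nabla^\alpha f \, \nabla_\alpha f,
\]
and substitute the second line of \eqref{eq.f_grad}, namely $\nabla^\alpha f \nabla_\alpha f = f^2 + \mc{O}_0(f^4)$. This gives $f^{2n-6} \cdot f^2 + f^{2n-6} \cdot \mc{O}_0(f^4) = f^{2n-4} + \mc{O}_0(f^{2n-2})$, which is the desired expression. The only thing to check here is that $f^{2n-6} \cdot \mc{O}_0(f^4) = \mc{O}_0(f^{2n-2})$, which is a routine property of the $\mc{O}_0$-symbol (smooth powers of $f$ lie in $\mc{O}_0$ of the same power, and products behave multiplicatively).

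For the first identity, I would compute
\[
\nabla^\alpha S_\alpha = \nabla^\alpha (f^{n-3} \nabla_\alpha f) = (n-3) f^{n-4} \nabla^\alpha f \, \nabla_\alpha f + f^{n-3} \Box f.
\]
Plugging in the second line of \eqref{eq.f_grad} for $\nabla^\alpha f \nabla_\alpha f$ and \eqref{eq.f_box} for $\Box f$ gives
\[
\nabla^\alpha S_\alpha = (n-3) f^{n-4} [f^2 + \mc{O}_0(f^4)] + f^{n-3}[-(n-1) f + \mc{O}_0(f^3)],
\]
and the leading-order coefficients combine as $(n-3) - (n-1) = -2$, while the remainders collect into $\mc{O}_0(f^n)$. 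This yields the claimed expansion.

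There is no real obstacle: both statements are immediate algebraic consequences of Proposition \ref{thm.f_lo}, once one recognises that the rescaling by $f^{n-3}$ in the definition of $S$ is designed precisely so that the leading-order terms in $\nabla^\alpha S_\alpha$ and $S^\alpha S_\alpha$ appear at the uniform orders $f^{n-2}$ and $f^{2n-4}$, respectively. The only point to be slightly careful about is that the calculation takes place on $\mc{M}_+ \cup \mc{M}_-$, where $f$ (and hence $S$) is smooth; this is exactly the domain on which $\mc{O}_0$ is defined in Definition \ref{def.O0_scr}, so no extra work is needed.
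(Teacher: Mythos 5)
Your proof is correct and is essentially the argument implicit in the paper (the corollary is stated without proof, as an immediate consequence of Proposition~\ref{thm.f_lo}). The computation $\nabla^\alpha S_\alpha = (n-3) f^{n-4}\nabla^\alpha f\nabla_\alpha f + f^{n-3}\Box f$, combined with \eqref{eq.f_grad}, \eqref{eq.f_box}, and the multiplicativity of $\mc{O}_0$, gives exactly the claimed expansions, and you are right that the domain restriction to $\mc{M}_+\cup\mc{M}_-$ is automatically handled by the definition of $\mc{O}_0$.
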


As in \cite{hol_shao:uc_ads}, the $\mc{O}_0$-classes satisfy systematic derivative properties:

\begin{proposition} \label{thm.error_deriv}
Let $\zeta \in C ( \mc{M} )$, and suppose $u = \mc{O}_0 ( \zeta )$ is smooth.
Then,
\begin{equation}
\label{eq.error_deriv} \Box u = \mc{O}_0 ( \zeta ) \text{,} \qquad \nabla^\alpha f \nabla_\alpha u = \mc{O}_0 ( f \zeta ) \text{.}
\end{equation}
\end{proposition}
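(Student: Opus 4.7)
The plan is to verify both identities by direct substitution into the expansions of Propositions \ref{thm.g} and \ref{thm.f_lo}, relying on two elementary closure properties of the $\mc{O}_0$-class: first, $\mc{O} ( \zeta ) \subseteq \mc{O}_0 ( \zeta )$, which is immediate upon comparing \eqref{eq.O_scr} with \eqref{eq.O0_scr}; and second, the product of an $\mc{O} ( \rho^k )$-function with an $\mc{O}_0 ( \zeta )$-function lies in $\mc{O}_0 ( \rho^k \zeta )$. This second property follows by Leibniz, since each application of $\partial_\rho$ reduces the $\rho$-power of the first factor by one while costing only one extra $\rho^{-1}$ on the second; derivatives in $t$ and $x^A$ cost nothing on $\mc{O}$-factors and one $\rho^{-1}$ on $\mc{O}_0$-factors, and so are equally consistent.

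For the first identity, I would expand
\[
\Box u = g^{\alpha \beta} \partial^2_{\alpha \beta} u - g^{\alpha \beta} \Gamma^\gamma_{\alpha \beta} \partial_\gamma u \text{.}
\]
Proposition \ref{thm.g} gives $g^{\alpha \beta} = \mc{O} ( \rho^2 )$ for every index pair, so the first piece is $\mc{O} ( \rho^2 ) \cdot \mc{O}_0 ( \rho^{-2} \zeta ) = \mc{O}_0 ( \zeta )$. For the second piece, a short calculation from \eqref{eq.Gamma}--\eqref{eq.Gamma_t} yields $g^{\alpha \beta} \Gamma^\gamma_{\alpha \beta} = \mc{O} ( \rho )$ for each $\gamma$: the $\rho^{-1}$ singularity of $\Gamma^\rho_{\rho \rho}$ is absorbed by $g^{\rho \rho} = \rho^2$, while the $\rho^{-1}$ contributions in $\Gamma^\rho_{a b}$ and $\Gamma^a_{\rho b}$ are neutralized by $g^{a b} = \mc{O} ( \rho^2 )$ and by $g^{\rho a} = 0$, respectively. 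Pairing with $\partial_\gamma u = \mc{O}_0 ( \rho^{-1} \zeta )$ then again gives $\mc{O}_0 ( \zeta )$.

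For the second identity, since $\partial_{x^A} f = 0$ and $g^{\rho t} = g^{\rho A} = 0$, the quantity reduces to
\[
\nabla^\alpha f \nabla_\alpha u = g^{\rho \rho} \partial_\rho f \cdot \partial_\rho u + g^{t t} \partial_t f \cdot \partial_t u + g^{t A} \partial_t f \cdot \partial_A u \text{.}
\]
Substituting $\partial_\rho f = \fbd^{-1}$ and $\partial_t f = - \fbd' f^2 \rho^{-1}$ from \eqref{eq.f_deriv}, together with \eqref{eq.g_inv}--\eqref{eq.g_inv_t}, the leading contribution $g^{\rho \rho} \partial_\rho f \cdot \partial_\rho u = \rho f \cdot \partial_\rho u$ is exactly of class $\mc{O}_0 ( f \zeta )$. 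The other two terms each carry $\partial_t f$, which has an extra factor of $f$ (times the bounded quantity $\fbd'$) compared with $\partial_\rho f$, and hence lie in $\mc{O}_0 ( f^2 \zeta )$, which is contained in $\mc{O}_0 ( f \zeta )$ since $f$ is bounded.

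The main obstacle, though essentially routine, is to check that these pointwise estimates persist at \emph{every} order of coordinate derivative, as required by Definition \ref{def.O0_scr}. This follows by an induction on the derivative order: differentiating the product decompositions above and invoking the weight-balance property of the first paragraph keeps all terms within their stated classes, so no new ingredient is needed beyond careful bookkeeping.
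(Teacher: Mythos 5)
Your proof is correct and fills in, in explicit detail, precisely what the paper's one-line proof (``These are consequences of Proposition \ref{thm.g} and \eqref{eq.f_grad}'') leaves implicit: namely, expand $\Box u$ and $\nabla^\alpha f\nabla_\alpha u$ against the metric, Christoffel, and $f$-derivative asymptotics, and track the resulting $\rho$- and $f$-weights using the closure properties of the $\mc{O}_0$-class under products and differentiation. The only cosmetic difference is that for the second identity you work from the component formulas $\partial_\rho f = \fbd^{-1}$, $\partial_t f = -\fbd' f^2 \rho^{-1}$ in \eqref{eq.f_deriv} rather than directly from the packaged expansion of $\grad f$ in \eqref{eq.f_grad} that the paper cites; the bookkeeping is the same either way.
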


\begin{proof}
These are consequences of Proposition \ref{thm.g} and \eqref{eq.f_grad}.
\end{proof}

\subsubsection{Adapted frames}

Similar to \cite{hol_shao:uc_ads}, we define a collection of orthonormal frames adapted to the foliation by the level sets of $f$:

\begin{definition} \label{def.frame}
We define local frames $( N, V, E_1, \dots, E_{n-1} )$ as follows:\footnote{$V$ here corresponds to the vector field ``$T$" in \cite{hol_shao:uc_ads}.}
\begin{itemize}
\item Let $( E_1, \dots, E_{n-1} )$ denote local orthonormal frames on the level sets of $(\rho, t)$.
Note that by \eqref{eq.g}, these frames can be chosen such that
\begin{equation}
\label{eq.frame_E} E_X := \rho E_X^A \partial_A \text{,} \qquad E_X^A = \mc{O} ( 1 ) \text{.}
\end{equation}

\item Let $N$ denote the inward-pointing unit normal to level sets of $f$:
\begin{equation}
\label{eq.frame_N} N := | \nabla^\alpha f \nabla_\alpha f |^{ - \frac{1}{2} } \grad f \text{.}
\end{equation}

\item The final (future, timelike) frame component is then given by:
\begin{align}
\label{eq.frame_V} V &:= | g ( \tilde{V}, \tilde{V} ) |^{- \frac{1}{2}} \tilde{V} \text{,} \\
\notag \tilde{V} &:= \partial_t + \fbd' f \partial_\rho - \sum_{ X = 1 }^{ n - 1 } g ( \partial_t + \fbd' f \partial_\rho, E_X ) \cdot E_X \text{.}
\end{align}
\end{itemize}
\end{definition}

Direct computations using \eqref{eq.g} and \eqref{eq.f_grad} then yield the following:

\begin{proposition} \label{thm.frame}
The frames $( N, V, E_X )$ in \eqref{eq.frame_E}-\eqref{eq.frame_V} are orthonormal.
Also:
\begin{itemize}
\item $N$ and $V$ have asymptotic expansions
\begin{align}
\label{eq.frame_NV} N &= [ 1 - ( \fbd' )^2 f^2 + \mc{O}_0 ( f^2 \rho^2 ) ]^{ - \frac{1}{2} } \left[ \rho \partial_\rho + \fbd' f \rho \partial_t + \sum_a \mc{O}_0 ( f \rho^3 ) \cdot \partial_a \right] \text{,} \\
\notag V &= [ 1 - ( \fbd' )^2 f^2 - \gs_{tt} \rho^2 + \mc{O}_0 ( \rho^3 ) ]^{ - \frac{1}{2} } \\
\notag &\qquad \cdot \left\{ \rho \partial_t + \fbd' f \rho \partial_\rho - \rho^3 E_X^A E_X^B \gs_{t A} \partial_B + \sum_B \mc{O}_0 ( \rho^4 ) \cdot \partial_B \right\} \text{.}
\end{align}

\item Furthermore, for $f \ll_g 1$, the following inversion formulas hold:
\begin{align}
\label{eq.frame_inv} [ 1 - ( \fbd' )^2 f^2 ]^\frac{1}{2} \rho \partial_\rho &= [ 1 + \mc{O}_0 ( \rho^2 ) ] N - [ \fbd' f + \mc{O}_0 ( \rho^2 ) ] V + \sum_{ X = 1 }^{ n - 1 } \mc{O}_0 ( \rho^2 ) \cdot E_X \text{,} \\
\notag [ 1 - ( \fbd' )^2 f^2 ]^\frac{1}{2} \rho \partial_t &= [ 1 + \mc{O}_0 ( \rho^2 ) ] V - [ \fbd' f + \mc{O}_0 ( \rho^2 ) ] N + \sum_{ X = 1 }^{ n - 1 } \mc{O}_0 ( \rho^2 ) \cdot E_X \text{.}
\end{align}
\end{itemize}
\end{proposition}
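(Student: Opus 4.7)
The proof splits naturally into three tasks: verifying orthonormality of $(N, V, E_1, \ldots, E_{n-1})$, computing the asymptotic expansions \eqref{eq.frame_NV}, and deriving the inversion formulas \eqref{eq.frame_inv}. The computations are essentially bookkeeping against the expansions assembled in Proposition \ref{thm.g} and Proposition \ref{thm.f_lo}.

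First, I would check orthonormality. That $(E_X)$ is orthonormal and that $|N|_g = |V|_g = 1$ are immediate from Definition \ref{def.frame}. The identity $g(N, E_X) = 0$ follows because each $E_X$ is tangent to the level sets of $(\rho, t)$, so $E_X f = 0$ (as $f = \rho/\fbd(t)$) and hence $g(\grad f, E_X) = 0$. The identity $g(V, E_X) = 0$ is then immediate from the Gram--Schmidt subtraction in \eqref{eq.frame_V}. Finally, $g(N, V) = 0$ reduces to $\tilde V f = 0$; using \eqref{eq.f_deriv} one computes directly
\[
(\partial_t + \fbd' f \partial_\rho) f = -\fbd' f^2 \rho^{-1} + \fbd' f \cdot f \rho^{-1} = 0,
\]
while the $E_X$-contributions in $\tilde V$ annihilate $f$.

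Second, for the expansion of $N$, I would simply combine the explicit formula \eqref{eq.f_grad} for $\grad f$ with $|\nabla^\alpha f \nabla_\alpha f|^{1/2} = f \, [1 - (\fbd')^2 f^2 + \mc{O}_0(f^2\rho^2)]^{1/2}$, also from \eqref{eq.f_grad}, and divide. For $V$, the two intermediate computations are $g(\partial_\rho, E_X) = 0$ (because $g_{\rho A} = 0$ and $E_X = \rho E_X^A \partial_A$) and $g(\partial_t, E_X) = \rho E_X^A \gs_{tA} + \mc{O}(\rho^2)$ (from \eqref{eq.g_t} and \eqref{eq.frame_E}); these plug into \eqref{eq.frame_V} to give the bracketed vector in \eqref{eq.frame_NV}. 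The normalization factor comes from
\[
g(\tilde V, \tilde V) = g(\partial_t,\partial_t) + (\fbd' f)^2 g(\partial_\rho,\partial_\rho) - \sum_X g(\partial_t + \fbd' f \partial_\rho, E_X)^2 = \rho^{-2}[-1 + (\fbd' f)^2] + \gs_{tt} + \mc{O}(\rho),
\]
using $g(\partial_t, \partial_\rho) = 0$ from \eqref{eq.g}.

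Third, for the inversion formulas, I would read \eqref{eq.frame_NV} as a linear system for $(\rho\partial_\rho, \rho\partial_t)$ in terms of $(N, V)$ modulo the $E_X$ directions. Writing $\alpha := [1 - (\fbd')^2 f^2 + \mc{O}_0(f^2\rho^2)]^{1/2}$ and $\beta := [1 - (\fbd')^2 f^2 - \gs_{tt}\rho^2 + \mc{O}_0(\rho^3)]^{1/2}$, the combinations $\alpha N - \fbd' f\, \beta V$ and $\beta V - \fbd' f\, \alpha N$ each isolate the desired coordinate vector field with leading coefficient $1 - (\fbd')^2 f^2$; dividing through by $[1 - (\fbd')^2 f^2]^{1/2}$ (which is legitimate for $f \ll_g 1$, where this factor is uniformly close to $1$) and re-expanding gives \eqref{eq.frame_inv}. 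The one accounting step worth flagging is that leftover $\mc{O}_0$-errors in the $\partial_A$ direction acquire an additional factor of $\rho^{-1}$ when rewritten as $E_X$-errors via the inversion of \eqref{eq.frame_E}; this is precisely what produces the $\mc{O}_0(\rho^2)\cdot E_X$ terms on the right. The main obstacle is simply careful tracking of the $\mc{O}_0$ error classes through this two-by-two inversion; there is no conceptual difficulty beyond that.
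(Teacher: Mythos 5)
The paper provides no written proof of Proposition \ref{thm.frame}, stating only that it follows by ``direct computations using \eqref{eq.g} and \eqref{eq.f_grad}''; your proposal carries out exactly those computations in the natural way, and they are correct. One small imprecision worth noting in your inversion step: when you form the combinations $\alpha N - \fbd' f\, \beta V$ and $\beta V - \fbd' f\, \alpha N$, the residual errors come not only from the leftover $\partial_A$-directions (which convert to $E_X$-errors with a $\rho^{-1}$ penalty, as you flag) but also from the $\mc{O}_0(f\rho^3)\,\partial_t$ residual in the expansion of $N$, which must itself be re-expressed in $(N, V, E_X)$; this feeds $\mc{O}_0(\rho^2)$ corrections into the $N$- and $V$-coefficients as well as the $E_X$-error, so your claim that the $\partial_A$-residuals are ``precisely'' the source of the $\mc{O}_0(\rho^2)\cdot E_X$ terms undersells the bookkeeping slightly. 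The final error classes nonetheless match \eqref{eq.frame_inv}, so the argument stands.
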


We will also require the following curvature bounds involving the above frames:

\begin{proposition} \label{thm.curv_frame}
Let $\phi \in \Gamma \ul{T}^0_l \mc{M}$.
Then, with $( N, V, E_X )$ as in \eqref{eq.frame_E}-\eqref{eq.frame_V},
\begin{equation}
\label{eq.curv_frame} | \mc{R}_{N V} \phi | \lesssim_{g, l} \rho^3 | \phi | \text{,} \qquad | \mc{R}_{N E_X} \phi | \lesssim_{g, l} f \rho^2 | \phi | \text{.}
\end{equation}
\end{proposition}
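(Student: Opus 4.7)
The plan is to expand $N$, $V$, and $E_X$ in the coordinate basis using Proposition \ref{thm.frame}, invoke $C^\infty$-bilinearity of the mixed curvature operator $\mc{R}$ to reduce to its coordinate components $\mc{R}_{\mu\nu}\phi$, and then apply the coordinate bounds of Proposition \ref{thm.curv} together with the antisymmetry $\mc{R}_{\mu\nu} = -\mc{R}_{\nu\mu}$.

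Concretely, reading off from \eqref{eq.frame_E} and \eqref{eq.frame_NV}, the frame coefficients satisfy, to leading order,
\[
N^\rho = \mc{O}(\rho),\ \ N^t = \mc{O}(f\rho),\ \ N^A = \mc{O}_0(f\rho^3),\ \ V^t = \mc{O}(\rho),\ \ V^\rho = \mc{O}(f\rho),\ \ V^A = \mc{O}_0(\rho^3),
\]
together with $E_X^\rho = E_X^t = 0$ and $E_X^A = \mc{O}(\rho)$. For the second bound, I would expand $\mc{R}_{N E_X}\phi = N^\mu (E_X)^\nu \mc{R}_{\mu\nu}\phi$; only $\nu=A$ contributes, producing three terms of sizes (via Proposition \ref{thm.curv}) $\mc{O}(\rho^3|\phi|)$ from $(\rho,A)$, $\mc{O}(f\rho^2|\phi|)$ from $(t,A)$, and $\mc{O}(f\rho^4|\phi|)$ from $(B,A)$. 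The $\rho^3|\phi|$ contribution is absorbed into $f\rho^2|\phi|$ because $\rho/f = \fbd(t)$ is uniformly bounded above on $[0,\pi T]$ by construction \eqref{eq.psi}.

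For the first bound, I would rewrite $\mc{R}_{NV}\phi = \sum_{\mu<\nu}(N^\mu V^\nu - N^\nu V^\mu)\mc{R}_{\mu\nu}\phi$ via antisymmetry. Multiplying the order estimates, the pair $(\rho,t)$ contributes $\mc{O}(\rho^2)\cdot\mc{O}(\rho|\phi|) = \mc{O}(\rho^3|\phi|)$, the pair $(\rho,A)$ contributes $\mc{O}(\rho^5|\phi|)$, $(t,A)$ contributes $\mc{O}(f\rho^4|\phi|)$, and $(A,B)$ contributes $\mc{O}(f\rho^6|\phi|)$; all are dominated by $\rho^3|\phi|$ near $\mc{I}$.

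The argument is essentially bookkeeping of coefficient orders with no serious obstacle. The two points requiring care are (i) using antisymmetry to cancel the diagonal $\mc{R}_{\rho\rho}$- and $\mc{R}_{tt}$-terms and to pair the two $(\rho,t)$-coefficients correctly, without which one could be led to a weaker estimate; and (ii) the elementary comparison $\rho \lesssim f$ (equivalent to $\fbd \lesssim 1$) needed to close the $\mc{R}_{N E_X}$ bound. Beyond these, the argument directly mirrors the proof of Proposition \ref{thm.curv}.
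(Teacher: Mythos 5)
Your proposal is correct and follows exactly the paper's (very terse) proof, which simply expands the frame elements via \eqref{eq.frame_E}--\eqref{eq.frame_NV} and applies the coordinate curvature bounds \eqref{eq.curv}. One small remark: the antisymmetric pairing of the $(\rho,t)$ and $(t,\rho)$ coefficients is not actually needed to close the first estimate, since $N^t V^\rho = \mc{O}(f^2\rho^2)$ is already $\lesssim \rho^2$ in the regime $f \ll 1$ (where the frame expansions are valid), so each term individually yields $\mc{O}(\rho^3|\phi|)$; only the vanishing of the diagonal contributions genuinely uses $\mc{R}_{\mu\mu} = 0$.
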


\begin{proof}
This follows from using \eqref{eq.curv} along with \eqref{eq.frame_E} and \eqref{eq.frame_NV}.
\end{proof}

\section{The Main Results} \label{sec.proof}

We are now prepared to state and prove the main results of this paper:
\begin{itemize}
\item In Section \ref{sec.proof_pseudo}, we define the \emph{pseudoconvexity criterion}, Definition \ref{def.pcp}, which are assumptions on $\gm$ and $\gs$ (i.e., quantities on $\mc{I}$).
We also show in Theorem \ref{thm.pseudoconvex} that the pseudoconvexity criterion implies the level sets of $f$ (see Definition \ref{def.f}) are pseudoconvex near $\mc{I}$.\footnote{More accurately, we obtain directly the positivity implied by this pseudoconvexity.}

\item In Section \ref{sec.proof_carleman}, we prove that the pseudoconvexity criterion implies our main \emph{Carleman estimate}, Theorem \ref{thm.carleman} near the conformal boundary $\mc{I}$.

\item Finally, in Section \ref{sec.proof_uc}, we apply the Carleman estimate to establish our main \emph{unique continuation} result, Theorem \ref{thm.uc_ads}.
\end{itemize}

\begin{remark}
We stress that in future applications, we will be applying the pseudoconvexity criterion and the Carleman estimate rather than the unique continuation result itself.
As such, we wish to highlight all three points equally in the formal presentation of the main results in this section.
\end{remark}

\subsection{The Pseudoconvexity Criterion} \label{sec.proof_pseudo}

In order to determine pseudoconvexity properties of the level sets of $f$, we need to compute the components of $\nabla^2 f$ in the frame \eqref{eq.frame_E}-\eqref{eq.frame_V}.
More precisely, we must compute the frame components of
\begin{equation}
\label{eq.Q} Q_{\xi, \zeta} = -\nabla^2 f - w_{\xi, \zeta} \cdot g \text{,}
\end{equation}
for a suitable function $w_{\xi, \zeta}$, to be specified below in \eqref{eq.w}.
That $Q_{\xi, \zeta}$ is positive-definite on the tangent spaces of the level sets of $f$ implies that these hypersurfaces are pseudoconvex; see Definition 2.13 and Proposition 2.14 in \cite{hol_shao:uc_ads}.

In the context of our Carleman estimates, it will be more convenient to express this positivity in terms of $\nabla S$ rather than $\nabla^2 f$, where $S$ was defined in \eqref{eq.S}.

\begin{definition} \label{def.dft}
Given a constant $\xi \geq 0$ and $\zeta \in C^\infty ( \mc{M} )$, we let
\begin{equation}
\label{eq.w} w_{\xi, \zeta} := f - \frac{1}{2} f \rho^2 \xi \frac{ | \fbd' | }{ \fbd } + f \rho^2 \zeta \text{,}
\end{equation}
as well as the following modified deformation tensor,
\begin{equation}
\label{eq.dft} \pi_{\xi, \zeta} := - ( \nabla S + f^{n-3} w_{\xi, \zeta} \cdot g ) \text{.}
\end{equation}
\end{definition}

Note that $Q_{\xi, \zeta}$ being positive-definite on the tangent spaces of the level sets of $f$ is equivalent to $\pi_{\xi, \zeta}$ being positive-definite on the same spaces.

\begin{remark}
The term $w_{\xi, \zeta} \cdot g$ in \eqref{eq.Q} and \eqref{eq.dft} reflects the conformal invariance of pseudoconvexity and provides an additional degree of freedom to establish positivity.
The factor $w_{\xi, \zeta}$ is carefully chosen based on the algebraic properties of $\nabla^2 f$ (see \eqref{eq.f_hess} and \eqref{eq.f_hess_tan}) so that $\pi_{\xi, \zeta}$ satisfies the lower bound \eqref{eq.pseudoconvex}.
\end{remark}

\subsubsection{Pseudoconvexity and Positivity}

We now define our main \emph{pseudoconvexity criterion}, which is \emph{stated only in terms of the metric data at infinity}:

\begin{definition} \label{def.pcp}
We say that the \emph{pseudoconvexity property} holds at $\mc{I}$ iff there are constants $K > 0$, $\xi \geq 0$ and a function $\zeta \in C^\infty ( \mc{M} )$ such that:
\begin{enumerate}
\item $\zeta = \mc{O} ( 1 )$.

\item For any vector field $Y := Y^A \partial_A$ on $\mc{I}$ that is tangent to $\mc{S}$, we have
\begin{equation}
\label{eq.pcp_nonstatic} | \mc{L}_{ \partial_t } \gm ( Y, Y ) | \leq \xi \cdot \gm ( Y, Y ) \text{.}
\end{equation}

\item For any vector field $X := X^t \partial_t + X^A \partial_A$ on $\mc{I}$, the tensor field
\begin{equation}
\label{eq.Q_inf} \mathbf{Q}_{\xi, \zeta} := - \gs - \left( \mu^2 + \frac{ \xi^2 }{16} \right) dt^2 - \zeta \gm \text{,}
\end{equation}
where $\mu$ is defined implicitly by \eqref{eq.mu}, satisfies the positivity property
\begin{equation}
\label{eq.pcp_positive} \mathbf{Q}_{\xi, \zeta} ( X, X ) \geq K [ (X^t)^2 + \gm_{AB} X^A X^B ] \text{.}
\end{equation}
\end{enumerate}
\end{definition}

The main point here is that Definition \ref{def.pcp}, which is a condition purely on the metric asymptotics at infinity $\mc{I}$, implies that the level sets of $f$ in the spacetime are indeed pseudoconvex, at least for $f \ll_g 1$.
This is captured in the form that we will use later through the following theorem:

\begin{theorem} \label{thm.pseudoconvex}
Suppose the pseudoconvexity property holds at $\mc{I}$, and let $K$, $\xi$, $\zeta$ be the parameters from Definition \ref{def.pcp}.
Then, for any $1$-form $\theta$ on $\mc{M}$,
\begin{align}
\label{eq.pseudoconvex} \pi_{\xi, \zeta}^{\alpha \beta} \theta_\alpha \theta_\beta &\geq [ K f^{n-2} \rho^2 + \mc{O}_0 ( f^{n - 1} \rho^2 ) ] \left( | \theta (V) |^2 + \sum_{ X = 1 }^{ n-1 } | \theta ( E_X ) |^2 \right) \\
\notag &\qquad - [ (n - 1) f^{n-2} + \mc{O}_0 ( f^n ) ] | \theta (N) |^2 \text{.}
\end{align}
\end{theorem}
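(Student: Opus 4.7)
The plan is to evaluate $\pi_{\xi,\zeta}$ directly in the orthonormal frame $(N, V, E_1, \ldots, E_{n-1})$ of Proposition \ref{thm.frame} and then translate the boundary positivity hypothesis \eqref{eq.pcp_positive} on $\mathbf{Q}_{\xi,\zeta}$ into positivity of the tangential block. Writing $\nabla_\alpha S_\beta = (n-3) f^{n-4} \nabla_\alpha f \nabla_\beta f + f^{n-3} \nabla^2_{\alpha \beta} f$, the first summand couples only to the $N$-direction because $Vf = E_X f = 0$. Combining $Nf = f + \mc{O}_0(f^3)$ with $\nabla^2_{NN}f = f + \mc{O}_0(f^3)$---which follows from $N \sim \rho \partial_\rho$ and $\nabla^2_{\rho \rho}f = f \rho^{-2}$ in Proposition \ref{thm.f_lo}---and with $w_{\xi,\zeta} = f + \mc{O}_0(f \rho^2)$, one finds $\pi_{\xi,\zeta}(N, N) = -(n-1) f^{n-2} + \mc{O}_0(f^n)$, which is the negative coefficient of $\theta(N)^2$. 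The off-diagonal couplings $\pi_{\xi,\zeta}(N, V)$ and $\pi_{\xi,\zeta}(N, E_X)$ reduce to $-f^{n-3} \nabla^2_{NV} f$ and $-f^{n-3} \nabla^2_{NE_X} f$, and direct coordinate computation---relying on the algebraic cancellation of the leading $N^\rho V^\rho \nabla^2_{\rho \rho} f$, $N^\rho V^t \nabla^2_{\rho t} f$, and $N^t V^t \nabla^2_{tt} f$ contributions---places them at strictly lower orders $\mc{O}_0(f^{n-1} \rho^2)$ and $\mc{O}_0(f^n \rho^2)$, which are absorbed into the final bound via weighted Cauchy--Schwarz.

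The bulk of the work is the tangential block. For an arbitrary tangent vector $W = aV + \sum_X b_X E_X$, I will compute
\begin{equation*}
\pi_{\xi,\zeta}(W, W) = - f^{n-3} [ W^\alpha W^\beta \nabla^2_{\alpha \beta} f + w_{\xi,\zeta} \, g(W, W) ]
\end{equation*}
using the coordinate expansions in \eqref{eq.frame_NV} and the Hessian formulas in Proposition \ref{thm.f_lo}, together with $\lambda_V^2 := [1 - (\fbd')^2 f^2 - \gs_{tt} \rho^2 + \mc{O}_0(\rho^3)]^{-1}$ and $g(W, W) = -a^2 + \sum b_X^2$. The leading $f$-terms in $\nabla^2 f$ cancel the leading $f$-term in $w_{\xi,\zeta}$. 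After factoring out $f \rho^2$, the residue organises as an $a^2$-coefficient $-\fbd''/\fbd + \gs_{tt} + \tfrac{1}{2} \xi |\fbd'|/\fbd - \zeta$, an $ab$-cross coefficient $2 \sum_X b_X E_X^A \gs_{tA}$ (emerging from $g(V, E_X) = 0$, which forces the leading-order identity $V^A E_X^B \gm_{AB} = -\lambda_V \rho^4 E_X^A \gs_{tA} + \mc{O}_0(\rho^5)$), a $b_X b_Y$-coefficient $E_X^A E_Y^B \gs_{AB} + \delta_{XY} [\zeta - \tfrac{1}{2} \xi |\fbd'|/\fbd]$, and the genuinely non-static correction $\tfrac{1}{2} (\fbd'/\fbd) \sum b_X b_Y E_X^A E_Y^B \partial_t \gm_{AB}$, all modulo errors of order $\mc{O}_0(\rho) [a^2 + \sum b_X^2]$.

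The algebraic core of the argument is the ODE
\begin{equation*}
\fbd''/\fbd = (\xi/2) |\fbd'|/\fbd - (\mu^2 + \xi^2/16),
\end{equation*}
valid on both $\mc{M}_-$ and $\mc{M}_+$. It is obtained by direct differentiation of \eqref{eq.psi}, using that $\fbd' > 0$ on $\mc{M}_-$ and $\fbd' < 0$ on $\mc{M}_+$---both consequences of the defining relation \eqref{eq.mu} for $\mu$, which makes $\fbd'$ vanish precisely at $t = \pi T / 2$. Substituting into the $a^2$-coefficient, the $(\xi/2) |\fbd'|/\fbd$ terms cancel exactly and the bracket reduces to $(\mu^2 + \xi^2/16) + \gs_{tt} - \zeta$. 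Defining the boundary vector $X = a \partial_t + \sum_X b_X E_X^A \partial_A$, the $a^2$, $ab$, and $b^2 \cdot \gs_{AB}$ contributions then assemble into precisely $-\mathbf{Q}_{\xi,\zeta}(X, X)$, up to $\mc{O}(\rho^2) \sum b_X^2$ discrepancies from the orthonormality relation $E_X^A E_Y^B \gm_{AB} = \delta_{XY} + \mc{O}(\rho^2)$. The remaining non-static piece $\tfrac{1}{2}(\fbd'/\fbd) \sum b_X b_Y E_X^A E_Y^B \partial_t \gm_{AB}$ is absorbed against the $-\tfrac{1}{2} \xi |\fbd'|/\fbd \sum b_X^2$ term via assumption \eqref{eq.pcp_nonstatic}: the bound $|\partial_t \gm(Y, Y)| \leq \xi \gm(Y, Y)$ shows their combined contribution is $\mc{O}_0(\rho^2) \sum b_X^2$. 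Invoking \eqref{eq.pcp_positive} to bound $-\mathbf{Q}_{\xi,\zeta}(X, X) \leq -K [a^2 + \sum b_X^2] + \mc{O}(\rho^2) \sum b_X^2$ and using $\rho \lesssim f$ in the region of interest yields $\pi_{\xi,\zeta}(W, W) \geq K f^{n-2} \rho^2 (a^2 + \sum b_X^2) + \mc{O}_0(f^{n-1} \rho^2) (a^2 + \sum b_X^2)$, which is the desired tangential estimate.

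The main obstacle is the coordinated triple cancellation in the previous paragraph: the $\xi$-damping built into $w_{\xi,\zeta}$, the precise damped-sinusoid form of $\fbd$ (with the matching choice of $\mu$), and the bound \eqref{eq.pcp_nonstatic} must all combine with surgical precision so that the non-static $\partial_t \gm$ correction is absorbed without sacrificing the $K$-positivity supplied by \eqref{eq.pcp_positive}. This is the reason behind the specific form of $w_{\xi,\zeta}$ in Definition \ref{def.dft} and of $\fbd$ in Definition \ref{def.f}.
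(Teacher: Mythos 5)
Your proposal follows essentially the same route as the paper: expand $\nabla^2 f$ (and hence $\pi_{\xi,\zeta}$) in the adapted orthonormal frame, use the damped-oscillator ODE $\fbd'' - \tfrac{\xi}{2}|\fbd'| + (\mu^2 + \tfrac{\xi^2}{16})\fbd = 0$ (the paper's Lemma \ref{thm.psi}) to simplify the $V V$ component, and then recognize that the tangential block assembles to $f^{n-2}\rho^2\, \mathbf{Q}_{\xi,\zeta}$ plus a residual involving $\partial_t\gm$ handled by \eqref{eq.pcp_nonstatic}. The one point you should tighten is the treatment of that residual: you claim the combination of $\tfrac{1}{2}(\fbd'/\fbd)\,\partial_t\gm_{AB}E_X^A E_Y^B b_Xb_Y$ with $-\tfrac{1}{2}\xi|\fbd'|/\fbd \sum b_X^2$ is ``$\mc{O}_0(\rho^2)\sum b_X^2$,'' but that is not what \eqref{eq.pcp_nonstatic} gives. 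Since $|\fbd'|/\fbd = \mc{O}_0(f/\rho)$ can blow up near $t = 0, \pi T$, there is no two-sided $\mc{O}_0(\rho^2)$ control; what \eqref{eq.pcp_nonstatic} actually yields, after replacing $\delta_{XY}$ by $E_X^A E_Y^B\gm_{AB} + \mc{O}(\rho^2)$, is that this combination (once the overall sign of $\pi_{\xi,\zeta} = -(\nabla S + f^{n-3}w\cdot g)$ is put back) is nonnegative up to an $\mc{O}_0(f^{n-1}\rho^3)\sum b_X^2$ correction --- exactly the paper's term $I_2 \geq 0$. It is the favorable sign, not the smallness, that makes this term harmless; your conclusion is still correct, but for the reason just stated rather than for the one given.

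A few of your stated error orders are also slightly off the paper's sharp ones (e.g., $\pi_{\xi,\zeta}(N,V) = \mc{O}_0(f^n\rho)$, not $\mc{O}_0(f^{n-1}\rho^2)$, because of the leading $-\fbd'\fbd'' f^3\rho$ term in $\nabla_{NV}f$), but these are absorbed identically by Cauchy--Schwarz and do not affect the conclusion.
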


\begin{remark}
We note in particular that the pseudoconvexity criterion of Definition \ref{def.pcp} is gauge-dependent, that is, it is not necessarily preserved by a conformal transformation of $\gm$; see also the remark below Theorem \ref{theo:mti}.
\end{remark}

\subsubsection{Proof of Theorem \ref{thm.pseudoconvex}}

The main step of the proof is the computation for $\pi_{\xi, \zeta}$, which proceeds analogously to that in \cite{hol_shao:uc_ads}.

\begin{lemma} \label{thm.Q_frame}
$\pi_{\xi, \zeta}$ is symmetric, and its $f$-tangent components satisfy
\begin{align}
\label{eq.Q_tan} \pi_{\xi, \zeta} (V, V) &= \frac{1}{ \fbd } \left( \fbd'' - \frac{\xi}{2} | \fbd' | \right) f^{n - 2} \rho^2 - ( \gs_{t t} + \zeta \gm_{t t} ) f^{n - 2} \rho^2 + \mc{O}_0 ( f^{n - 1} \rho^2 ) \text{,} \\
\notag \pi_{\xi, \zeta} (V, E_X) &= - E^A_X ( \gs_{t A} + \zeta \gm_{t A} ) \cdot f^{n - 2} \rho^2 + \mc{O}_0 ( f^{n - 1} \rho^2 ) \text{,} \\
\notag \pi_{\xi, \zeta} (E_X, E_Y) &= \frac{1}{2 \fbd} \cdot E_X^A E_X^B ( | \fbd' | \xi \gm_{A B} - \fbd' \partial_t \gm_{A B} ) f^{n - 2} \rho^2 \\
\notag &\qquad - E_X^A E_Y^B ( \gs_{A B} + \zeta \gm_{A B} ) f^{n - 2} \rho^2 + \mc{O}_0 ( f^{n - 1} \rho^3 ) \text{.}
\end{align}
Moreover, the remaining components of $\pi_{\xi, \zeta}$ satisfy
\begin{align}
\label{eq.Q_nor} \pi_{\xi, \zeta} (N, N) &= - (n - 1) f^{n - 2} + \mc{O}_0 ( f^n ) \text{,} \\
\notag \pi_{\xi, \zeta} (N, V) &= \mc{O}_0 ( f^n \rho ) \text{,} \\
\notag \pi_{\xi, \zeta} (N, E_X) &= \mc{O}_0 ( f^{n - 1} \rho^2 ) \text{.}
\end{align}
\end{lemma}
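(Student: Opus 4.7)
The plan is to compute $\pi_{\xi,\zeta}$ directly from its definition by first expressing $\nabla S$ via the Leibniz rule and then evaluating the resulting tensor in the orthonormal frame $(N, V, E_X)$ of Definition \ref{def.frame}. Since $S_\alpha = f^{n-3}\, \nabla_\alpha f$, we have
\[
\nabla_\beta S_\alpha = (n-3)\, f^{n-4}\, \nabla_\alpha f\, \nabla_\beta f + f^{n-3}\, \nabla^2_{\alpha\beta} f \text{,}
\]
which is manifestly symmetric in $\alpha, \beta$. This immediately yields the symmetry of $\pi_{\xi,\zeta}$ and reduces the problem to evaluating $\nabla^2 f$ and the outer product $\nabla f \otimes \nabla f$ on the various frame components.

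For the tangential pairs $\pi_{\xi,\zeta}(V,V)$, $\pi_{\xi,\zeta}(V,E_X)$, $\pi_{\xi,\zeta}(E_X,E_Y)$, the outer-product term vanishes because $Vf = E_X f = 0$, leaving
\[
\pi_{\xi,\zeta}(X,Y) = -f^{n-3}\bigl[\nabla^2 f(X,Y) + w_{\xi,\zeta}\, g(X,Y)\bigr] \text{.}
\]
I would then substitute the frame expansions from Proposition \ref{thm.frame} into the Hessian formulas \eqref{eq.f_hess}. The $(V,V)$ component is the most delicate: the contribution from $\rho^2\, \nabla^2_{tt} f$ supplies $f - \fbd'' f^2 \rho + 2 (\fbd')^2 f^3$, while the cross-term $2 \fbd' f \rho^2\, \nabla^2_{\rho t} f$ contributes $-4 (\fbd')^2 f^3$ and the diagonal $(\fbd')^2 f^2 \rho^2\, \nabla^2_{\rho\rho} f$ contributes $(\fbd')^2 f^3$; after including the Taylor expansion of the normalization factor $[1 - (\fbd')^2 f^2 - \gs_{tt}\rho^2]^{-1/2}$ twice, all $(\fbd')^2 f^3$ pieces cancel, and only $f + \gs_{tt} \rho^2 f - \fbd'' f^2 \rho$ remains to leading order. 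Multiplying by $-f^{n-3}$ and using $f = \rho/\fbd$ to rewrite $\fbd'' f^{n-1} \rho = (\fbd''/\fbd)\, f^{n-2} \rho^2$ produces the $\fbd''/\fbd$ piece, while the $-\frac{\xi}{2} |\fbd'|/\fbd$ piece emerges upon adding $f^{n-3} w_{\xi,\zeta}$ (noting $g(V,V) = -1$), the middle summand of $w_{\xi,\zeta}$ having been chosen precisely to supply this damping contribution. For $(E_X, E_Y)$, the Christoffel term $\Gamma^t_{AB} = \frac{1}{2}\partial_t \gm_{AB} + \mc{O}(\rho^2)$ from \eqref{eq.Gamma_t} furnishes the $\fbd'\, \partial_t \gm$ piece, while the $\gs_{AB}$ contribution arises from the orthonormality identity $g(E_X, E_Y) = \delta_{XY}$, which differs from $E_X^A E_Y^B \gm_{AB}$ by a $\rho^2 E_X^A E_Y^B \gs_{AB}$ correction. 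For $(V, E_X)$, the $\fbd' f \rho\, \partial_\rho$ component of $V$ contracts with $\nabla^2_{\rho A} f \sim -\fbd' f^2\, \gs_{tA}$ to produce the $\gs_{tA}$ term, the $\zeta \gm_{tA}$ contribution vanishing identically in our gauge but kept for uniformity.

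For the $N$-components, the key identity is that since $N = |\nabla f|^{-1}\, \grad f$, the symmetry of $\nabla^2 f$ yields
\[
\nabla^2 f(N, Z) = \tfrac{1}{2}\, |\nabla f|^{-1}\, Z\bigl(|\nabla f|^2\bigr)
\]
for any vector field $Z$. Since $|\nabla f|^2 = f^2 + \mc{O}_0(f^4)$ by Proposition \ref{thm.f_lo}, and since $V, E_X$ exactly annihilate $f^2$, a careful expansion relying on Proposition \ref{thm.error_deriv} and the frame formulas shows that $V(|\nabla f|^2) = \mc{O}_0(f^4 \rho)$ and $E_X(|\nabla f|^2) = \mc{O}_0(f^3 \rho^3)$, which combined with $g(N, V) = g(N, E_X) = 0$ and $Vf = E_X f = 0$ yields the bounds stated in \eqref{eq.Q_nor}. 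For $\pi_{\xi,\zeta}(N,N)$, choosing $Z = N$ gives $\nabla^2 f(N,N) = f + \mc{O}_0(f^3)$; combined with $(n-3) f^{n-4} (Nf)^2 = (n-3) f^{n-2} + \mc{O}_0(f^n)$ from the outer-product term and $-f^{n-3} w_{\xi,\zeta} = -f^{n-2} + \mc{O}_0(f^{n-1} \rho^2)$ from the $w \cdot g$ contribution, one arrives at $-(n-1) f^{n-2} + \mc{O}_0(f^n)$.

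The main obstacle will be the careful bookkeeping of sub-leading cancellations in the $(V,V)$ calculation so that the precise combination $\frac{1}{\fbd}(\fbd'' - \frac{\xi}{2}|\fbd'|)$ emerges cleanly; this is exactly what dictated the specific form of $w_{\xi,\zeta}$ in Definition \ref{def.dft}, and it is this ODE-like combination that will be harnessed in the proof of Theorem \ref{thm.pseudoconvex}. A minor technical point---the failure of $f$ to be $C^3$ at $t = \pi T/2$---is sidestepped by performing all computations on the open sets $\mc{M}_\pm$, on which $f$ is smooth and the $\mc{O}_0$-calculus of Definition \ref{def.O0_scr} applies without modification.
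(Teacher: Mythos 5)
Your proposal is correct and mirrors the paper's strategy: compute the frame components of $\nabla^2 f$ --- the paper's \eqref{eq.f_hess_tan} and \eqref{eq.f_hess_nor} --- and substitute into the definition \eqref{eq.dft}, using the observation $Vf = E_X f = 0$ to drop the outer-product term of $\nabla S$ in the tangential directions. Your use of the identity $\nabla^2 f(N, Z) = \tfrac{1}{2}\,|\nabla f|^{-1}\, Z\bigl(|\nabla f|^2\bigr)$ is a somewhat cleaner route to \eqref{eq.f_hess_nor} than direct coordinate substitution but amounts to the same computation; the only minor slip --- the error from $-f^{n-3} w_{\xi,\zeta}\, g(N,N)$ is $\mc{O}_0(f^{n-2}\rho^2) = \mc{O}_0(f^n)$ rather than the $\mc{O}_0(f^{n-1}\rho^2)$ you wrote --- does not affect the conclusion.
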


\begin{proof}
The main computations behind \eqref{eq.Q_tan} and \eqref{eq.Q_nor} are the formulas for $\nabla^2 f$, with respect to the aforementioned orthonormal frames.
Similar to \cite{hol_shao:uc_ads}, we will need more precise expansions for components tangent to the level sets of $f$:
\begin{align}
\label{eq.f_hess_tan} \nabla_{V V} f &= f + ( \gs_{tt} - \fbd'' \fbd^{-1} ) f \rho^2 + \mc{O}_0 ( f^2 \rho^2 ) \text{,} \\
\notag \nabla_{V E_X} f &= E^A_X \gs_{t A} \cdot f \rho^2 + \mc{O}_0 ( f^2 \rho^2 ) \text{,} \\
\notag \nabla_{E_X E_Y} f &= - \delta_{X Y} f + E_X^A E_Y^B \left( \frac{1}{2} \fbd' \fbd^{-1} \partial_t \gm_{A B} + \gs_{A B} \right) f \rho^2 + \mc{O}_0 ( f \rho^3 ) \text{.}
\end{align}
For the remaining components of $\nabla^2 f$, we have
\begin{align}
\label{eq.f_hess_nor} \nabla_{N N} f &= f - 2 ( \fbd' )2 f^3 + \mc{O}_0 ( f^5 ) = f + \mc{O}_0 ( f^3 ) \text{,} \\
\notag \nabla_{N V} f &= [ 1 + \mc{O}_0 ( f^2 ) ] [ - \fbd' \fbd'' f^3 \rho  + \mc{O}_0 ( f^2 \rho^2 ) ] = \mc{O}_0 ( f^3 \rho ) \text{,} \\
\notag \nabla_{N E_X} f &= \mc{O}_0 ( f^2 \rho^2 ) \text{.}
\end{align}
Combining \eqref{eq.dft}, \eqref{eq.f_hess_tan}, and \eqref{eq.f_hess_nor} results in \eqref{eq.Q_tan} and \eqref{eq.Q_nor}.
\end{proof}

\begin{remark}
We note that $\pi_{\xi, \zeta} (N, V)$ behaves worse compared to the static case considered in \cite{hol_shao:uc_ads}, while the other components behave similarly compared to \cite{hol_shao:uc_ads}.
\end{remark}

The following properties of $\fbd$ can be verified through direct computations:

\begin{lemma} \label{thm.psi}
Given $\xi > 0$ and $T > 0$, the function $\fbd$ in \eqref{eq.psi} satisfies
\begin{equation}
\label{eq.psi_ode} \fbd^{\prime \prime} (t) - \frac{\xi}{2} | \fbd^\prime (t) | + \left( \frac{ \xi^2 }{16} + \mu^2 \right) \fbd (t) = 0 \text{,} \qquad t \in \left( 0, \frac{ \pi T }{2} \right) \cup \left( \frac{ \pi T }{2}, T \right) \text{.}
\end{equation}
Furthermore, $\fbd'''$ has jump discontinuity at $\frac{\pi T}{2}$, since
\begin{align}
\label{eq.psi_jump} \lim_{ t \rightarrow \frac{ \pi T }{2} - } \fbd''' &= - \frac{\xi}{2} \left( \frac{\xi^2}{16} + \mu^2 \right) \exp \left( \frac{ \pi T \xi }{8} \right) \sin \left( \frac{ \pi T \mu }{2} \right) < 0 \text{,} \\
\notag \lim_{ t \rightarrow \frac{ \pi T }{2} + } \fbd''' &= + \frac{\xi}{2}  \left( \frac{\xi^2}{16} + \mu^2 \right) \exp \left( \frac{ \pi T \xi }{8} \right) \sin \left( \frac{ \pi T \mu }{2} \right) > 0 \text{.}
\end{align}
\end{lemma}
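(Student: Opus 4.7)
The plan is to verify both claims by direct differentiation, treating the two intervals $I_- := (0, \frac{\pi T}{2})$ and $I_+ := (\frac{\pi T}{2}, \pi T)$ separately and then comparing the one-sided limits at the matching point $\frac{\pi T}{2}$.

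First I would work on $I_-$, where $\fbd(t) = e^{\xi t/4}\sin(\mu t)$. A direct computation gives
\[
\fbd''(t) - \frac{\xi}{2}\fbd'(t) + \left(\tfrac{\xi^2}{16}+\mu^2\right)\fbd(t) = 0 \text{,}
\]
as a linear second-order constant-coefficient ODE whose characteristic roots are $\frac{\xi}{4}\pm i\mu$. To obtain \eqref{eq.psi_ode} on $I_-$, it remains to show $\fbd'(t)\ge 0$ on this interval so that $|\fbd'|=\fbd'$. Writing $\fbd'(t) = e^{\xi t/4}\sqrt{\mu^2+\xi^2/16}\,\cos(\mu t-\phi)$ with $\phi = \arctan(\xi/(4\mu))$, nonnegativity on $I_-$ is equivalent to $\mu t - \phi \leq \pi/2$ for all $t < \frac{\pi T}{2}$, i.e. $\frac{\pi T}{2} \le \frac{\pi/2+\phi}{\mu}$. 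Using the identity $\arctan x + \arctan(1/x) = \pi/2$ for $x>0$ with $x = 4\mu/\xi$, one checks
\[
\frac{\pi/2 + \phi}{\mu} = \frac{\pi}{\mu} - \frac{1}{\mu}\arctan\!\bigl(\tfrac{4\mu}{\xi}\bigr) = \frac{\pi T}{2} \text{,}
\]
where the last equality is exactly the defining relation \eqref{eq.mu} for $\mu$. So $\fbd'$ is nonnegative on $I_-$ and vanishes precisely at the endpoint $t=\frac{\pi T}{2}$. This step --- recognizing that \eqref{eq.mu} is engineered so that $\fbd'$ has its first zero exactly at $\frac{\pi T}{2}$ --- is the one that could be considered the main content of the verification.

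Second, on $I_+$ the function is the reflection $\fbd(t) = \fbd_-(\pi T - t)$, where $\fbd_-$ denotes the first branch. Hence $\fbd'(t) = -\fbd_-'(\pi T - t)\le 0$ on $I_+$, so $|\fbd'(t)| = -\fbd'(t)$ there. Substituting into the ODE for $\fbd_-$ at $s := \pi T - t$ and using $\fbd''(t) = \fbd_-''(s)$, the equation \eqref{eq.psi_ode} on $I_+$ follows from the one on $I_-$. This also shows $\fbd$ and $\fbd'$ extend continuously across $\frac{\pi T}{2}$ (both branches give $\fbd'(\frac{\pi T}{2})=0$), so $\fbd\in C^1[0,\pi T]$, and since $\fbd''$ has the same expression in terms of $\fbd$ on both sides, also $\fbd \in C^2[0,\pi T]$.

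Finally, for the jump of $\fbd'''$ at $\frac{\pi T}{2}$, I differentiate the ODE on each side. On $I_-$, using $|\fbd'|=\fbd'$,
\[
\fbd'''(t) = \frac{\xi}{2}\fbd''(t) - \left(\tfrac{\xi^2}{16}+\mu^2\right)\fbd'(t) \text{,}
\]
while on $I_+$ the sign of $|\fbd'|$ flips, yielding $\fbd'''(t) = -\frac{\xi}{2}\fbd''(t) - (\tfrac{\xi^2}{16}+\mu^2)\fbd'(t)$. Since $\fbd'(\frac{\pi T}{2}) = 0$ from both sides and $\fbd''(\frac{\pi T}{2}) = -(\tfrac{\xi^2}{16}+\mu^2)\fbd(\frac{\pi T}{2})$ from the ODE itself, the one-sided limits become
\[
\lim_{t\to\frac{\pi T}{2}\mp}\fbd'''(t) = \mp\,\frac{\xi}{2}\left(\tfrac{\xi^2}{16}+\mu^2\right)\fbd\!\left(\tfrac{\pi T}{2}\right) \text{,}
\]
and plugging in $\fbd(\frac{\pi T}{2}) = e^{\pi T\xi/8}\sin(\pi T\mu/2)$ together with $0<\pi T\mu/2<\pi$ (which follows from the constraint $\frac{1}{T}\le\mu<\frac{2}{T}$) gives strict positivity of $\sin(\pi T\mu/2)$ and the signs in \eqref{eq.psi_jump}.
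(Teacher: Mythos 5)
Your proof is correct and does exactly what the paper means by "direct computations." On each half-interval you verify the constant-coefficient ODE for the exponential-times-sine ansatz, note that the defining relation \eqref{eq.mu} for $\mu$ is precisely what places the first zero of $\fbd'$ at $t = \frac{\pi T}{2}$ (so that $|\fbd'|$ carries the right sign on each side), and then read off the one-sided limits of $\fbd'''$ from the differentiated ODE together with $\fbd'(\frac{\pi T}{2}) = 0$ and $\sin(\frac{\pi T \mu}{2}) > 0$ (since $\mu T \in [1,2)$). This is the intended argument; there is nothing to add.
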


By combining Definition \ref{def.pcp} with \eqref{eq.Q_tan} and \eqref{eq.psi_ode}, we connect $\pi_{\xi, \zeta}$ and ${\bf Q}_{\xi, \zeta}$:

\begin{lemma} \label{eq.Q_ex}
The following identities hold:
\begin{align}
\label{eq.Q_tan_ex} \pi_{\xi, \zeta} (V, V) &= f^{n - 2} \rho^2 \cdot {\bf Q}_{\xi, \zeta} ( \partial_t, \partial_t ) + \mc{O}_0 ( f^{n - 1} \rho^2 ) \text{,} \\
\notag \pi_{\xi, \zeta} (V, E_X) &= f^{n - 2} \rho^2 \cdot {\bf Q}_{\xi, \zeta} ( \partial_t, E_X^A \partial_A ) + \mc{O}_0 ( f^{n - 1} \rho^2 ) \text{,} \\
\notag \pi_{\xi, \zeta} (E_X, E_Y) &= \frac{1}{ 2 \eta } f^{n - 2} \rho^2 \cdot E_X^A E_Y^B ( | \eta' | \xi \gm_{A B} - \fbd' \partial_t \gm_{A B} ) \\
\notag &\qquad + f^{n - 2} \rho^2 \cdot {\bf Q}_{\xi, \zeta} ( E^A_X \partial_A, E^B_Y \partial_B ) + \mc{O}_0 ( f^{n - 1} \rho^2 ) \text{.}
\end{align}
\end{lemma}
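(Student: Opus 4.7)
The plan is to deduce all three identities as immediate consequences of three ingredients already in hand: the frame expansions for $\pi_{\xi,\zeta}$ from Lemma \ref{thm.Q_frame}, the damped-oscillator ODE \eqref{eq.psi_ode} satisfied by $\fbd$, and the definition \eqref{eq.Q_inf} of $\mathbf{Q}_{\xi,\zeta}$. The argument reduces to algebraic rearrangement, and no new computation involving $\nabla^2 f$ is required.

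First, I would treat the $(V,V)$-component. Starting from the expression for $\pi_{\xi,\zeta}(V,V)$ in \eqref{eq.Q_tan}, I would replace the coefficient $\fbd^{-1}(\fbd'' - \tfrac{\xi}{2}|\fbd'|)$ by $-(\mu^2 + \xi^2/16)$ using \eqref{eq.psi_ode}, which is valid on $\mc{M}_- \cup \mc{M}_+$. Since $dt^2(\partial_t,\partial_t)=1$, the resulting combination $-(\mu^2 + \xi^2/16) - \gs_{tt} - \zeta \gm_{tt}$ is precisely $\mathbf{Q}_{\xi,\zeta}(\partial_t,\partial_t)$ read directly off \eqref{eq.Q_inf}. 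Factoring $f^{n-2}\rho^2$ then yields the first identity, with the error term carried over unchanged.

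Next, for the $(V, E_X)$ and $(E_X, E_Y)$ components, the ODE plays no role: because $dt^2$ annihilates any vector tangent to $\mc{S}$, the $(\mu^2 + \xi^2/16)$-piece of $\mathbf{Q}_{\xi,\zeta}$ drops out upon contraction with $E_X^A \partial_A$, and the remaining $-\gs - \zeta\gm$ pieces match the corresponding entries of \eqref{eq.Q_tan} verbatim. In the $(E_X, E_Y)$ expansion the contribution proportional to $|\fbd'|\xi \gm_{AB} - \fbd'\partial_t \gm_{AB}$ has no counterpart in $\mathbf{Q}_{\xi,\zeta}$, which is built purely from $\gs$, $\gm$, and $dt^2$; this term encodes the non-static correction coming from $\mc{L}_{\partial_t}\gm$ in the tangential part of $\nabla^2 f$ (see \eqref{eq.f_hess_tan}), and it is retained explicitly in the statement. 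The error $\mc{O}_0(f^{n-1}\rho^3)$ from Lemma \ref{thm.Q_frame} is absorbed into $\mc{O}_0(f^{n-1}\rho^2)$ using that $\rho$ is bounded.

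I do not expect any real obstacle. The only substantive step is the invocation of \eqref{eq.psi_ode}, which is exactly the mechanism by which the choice of $\mu$ via \eqref{eq.mu} and the damping $\xi$ from \eqref{eq.pcp_nonstatic} become encoded as the boundary coefficient of $dt^2$ inside $\mathbf{Q}_{\xi,\zeta}$. This algebraic packaging is what allows the pseudoconvexity computation of Theorem \ref{thm.pseudoconvex} to reduce to the purely boundary-level positivity condition \eqref{eq.pcp_positive}; it is also the reason the non-static term is carefully isolated in the $(E_X,E_Y)$-line, since combined with \eqref{eq.pcp_nonstatic} it will yield a nonnegative contribution when contracted with a $1$-form.
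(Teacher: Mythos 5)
Your proposal is correct and matches the paper's (implicit) argument: the paper itself merely states that Lemma \ref{eq.Q_ex} follows by combining Definition \ref{def.pcp} (i.e.\ the definition \eqref{eq.Q_inf} of ${\bf Q}_{\xi,\zeta}$) with \eqref{eq.Q_tan} and the ODE \eqref{eq.psi_ode}, which is precisely the three-step algebraic rearrangement you describe. The only point worth flagging is that the replacement $\fbd^{-1}(\fbd'' - \tfrac{\xi}{2}|\fbd'|) = -(\mu^2 + \xi^2/16)$ is valid only on $\mc{M}_- \cup \mc{M}_+$ (away from $t = \pi T/2$), which you correctly note; since the $\mc{O}_0$-classes are, by Definition \ref{def.O0_scr}, bounds holding precisely on $\mc{M}_+ \cup \mc{M}_-$, the identities are asserted exactly there, so this restriction is harmless.
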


Finally, let $\theta$ be as in the hypotheses of Theorem \ref{thm.pseudoconvex}, and define in addition
\begin{equation}
\label{eq.theta_var} \slashed{\theta} := \sum_{ X = 1 }^{n - 1} \theta ( E_X ) \cdot E_X^A \partial_A \text{,} \qquad \check{\theta} := - \theta ( V ) \cdot \partial_t + \slashed{\theta} \text{,}
\end{equation}
which can be viewed as vector fields on $\mc{M}$ or as $\rho$-parametrized families of vector fields on $\mc{I}$.
Using \eqref{eq.Q_nor}, \eqref{eq.Q_tan_ex}, and that $( N, V, E_X )$ is orthonormal, we have
\begin{align}
\label{eq.pseudoconvex_0} \pi_{\xi, \zeta}^{\alpha \beta} \theta_\alpha \theta_\beta &= f^{n - 2} \rho^2 \cdot {\bf Q}_{\xi, \zeta}^{a b} \check{\theta}_a \check{\theta}_b + \frac{1}{ 2 \eta } f^{n - 2} \rho^2 \cdot ( | \eta' | \xi \gm_{A B} - \fbd' \partial_t \gm_{A B} ) \slashed{\theta}^A \slashed{\theta}^B \\
\notag &\qquad + \sum_{ \mf{A}, \mf{B} \in \{ V, E_1, \dots, E_{n-1} \} } \mc{O}_0 ( f^{n - 1} \rho^2 ) \cdot \theta ( \mf{A} ) \theta ( \mf{B} ) \\
\notag &\qquad - [ (n - 1) f^{n - 2} + \mc{O}_0 ( f^n ) ] \cdot | \theta ( N ) |^2 \\
\notag &\qquad + \sum_{ \mf{A} \in \{ V, E_1, \dots, E_{n-1} \} } \mc{O}_0 ( f^n \rho ) \cdot \theta ( N ) \theta ( \mf{A} ) \\
\notag &= I_1 + I_2 + I_3 + I_4 + I_5 \text{.}
\end{align}

By \eqref{eq.pcp_nonstatic}, we have
\begin{equation}
\label{eq.pseudoconvex_1} I_2 \geq 0 \text{,}
\end{equation}
while \eqref{eq.pcp_positive}, along with \eqref{eq.g} and the identity $g ( E_X, E_Y ) = \delta_{X Y}$, implies
\begin{align}
\label{eq.pseudoconvex_2} I_1 &\geq K f^{n - 2} \rho^2 \left[ | \theta ( V ) |^2 + \sum_{ X, Y = 1 }^{n - 1} \theta ( E_X ) \theta ( E_Y ) \cdot E_X^A E_Y^B \gm_{AB} \right] \\
\notag &\geq K f^{n - 2} \rho^2 \left\{ | \theta ( V ) |^2 + [ 1 + \mc{O}_0 ( \rho^2 ) ] \sum_{ X = 1 }^{n - 1} | \theta ( E_X ) |^2 \right\} \text{.}
\end{align}
Observe that the remaining terms, which are errors, satisfy
\begin{align}
\label{eq.pseudoconvex_3} I_3 &\geq \mc{O}_0 ( f^{n - 1} \rho^2 ) \cdot \left[ | \theta ( V ) |^2 + \sum_{ X = 1 }^{n - 1} | \theta ( E_X ) |^2 \right] \text{,} \\
\notag I_5 &\geq \mc{O}_0 ( f^n ) \cdot | \theta ( N ) |^2 + \mc{O}_0 ( f^n \rho^2 ) \cdot \left[ | \theta ( V ) |^2 + \sum_{ X = 1 }^{n - 1} | \theta ( E_X ) |^2 \right] \text{.}
\end{align}
Combining \eqref{eq.pseudoconvex_0}-\eqref{eq.pseudoconvex_3} yields our desired inequality \eqref{eq.pseudoconvex}.

\subsubsection{Some Examples} \label{sec:pseudo_examples}

Recall that as an FG-aAdS segment, AdS spacetime (and more generally, the Kerr-AdS family for $n \geq 3$, after a change of coordinates from the usual Boyer-Lindquist coordinates, cf.~\cite{Henneaux}) have the expansion \eqref{eq.aads_ads}.
We now check when the pseudoconvexity property of Definition \ref{def.pcp} is satisfied.

Since $\gm$ is static, we can take $\xi = 0$.
Note that:
\begin{itemize}
\item \eqref{eq.pcp_nonstatic} is trivially satisfied.

\item $\mu = T^{-1}$ from \eqref{eq.mu}, hence ${\bf Q}_{0, \zeta}$ from \eqref{eq.Q_inf} is given by
\begin{equation}
\label{eq.Q_inf_ads} {\bf Q}_{0, \zeta} = \left( \frac{1}{2} - \frac{1}{T} + \zeta \right) dt^2 + \left( \frac{1}{2} - \zeta \right) \ga \text{.}
\end{equation}
\end{itemize}
Observe that one can find $\zeta$ such that \eqref{eq.Q_inf_ads} is positive-definite if and only if $T > 1$.
In other words, the pseudoconvexity property is satisfied for AdS and Kerr-AdS spacetimes if and only if we consider a segment with time length strictly greater than $\pi$.
This confirms the equivalent results on AdS spacetime established in \cite{hol_shao:uc_ads}.

Moreover, one can now easily construct a large class of examples satisfying the pseudoconvexity criterion by taking (static or nonstatic) perturbations of $\gm$ and $\gs$ from \eqref{eq.aads_ads}.
In particular, for a small enough perturbation, there exists an $\varepsilon > 0$ such that Definition \ref{def.pcp} is satisfied for $\xi = \varepsilon$ and for $T > 1 + \varepsilon$---that is, a time length slightly greater than $( 1 + \varepsilon ) \pi$.

Finally, in any setting for which $\gm$ is static and $( \mc{M}, g )$ is Einstein-vacuum, we can directly relate the pseudoconvexity condition with positive curvature of the level sets of $t$ on the conformal boundary $\mc{I}$.
See Appendix \ref{sec.vacuum} for details.
\subsection{The Carleman Estimate} \label{sec.proof_carleman}

In this section, we prove the following Carleman estimate on admissible FG-aAdS segments:

\begin{theorem} \label{thm.carleman}
Consider an $(n+1)$-dimensional admissible FG-aAdS segment
\[
( \mc{M}, g ) \text{,} \qquad \mc{M} = ( 0, \rho_\ast ) \times ( 0, T \pi ) \times \mc{S} \text{,}
\]
and suppose the pseudoconvexity property holds on $\mc{I}$, with associated parameters $K$, $\xi$, $\zeta$.
Fix also an integer $l \geq 0$, along with constants $p, \kappa \in \R$ satisfying
\begin{equation}
\label{eq.p_kappa} 0 < p < 1 \text{,} \qquad \kappa \geq \frac{n - 1}{2} \text{.}
\end{equation}
In addition, fix constants $0 < \rho_0 \ll f_0 \ll_{g, l, p, K} 1$, and define the region
\begin{equation}
\label{eq.Omega} \Omega_{ f_0, \rho_0 } := \{ f < f_0 \text{, } \rho > \rho_0 \} \text{.}
\end{equation}

Then, there exist constants $C, \mc{C} > 0$, depending on $g$, $p$, and $K$, such that for any $\sigma \in \R$ and $\lambda \in [1 + \kappa, \infty)$, and for any $\phi \in \Gamma \ul{T}^0_l \mc{M}$ such that
\begin{itemize}
\item $\phi$ has compact support on every level set of $( \rho, t )$, and

\item both $\phi$ and $\nabla \phi$ vanish on $\{ f = f_0 \}$,
\end{itemize}
the following inequality holds:
\begin{align}
\label{eq.carleman} &\int_{ \Omega_{ f_0, \rho_0 } } f^{n - 2 - 2 \kappa} e^\frac{ - 2 \lambda f^p }{p} f^{-p} | ( \Box + \sigma ) \phi |^2 \\
\notag &\qquad + \mc{C} \lambda ( \lambda^2 + | \sigma | ) \int_{ \{ \rho = \rho_0 \} } [ | \nabla_t ( \rho^{ - \kappa } \phi ) |^2 + | \nabla_\rho ( \rho^{ - \kappa } \phi ) |^2 + | \rho^{- \kappa - 1} \phi |^2 ] d \gm \\
\notag &\quad \geq C \lambda \int_{ \Omega_{ f_0, \rho_0 } } f^{n - 2 - 2 \kappa} e^\frac{ - 2 \lambda f^p }{p} ( \rho^4 | \nabla_t \phi |^2 + \rho^4 | \nabla_\rho \phi |^2 + \rho^2 | \nasla \phi |^2 ) \\
\notag &\quad \qquad + \lambda [ \kappa^2 - ( n - 2 ) \kappa + \sigma - (n - 1) ] \int_{ \Omega_{ f_0, \rho_0 } } f^{n - 2 - 2 \kappa} e^\frac{ - 2 \lambda f^p }{p} | \phi |^2 \\
\notag &\quad \qquad + C \lambda^3 \int_{ \Omega_{ f_0, \rho_0 } } f^{n - 2 - 2 \kappa} e^\frac{ - 2 \lambda f^p }{p} f^{2p} | \phi |^2 \text{.}
\end{align}
\end{theorem}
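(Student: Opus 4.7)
The plan is to follow the conjugation-plus-vector-field-multiplier strategy of \cite{hol_shao:uc_ads}, adapted to the $\xi$-dependent function $f$ from Definition \ref{def.f}. First I renormalize and conjugate by setting $\chi := f^{-\kappa} e^{-\lambda f^p/p}\phi$, so that
\[
e^{-\lambda f^p/p} f^{-\kappa} (\Box + \sigma) \phi = \mc{L}_\lambda \chi,
\]
where $\mc{L}_\lambda$ is a perturbation of $\Box$ whose principal part adds $-2\lambda f^{p-1}\grad f \cdot \nabla$ and $-2\kappa f^{-1}\grad f \cdot \nabla$, and whose zeroth-order Carleman potential is $\lambda^2 f^{2p-2}|\nabla f|^2 + 2\lambda\kappa f^{p-2}|\nabla f|^2 + \kappa(\kappa-n)f^{-2}|\nabla f|^2 + (\sigma + \text{correction from }\Box f)$. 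Using \eqref{eq.f_box} and \eqref{eq.f_grad}, this produces the exact coefficient of $|\phi|^2$ that appears in \eqref{eq.carleman}; the commutators incurred on tensorial $\phi$ are subcritical by Propositions \ref{thm.curv} and \ref{thm.curv_frame}.

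The second step is the multiplier identity: pair $\mc{L}_\lambda\chi$ against a multiplier of the form $S\cdot\nabla\chi + f^{n-3}w_{\xi,\zeta}\chi$, where $S$ and $w_{\xi,\zeta}$ come from Definitions \ref{def.S} and \ref{def.dft}. After integrating by parts, the bulk organizes into (i) the quadratic form $\pi_{\xi,\zeta}^{\alpha\beta}\nabla_\alpha\chi\nabla_\beta\chi$ arising from the symmetrized covariant derivative of $S$, (ii) a $\lambda^3 f^{2p}|\chi|^2$ contribution produced by the interaction between the Carleman potential and the radial part of $S\cdot\nabla\chi$, and (iii) $\mc{O}_0$-errors. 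By Theorem \ref{thm.pseudoconvex}, the $V$- and $E_X$-components of $\pi_{\xi,\zeta}$ yield the positive control on $|\nabla_\rho\chi|^2$, $|\nabla_t\chi|^2$, and $|\nasla\chi|^2$ on the right-hand side of \eqref{eq.carleman} (after converting frame to coordinate derivatives using \eqref{eq.frame_inv}), while the unfavorable $-(n-1)f^{n-2}|\theta(N)|^2$ term couples with the leading part of $\mc{L}_\lambda\chi$ through a Cauchy--Schwarz with $|\mc{L}_\lambda\chi|^2/(f^p)$ and is thereby absorbed into the left-hand side.

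The key technical novelty, relative to \cite{hol_shao:uc_ads}, is that $\fbd$ and hence $f$, $S$, $w_{\xi,\zeta}$ fail to be $C^3$ across $\mc{M}\cap\{t=\pi T/2\}$, so the above integration by parts cannot be performed directly on $\Omega_{f_0,\rho_0}$. I would instead carry it out on $\Omega_{f_0,\rho_0}\cap\mc{M}_-$ and $\Omega_{f_0,\rho_0}\cap\mc{M}_+$ separately and add the two inequalities. This generates an interface contribution on $\{t=\pi T/2\}$ proportional to the jump of $\fbd'''$; by \eqref{eq.psi_jump} this jump is strictly positive, so the interface contribution enters with a favorable sign and may simply be discarded. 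This is precisely what the exponential damping $\exp(\xi t/4)$ in \eqref{eq.psi} and the specific relation \eqref{eq.mu} between $T$, $\xi$, and $\mu$ are engineered to accomplish, and I expect verifying this sign (together with the positivity checks in Lemma \ref{thm.Q_frame} in the $\xi > 0$ regime) to be the main obstacle.

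Finally, the $\{f=f_0\}$ boundary contribution vanishes by the hypothesis that $\phi$ and $\nabla\phi$ vanish there. The $\{\rho=\rho_0\}$ boundary contribution, after expanding the outward normal and rewriting $f^{-\kappa}=\rho^{-\kappa}\fbd^{\kappa}$ with $\fbd, \fbd^{-1}$ bounded away from $t \in \{0, \pi T\}$, matches the $\lambda(\lambda^2+|\sigma|)$-weighted boundary integral on the left-hand side of \eqref{eq.carleman}. All remaining spacetime errors are of order $\mc{O}_0(f)$ or $\mc{O}_0(\rho^2)$ relative to the leading quadratic form, per Lemma \ref{thm.Q_frame} and Proposition \ref{thm.error_deriv}; they are absorbed by choosing $f_0\ll_{g,l,p,K}1$ small enough and $\lambda \ge 1+\kappa$ large enough, completing \eqref{eq.carleman}.
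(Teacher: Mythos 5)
Your conjugation-plus-multiplier scheme, the role of $\pi_{\xi,\zeta}$ and Theorem \ref{thm.pseudoconvex}, the frame-to-coordinate conversion via \eqref{eq.frame_inv}, and the idea of splitting the integration over $\mc{M}_\pm$ and summing all match the paper's proof. The one place where you genuinely diverge is in the treatment of the interface contribution on $\{t = \pi T / 2\}$, and there your reasoning is incomplete in a way that matters. You assert that the interface term is ``proportional to the jump of $\fbd'''$'' and hence favorable by \eqref{eq.psi_jump}. But after the two divergence theorems are summed, all terms in $P(\mc{T})$ cancel across the interface except $-\tfrac{1}{2}\mc{T}h_{\xi,\zeta}\cdot|\psi|^2$, and $\mc{T}h_{\xi,\zeta}$ receives its jump from \emph{two} distinct sources: the term $-\tfrac12 \xi f^{n-2}\rho^2\, |\fbd'|/\fbd$ inside $f^{n-3}w_{\xi,\zeta}$, whose $t$-derivative jumps because $\mr{sgn}(\fbd')$ flips (contributing $\xi f^{n-2}\rho^2\,\fbd''/\fbd$), and the $\tfrac12 f^{n-3}\Box f$ piece of $\tfrac12\nabla^\alpha S_\alpha$, whose $t$-derivative jumps by $\tfrac12 (\Delta\fbd''')\,f^{n-1}\rho$. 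These two contributions have \emph{opposite} signs. Using $\fbd''(\pi T/2) = -(\xi^2/16 + \mu^2)\fbd(\pi T/2)$ (from \eqref{eq.psi_ode} with $\fbd'=0$) and $\Delta\fbd''' = \xi(\xi^2/16+\mu^2)\fbd(\pi T/2)$ from \eqref{eq.psi_jump}, one finds a net jump $\xi\fbd'' + \tfrac12\Delta\fbd''' = -\tfrac{\xi}{2}(\xi^2/16+\mu^2)\fbd < 0$, so the sign does end up favorable at leading order — but this rests on a cancellation you have not exhibited, and you would still need to argue that subleading $\mc{O}_0$ corrections cannot overturn it.

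The paper sidesteps the sign question entirely: it bounds $|\mathcal{Y}| \lesssim \int_{\{t = \pi T/2\}} f^n |\psi|^2$ (using that $\mc{T}h_{\xi,\zeta} = \mc{O}_0(f^n)$ on each smooth piece), then converts this codimension-one integral into a bulk integral over a thin slab $\{|t - \pi T/2| < A\}$ via the fundamental theorem of calculus, exploiting $f \simeq \rho$ near the interface, and absorbs the result into the positive bulk terms for $\rho_0$ small — see \eqref{eq.integral_30}--\eqref{eq.integral_3}. This is more robust than a sign argument: it avoids having to control all lower-order terms and it does not rely on the precise algebraic relation between $\fbd''$ and $\Delta\fbd'''$. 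Your route is not wrong in principle, but if you pursue it you should state and verify the cancellation identity rather than attributing the sign solely to $\Delta\fbd'''$; in the absence of that verification, the absorption argument in the paper is the cleaner path.
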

\begin{figure}
\[
\input{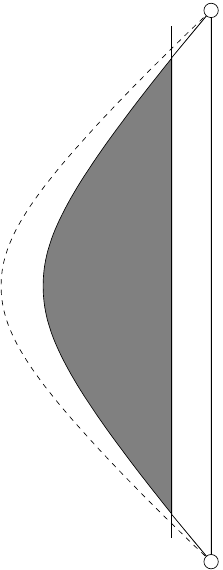_t}
\]
\caption{The region $\Omega_{f_0,\rho_0}$ where the estimate (\ref{eq.carleman}) holds. In applications, the limit $\rho_0 \rightarrow 0$ is taken.}
\end{figure}
\begin{remark}
Note the constant $\kappa^2 - ( n - 2 ) \kappa + \sigma - (n - 1)$ in \eqref{eq.carleman} is non-negative if $\kappa$ is chosen as in \eqref{eq.ucp_vanish}, which will be the case in applications of \eqref{eq.carleman}.
\end{remark}

\begin{remark}
The compact support assumption for $\phi$ in Theorem \ref{thm.carleman} is required for integrations by parts within its proof.
We note that by standard methods, this compact support can be replaced by weaker vanishing or integrability conditions.
Also, in the main cases of interest, $\mc{S}$ will be compact, and this assumption can be ignored.
\end{remark}

In the remainder of this subsection, we prove Theorem \ref{thm.carleman}.
The proof is mostly analogous to the corresponding proof in \cite{hol_shao:uc_ads}, with the main difference being that we must also account for $f$ and $\eta$ not being everywhere smooth.

As in \cite{hol_shao:uc_ads}, the derivation of \eqref{eq.carleman} revolves around multiplier, or vector field, estimates for the \emph{conjugated} wave operator \eqref{eq.L}, with the goal being to obtain positivity in the bulk spacetime terms.
For this purpose, we choose our multiplier as in \eqref{eq.Sw} (i.e., the gradient of $f$, reweighted and modified by a zero-order term).
The desired positivity of the bulk terms then follows from two specific considerations:
\begin{itemize}
\item This particular choice of multiplier (including the zero-order modification $h_{\xi, \zeta}$) implies that derivatives of $\phi$ along level sets of $f$ arise only from quadratic terms involving $\pi_{\xi, \zeta}$.
These are positive due to the pseudoconvexity property and Theorem \ref{thm.pseudoconvex}; see \eqref{eq.set_pi}.

\item The positivity of the remaining terms involving $\phi$ and the normal derivative of $\phi$ follow from an additional freedom: the choice of reparametrization of the $f$-foliation.
For this, we show that $F$, defined below in \eqref{eq.F}, suffices.\footnote{The leading-order term $\kappa \cdot \log f$ of $F$ is particularly essential, as it is ultimately manifested in the specific rate of vanishing \eqref{eq.ucp_vanish} in the ensuing local unique continuation property.}
\end{itemize}

The only significant departure from \cite{hol_shao:uc_ads} is the failure of $f$ and $\eta$ to be smooth here.
As a result, extra care must be taken in the integral estimates in Section \ref{sec.proof_carleman_integral} in order to handle extra terms arising from this lack of smoothness.

\subsubsection{Pointwise Estimates}

Analogous to \cite{hol_shao:uc_ads}, we define the following:
\begin{itemize}
\item We first construct the Carleman weight for our estimate:
\begin{equation}
\label{eq.F} F := \kappa \cdot \log f + \lambda p^{-1} f^p \text{,} \qquad \psi := e^{-F} \phi \text{.}
\end{equation}
Let $'$ denote differentiation with respect to $f$, e.g.,
\begin{equation}
\label{eq.F_deriv} F' = \kappa f^{-1} + \lambda f^{-1 + p} \text{,} \qquad F'' = - \kappa f^{-2} - \lambda (1 - p) f^{-2 + p} \text{.}
\end{equation}

\item Recalling $S$ and $w_{\xi, \zeta}$ (see \eqref{eq.S} and \eqref{eq.w}, respectively), we define
\begin{equation}
\label{eq.Sw} S_{\xi, \zeta} \psi := \nabla_S \psi + h_{\xi, \zeta} \psi \text{,} \qquad h_{\xi, \zeta} := f^{n - 3} w_{\xi, \zeta} + \frac{1}{2} \nabla^\alpha S_\alpha \text{.}
\end{equation}

\item We define the conjugated wave operator $\mc{L}$ by
\begin{equation}
\label{eq.L} \mc{L} := e^{-F} ( \Box + \sigma ) e^F \text{.}
\end{equation}

\item Note that the inward unit normal to the level sets of $\rho$ is given by
\begin{equation}
\label{eq.rho_normal} \mc{N} := | \nabla^\alpha \rho \nabla_\alpha \rho |^{- \frac{1}{2}} \grad \rho = \rho \partial_\rho \text{.}
\end{equation}
\end{itemize}
The key step in proving Theorem \ref{thm.carleman} is the following pointwise estimate for $\psi$:

\begin{lemma} \label{thm.psi_est}
There exists $C > 0$, depending on $g$, $p$, $K$, such that on the regions $\Omega_{ f_0, \rho_0 } \cap \mc{M}_\pm$, where $\mc{M}_\pm$ are as defined in \eqref{eq.M_mp}, we have\footnote{Note in particular that all quantities are smooth in $\Omega_{ f_0, \rho_0 } \cap \mc{M}_\pm$.}
\begin{align}
\label{eq.psi_est} \lambda^{-1} f^{n - 2 - p} | \mc{L} \psi |^2 &\geq C \lambda f^{n - 2 + p} | \nabla_N \psi |^2 + C f^{n - 2} \rho^2 ( | \nabla_V \psi |^2 + | \nasla \psi |^2 ) \\
\notag &\qquad + [ \kappa^2 - ( n - 2 ) \kappa + \sigma - (n - 1) ] f^{n - 2} | \psi |^2 \\
\notag &\qquad + C ( \lambda f^{n - 2 + p} + \lambda^2 f^{n - 2 + 2p} ) | \psi |^2 + \nabla^\beta P_\beta \text{,}
\end{align}
where the $1$-form $P$ satisfies, for some $\mc{C} > 0$ depending on $g$ and $p$,
\begin{equation}
\label{eq.psi_est_boundary} P ( \mc{N} ) \leq \mc{C} f^{n - 2} \rho^2 ( | \nabla_t \psi |^2 + | \nabla_\rho \psi |^2 ) + \mc{C} ( \lambda^2 + | \sigma | ) f^{n - 2} | \psi |^2 \text{.}
\end{equation}
\end{lemma}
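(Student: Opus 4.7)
The plan is to establish \eqref{eq.psi_est} by way of the standard Carleman multiplier identity for the conjugated operator $\mc{L}$, with $S_{\xi, \zeta} \psi$ as the multiplier. Expanding the definition \eqref{eq.L} yields
\[
\mc{L} \psi = \Box \psi + 2 F' \nabla^\alpha f \nabla_\alpha \psi + ( F'' \nabla^\alpha f \nabla_\alpha f + F' \Box f + (F')^2 \nabla^\alpha f \nabla_\alpha f + \sigma ) \psi \text{.}
\]
Multiplying both sides by $2 S_{\xi, \zeta} \psi = 2 ( \nabla_S \psi + h_{\xi, \zeta} \psi )$ and integrating by parts at the pointwise level will produce, modulo a divergence $\nabla^\beta P_\beta$, a bulk quadratic form in $\nabla \psi$ and $\psi$ from whose sign we extract \eqref{eq.psi_est}. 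I would work throughout in $\Omega_{ f_0, \rho_0 } \cap \mc{M}_\pm$ so that $f$ and all induced quantities are smooth, since the statement only asserts the pointwise identity on these smooth pieces.

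The first key computation is the standard deformation tensor identity: commuting derivatives and integrating by parts gives
\[
2 \Box \psi \cdot \nabla_S \psi = 2 \pi_{\xi, \zeta}^{\alpha \beta} \nabla_\alpha \psi \nabla_\beta \psi + \bigl( \nabla^\alpha S_\alpha - 2 f^{n-3} w_{\xi, \zeta} \bigr) | \nabla \psi |^2 + \nabla^\beta A_\beta \text{,}
\]
where $A$ collects the boundary terms. The zero-order piece of $S_{\xi, \zeta}$ produces $- 2 h_{\xi, \zeta} | \nabla \psi |^2 + \Box h_{\xi, \zeta} | \psi |^2 + \nabla^\beta B_\beta$; by the defining choice \eqref{eq.Sw} of $h_{\xi, \zeta}$ the $| \nabla \psi |^2$ contributions cancel exactly, and only the quadratic form in $\pi_{\xi, \zeta}$ plus a clean $| \psi |^2$ term remain. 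The horizontal tensor setting requires care: commuting $\Box$ with $\nabla_S$ on $\phi \in \Gamma \ul{T}^0_l \mc{M}$ introduces mixed curvature terms, but by Proposition \ref{thm.curv_frame} these are of size $f^{n - 2} \rho^2 | \psi | | \nabla \psi |$ and hence absorbable into the leading terms of \eqref{eq.psi_est}.

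The next step is to handle the lower-order terms in $\mc{L}$. The crucial contribution is $2 F' \nabla^\alpha f \nabla_\alpha \psi \cdot \nabla_S \psi = 2 F' f^{3-n} ( \nabla_S \psi )^2$, which, using $| S |^2 = f^{2n - 4} + \mc{O}_0 ( f^{2n-2} )$ from \eqref{eq.S_lo} and $F' \sim \kappa f^{-1} + \lambda f^{-1+p}$, produces a term on the order $\lambda f^{n-2+p} | \nabla_N \psi |^2$ with the correct sign to dominate the negative $-(n-1) f^{n-2} | \theta(N) |^2$ contribution from Theorem \ref{thm.pseudoconvex}, provided $f_0$ is small and $\lambda$ large. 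The zero-order multiplier piece $2 h_{\xi, \zeta} \psi$ paired with the zero-order part of $\mc{L}$ generates, after unpacking $\Box F$, $| \nabla F |^2$, and $\Box h_{\xi, \zeta}$ by means of Corollary \ref{thm.S_lo} and Proposition \ref{thm.f_lo}, a coefficient of $| \psi |^2$ whose leading static part is precisely $\kappa^2 - ( n - 2 ) \kappa + \sigma - (n - 1)$, together with the additional $\lambda f^{n-2+p} + \lambda^2 f^{n-2+2p}$ coercive contribution; this algebraic bookkeeping is essentially identical to \cite[Sect.~5]{hol_shao:uc_ads}.

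Finally, I would apply Theorem \ref{thm.pseudoconvex} to the form $\pi_{\xi, \zeta}^{\alpha \beta} \nabla_\alpha \psi \nabla_\beta \psi$, which yields positive control of $f^{n-2} \rho^2 ( | \nabla_V \psi |^2 + | \nasla \psi |^2 )$ at the cost of a negative $(n-1) f^{n-2} | \nabla_N \psi |^2$ term absorbed as above, and then absorb $|\mc{L} \psi \cdot S_{\xi, \zeta} \psi |$ into $\lambda^{-1} f^{n-2-p} | \mc{L} \psi |^2$ and $\lambda f^{n-2+p} | \nabla_N \psi |^2$ by Cauchy--Schwarz, again using $| S |^2 \simeq f^{2n-4}$. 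The divergence $\nabla^\beta P_\beta$ collects all the boundary terms; a direct inspection using \eqref{eq.frame_NV} and \eqref{eq.rho_normal} bounds $P(\mc{N})$ by the right-hand side of \eqref{eq.psi_est_boundary}. The main obstacle will be bookkeeping: ensuring that the new $\partial_t \gm$-dependent error terms arising from \eqref{eq.Gamma_t} and the modified expansions \eqref{eq.f_hess}, \eqref{eq.f_hess_tan} for $\nabla^2 f$ are compatible with the $| \eta' | \xi \gm_{AB} - \eta' \partial_t \gm_{AB}$ term in Lemma \ref{thm.Q_frame}, so that the pseudoconvexity input from \eqref{eq.pcp_nonstatic} consumes them exactly, leaving no uncontrolled contribution of size larger than $\mc{O}_0 ( f^{n-1} \rho^2 )$.
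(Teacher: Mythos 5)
The overall architecture of your proposal is correct and matches the paper: multiplier identity with $S_{\xi,\zeta}\psi$, deformation tensor producing $\pi_{\xi,\zeta}$, pseudoconvexity input from Theorem \ref{thm.pseudoconvex}, curvature control from Proposition \ref{thm.curv_frame}, and Cauchy--Schwarz absorption at the end. However, there is a genuine gap in the two places where you account for the $|\nabla_N\psi|^2$ and $|\psi|^2$ terms, and these two errors are related.

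First, you claim the term $\lambda f^{n-2+p}|\nabla_N\psi|^2$ from the first-order contraction ``dominates'' the negative $-(n-1)f^{n-2}|\nabla_N\psi|^2$ coming from $\pi_{\xi,\zeta}(N,N)$, ``provided $f_0$ small and $\lambda$ large.'' This cannot work: since $0<p<1$, we have $\lambda f^{n-2+p} = \lambda f^p \cdot f^{n-2} \ll f^{n-2}$ as $f\to 0$ for any fixed $\lambda$, so no choice of $\lambda$ absorbs a coefficient of size $f^{n-2}$ uniformly down to $\mc{I}$. The actual cancellation uses the $\kappa\,\log f$ part of $F$: the contraction $2F' f^{3-n}|\nabla_S\psi|^2$ produces $2\kappa f^{n-2}|\nabla_N\psi|^2$ in addition to $2\lambda f^{n-2+p}|\nabla_N\psi|^2$, and it is $2\kappa f^{n-2}$ that cancels $-(n-1)f^{n-2}$, leaving a \emph{non-negative} coefficient $(2\kappa - n +1)f^{n-2}|\nabla_N\psi|^2$ by the hypothesis $\kappa\geq (n-1)/2$.

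Second, and more fundamentally, you assert that the coefficient $\kappa^2 - (n-2)\kappa + \sigma - (n-1)$ of $f^{n-2}|\psi|^2$ emerges directly from ``unpacking $\Box F$, $|\nabla F|^2$, and $\Box h_{\xi,\zeta}$.'' It does not. The zero-order bookkeeping (the paper's \eqref{eq.conj_zero_2}--\eqref{eq.conj_zero_3}) gives only $(\kappa^2 - n\kappa + \sigma)f^{n-2}|\psi|^2$. The missing piece $(2\kappa-n+1)f^{n-2}|\psi|^2$ comes from an additional pointwise Hardy-type inequality (the paper's \eqref{eq.hardy_pre}--\eqref{eq.hardy_ptwise}), applied to the residual $(2\kappa-n+1)f^{n-2}|\nabla_N\psi|^2$ term above. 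That step converts normal-derivative positivity into zero-order positivity at the cost of an extra divergence term $P^H$, and $(\kappa^2-n\kappa+\sigma)+(2\kappa-n+1)=\kappa^2-(n-2)\kappa+\sigma-(n-1)$. Without this Hardy step your argument produces the wrong $|\psi|^2$ coefficient, and moreover the $|\nabla_N\psi|^2$ budget does not close. You should also note, when bounding $P(\mc{N})$, that the leading $|\nasla\psi|^2$ contribution in $P^Q(\mc{N})$ has a favorable sign and is dropped, which is why no $|\nasla\psi|^2$ appears in \eqref{eq.psi_est_boundary}.
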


\begin{proof}
Let $Q$ be the stress-energy tensor for the wave equation with respect to $\psi$:
\begin{equation}
\label{eq.emt} Q_{\alpha\beta} := \nabla_\alpha \psi^I \nabla_\beta \psi_I - \frac{1}{2} g_{\alpha\beta} \nabla^\mu \psi^I \nabla_\mu \psi_I \text{.}
\end{equation}
A direct computation yields that the current
\begin{equation}
\label{eq.P_sharp} P^Q_\beta := Q_{\alpha\beta} S^\alpha + \frac{1}{2} h_{\xi, \zeta} \cdot \nabla_\beta | \psi |^2 - \frac{1}{2} \nabla_\beta h_{\xi, \zeta} \cdot | \psi |^2
\end{equation}
satisfies the identity
\begin{align}
\label{eq.set_2} \nabla^\beta P^Q_\beta &= \Box \psi^I S_{\xi, \zeta} \psi_I - \pi_{\xi, \zeta}^{\alpha \beta} \nabla_\alpha \psi^I \nabla_\beta \psi_I \\
\notag &\qquad - S^\alpha \nabla^\beta \psi_I \mc{R}_{\alpha \beta} \psi^I - \frac{1}{2} \Box h_{\xi, \zeta} \cdot | \psi |^2 \text{.}
\end{align}

The pseudoconvexity property and Theorem \ref{thm.pseudoconvex} imply that
\begin{align}
\label{eq.set_pi} \pi_{\xi, \zeta}^{\alpha \beta} \nabla_\alpha \psi^I \nabla_\beta \psi_I &\geq [ K f^{n - 2} \rho^2 + \mc{O}_0 ( f^{n - 1} \rho^2 ) ] \cdot ( | \nabla_V \psi |^2 + | \nasla \psi |^2 ) \\
\notag &\qquad - [ (n - 1) f^{n - 2} + \mc{O}_0 ( f^n ) ] \cdot | \nabla_N \psi |^2 \text{.}
\end{align}
Next, using \eqref{eq.S_lo}, \eqref{eq.error_deriv}, \eqref{eq.w}, \eqref{eq.Sw}, and the assumption $\zeta = \mc{O} (1)$, we obtain
\begin{equation}
\label{eq.h_est} h_{\xi, \zeta} = \mc{O}_0 ( f^n ) \text{,} \qquad \Box h_{\xi, \zeta} = \mc{O}_0 ( f^n ) \text{.}
\end{equation}
For the curvature term in \eqref{eq.set_2}, we recall \eqref{eq.S_lo} and \eqref{eq.curv_frame} and expand
\begin{align}
\label{eq.set_R} - S^\alpha \nabla^\beta \psi_I \mc{R}_{\alpha \beta} \psi^I &= \mc{O}_0 ( f^{n - 2} ) \left[ \mc{R}_{N V} \psi^I \nabla_V \psi_I - \sum_{ X = 1 }^{n - 1} \mc{R}_{N E_X} \psi^I \nabla_{ E_X } \psi_I \right] \\
\notag &\geq \mc{O}_0 ( f^{n - 1} \rho^2 ) \cdot ( | \nabla_V \psi | + | \nasla \psi | ) | \psi | \\
\notag &\geq \mc{O}_0 ( f^{n - 2} \rho^4 ) \cdot ( | \nabla_V \psi |^2 + | \nasla \psi |^2 ) + \mc{O}_0 ( f^n ) \cdot | \psi |^2 \text{.}
\end{align}
Thus, applying \eqref{eq.set_pi}-\eqref{eq.set_R} to \eqref{eq.set_2} yields
\begin{align}
\label{eq.set} \Box \psi^I S_{\xi, \zeta} \psi_I &\geq \nabla^\beta P^Q_\beta + \mc{O}_0 ( f^n ) \cdot | \psi |^2 - [ (n - 1) f^{n - 2} + \mc{O}_0 ( f^n ) ] | \nabla_N \psi |^2 \\
\notag &\qquad + [ K f^{n - 2} \rho^2 + \mc{O}_0 ( f^{n - 1} \rho^2 ) ] ( | \nabla_V \psi |^2 + | \nasla \psi |^2 ) \text{.}
\end{align}

Next, we expand $\mc{L}$ to obtain
\begin{align}
\label{eq.conj_1}
\mc{L} \psi &= \Box \psi + 2 F' f^{-n + 3} \cdot \nabla_S \psi + \mc{A}_0 \cdot \psi \text{,} \\
\notag \mc{A}_0 &= [ ( F^\prime )^2 + F^{\prime\prime} ] \nabla^\alpha f \nabla_\alpha f + F^\prime \Box f + \sigma \text{.}
\end{align}
Defining the quantities
\begin{equation}
\label{eq.P_flat} P^S_\beta := \frac{1}{2} \mc{A} S_\beta \cdot | \psi |^2 \text{,} \qquad \mc{A} = \mc{A}_0 + 2 F' f^{-n + 3} h_{\xi, \zeta} \text{,}
\end{equation}
contracting \eqref{eq.conj_1} with $S_{\xi, \zeta} \psi$, and applying the product rule, we see that
\begin{align}
\label{eq.conj_2} \mc{L} \psi^I S_{\xi, \zeta} \psi_I &= \nabla^\alpha P^S_\beta + \Box \psi^I S_{\xi, \zeta} \psi_I + 2 F' f^{-n + 3} \cdot | \nabla_S \psi |^2 \\
\notag &\qquad + \left[ h_{\xi, \zeta} \mc{A}_0 - \frac{1}{2} \nabla^\alpha ( S_\alpha \mc{A} ) \right] \cdot | \psi |^2 \text{.}
\end{align}

By \eqref{eq.S_lo} and \eqref{eq.F_deriv}, we have that
\begin{equation}
\label{eq.conj_zero_1} 2 F' f^{-n + 3} \cdot | \nabla_S \psi |^2 = [ 2 \kappa f^{n - 2} + 2 \lambda f^{n - 2 + p} + \lambda \cdot \mc{O}_0 ( f^n ) ] | \nabla_N \psi |^2 \text{.}
\end{equation}
Moreover, using Proposition \ref{thm.f_lo}, \eqref{eq.F_deriv}, and \eqref{eq.h_est}, we see that
\begin{equation}
\label{eq.conj_zero_2} \mc{A} = ( \kappa^2 - n \kappa + \sigma ) + \lambda ( 2 \kappa - n + p ) f^p + \lambda^2 f^{2 p} + \lambda^2 \cdot \mc{O}_0 ( f^2 ) \text{.}
\end{equation}
Applying \eqref{eq.S_lo}, \eqref{eq.f_grad}, \eqref{eq.error_deriv}, and \eqref{eq.h_est}, we also see that
\begin{align}
\label{eq.conj_zero_3} h_{\xi, \zeta} \mc{A}_0 - \frac{1}{2} \nabla^\alpha ( S_\alpha \mc{A} ) &= ( \kappa^2 - n \kappa + \sigma ) f^{n - 2} + \frac{ 2 - 2 p }{ 2 } \lambda ( 2 \kappa - n + p ) f^{n - 2 + p} \\
\notag &\qquad + ( 1 - p ) \lambda^2 f^{n - 2 + 2 p} + \lambda^2 \cdot \mc{O}_0 ( f^n ) \text{.}
\end{align}
Thus, applying \eqref{eq.set} and \eqref{eq.conj_zero_1}-\eqref{eq.conj_zero_3} to \eqref{eq.conj_2} yields
\begin{align}
\label{eq.conj} \mc{L} \psi^I S_{\xi, \zeta} \psi_I &\geq [ ( 2 \kappa - n + 1 ) f^{n - 2} + 2 \lambda f^{n - 2 + p} + \lambda \cdot \mc{O}_0 (f^n) ] | \nabla_N \psi |^2 \\
\notag &\qquad + [ K f^{n - 2} \rho^2 + \mc{O}_0 ( f^{n - 1} \rho^2 ) ] ( | \nabla_V \psi |^2 + | \nasla \psi |^2 ) \\
\notag &\qquad + \left[ ( \kappa^2 - n \kappa + \sigma ) f^{n - 2} + \frac{2 - p}{2} \lambda ( 2 \kappa - n + p ) f^{n - 2 + p} \right] | \psi |^2 \\
\notag &\qquad + [ (1 - p) \lambda^2 f^{n - 2 + 2p} + \lambda^2 \cdot \mc{O}_0 ( f^n ) ] | \psi |^2 + \nabla^\beta ( P^Q_\beta + P^S_\beta ) \text{.}
\end{align}

Next, fix $b, q \in \R$, and observe that
\begin{align}
\label{eq.hardy_pre} 0 &\leq f^{q - 2} | \nabla^\beta f \nabla_\beta \psi + b f \cdot \psi |^2 \\
\notag &= f^{q - 2} | \nabla^\beta f \nabla_\beta \psi |^2 + [ b^2 f^q - b (q - 1) f^{q - 2} \nabla^\beta f \nabla_\beta f - b f^{q - 1} \Box f ] | \psi |^2 \\
\notag &\qquad + \nabla^\beta ( b f^{q - 1} \nabla_\beta f \cdot | \psi |^2 ) \text{.}
\end{align}
Recalling Proposition \ref{thm.f_lo}, setting $b = \frac{1}{2} (q - n)$, and rearranging terms, \eqref{eq.hardy_pre} yields, for any $q$, a pointwise weighted Hardy-type inequality:
\begin{align}
\label{eq.hardy_ptwise} f^q | \nabla_N \psi |^2 &\geq \frac{1}{4} (q - n)^2 f^q \cdot | \psi |^2 + \mc{O}_0 ( f^{q + 2} ) \cdot ( | \nabla_N \psi |^2 + | \psi |^2 ) \\
\notag &\qquad - \frac{1}{2} (q - n) \nabla^\beta ( f^{q - 1} \nabla_\beta f \cdot | \psi |^2 ) \text{.}
\end{align}

We now apply \eqref{eq.hardy_ptwise} to the terms $f^{n - 2} | \nabla_N \psi |^2$ and $f^{n - 2 + p} | \nabla_N \psi |^2$ in the right-hand side of \eqref{eq.conj}.
Defining in addition the $1$-form
\begin{equation}
\label{eq.P_natural} P^H_\beta := ( 2 \kappa - n + 1 ) f^{n - 3} \nabla_\beta f \cdot | \psi |^2 + \frac{2 - p}{2} \lambda f^{n - 3 + p} \nabla_\beta f \cdot | \psi |^2 \text{,}
\end{equation}
and noting that $2 \kappa - n + 1 \geq 0$ by \eqref{eq.p_kappa}, we see that the above process yields
\begin{align}
\label{eq.hardy_2} \mc{L} \psi^I S_{\xi, \zeta} \psi_I &\geq \nabla^\beta ( P^Q_\beta + P^S_\beta + P^H_\beta ) + [ \lambda f^{n - 2 + p} + \lambda \cdot \mc{O}_0 (f^n) ] | \nabla_N \psi |^2 \\
\notag &\qquad + [ K f^{n - 2} \rho^2 + \mc{O}_0 ( f^{n - 1} \rho^2 ) ] ( | \nabla_V \psi |^2 + | \nasla \psi |^2 ) \\
\notag &\qquad + [ \kappa^2 - ( n - 2 ) \kappa + \sigma - (n - 1) ] f^{n - 2} | \psi |^2 \\
\notag &\qquad + \frac{2 - p}{2} \, \lambda \left( 2 \kappa - n + 1 + \frac{p}{2} \right) f^{n - 2 + p} | \psi |^2 \\
\notag &\qquad + [ (1 - p) \lambda^2 f^{n - 2 + 2p} + \lambda^2 \cdot \mc{O}_0 ( f^n ) ] | \psi |^2 \text{.}
\end{align}

Next, applying the Cauchy-Schwarz inequality, \eqref{eq.S_lo}, and \eqref{eq.h_est}, we have
\begin{align}
\label{eq.hardy_21} \mc{L} \psi^I S_{\xi, \zeta} \psi_I &\leq \lambda^{-1} f^{n - 2 - p} | \mc{L} \psi |^2 + \frac{1}{2} \lambda [ f^{n - 2 + p} + \mc{O}_0 ( f^n ) ] | \nabla_N \psi |^2 \\
\notag &\qquad + \lambda \cdot \mc{O}_0 ( f^n ) \cdot | \psi |^2 \text{,}
\end{align}
which combined with \eqref{eq.hardy_2} yields
\begin{align}
\label{eq.hardy_3} \lambda^{-1} f^{n - 2 - p} | \mc{L} \psi |^2 &\geq \nabla^\beta ( P^Q_\beta + P^S_\beta + P^H_\beta ) \\
\notag &\qquad + \left[ \frac{1}{2} \lambda f^{n - 2 + p} + \lambda \cdot \mc{O}_0 (f^n) \right] | \nabla_N \psi |^2 \\
\notag &\qquad + [ K f^{n - 2} \rho^2 + \mc{O}_0 ( f^{n - 1} \rho^2 ) ] ( | \nabla_V \psi |^2 + | \nasla \psi |^2 ) \\
\notag &\qquad + [ \kappa^2 - ( n - 2 ) \kappa + \sigma - (n - 1) ] f^{n - 2} | \psi |^2 \\
\notag &\qquad + \frac{2 - p}{2} \, \lambda \left( 2 \kappa - n + 1 + \frac{p}{2} \right) f^{n - 2 + p} | \psi |^2 \\
\notag &\qquad + [ (1 - p) \lambda^2 f^{n - 2 + 2p} + \lambda^2 \cdot \mc{O}_0 ( f^n ) ] | \psi |^2 \text{.}
\end{align}
Recalling that $f \ll_{g, l, p, K} 1$ in \eqref{eq.hardy_3} and setting
\begin{equation}
\label{eq.P} P := P^Q + P^S + P^H
\end{equation}
results in the first identity \eqref{eq.psi_est}.
To complete the proof of Lemma \ref{thm.psi_est}, it remains to show that $P$, as defined in \eqref{eq.P}, satisfies \eqref{eq.psi_est_boundary}.

Applying \eqref{eq.rho_normal} to \eqref{eq.P_flat} and \eqref{eq.P_natural}, we see that
\begin{align}
\label{eq.boundary_1} P^S ( \mc{N} ) + P^H ( \mc{N} ) &= \frac{1}{2} \mc{A} f^{n - 3} \rho \partial_\rho f \cdot | \psi |^2 + ( 2 \kappa - n + 1 ) f^{n - 2} \cdot | \psi |^2 \\
\notag &\qquad + \frac{2 - p}{2} \lambda f^{n - 2 + p} \cdot | \psi |^2 \\
\notag &\leq \mc{C} ( \lambda^2 + | \sigma | ) f^{n - 2} \cdot | \psi |^2 \text{,}
\end{align}
for some $\mc{C} > 0$.
Next, for $P^Q$, we expand using \eqref{eq.P_sharp}:
\begin{align}
\label{eq.boundary_20} P^Q ( \mc{N} ) &= \rho \cdot \nabla_S \psi^I \nabla_\rho \psi_I - \frac{1}{2} \rho \cdot g ( S, \partial_\rho ) \cdot \nabla^\mu \psi^I \nabla_\mu \psi_I \\
\notag &\qquad + h_{\xi, \zeta} \cdot \psi^I \nabla_{ \mc{N} } \psi_I - \frac{1}{2} \rho \partial_\rho h_{\xi, \zeta} \cdot | \psi |^2 \text{.}
\end{align}
Using Proposition \ref{thm.g}, \eqref{eq.f_grad}, \eqref{eq.error_deriv}, \eqref{eq.rho_normal}, and \eqref{eq.h_est}, we see that, for some $\mc{C} > 0$,
\begin{align}
\label{eq.boundary_2} P^Q ( \mc{N} ) &\leq \mc{C} f^{n - 2} \rho^2 ( | \nabla_\rho \psi |^2 + | \nabla_t \psi |^2 ) + \mc{C} f^n \cdot | \psi |^2 \text{.}
\end{align}
In both \eqref{eq.boundary_1} and \eqref{eq.boundary_2}, we used that $f \ll_{g, l, p, K} 1$.
Furthermore, similar to \cite{hol_shao:uc_ads}, we used that the leading-order $| \nasla \psi |^2$-term in the expansion of $P^Q ( \mc{N} )$ is negative and hence can be omitted.
Finally, summing \eqref{eq.boundary_1} and \eqref{eq.boundary_2} results in the bound \eqref{eq.psi_est_boundary} and completes the proof of the lemma.
\end{proof}

We now convert Lemma \ref{thm.psi_est} into estimates for $\phi$:

\begin{lemma} \label{thm.phi_est}
There exists $C > 0$, depending on $g$, $p$, $K$, such that
\begin{align}
\label{eq.phi_est} \lambda^{-1} f^{-p} E^p_{\kappa, \lambda} | ( \Box + \sigma ) \phi |^2 &\geq C E^p_{\kappa, \lambda} ( \rho^4 | \nabla_t \phi |^2 + \rho^4 | \nabla_\rho \phi |^2 + \rho^2 | \nasla \phi |^2 ) \\
\notag &\qquad + E^p_{\kappa, \lambda} [ \kappa ^2 - ( n - 2 ) \kappa + \sigma - (n - 1) ] | \phi |^2 \\
\notag &\qquad + C \lambda^2 E^p_{\kappa, \lambda} f^{2p} | \phi |^2 + \nabla^\beta P_\beta \text{,}
\end{align}
on the regions $\Omega_{ f_0, \rho_0 } \cap \mc{M}_\pm$, where
\begin{equation}
\label{eq.carleman_exp} E^p_{\kappa, \lambda} := e^{-2 F} f^{n - 2} = f^{n - 2 - 2 \kappa} e^\frac{ - 2 \lambda f^p }{p} \text{,}
\end{equation}
and where $P$, from \eqref{eq.psi_est_boundary}, satisfies, for some $\mc{C} > 0$ depending on $g$ and $p$,
\begin{equation}
\label{eq.phi_est_boundary} \rho^{-n} \cdot P ( \mc{N} ) \leq \mc{C} [ | \nabla_t ( \rho^{-\kappa} \phi ) |^2 + | \nabla_\rho ( \rho^{-\kappa} \phi ) |^2 ] + \mc{C} ( \lambda^2 + | \sigma | ) | \rho^{-\kappa - 1} \phi |^2 \text{.}
\end{equation}
\end{lemma}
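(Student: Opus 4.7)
The plan is to substitute $\psi = e^{-F}\phi$ throughout the pointwise estimate of Lemma \ref{thm.psi_est}. Two observations make most of the conversion immediate. First, by the definition \eqref{eq.L}, $|\mc{L}\psi|^2 = e^{-2F}|(\Box+\sigma)\phi|^2$, and $e^{-2F}f^{n-2} = E^p_{\kappa,\lambda}$ by \eqref{eq.carleman_exp}, so the LHS of \eqref{eq.psi_est} transforms directly into that of \eqref{eq.phi_est}. Second, since $f = \rho/\eta$ is independent of the horizontal coordinates, and since $\tilde{V}(f) = 0$ by the calculation at \eqref{eq.f_deriv}, a direct check from Definition \ref{def.frame} shows $\nabla_V f = 0$ and $\nasla f = 0$; hence $\nabla_V\psi = e^{-F}\nabla_V\phi$ and $\nasla\psi = e^{-F}\nasla\phi$, with \emph{no} correction terms. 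For the normal direction, $\nabla_N\psi = e^{-F}(\nabla_N\phi - M\phi)$, where $M := F'(\nabla_N f) = \kappa + \lambda f^p + \mc{O}_0(f^2)(\kappa+\lambda f^p)$, using $\nabla_N f = |\nabla f|_g = f + \mc{O}_0(f^3)$ from Proposition \ref{thm.f_lo}.

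Next, I would apply the inversion formulas \eqref{eq.frame_inv} to convert frame derivatives of $\phi$ into coordinate derivatives. Schematically, one obtains $\rho^2|\nabla_t\phi|^2 \leq C|\nabla_V\phi|^2 + Cf^2|\nabla_N\phi|^2 + \mc{O}_0(\rho^4)|\nabla\phi|^2$ and a symmetric bound for $\rho^2|\nabla_\rho\phi|^2$; multiplying these by the weights $\rho^2 E^p_{\kappa,\lambda}$ and $\lambda f^p E^p_{\kappa,\lambda}$ respectively and comparing with the positive terms in \eqref{eq.psi_est} produces the target $CE^p_{\kappa,\lambda}(\rho^4|\nabla_t\phi|^2 + \rho^4|\nabla_\rho\phi|^2 + \rho^2|\nasla\phi|^2)$. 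The cross errors $f^2|\nabla_N\phi|^2$ are absorbed into the $\lambda f^p|\nabla_N\phi|^2$ good term since $f^2 \ll \lambda f^p$ for $f \ll_\lambda 1$. The elementary inequality $|\nabla_N\phi - M\phi|^2 \geq \tfrac{1}{2}|\nabla_N\phi|^2 - M^2|\phi|^2$ then extracts the pure $|\nabla_N\phi|^2$ contribution from $|\nabla_N\psi|^2$; the accompanying $M^2|\phi|^2 \lesssim (\kappa^2 + \lambda^2 f^{2p})|\phi|^2$ defect is absorbed into the preexisting $[\kappa^2 - (n-2)\kappa + \sigma - (n-1)]E^p_{\kappa,\lambda}|\phi|^2$ and $C\lambda^2 f^{2p}E^p_{\kappa,\lambda}|\phi|^2$ terms by enlarging the generic constant $C$ of Lemma \ref{thm.psi_est}.

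For the boundary estimate \eqref{eq.phi_est_boundary}, the key observation is that on $\{\rho = \rho_0\}$ one has $\psi = \eta^\kappa \rho^{-\kappa} e^{-\lambda f^p/p}\phi$, where $\eta^\kappa = \mc{O}(1)$ on compact $t$-subintervals and $e^{-\lambda f^p/p}$ is uniformly bounded for $f \leq f_0$ small. The product rule gives $|\nabla_\rho\psi|^2 + |\nabla_t\psi|^2 \lesssim \rho^{-2\kappa}[|\nabla_\rho(\rho^{-\kappa}\phi)|^2 + |\nabla_t(\rho^{-\kappa}\phi)|^2 + \lambda^2\rho^{-2}|\phi|^2]$ and $|\psi|^2 \lesssim \rho^{-2\kappa}|\phi|^2$. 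Inserting these into \eqref{eq.psi_est_boundary}, and using that the induced volume element satisfies $d\sigma_g = \rho^{-n} d\gm(1+o(1))$ on $\{\rho=\rho_0\}$---which accounts for the $\rho^{-n}$ prefactor on the LHS of \eqref{eq.phi_est_boundary}---then yields the claimed boundary bound, since $\rho^{-n}\cdot f^{n-2}\rho^2$ reduces to a factor that is uniformly bounded in $\rho_0$.

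The main obstacle is the coefficient bookkeeping in the absorption step above: the $M^2|\phi|^2$ defect from converting $|\nabla_N\psi|^2$ into $|\nabla_N\phi|^2$ is of the \emph{same} order as the leading positive $|\phi|^2$ coefficients on the right of \eqref{eq.psi_est} rather than smaller in $f$, so it must be absorbed by enlarging the generic $C$ and shrinking $f_0$ rather than through any gain in $f$. Because Lemma \ref{thm.psi_est} is stated separately on each smooth component $\Omega_{f_0,\rho_0}\cap \mc{M}_\pm$, the non-smoothness of $f$ across $\{t=\pi T/2\}$ introduces no additional difficulty at this stage; it will only reappear when the pointwise estimate is integrated and the distributional jump of $\fbd'''$ in \eqref{eq.psi_jump} must be confronted.
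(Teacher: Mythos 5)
Your proposal is correct and follows essentially the same route as the paper: pass from $\psi$ to $\phi$ via the exponential weight (noting $\nabla_V f = \nasla f = 0$ so only the $N$-derivative picks up a correction), apply the frame inversion formulas \eqref{eq.frame_inv} to convert to coordinate derivatives, and use the product rule together with the relation $f = \rho/\fbd$ for the boundary bound \eqref{eq.phi_est_boundary}. One small slip in the absorption step: the $M^2|\phi|^2$ defect must go entirely into the $C\lambda^2 f^{2p} E^p_{\kappa,\lambda}|\phi|^2$ term by \emph{shrinking} the final constant $C$ (using that $\kappa\lesssim\lambda$ and, as in the paper's passage to \eqref{eq.psi_est_0}, first weakening the $\lambda f^{n-2+p}|\nabla_N\psi|^2$ good term to $f^{n-2+2p}|\nabla_N\psi|^2$ so the defect carries a small coefficient); it cannot be absorbed into the exactly prescribed coefficient $[\kappa^2-(n-2)\kappa+\sigma-(n-1)]$, which must be preserved verbatim for the downstream unique continuation argument.
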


\begin{proof}
From \eqref{eq.psi_est}, we use the largeness of $\lambda$ and the smallness of $f$ to obtain
\begin{align}
\label{eq.psi_est_0} \lambda^{-1} f^{n - 2 - p} | \mc{L} \psi |^2 &\geq C f^{n - 2 + 2p} | \nabla_N \psi |^2 + C f^{n - 2} \rho^2 ( | \nabla_V \psi |^2 + | \nasla \psi |^2 ) \\
\notag &\qquad + [ \kappa^2 - ( n - 2 ) \kappa + \sigma - (n - 1) ] f^{n - 2} | \psi |^2 \\
\notag &\qquad + C \lambda^2 f^{n - 2 + 2p} | \psi |^2 + \nabla^\beta P_\beta \text{.}
\end{align}
By \eqref{eq.f_grad}, \eqref{eq.F_deriv}, and the assumption $\lambda \geq 1 + \kappa$,
\begin{equation} \label{eq.psi_phi_deriv} e^{-2 F} | \nabla_N \phi |^2 = | \nabla_N \psi + F' \nabla_N f \cdot \psi |^2 \lesssim | \nabla_N \psi |^2 + \lambda^2 | \psi |^2 \text{.} \end{equation}
As a result, applying \eqref{eq.carleman_exp} and \eqref{eq.psi_phi_deriv} to \eqref{eq.psi_est_0} yields
\begin{align}
\label{eq.lower_1} \lambda^{-1} f^{-p} E^p_{\kappa, \lambda} | ( \Box + \sigma ) \phi |^2 &\geq C E^p_{\kappa, \lambda} ( \rho^2 | \nabla_V \phi |^2 + \rho^2 | \nasla \phi |^2 + f^{2p} | \nabla_N \phi |^2 ) \\
\notag &\qquad + E^p_{\kappa, \lambda} [ \kappa^2 - ( n - 2 ) \kappa + \sigma - (n - 1) ] | \phi |^2 \\
\notag &\qquad + C \lambda^2 E^p_{\kappa, \lambda} f^{2p} | \phi |^2 + \nabla^\beta P_\beta \text{,}
\end{align}
for some (possibly different) $C > 0$.
Moreover, by \eqref{eq.frame_inv},
\begin{align}
\label{eq.TN_trho} \rho^4 | \nabla_\rho \phi |^2 &\lesssim \rho^2 | \nabla_N \phi |^2 + f^2 \rho^2 | \nabla_V \phi |^2 + \rho^6 | \nasla \phi |^2 \text{,} \\
\notag \rho^4 | \nabla_t \phi |^2 &\lesssim \rho^2 | \nabla_V \phi |^2 + f^2 \rho^2 | \nabla_N \phi |^2 + \rho^6 | \nasla \phi |^2 \text{,}
\end{align}
which when combined with \eqref{eq.lower_1} results in \eqref{eq.phi_est}.

For \eqref{eq.phi_est_boundary}, we first define the shorthand
\begin{equation}
\label{eq.carly_exp} \mc{E}_{p, \lambda} := e^{- \frac{ \lambda f^p }{p} } \leq 1 \text{,}
\end{equation}
and we note that
\begin{equation}
\label{eq.carly_exp_deriv} | \partial_\rho \mc{E}_{p, \lambda} | \lesssim \lambda f^p \rho^{-1} \mc{E}_{p, \lambda} \text{,} \qquad | \partial_t \mc{E}_{p, \lambda} | \lesssim \lambda f^{p + 1} \rho^{-1} \mc{E}_{p, \lambda} \text{.}
\end{equation}
Noting that $2 \kappa \geq n - 1$ by \eqref{eq.p_kappa}, we see from \eqref{eq.carly_exp} and \eqref{eq.carly_exp_deriv} that
\begin{align}
\label{eq.upper_30} f^{n - 2} \rho^{-n} \cdot | \psi |^2 &\leq | \rho^{ - \kappa - 1 } \phi |^2 \text{,} \\
\notag f^{n - 2} \rho^{-n + 2} | \nabla_\rho \psi |^2 &\lesssim | \nabla_\rho ( \rho^{-\kappa} \phi ) |^2 + \lambda^2 | \rho^{-\kappa - 1} \phi |^2 \text{,} \\
\notag f^{n - 2} \rho^{-n + 2} | \nabla_t \psi |^2 &\lesssim | \nabla_t ( \rho^{-\kappa} \phi ) |^2 + \lambda^2 | \rho^{-\kappa - 1} \phi |^2 \text{.}
\end{align}
Combining \eqref{eq.psi_est_boundary} with \eqref{eq.upper_30} yields \eqref{eq.phi_est_boundary}.
\end{proof}

\subsubsection{Integral Estimates} \label{sec.proof_carleman_integral}

It remains to integrate \eqref{eq.phi_est} over $\Omega_{ f_0, \rho_0 }$ and apply the divergence theorem.
Compared to \cite{hol_shao:uc_ads}, the process here is a bit more complex, since we must account for the lack of smoothness at $t = \frac{ \pi T }{2}$:
\begin{itemize}
\item First, we integrate \eqref{eq.phi_est} over $\Omega_{ f_0, \rho_0 } \cap \mc{M}_+$ and apply the divergence theorem.
The term $\nabla^\alpha P_\alpha$ yields boundary terms on $\{ \rho = \rho_0 \}$ and $\{ t = \frac{ \pi T }{2} \}$.

\item We also integrate \eqref{eq.phi_est} over $\Omega_{ f_0, \rho_0 } \cap \mc{M}_-$, which yields corresponding boundary terms on $\{ \rho = \rho_0 \}$ and $\{ t = \frac{ \pi T }{2} \}$.
\end{itemize}
(On the other hand, we do not obtain boundary terms on $\{ f = f_0 \}$, since we assumed both $\phi$ and $\nabla \phi$ vanished on $\{ f = f_0 \}$.)

Summing the two inequalities obtained above, we obtain that
\begin{align}
\label{eq.integral_1} &\lambda^{-1} \int_{ \Omega_{ f_0, \rho_0 } } f^{-p} E^p_{\kappa, \lambda} | ( \Box + \sigma ) \phi |^2 + \int_{ \{ \rho = \rho_0 \} } P ( \mc{N} ) \\
\notag &\quad \geq C \int_{ \Omega_{ f_0, \rho_0 } } E^p_{\kappa, \lambda} ( \rho^4 | \nabla_t \phi |^2 + \rho^4 | \nabla_\rho \phi |^2 + \rho^2 | \nasla \phi |^2 ) \\
\notag &\quad \qquad + \int_{ \Omega_{ f_0, \rho_0 } } E^p_{\kappa, \lambda} [ \kappa^2 - ( n - 2 ) \kappa + \sigma - (n - 1) ] | \phi |^2 \\
\notag &\quad \qquad + C \lambda^2 \int_{ \Omega_{ f_0, \rho_0 } } E^p_{\kappa, \lambda} f^{2p} | \phi |^2 + \lim_{ \tau \rightarrow \frac{ \pi T }{2} + } \int_{ \Omega_{ f_0, \rho_0 } \cap \{ t = \tau \} } P ( \mc{T} ) \\
\notag &\quad \qquad - \lim_{ \tau \rightarrow \frac{ \pi T }{2} - } \int_{ \Omega_{ f_0, \rho_0 } \cap \{ t = \tau \} } P ( \mc{T} ) \text{,}
\end{align}
where $\mc{T}$ denotes the future-pointing ($g$-)unit normal on the level sets of $t$.
Rewriting the integral over $\{ \rho = \rho_0 \}$ in \eqref{eq.integral_1} (with respect to the induced metric) in terms of the volume form from $\gm$ and then applying \eqref{eq.phi_est_boundary}, we have
\begin{align}
\label{eq.integral_2} \int_{ \{ \rho = \rho_0 \} } P ( \mc{N} ) &\leq \mc{C}' \int_{ \{ \rho = \rho_0 \} } \rho^{-n} P ( \mc{N} ) \cdot d \gm \\
\notag &\leq \mc{C} \int_{ \{ \rho = \rho_0 \} } [ | \nabla_t ( \rho^{-\kappa} \phi ) |^2 + | \nabla_\rho ( \rho^{-\kappa} \phi ) |^2 ] d \gm \\
\notag &\qquad + \mc{C} ( \lambda^2 + | \sigma | ) \int_{ \{ \rho = \rho_0 \} } | \rho^{-\kappa - 1} \phi |^2 d \gm \text{,}
\end{align}
for appropriate constants $\mc{C}'$ and $\mc{C}$.

It remains to control the spacelike boundary terms
\begin{equation}
\label{eq.match_Y} \mc{Y} := \lim_{ \tau \rightarrow \frac{ \pi T }{2} + } \int_{ \Omega_{ f_0, \rho_0 } \cap \{ t = \tau \} } P ( \mc{T} ) - \lim_{ \tau \rightarrow \frac{ \pi T }{2} - } \int_{ \Omega_{ f_0, \rho_0 } \cap \{ t = \tau \} } P ( \mc{T} ) \text{,}
\end{equation}
which we show is an error term that can be absorbed by the remaining terms.
To see this, we examine the various terms within $P ( \mc{T} )$; see \eqref{eq.P_sharp}, \eqref{eq.P_flat}, \eqref{eq.P_natural}, and \eqref{eq.P}.
Since $f$ and $\eta$ are $C^2$ (in particular, $w_{\xi, \zeta}$ and $h_{\xi, \zeta}$ are both continuous at $\{ t = \frac{ \pi T }{2} \}$), it follows that all the terms in $P ( \mc{T} )$ in the limits $t \rightarrow \frac{ \pi T }{2} +$ and $t \rightarrow \frac{ \pi T }{2} -$ cancel out, except for the term $- \frac{1}{2} \mc{T} h_{\xi, \zeta} \cdot | \psi |^2$, i.e., the last term in \eqref{eq.P_sharp}.

Since both $h_{\xi, \zeta} |_{ \Omega_{ f_0, \rho_0 } \cap \mc{M}_\pm }$ extend smoothly to $\{ t = \frac{ \pi T }{2} \}$, we can then bound
\begin{align}
\label{eq.integral_30} \mc{Y} &\lesssim \lim_{ \tau \rightarrow \frac{ \pi T }{2} + } \int_{ \Omega_{ f_0, \rho_0 } \cap \{ t = \tau \} } | \mc{T} h_{ \xi, \zeta } | | \psi |^2 + \lim_{ \tau \rightarrow \frac{ \pi T }{2} - } \int_{ \Omega_{ f_0, \rho_0 } \cap \{ t = \tau \} } | \mc{T} h_{ \xi, \zeta } | | \psi |^2 \\
\notag &\lesssim \int_{ \Omega_{ f_0, \rho_0 } \cap \{ t = \frac{ \pi T }{2} \} } f^n | \psi |^2 \text{,}
\end{align}
where in the last step, we applied \eqref{eq.error_deriv} and \eqref{eq.h_est}.
Using that $\fbd \simeq 1$, and hence $f \simeq \rho$, near $\{ t = \frac{ \pi T }{2} \}$, and applying Proposition \ref{thm.g} to expand the volume form on $\{ t = \frac{ \pi T }{2} \}$ in the usual coordinates, we obtain that
\begin{align}
\label{eq.integral_31} \mc{Y} &\lesssim \sum_\varphi \int_{ \Omega_{ f_0, \rho_0 } \cap \{ t = \frac{ \pi T }{2} \} } f^{n - 2} \cdot \rho^2 | \psi |^2 \cdot \rho^{-n} d \rho d x^1 \dots d x^{n - 1} \\
\notag &\lesssim \sum_\varphi \int_{ \Omega_{ f_0, \rho_0 } \cap \{ t = \frac{ \pi T }{2} \} } \rho^2 E^p_{\kappa, \lambda} | \phi |^2 \cdot \rho^{-n} d \rho d x^1 \dots d x^{n - 1} \text{,}
\end{align}
where the summation is over coordinate systems $\varphi = ( x^1, \dots, x^{n - 1} )$ on $\mc{S}$ comprising the definition of admissible AdS segments, and where $E^p_{\kappa, \lambda}$ is as in \eqref{eq.carleman_exp}.

Next, we take small $0 < A \ll T$, apply the fundamental theorem of calculus in the $t$-direction, and convert $d \rho d x^1 \dots d x^{n - 1}$ to a spacetime volume form:
\begin{align}
\label{eq.integral_32} \mc{Y} &\lesssim_A \sum_\varphi \int_{ \Omega_{ f_0, \rho_0 } \cap \{ | t - \frac{ \pi T }{2} | < A \} } \rho^2 \partial_t ( E^p_{\kappa, \lambda} | \phi |^2 ) \cdot \rho^{-n} d \rho d t d x^1 \dots d x^{n - 1} \\
\notag &\qquad + \sum_\varphi \int_{ \Omega_{ f_0, \rho_0 } \cap \{ | t - \frac{ \pi T }{2} | < A \} } \rho^2 E^p_{\kappa, \lambda} | \phi |^2 \cdot \rho^{-n} d \rho d t d x^1 \dots d x^{n - 1} \\
\notag &\lesssim \int_{ \Omega_{ f_0, \rho_0 } \cap \{ | t - \frac{ \pi T }{2} | < A \} } \rho^3 [ \partial_t ( E^p_{\kappa, \lambda} | \phi |^2 ) + E^p_{\kappa, \lambda} | \phi |^2 ] \text{.}
\end{align}
Recalling again that $f \simeq \rho$ here in our region of integration, we conclude
\begin{align}
\label{eq.integral_3} \mc{Y} &\lesssim_A \int_{ \Omega_{ f_0, \rho_0 } \cap \{ | t - \frac{ \pi T }{2} | < A \} } E^p_{\kappa, \lambda} \cdot \rho^3 ( | \phi | | \nabla_t \phi | + \lambda | \phi |^2 ) \\
\notag &\lesssim \int_{ \Omega_{ f_0, \rho_0 } \cap \{ | t - \frac{ \pi T }{2} | < A \} } E^p_{\kappa, \lambda} \cdot ( \rho^5 | \nabla_t \phi |^2 + \lambda \rho | \phi |^2 ) \text{.}
\end{align}

Note the right-hand side of \eqref{eq.integral_3}, and hence $\mc{Y}$, can be absorbed into the first and third terms on the right-hand side of \eqref{eq.integral_1}, as long as $\rho \lesssim f_0$ is small.
Finally, combining this with \eqref{eq.integral_1} and \eqref{eq.integral_2} results in \eqref{eq.carleman} and proves Theorem \ref{thm.carleman}.

\subsection{The Unique Continuation Result} \label{sec.proof_uc} 

Finally, we conclude the section by proving the main unique continuation result of this article.
We begin by first defining the precise local unique continuation property we wish to establish:

\begin{definition} \label{def.ucp}
Let $( \mc{M}, g )$ be an $(n+1)$-dimensional admissible FG-aAdS segment, described in Definitions \ref{def.aads_manifold} and \ref{def.aads}, and consider the wave equation
\begin{equation}
\label{eq.wave} \Box_g \phi + \sigma \phi = \mc{G} ( \phi, \nabla \phi ) \text{,} \qquad \phi \in \Gamma \ul{T}^0_l \mc{M} \text{,}
\end{equation}
where $l \geq 0$, $\sigma \in \R$, and $\mc{G}: \Gamma \ul{T}^0_l \mc{M} \times \Gamma T^0_1 \ul{T}^0_l \mc{M} \rightarrow \Gamma \ul{T}^0_l \mc{M}$.

We say the \emph{local unique continuation property} holds on $( \mc{M}, g )$ for \eqref{eq.wave} iff given any smooth solution $\phi$ of \eqref{eq.wave} which satisfies
\begin{enumerate}
\item $\phi$ has compact support on every level set of $( \rho, t )$.

\item The following vanishing condition holds:\footnote{Note that $\kappa = \beta_+ - 1$ when $\sigma \leq ( n^2 - 1 ) / 4$, where $\beta_+$ is as in \eqref{eq.beta}.}
\begin{align}
\label{eq.ucp_vanish} 0 &= \lim_{ \rho' \searrow 0 } \int_{ \{ \rho = \rho' \} } [ | \nabla_t ( \rho^{ - \kappa } \phi ) |^2 + | \nabla_\rho ( \rho^{ - \kappa } \phi ) |^2  + | \rho^{- \kappa - 1 } \phi |^2 ] d \gm \text{,} \\
\notag \kappa &:= \begin{cases} \frac{n - 2}{2} + \sqrt{ \frac{n^2}{4} - \sigma } & \text{if } \sigma \leq \frac{n^2 - 1}{4} \text{,} \\ \frac{n-1}{2} & \text{if } \sigma > \frac{n^2 - 1}{4} \text{,} \end{cases}
\end{align}

\item The following finiteness condition holds:\footnote{This arises from the fact that no vanishing assumption was imposed for $\nasla \phi$.}
\begin{equation}
\label{eq.ucp_finite} \int_{ \mc{M} } \rho^{2+p} | \nasla \phi |^2 < \infty \text{,} \quad \text{for some fixed $0 < p < 1$,}
\end{equation}
\end{enumerate}
then $\phi$ must vanish in an open neighborhood of the conformal boundary $\mc{I}$. 
\end{definition}

Our main unique continuation result can now be stated as follows:

\begin{theorem} \label{thm.uc_ads}
Consider an $(n+1)$-dimensional admissible FG-aAdS segment
\[
( \mc{M}, g ) \text{,} \qquad \mc{M} = ( 0, \rho_\ast ) \times ( 0, T \pi ) \times \mc{S} \text{,}
\]
and consider on $( \mc{M}, g )$ the wave equation \eqref{eq.wave}, for some $l \geq 0$ and $\sigma \in \R$.  
Furthermore, assume that the following properties hold:
\begin{enumerate}
\item The pseudoconvexity property (see Definition \ref{def.pcp}) holds on $\mc{I}$.

\item There exist $p > 0$ and $C > 0$ such that for any $\phi \in \Gamma \ul{T}^0_l \mc{M}$,
\begin{equation}
\label{eq.uc_ads_rhs} | \mc{G} ( \phi, \nabla \phi) |^2 \leq C \rho^p [ \rho^4 | \nabla_\rho \phi |^2 + \rho^4 | \nabla_t \phi |^2 + \rho^2 | \nasla \phi |^2 + \rho^{2p} | \phi |^2 ] \text{.}
\end{equation}
\end{enumerate}
Then, the local unique continuation property holds on $( \mc{M}, g )$ for \eqref{eq.wave}.
\end{theorem}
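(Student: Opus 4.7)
The plan is to apply the Carleman estimate of Theorem \ref{thm.carleman} to a cutoff localization of $\phi$, absorb the nonlinear source $\mc{G}$ into the right-hand side by taking $\lambda$ large, and then exploit the exponential weight $e^{-2\lambda f^p / p}$ to force $\phi$ to vanish in a neighborhood of $\mc{I}$.

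First I would fix a sufficiently small $f_0 > 0$ for which Theorem \ref{thm.carleman} applies, choose a smooth cutoff $\chi \colon \R \to [0,1]$ with $\chi \equiv 1$ on $(-\infty, f_0/2]$ and $\chi \equiv 0$ on $[f_0, \infty)$, and set $\tilde\phi := \chi(f) \, \phi$. Then $\tilde\phi$ is supported in $\{ f < f_0 \}$ and satisfies $\tilde\phi = \nabla \tilde\phi = 0$ on $\{ f = f_0 \}$, so Theorem \ref{thm.carleman} applies to $\tilde\phi$ on $\Omega_{f_0, \rho_0}$ for each $\rho_0 > 0$. Sending $\rho_0 \to 0$, the boundary integral on the left of \eqref{eq.carleman} over $\{ \rho = \rho_0 \}$ vanishes thanks to \eqref{eq.ucp_vanish}, while the finiteness condition \eqref{eq.ucp_finite} together with the decay implicit in \eqref{eq.ucp_vanish} ensures that the bulk integrals in \eqref{eq.carleman} remain uniformly controlled in the limit.

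Next I would decompose $( \Box_g + \sigma ) \tilde\phi = \chi(f) \cdot \mc{G}(\phi, \nabla\phi) + \mc{R}$, where the commutator term $\mc{R} := [\Box_g, \chi(f)]\phi$ is a smooth first-order expression in $\phi$ supported in the annular region $A := \{ f_0/2 \le f \le f_0 \}$. On the support of $\chi$ one has $\rho \le (\sup \fbd) \cdot f$, hence $\rho^p \le C f^p$, so the weight $f^{-p}$ in the left-hand side of \eqref{eq.carleman} exactly cancels the factor $\rho^p$ appearing in the hypothesis \eqref{eq.uc_ads_rhs}. Consequently, the $\chi \mc{G}$-contribution to the Carleman left-hand side is bounded, after this cancellation, by a constant multiple of
\[
\lambda^{-1} \int_{ \Omega_{f_0, \rho_0 } } f^{n - 2 - 2\kappa} e^{-2\lambda f^p / p} \bigl( \rho^4 | \nabla_{t, \rho} \phi |^2 + \rho^2 | \nasla \phi |^2 + f^{2p} | \phi |^2 \bigr) \text{,}
\]
and on $\{ f \le f_0 / 2 \}$ each integrand coincides with the corresponding one for $\tilde\phi$, so for $\lambda$ sufficiently large it is absorbed into the first and third bulk integrals on the right-hand side of \eqref{eq.carleman}. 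The choice of $\kappa$ in \eqref{eq.ucp_vanish} forces $\kappa^2 - (n-2)\kappa + \sigma - (n-1) \ge 0$, hence the middle integral on the right of \eqref{eq.carleman} has favorable sign and can be discarded. The contribution of $\mc{R}$, together with the parts of the $\chi \mc{G}$-integral supported in $A$, are then bounded by $B \cdot e^{-2\lambda (f_0/2)^p / p}$ for a constant $B$ independent of $\lambda$, using that the weight $f^{n - 2 - 2\kappa} e^{-2\lambda f^p / p}$ is monotonically decreasing in $f$ on $(0, f_0]$ (since $n - 2 - 2\kappa \le -1$ by \eqref{eq.p_kappa}) and therefore is maximized on $A$ at $f = f_0/2$.

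Collecting these estimates leaves
\[
B \cdot e^{-2\lambda (f_0/2)^p / p} \geq C \lambda^3 \int_{ \{ f < f_0 \} } f^{n - 2 - 2\kappa + 2p} \, e^{-2\lambda f^p / p} \, | \phi |^2 \, d\mathrm{vol}_g \text{.}
\]
Restricting the right-hand integral to $\{ f \le f_0/4 \}$, where $e^{-2\lambda f^p / p} \ge e^{-2\lambda (f_0/4)^p / p}$, and dividing through by this factor, I obtain
\[
B \cdot e^{-2\lambda [ (f_0/2)^p - (f_0/4)^p ] / p} \ge C \lambda^3 \int_{ \{ f \le f_0/4 \} } f^{n - 2 - 2\kappa + 2p} \, | \phi |^2 \, d\mathrm{vol}_g \text{.}
\]
Since $(f_0/2)^p > (f_0/4)^p$, the left-hand side tends to $0$ as $\lambda \to \infty$, forcing $\phi \equiv 0$ on $\{ f \le f_0/4 \}$, which is an open neighborhood of $\mc{I}$. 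The hardest technical point is the simultaneous passage $\rho_0 \to 0$ and the absorption of the $\nasla \phi$-term from \eqref{eq.uc_ads_rhs}: the former relies on \eqref{eq.ucp_vanish} and \eqref{eq.ucp_finite} to make all the relevant integrals well-defined, while the latter depends crucially on the precise weight matching $\rho \lesssim f$, which is what makes the cancellation $\rho^p f^{-p} \le C$ possible.
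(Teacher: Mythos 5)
Your proposal is correct and follows essentially the same route as the paper: cutoff $\phi$ by a function of $f$, apply the Carleman estimate of Theorem \ref{thm.carleman} to $\chi\phi$ with $\kappa$ from \eqref{eq.ucp_vanish} and $p$ from \eqref{eq.ucp_finite}, use $\rho \lesssim f$ to absorb the $\mc{G}$-term (at $\rho^p$ decay this is precisely tight) into the right-hand side for large $\lambda$, send $\rho_0 \to 0$ to kill the $\{\rho=\rho_0\}$ boundary term via \eqref{eq.ucp_vanish}, and let $\lambda \to \infty$ to force $\phi\equiv 0$ on $\{f < f_0/2\}$. The only place where the paper is slightly more explicit than you are is in justifying the decay of the boundary term as $\rho_0\to 0$: since $f = \rho/\eta(t)$ and $\eta$ vanishes at the endpoints of the $t$-interval, one cannot assume $\chi\equiv 1$ on $\{\rho=\rho_0\}$ for small $\rho_0$, so the paper records the bound $|\partial_\rho\chi|+|\partial_t\chi|\lesssim_{g,f_0}\rho^{-1}$ and pairs it with \eqref{eq.ucp_vanish}; your statement that the limit vanishes ``thanks to \eqref{eq.ucp_vanish}'' is correct but glosses over this point.
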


\subsubsection{Proof of Theorem \ref{thm.uc_ads}}

This is analogous to the corresponding proof in \cite{hol_shao:uc_ads}, hence we give only an abridged summary.
Assume the hypotheses of Theorem \ref{thm.uc_ads}, and let $\bar{\chi} : [ 0, f_0 ] \rightarrow [0, 1]$ denote a smooth cut-off function satisfying
\begin{equation}
\label{eq.chitoff} \bar{\chi} ( s ) = \begin{cases} 1 & 0 \leq s \leq \frac{f_0}{2} \text{,} \\ 0 & s > \frac{3 f_0}{4} \text{.} \end{cases}
\end{equation}
Letting $\chi := \bar{\chi} \circ f$ and letting $\prime$ denote differentiation with respect to $f$, we have
\begin{align}
\label{eq.wave_cutoff} ( \Box_g + \sigma ) ( \chi \cdot \phi ) &= \chi^\prime ( 2 \nabla^\alpha f \nabla_\alpha \phi + \Box_g f \cdot \phi ) + \chi^{\prime \prime} \nabla^\alpha f \nabla_\alpha f \cdot \phi \\
\notag &\qquad + \chi ( \Box_g \phi + \sigma \phi ) \text{.}
\end{align}
Note that $\chi^\prime$ and $\chi^{\prime\prime}$ are supported in $[ \frac{1}{2} f_0, \frac{3}{4} f_0 ]$.
Letting $\mc{F}$ denote the right-hand side of \eqref{eq.wave_cutoff}, then applying \eqref{eq.f_grad}, \eqref{eq.f_box}, and \eqref{eq.wave}, we compute\footnote{In the case $\frac{1}{2} f_0 \leq f \leq \frac{3}{4} f_0$, we also used that $f \simeq_{ f_0 } 1$.}
\begin{equation}
\label{eq.F_bound} | \mc{F} |^2 \begin{cases} \lesssim_{g, f_0} ( \rho^2 | \nabla_\rho \phi |^2 + \rho^2 | \nabla_t \phi |^2 + \rho^{2+p} | \nasla \phi |^2 + |\phi|^2 ) & \frac{f_0}{2} \leq f \leq \frac{3f_0}{4} \text{,} \\ \lesssim \rho^p ( \rho^4 | \nabla_t \phi |^2 + \rho^4 | \nabla_\rho \phi |^2 + \rho^2 | \nasla \phi |^2 + \rho^{2p} | \phi |^2 ) & 0 \leq f \leq \frac{f_0}{2} \text{.} \end{cases}
\end{equation}

Recall the region $\Omega_{f_0, \rho_0}$ from \eqref{eq.Omega}, for $\rho_0 \ll f_0$, and let
\begin{equation}
\label{eq.Omega_ie} \Omega_i = \Omega_{f_0, \rho_0} \cap \left\{ f < \frac{f_0}{2} \right\} \text{,} \qquad \Omega_e = \Omega_{f_0, \rho_0} \cap \left\{ \frac{f_0}{2} < f < \frac{3 f_0}{4} \right\} \text{.}
\end{equation}
We now apply \eqref{eq.carleman} to $\bar{\phi} := \chi \phi$, with $\kappa$ given by \eqref{eq.ucp_vanish} and $p$ given by \eqref{eq.ucp_finite}.\footnote{In particular, $\kappa$ satisfies \eqref{eq.p_kappa}.}
Recalling \eqref{eq.F_bound}, then the left-hand side $L$ of \eqref{eq.carleman} can then be estimated
\begin{align}
\label{eq.carleman_L} L &\lesssim \int_{ \Omega_e } E^p_{\kappa, \lambda} f^{-p} ( \rho^2 | \nabla_\rho \phi |^2 + \rho^2 | \nabla_t \phi |^2 + \rho^{2+p} | \nasla \phi |^2 + |\phi|^2 ) \\
\notag &\qquad + \int_{ \Omega_i } E^p_{\kappa, \lambda} f^{-p} \rho^p ( \rho^4 | \nabla_t \phi |^2 + \rho^4 | \nabla_\rho \phi |^2 + \rho^2 | \nasla \phi |^2 + \rho^{2p} | \phi |^2 ) \\
\notag &\qquad + \lambda ( \lambda^2 + | \sigma | ) \int_{ \{ \rho = \rho_0 \} } [ | \nabla_t ( \rho^{ - \kappa } \bar{\phi} ) |^2 + | \nabla_\rho ( \rho^{ - \kappa } \bar{\phi} ) |^2 + | \rho^{- \kappa - 1} \bar{\phi} |^2 ] d \gm \\
\notag &:= L_1 + L_2 + L_3 \text{,}
\end{align}
where $E^p_{\kappa, \lambda}$ is as defined in \eqref{eq.carleman_exp}, while the right-hand side $R$ of \eqref{eq.carleman} satisfies
\begin{align}
\label{eq.carleman_R} R &\gtrsim_{g, p, K} \lambda^3 \int_{ \Omega_i } E^p_{\kappa, \lambda} \rho^{2p} | \phi |^2 + \lambda \int_{ \Omega_i } E^p_{\kappa, \lambda} ( \rho^4 | \nabla_t \phi |^2 + \rho^4 | \nabla_\rho \phi |^2 + \rho^2 | \nasla \psi |^2 ) \\
\notag &:= R_1 + R_2 \text{.}
\end{align}
In particular, note that all terms on the right-hand side of \eqref{eq.carleman} are non-negative.

Since $\rho \lesssim f$, we can absorb $L_2$ into $R_1 + R_2$ when $\lambda$ is large enough, so that
\begin{equation}
\label{eq.LR_1} L_1 + L_3 \gtrsim_{g, p, K} R_1 + R_2 \text{.}
\end{equation}
Next, by \eqref{eq.ucp_vanish}, along with the bounds
\[
| \partial_\rho \chi | + | \partial_t \chi | \lesssim_{ g, f_0 } \rho^{-1} \text{,}
\]
we see that $L_3 \rightarrow 0$ when $\rho_0 \searrow 0$.
We can also eliminate the $E^p_{\kappa, \lambda}$'s in \eqref{eq.LR_1}, since
\[
E^p_{\kappa, \lambda} \begin{cases} \leq \left( \frac{f_0}{2} \right)^{n - 2 - 2 \kappa} e^\frac{ - 2 \lambda \left( \frac{f_0}{2} \right)^p }{p} & \frac{f_0}{2} \leq f \leq \frac{3 f_0}{4} \text{,} \\ \geq \left( \frac{f_0}{2} \right)^{n - 2 - 2 \kappa} e^\frac{ - 2 \lambda \left( \frac{f_0}{2} \right)^p }{p} & f < \frac{f_0}{2} \text{.} \end{cases}
\]
As a result, we obtain that for large $\lambda$,
\begin{align}
\label{eq.LR_2} &\int_{ \left\{ \frac{f_0}{2} < f < \frac{3 f_0}{4} \right\} } \left( | \phi |^2 + \rho^2 | \nabla_\rho \phi|^2 + \rho^2 | \nabla_t \phi|^2+ \rho^{2+p} | \slashed{\nabla} \phi|^2  \right) \\
\notag &\quad \gtrsim \lambda \int_{ \left\{ f < \frac{ f_0 }{2} \right\} } \rho^{2p} | \phi |^2 \text{.}
\end{align}

Finally, one can see that \eqref{eq.ucp_vanish} and \eqref{eq.ucp_finite} imply that the left-hand side of \eqref{eq.LR_2} is finite.
As a result, taking $\lambda \nearrow \infty$ yields that $\phi \equiv 0$ on $\{ f < \frac{ f_0 }{2} \}$.

\section{The Static Borderline Case} \label{sec.border}

In this section, we consider a special class of ``borderline" static boundary metrics.
Let $( \mc{M}, g )$ be an admissible FG-aAdS segment, with
\begin{equation}
\label{eq.t_normalize_bs} \mc{S} := \Sph^{n - 1} \text{,} \qquad 0 = T_- < t < T_+ = \pi \text{.}
\end{equation}
Recall we are assuming static boundary data given by
\begin{equation}
\label{eq.met_bs} \gm := - dt^2 + \ga \text{,} \qquad \gs := - \frac{1}{2} ( dt^2 + \ga ) \text{,}
\end{equation}
i.e., the same boundary data as for AdS spacetime itself.\footnote{Recall $\ga$ denotes the round metric $\Sph^{n - 1}$. Note if $( \mc{M}, g )$ is an Einstein-vacuum spacetime, the assumption for $\gs$ in \eqref{eq.met_bs} is implied by the form for $\gm$. As such, this reprepresents the prototypical case in which the pseudoconvexity criterion of Definition \ref{def.pcp} barely fails.}
The main goal of this section is to prove results analogous to those in Section \ref{sec.proof} in the current setting.

In particular, \emph{any $g$ satisfying \eqref{eq.t_normalize_bs} and \eqref{eq.met_bs} fails the pseudoconvexity criterion of Definition \ref{def.pcp} for any $\xi \geq 0$}.
(When $\xi = 0$, the pseudoconvexity criterion holds for any $T_+ > \pi$ but barely fails for $T_+ = \pi$.)
Because of this, we must try to extract pseudoconvexity at one order higher than $\gs$.
Consequently, we assume, in addition to $( \mc{M}, g )$ being an FG-aAdS segment, that $g$ has the refined expansion
\begin{equation}
\label{eq.aads_bs} g = \rho^{-2} d \rho^2 + [ \rho^{-2} \gm_{a b} + \gs_{a b} + \rho \gb_{a b} + \mc{O} ( \rho^2 ) ] dx^a dx^b \text{,}
\end{equation}
i.e., we stipulate an extra (also smooth) third-order term in $\gb$ in the expansion.

\begin{definition} \label{def.aads_bs}
We refer to any FG-aAdS segment $( \mc{M}, g )$ that also satisfies \eqref{eq.t_normalize_bs}-\eqref{eq.aads_bs}, with $\gb$ smooth on $\mc{I}$, as a \emph{borderline FG-aAdS segment}.
\end{definition}

\begin{remark}
The error terms $\mc{O} ( \rho^2 )$ in \eqref{eq.aads_bs} can be replaced by slightly weaker decay, such as $\mc{O} ( \rho^2 \log \rho )$ or $\mc{O} ( \rho^{2 - \delta} )$ for $\delta < 1$.
However, to avoid further cluttering the existing presentation, we do not pursue this here.
\end{remark}

\begin{remark}
For metrics satisfying \eqref{ee}, there is actually no loss in working with the expansion \eqref{eq.aads_bs}.
If $g$ satisfies \eqref{ee} and $n$ is not equal to $2$ or $4$, then the FG-expansion of $g$ is precisely given by \eqref{eq.aads_bs}; see \cite{MR837196}.
On the other hand, if $n=2$ or $n=4$, then generally there is also an $\mc{O} ( \rho^2 \log \rho )$-term present in the expansion \eqref{eq.aads_bs}.
However, in that case, \eqref{ee} and \eqref{eq.met_bs} imply $\gb = 0$, and we would not be able to extract pseudoconvexity at this level (that is, Definition \ref{def.pcp_bs} fails to be satisfied). 
\end{remark}

\subsection{Asymptotic Expansions}

Because of the extra term $\gb$ in \eqref{eq.aads_bs}, we must recompute all the asymptotic expansions obtained in Section \ref{sec.aads}.
Since the computations here are similar to their previous counterparts, we only list the main results here and leave details to the reader.

\subsubsection{Metric Computations}

We begin with expansions for the metric and curvature.
These are analogues in the borderline case of Propositions \ref{thm.g} and \ref{thm.curv}.

\begin{proposition} \label{thm.g_bs}
Let $( \mc{M}, g )$ be a borderline FG-aAdS segment.
\begin{itemize}
\item The components of $g$ satisfy
\begin{align}
\label{eq.g_bs} g_{\rho \rho} = \rho^{-2} \text{,} &\qquad g_{\rho a} = 0 \text{,} \\
\notag g_{tt} = - \rho^{-2} - \frac{1}{2} + \rho \gb_{tt} + \mc{O} ( \rho^2 ) \text{,} &\qquad g_{t A} = \rho \gb_{t A} + \mc{O} ( \rho^2 ) \text{,} \\
\notag g_{AB} = \rho^{-2} \ga_{AB} - \frac{1}{2} \ga_{AB} &+ \rho \gb_{AB} + \mc{O} ( \rho^2 ) \text{.}
\end{align}

\item The dual of $g$ satisfies
\begin{align}
\label{eq.g_inv_bs} g^{\rho \rho} = \rho^2 \text{,} &\qquad g^{\rho a} = 0 \text{,} \\
\notag g^{tt} = - \rho^2 + \frac{1}{2} \rho^4 - \rho^5 \gb_{tt} + \mc{O} ( \rho^6 ) \text{,} &\qquad g^{t A} = \rho^5 \ga^{A B} \gb_{t B} + \mc{O} ( \rho^6 ) \text{,} \\
\notag g^{AB} = \rho^2 \ga^{AB} + \frac{1}{2} \rho^4 \ga^{AB} &- \rho^5 \ga^{AC} \ga^{BD} \gb_{CD} + \mc{O} ( \rho^6 ) \text{.}
\end{align}

\item The Christoffel symbols with respect to coordinate systems $\varphi \in \Xi$ satisfy
\begin{align}
\label{eq.Gamma_bs} \Gamma^\rho_{\rho \rho} = - \rho^{-1} \text{,} &\qquad \Gamma^\rho_{\rho a} = 0 \text{,} \\
\notag \Gamma^\rho_{tt} = - \rho^{-1} - \frac{1}{2} \rho^2 \gb_{tt} + \mc{O} ( \rho^3 ) \text{,} &\qquad \Gamma^\rho_{tA} = - \frac{1}{2} \rho^2 \gb_{tA} + \mc{O} ( \rho^3 ) \text{,} \\
\notag \Gamma^\rho_{AB} = \rho^{-1} \ga_{AB} - \frac{1}{2} \rho^2 \gb_{AB} + \mc{O} ( \rho^3 ) \text{,} &\qquad \Gamma^a_{\rho \rho} = 0 \text{,} \\
\notag \Gamma^t_{\rho t} = - \rho^{-1} + \frac{1}{2} \rho - \frac{3}{2} \rho^2 \gb_{tt} + \mc{O} ( \rho^3 ) \text{,} &\qquad \Gamma^t_{\rho A} = - \frac{3}{2} \rho^2 \gb_{tA} + \mc{O} ( \rho^3 ) \text{,} \\
\notag \Gamma^A_{\rho B} = - \rho^{-1} \delta^A_B - \frac{1}{2} \rho \delta^A_B &+ \frac{3}{2} \rho^2 \ga^{AC} \gb_{CB} + \mc{O} ( \rho^3 ) \text{,} \\
\notag \Gamma^A_{\rho t} = \frac{3}{2} \rho^2 \ga^{AB} \gb_{tB} &+ \mc{O} ( \rho^3 ) \text{.}
\end{align}
In addition, with $\mathring{\Gamma}$ denoting the Christoffel symbols for $\gm$, we have
\begin{equation}
\label{eq.Gamma_t_bs} \Gamma^a_{b c} = \mathring{\Gamma}^a_{b c} + \mc{O} ( \rho^3 ) \text{,} \qquad \Gamma^t_{a b} = \mc{O} ( \rho^3 ) \text{,} \qquad \Gamma^a_{t b} = \mc{O} ( \rho^3 ) \text{.}
\end{equation}

\item With respect coordinate systems $\varphi \in \Xi$, we have for any $\phi \in \Gamma \ul{T}^0_l \mc{M}$ that
\begin{equation}
\label{eq.curv_bs} | \mc{R}_{\rho a} \phi | \lesssim_{g, l} \rho^2 | \phi | \text{,} \qquad | \mc{R}_{t A} \phi | \lesssim_{g, l} \rho^3 | \phi | \text{,} \qquad | \mc{R}_{A B} \phi | \lesssim_{g, l} | \phi | \text{.}
\end{equation}
\end{itemize}
\end{proposition}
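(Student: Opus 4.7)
The plan is to derive every part of Proposition \ref{thm.g_bs} directly from the refined expansion \eqref{eq.aads_bs} combined with the explicit boundary data \eqref{eq.met_bs}, proceeding in the order: metric components, then inverse metric, then Christoffel symbols, and finally the mixed curvature bound. The proof is entirely a bookkeeping exercise, but with one extra order of expansion compared to Proposition \ref{thm.g}, and with several simplifications coming from the fact that $\gm = -dt^2 + \ga$ is static and $\gs = -\frac{1}{2}(dt^2 + \ga)$ is a pure multiple of $\gm$.

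For the components of $g$ in \eqref{eq.g_bs}, nothing is needed beyond substituting the explicit $\gm_{ab}$ and $\gs_{ab}$ from \eqref{eq.met_bs} into \eqref{eq.aads_bs}; the $t$-$A$ components vanish at the first two orders, and the $\gb_{a b}$-term contributes to all tangential components at order $\rho$. For the inverse in \eqref{eq.g_inv_bs}, I would treat the tangential block as a perturbation of $\rho^{-2}\gm_{ab}$ and use the Neumann-series identity
\[
g^{ab} = \rho^2 \gm^{ab} - \rho^4 \gm^{ac}\gm^{bd}\gs_{cd} - \rho^5 \gm^{ac}\gm^{bd}\gb_{cd} + 2\rho^6 \gm^{ac}\gm^{bd}\gm^{ef}\gs_{ce}\gs_{df} + \mc{O}(\rho^6),
\]
retaining enough terms to reach the stated order; evaluating with $\gm^{-1} = -\partial_t \otimes \partial_t + \ga^{-1}$ and $\gs_{ab} = -\frac{1}{2}\gm_{ab}$ produces the explicit coefficients $\tfrac{1}{2}\rho^4$ and the $\gb$-terms displayed in \eqref{eq.g_inv_bs}.

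For the Christoffel symbols \eqref{eq.Gamma_bs}--\eqref{eq.Gamma_t_bs}, I would plug the expansions for $g_{\alpha\beta}$ and $g^{\alpha\beta}$ into the standard Koszul formula $\Gamma^\alpha_{\beta\gamma} = \tfrac{1}{2}g^{\alpha\delta}(\partial_\beta g_{\gamma\delta} + \partial_\gamma g_{\beta\delta} - \partial_\delta g_{\beta\gamma})$ and expand systematically. The main simplifications are that $\gm$, $\gs$ are $\rho$-independent, that $\partial_t \gm = 0$ (since the boundary is static), that $\partial_t \gs = 0$ (since $\gs$ is proportional to $\gm$), and that $\partial_t$-derivatives therefore only land on $\gb$ or on the $\mc{O}(\rho^2)$ remainder. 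These eliminate the $\tfrac{1}{2}\partial_t\gm_{ab}$-type contributions that were leading-order in \eqref{eq.Gamma_t} and push such terms down to $\mc{O}(\rho^3)$, which yields \eqref{eq.Gamma_t_bs}. The remaining Christoffel expansions in \eqref{eq.Gamma_bs} are then obtained by organizing the partial-derivative terms order-by-order in $\rho$, with the $\rho^{-1}$-contributions coming from $\partial_\rho g_{\alpha\beta}$ acting on the leading $\rho^{-2}$-part, the $\rho$-contributions from $\partial_\rho$ acting on the $\gs$-part together with the correction in $g^{\alpha\beta}$, and the $\rho^2$-contributions from $\partial_\rho$ acting on $\rho \gb_{a b}$ together with the cross-terms in the inverse expansion.

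Finally, the curvature bounds in \eqref{eq.curv_bs} proceed exactly as in the proof of Proposition \ref{thm.curv}, via the expression $\nasla_\alpha \partial_A = \Gamma^B_{\alpha A} \partial_B + \sum_B \mc{O}(\rho^k) \cdot \partial_B$ for an appropriate $k$ dictated by the improved Christoffel bounds, followed by taking the $\nasla_\alpha \nasla_\beta - \nasla_\beta \nasla_\alpha$ commutator and checking the key cancellations among $\partial_\alpha \Gamma^B_{\beta A} - \partial_\beta \Gamma^B_{\alpha A}$ and $\Gamma^B_{\beta A}\Gamma^C_{\alpha B} - \Gamma^B_{\alpha A}\Gamma^C_{\beta B}$ using \eqref{eq.Gamma_bs}--\eqref{eq.Gamma_t_bs}. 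The sharpened bound $|\mc{R}_{tA}\phi| \lesssim \rho^3 |\phi|$ (one better than the naive $\rho^2$) is the point that most directly exploits staticity: the $t$-derivative of $\Gamma^B_{\rho A}$ vanishes to leading order because $\Gamma^B_{\rho A}$ is pure $\gm$-trace plus $\gb$-correction, and the mismatched $\rho$-$t$ structure forces the surviving error to sit at $\rho^3$.

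The main obstacle is not conceptual but combinatorial: the inverse-metric and Christoffel computations must be carried out one full order deeper than in Proposition \ref{thm.g}, so any off-diagonal or mixed-index term that was previously absorbed into an $\mc{O}(\rho^2)$-error now has to be tracked explicitly. The cleanest organization is to fix once and for all the matrix-inverse expansion of $g^{ab}$ to order $\rho^6$, substitute into Koszul, and then exploit $\partial_t \gm = \partial_t \gs = 0$ to kill entire families of terms before expanding the remaining ones; this keeps the computation of the $\Gamma^t_{\rho t}$, $\Gamma^t_{\rho A}$, $\Gamma^A_{\rho t}$ symbols (where the $\gb_{ta}$-corrections first appear at order $\rho^2$) tractable.
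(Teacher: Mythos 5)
The paper itself does not prove this proposition (it remarks that "the computations here are similar to their previous counterparts, we only list the main results here and leave details to the reader"), so I am comparing your proposal to the expected direct-computation proof. Your overall scheme — substitute into \eqref{eq.aads_bs}, invert by Neumann series, expand the Koszul formula, and track the projection error for the mixed curvature as in Proposition \ref{thm.curv} — is the correct one.

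There is, however, a concrete error that would cause your computation to contradict the stated result: you claim that $\gs = -\tfrac{1}{2}(dt^2+\ga)$ ``is a pure multiple of $\gm$'', i.e.\ $\gs_{ab} = -\tfrac{1}{2}\gm_{ab}$, and you invoke this identity both when evaluating the Neumann series and to argue $\partial_t\gs=0$. This is false. Since $\gm = -dt^2+\ga$ is Lorentzian but $\gs = -\tfrac{1}{2}(dt^2+\ga)$ is (minus half) the Riemannian product, they agree on the spatial block but disagree in sign on the $tt$-block: $\gs_{tt} = -\tfrac{1}{2}$, whereas $-\tfrac{1}{2}\gm_{tt} = +\tfrac{1}{2}$. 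Plugging $\gs_{ab}=-\tfrac{1}{2}\gm_{ab}$ into your (otherwise correct) Neumann-series identity gives
\[
-\rho^4\gm^{tt}\gm^{tt}\gs_{tt} = -\rho^4\cdot(-1)(-1)\cdot\tfrac{1}{2} = -\tfrac{1}{2}\rho^4 \text{,}
\]
so you would obtain $g^{tt} = -\rho^2 - \tfrac{1}{2}\rho^4 - \rho^5\gb_{tt} + \mc{O}(\rho^6)$, which has the wrong sign relative to \eqref{eq.g_inv_bs}. (Using the actual $\gs_{tt}=-\tfrac{1}{2}$ gives $+\tfrac{1}{2}\rho^4$, as stated.) This sign error would then propagate through the Koszul formula to $\Gamma^t_{\rho t}$, which would come out $-\rho^{-1} - \tfrac{1}{2}\rho - \tfrac{3}{2}\rho^2\gb_{tt}+\mc{O}(\rho^3)$ rather than the stated $-\rho^{-1} + \tfrac{1}{2}\rho - \tfrac{3}{2}\rho^2\gb_{tt}+\mc{O}(\rho^3)$, and similarly to the other temporal Christoffel coefficients. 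The purely spatial entries and the order-of-magnitude curvature bounds \eqref{eq.curv_bs} are unaffected (the two tensors agree spatially, and the curvature bounds depend only on orders, not signs), but the explicit $tt$-coefficients in \eqref{eq.g_inv_bs} and \eqref{eq.Gamma_bs} would be wrong. The fix is just to use $\gs$ as given in \eqref{eq.met_bs} rather than the false proportionality; the secondary claim $\partial_t\gs = 0$, which you also derived from the false proportionality, nonetheless holds directly because both $dt^2$ and $\ga$ are $t$-independent.
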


\subsubsection{The Modified Foliation}

Next, we define the analogues of \eqref{eq.f} and \eqref{eq.S}:

\begin{definition} \label{def.f_bs}
Define the quantities $\fb$, $\fbdb$ and $\Sb$ by
\begin{equation}
\label{eq.f_bs} \fb := \fbdb^{-1} \rho \left( 1 - \frac{1}{4} \rho^2 \right) \text{,} \qquad \fbdb := \sin t \text{,} \qquad \Sb := \fb^{n - 3} \grad \fb \text{.}
\end{equation}
\end{definition}

The goal once again is to show that the level sets of $\fb$ are pseudoconvex.
However, this pseudoconvexity will now be discerned from $\gb$ rather than $\gs$.

\begin{remark}
Notice that $\fb$ contains an extra term $- \frac{1}{4} \fbdb^{-1} \rho^3$ which has no analogue in \eqref{eq.f}.
This is required in order to cancel certain terms so that the leading-order terms in the pseudoconvexity computations are those associated with $\gb$.
In particular, these leading-order terms contain one extra power of $\rho$ compared to their analogues in Section \ref{sec.aads}, thus extra care must be taken in order to see them.
\end{remark}

\begin{remark}
In contrast to \eqref{eq.psi}, the function $\fbdb$ here is smooth.
Thus, like in \cite{hol_shao:uc_ads}, here we can avoid difficulties arising from the non-smoothness of $\fbd$ and $f$.
\end{remark}

\begin{remark}
Note that $\rho \lesssim \fb$, but only when $\rho \ll 1$.
\end{remark}

\begin{proposition} \label{thm.f_bs}
Let $( \mc{M}, g )$ be an FG-aAdS segment satisfying \eqref{eq.t_normalize_bs}-\eqref{eq.aads_bs}.
Then, whenever $\rho \ll_g 1$, the following asymptotic expansions hold:
\begin{itemize}
\item The gradient of $\fb$ satisfies
\begin{align}
\label{eq.f_grad_bs} \grad \fb &= \fb \left[ \rho - \frac{1}{2} \rho^3 + \mc{O} ( \rho^5 ) \right] \partial_\rho + \fbdb' \fb^2 \left[ \rho - \frac{1}{4} \rho^3 + \gb_{tt} \rho^4 + \mc{O} ( \rho^5 ) \right] \partial_t \\
\notag &\qquad - \fbdb' f^2 \left[ \ga^{AB} \gm_{tB} \rho^4 + \mc{O} ( \rho^5 ) \right] \cdot \partial_A \text{,} \\
\notag \nabla^\alpha \fb \nabla_\alpha \fb &= \fb^2 [ 1 - ( \fbdb' )^2 \fb^2 ] ( 1 - \rho^2 ) - ( \fbdb' )^2 \fb^4 \rho^3 \gb_{tt} + \mc{O}_0 ( \fb^4 \rho^4 ) \\
\notag &= \fb^2 + \mc{O}_0 ( \fb^4 ) \text{.}
\end{align}

\item The second derivatives of $\fb$ satisfy:
\begin{align}
\label{eq.f_hess_bs} \nabla_{\rho \rho} \fb &= \fb \rho^{-2} [ 1 - 2 \rho^2 + \mc{O} ( \rho^4 ) ] \text{,} \\
\notag \nabla_{\rho t} \fb &= - \fbdb' \fb^2 \rho^{-2} \left[ 2 - \frac{1}{2} \rho^2 + \frac{3}{2} \rho^3 \gb_{tt} + \mc{O} ( \rho^4 ) \right] \text{,} \\
\notag \nabla_{t t} \fb &= \fb \rho^{-2} \left\{ 1 + 2 ( \fbdb' )^2 \fb^2 + \left[ \frac{1}{2} + ( \fbdb' )^2 \fb^2 \right] \rho^2 + \frac{1}{2} \gb_{tt} \rho^3 + \mc{O}_0 ( \fb \rho^3 ) \right\} \text{,} \\
\notag \nabla_{A B} \fb &= - \fb \rho^{-2} \left[ \ga_{AB} - \frac{1}{2} \rho^2 \ga_{AB} - \frac{1}{2} \rho^3 \gb_{AB} + \mc{O}_0 ( \rho^4 ) \right] \text{,} \\
\notag \nabla_{\rho A} \fb &= - \fbdb' \fb^2 \rho^{-2} \left[ \frac{3}{2} \rho^3 \gb_{tA} + \mc{O} ( \rho^4 ) \right] \text{,} \\
\notag \nabla_{t A} \fb &= \fb \rho^{-2} \left[ \frac{1}{2} \rho^3 \gb_{tA} + \mc{O}_0 ( \fb \rho^4 ) \right] \text{.}
\end{align}

\item Furthermore, $\fb$ satisfies
\begin{equation}
\label{eq.f_box_bs} \Box \fb = - (n - 1) \fb + \mc{O}_0 ( \fb^3 ) \text{.}
\end{equation}
\end{itemize}
\end{proposition}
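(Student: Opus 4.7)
The plan is to follow the template of the proof of Proposition \ref{thm.f_lo}, but carried one further order in $\rho$ so as to expose the $\hat{\mf{g}}$-contributions. The key algebraic point is that the correction $-\tfrac{1}{4}\rho^2$ built into $\fb$ in Definition \ref{def.f_bs} is tailored exactly to cancel the $\mc{O}(\rho^2)$ contributions coming from $\gs = -\tfrac{1}{2}(dt^2 + \ga)$, so that the next-order deviations controlled by $\gb$ survive. Without this cancellation the leading structure of $\nabla^2 \fb$ would look identical to the generic case and there would be no borderline pseudoconvexity to extract.

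First I would compute the partial derivatives of $\fb$ directly. From \eqref{eq.f_bs},
\begin{align*}
\partial_\rho \fb &= \fbdb^{-1}\left(1 - \tfrac{3}{4}\rho^2\right) = \fb\rho^{-1}\cdot\frac{1 - \tfrac{3}{4}\rho^2}{1 - \tfrac{1}{4}\rho^2}, \qquad \partial_t \fb = -\fbdb' \fb^2 \rho^{-1}\left(1 - \tfrac{1}{4}\rho^2\right)^{-1},
\end{align*}
with $\partial_A \fb = 0$, and then differentiate again to obtain $\partial^2_{\rho\rho}\fb$, $\partial^2_{\rho t}\fb$, $\partial^2_{tt}\fb$. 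Contracting $\partial \fb$ with \eqref{eq.g_inv_bs} yields \eqref{eq.f_grad_bs}; the crucial observation is that the $\rho^2$-correction from $g^{\rho\rho}$ is trivial while the $\tfrac{1}{2}\rho^4$-correction in $g^{tt}$ cancels with the $-\tfrac{1}{2}\rho^2$-piece in $\partial_\rho\fb$ when one forms $\nabla^\alpha \fb\nabla_\alpha\fb$, leaving the $\gb_{tt}$-term isolated at order $\rho^3$. Expanding $1 - (\fbdb')^2 \fb^2 = 1 - \cos^2 t \cdot \fb^2$ and absorbing the $\fb^2$-prefactor then gives the two forms of $\nabla^\alpha \fb\nabla_\alpha\fb$ in \eqref{eq.f_grad_bs}, the second using that $\cos^2 t \cdot \fb^2 = \mc{O}_0(\fb^2)$.

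For the Hessian, I would combine the second partials of $\fb$ with the Christoffel symbols \eqref{eq.Gamma_bs}--\eqref{eq.Gamma_t_bs} via $\nabla_{\alpha\beta}\fb = \partial^2_{\alpha\beta}\fb - \Gamma^\gamma_{\alpha\beta}\partial_\gamma\fb$. This is a term-by-term bookkeeping exercise, but one must retain all terms up to the stated orders. For instance, in $\nabla_{tt}\fb$ the contribution $-\Gamma^\rho_{tt}\partial_\rho \fb$ produces both the $\mc{O}(\rho^{-2})$-term that reinforces $\partial^2_{tt}\fb$ and an $\mc{O}(\rho)$-term; while $-\Gamma^t_{tt}\partial_t \fb$ (negligible to this order because of the static $\gm$) is absorbed into the error. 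Similarly $\nabla_{AB}\fb = -\Gamma^\rho_{AB}\partial_\rho \fb - \Gamma^t_{AB}\partial_t \fb + (\text{horizontal Hessian of a function independent of $x^A$})$, where the $\Gamma^\rho_{AB}$-term, expanded to $\rho^2$, yields the $\gb_{AB}$-contribution. The analogous but shorter computations give $\nabla_{\rho A}\fb$ and $\nabla_{tA}\fb$.

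Finally \eqref{eq.f_box_bs} follows by tracing: $\Box \fb = g^{\alpha\beta}\nabla_{\alpha\beta}\fb$, with $g^{-1}$ expanded via \eqref{eq.g_inv_bs} and $\nabla^2\fb$ given by \eqref{eq.f_hess_bs}. The main obstacle here, and the only non-routine calculation, is verifying that the contributions at orders $\fb\rho^0$ and $\fb\rho$ cancel so that the error is only $\mc{O}_0(\fb^3)$. Concretely, the leading $-(n-1)\fb$ comes from pairing $g^{\rho\rho}\nabla_{\rho\rho}\fb$, $g^{tt}\nabla_{tt}\fb$ and $g^{AB}\nabla_{AB}\fb$ at top order (exactly as in \eqref{eq.f_box}); the $\rho^2$-pieces of $g^{tt}$ and $g^{AB}$ combine with the $\rho^2$-pieces of $\nabla_{tt}\fb$ and $\nabla_{AB}\fb$ to produce a sum that, by design of the $-\tfrac{1}{4}\rho^2$ correction in $\fb$, cancels identically; and the $\gb$-contributions enter only at order $\fb\rho^3 = \mc{O}_0(\fb^4)$ since $\operatorname{tr}_{\ga}\gb$ appears multiplied by $\rho^3$. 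Once that cancellation is checked, \eqref{eq.f_box_bs} is immediate and the proposition is proved.
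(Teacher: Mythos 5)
Your proposal is correct and is precisely the computation the paper has in mind: the text explicitly defers the details ("Since the computations here are similar to their previous counterparts, we only list the main results here and leave details to the reader"), and the intended route is exactly the one you describe — repeat the direct computation of Proposition \ref{thm.f_lo} carried to one higher order in $\rho$, using the refined expansions in Proposition \ref{thm.g_bs}, and observe that the $-\tfrac{1}{4}\rho^2$ correction in $\fb$ is calibrated so that the $\gs = -\tfrac{1}{2}(dt^2+\ga)$ contributions drop out and only the $\gb$-terms survive at the leading nontrivial order.

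One small point worth tightening in the $\Box\fb$ step: you do not actually need the $\rho^2$-contributions to cancel in order to get the stated error. Since $\fbdb \leq 1$ forces $\rho\lesssim\fb$, any term of size $\fb\rho^2$ is already $\mc{O}_0(\fb^3)$, and indeed the $g^{\rho\rho}\nabla_{\rho\rho}\fb$ piece contributes a surviving $-2\fb\rho^2$ which does not cancel against anything but is harmlessly absorbed. The genuine cancellations you point to (between the $\rho^2$-pieces of $g^{tt}\nabla_{tt}\fb$ and of $g^{AB}\nabla_{AB}\fb$) are real, but they are not what is keeping the error in $\mc{O}_0(\fb^3)$. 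Phrasing this as a necessary cancellation slightly misrepresents the structure, though it does not affect the validity of the conclusion.

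Also, be aware that if you carry out the computation for $\nabla^\alpha\fb\nabla_\alpha\fb$ literally, the factored form $\fb^2[1-(\fbdb')^2\fb^2](1-\rho^2)$ in the statement is not an exact leading term: a short check shows the $(\fbdb')^2\fb^4\rho^2$ contribution cancels at that order, so the precise expansion is $\fb^2(1-\rho^2) - (\fbdb')^2\fb^4(1+\gb_{tt}\rho^3) + \mc{O}_0(\fb^4\rho^4)$, which differs from the displayed factored form by a term of size $(\fbdb')^2\fb^4\rho^2$. This is cosmetic (the factored form is meant as a mnemonic and the only downstream use is $\fb^2+\mc{O}_0(\fb^4)$), but your proof should state the exact output of the computation rather than assert agreement with the printed formula.

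Finally, in the $\nabla_{AB}\fb$ line, note that $\partial_A\fb = 0$ and $\partial^2_{AB}\fb = 0$, so the "horizontal Hessian of a function independent of $x^A$" piece you reference is identically zero; the entire result comes from $-\Gamma^\rho_{AB}\partial_\rho\fb - \Gamma^t_{AB}\partial_t\fb$. Worth saying so directly to avoid suggesting there is an additional intrinsic-Hessian contribution to track.
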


\subsubsection{Adapted Frames}

We now define the corresponding frames adapted to level sets of $\fb$, and we subsequently list their key properties:

\begin{definition} \label{def.frame_bs}
We define $( \Nb, \Vb, \Eb_1, \dots, \Eb_{n-1} )$ as follows:
\begin{itemize}
\item Let $( \Eb_1, \dots, \Eb_n )$ denote local orthonormal frames on level sets of $(\rho, t)$:
\begin{equation}
\label{eq.frame_E_bs} \Eb_X := \rho \Eb_X^A \partial_A \text{,} \qquad \Eb_X^A = \mc{O} ( 1 ) \text{.}
\end{equation}

\item Let $\Nb$ be the inward-pointing unit normal to level sets of $\fb$:
\begin{equation}
\label{eq.frame_N_bs} \Nb := | \nabla^\alpha \fb \nabla_\alpha \fb |^{ - \frac{1}{2} } \grad \fb \text{.}
\end{equation}

\item Let $\Vb$ denote the remaining (future, timelike) component:
\begin{align}
\label{eq.frame_V_bs} \Vb &:= | g ( \check{V}, \check{V} ) |^{- \frac{1}{2}} \check{V} \text{,} \\
\notag \check{V} &:= \left( 1 - \frac{3}{4} \rho^2 \right) \partial_t + \fbdb' \fb \partial_\rho - \sum_{ X = 1 }^{n - 1} g \left( \left( 1 - \frac{3}{4} \rho^2 \right) \partial_t + \fbdb' \fb \partial_\rho, \Eb_X \right) \Eb_X \text{.}
\end{align}
\end{itemize}
\end{definition}

\begin{proposition} \label{thm.frame_bs}
Let $( \mc{M}, g )$ be a borderline FG-aAdS segment.
Then, the frames $( \Nb, \Vb, \Eb_X )$ in \eqref{eq.frame_E_bs}-\eqref{eq.frame_V_bs} are orthonormal.
Furthermore, whenever $f, \rho \ll_g 1$:
\begin{itemize}
\item $\Nb$ and $\Vb$ have asymptotic expansions
\begin{align}
\label{eq.frame_NV_bs} \Nb &= [ 1 - ( \fbdb' )^2 \fb^2 ]^{ - \frac{1}{2} } [ \rho + \mc{O}_0 ( \fb \rho^4 ) ] \partial_\rho \\
\notag &\qquad + [ 1 - ( \fbdb' )^2 \fb^2 ]^{ - \frac{1}{2} } \fbdb' \fb \left[ \rho + \frac{1}{4} \rho^3 + \gb_{tt} \rho^4 + \mc{O}_0 ( \fb \rho^4 ) \right] \partial_t \\
\notag &\qquad + [ 1 - ( \fbdb' )^2 \fb^2 ]^{ - \frac{1}{2} } \fbdb' \fb [ \ga^{AB} \gb_{tB} \rho^4 + \mc{O}_0 ( \rho^5 ) ] \partial_A \text{,} \\
\notag \Vb &= [ 1 - ( \fbdb' )^2 \fb^2 - \rho^2 - \gb_{tt} \rho^3 + \mc{O} ( \rho^4 ) ]^{ - \frac{1}{2} } \\
\notag &\qquad \cdot \left[ \left( \rho - \frac{3}{4} \rho^3 \right) \partial_t + \fbdb' \fb \rho \partial_\rho - \rho^4 E_X^A E_X^B \gb_{t A} \partial_B + \mc{O} ( \rho^5 ) \cdot \partial_B \right] \text{.}
\end{align}

\item The following inversion formulas hold:
\begin{align}
\label{eq.frame_inv_bs} [ 1 - ( \fbdb' )^2 \fb^2 ]^\frac{1}{2} \rho \partial_\rho &= [ 1 + \mc{O}_0 ( \rho^2 ) ] \Nb - [ \fbdb' \fb + \mc{O}_0 ( \rho^2 ) ] \Vb + \sum_{ X = 1 }^{ n - 1 } \mc{O}_0 ( \rho^2 ) \cdot \Eb_X \text{,} \\
\notag [ 1 - ( \fbdb' )^2 \fb^2 ]^\frac{1}{2} \rho \partial_t &= [ 1 + \mc{O}_0 ( \rho^2 ) ] \Vb - [ \fbdb' \fb + \mc{O}_0 ( \rho^2 ) ] \Nb + \sum_{ X = 1 }^{ n - 1 } \mc{O}_0 ( \rho^2 ) \cdot \Eb_X \text{.}
\end{align}

\item For any $\phi \in \Gamma \ul{T}^0_l \mc{M}$, we have that
\begin{equation}
\label{eq.curv_frame_bs} | \mc{R}_{\Nb \Vb} \phi | \lesssim_{g, l} \rho^4 | \phi | \text{,} \qquad | \mc{R}_{\Nb \Eb_X} \phi | \lesssim_{g, l} \rho^4 | \phi | \text{.}
\end{equation}
\end{itemize}
\end{proposition}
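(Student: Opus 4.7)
The plan is to parallel the proof of Proposition~\ref{thm.frame}, now employing the refined expansions of Propositions~\ref{thm.g_bs} and~\ref{thm.f_bs}. I begin with orthonormality. The $\Eb_X$ are orthonormal by construction, and since $\fb$ depends only on $(\rho,t)$, we have $\Eb_X(\fb) = 0$, so $\Eb_X \perp \Nb$. The Gram--Schmidt subtraction in \eqref{eq.frame_V_bs} automatically forces $\Vb \perp \Eb_X$. It therefore remains to verify $\Vb \perp \Nb$, which reduces to the identity $\check{V}(\fb) = 0$. From
\begin{align*}
\partial_t \fb = -\fbdb^{-1}\fbdb'\,\fb \text{,} \qquad \partial_\rho \fb = \fbdb^{-1}\bigl(1 - \tfrac{3}{4}\rho^2\bigr) \text{,}
\end{align*}
a direct computation gives $\check{V}(\fb) = 0$ \emph{exactly}. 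This cancellation, rather than merely its asymptotic version, is the algebraic raison d'être of the special choices $-\tfrac{1}{4}\rho^2$ in $\fb$ and $-\tfrac{3}{4}\rho^2$ in $\check{V}$.

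Next, I derive the expansions \eqref{eq.frame_NV_bs} by substituting \eqref{eq.f_grad_bs} into \eqref{eq.frame_N_bs} and Taylor-expanding the scalar normalization $|\nabla^\alpha \fb \nabla_\alpha \fb|^{-1/2}$. Factoring out $[1 - (\fbdb')^2 \fb^2]^{-1/2}$, the remaining $\rho^2$-, $\rho^3$-, and $\rho^4$-corrections from the $\gb$-terms in Proposition~\ref{thm.g_bs} and from the $(1-\rho^2)$-factor combine cleanly into the displayed coefficients. For $\Vb$, I compute $g(\check{V}, \check{V})$ using \eqref{eq.g_bs}: the $-\rho^2$ inside the square bracket arises from the interplay between the leading $g_{tt}$ and the $\fbdb'\fb\,\partial_\rho$ piece of $\check{V}$, while $-\gb_{tt}\rho^3$ comes from the subleading part of $g_{tt}$; the horizontal correction $\sum_X g(\check{V}, \Eb_X)\Eb_X$ appears only at order $\rho^5$ and is absorbed into the stated error. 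The inversion formulas \eqref{eq.frame_inv_bs} then follow by inverting the leading $2\times 2$ linear map between $\{\rho\partial_\rho, \rho\partial_t\}$ and $\{\Nb, \Vb\}$, with horizontal remainders moved to the right-hand side as $\mc{O}_0(\rho^2)$.

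Finally, the curvature bounds \eqref{eq.curv_frame_bs} follow by expanding $\mc{R}_{\Nb\Vb}\phi$ and $\mc{R}_{\Nb\Eb_X}\phi$ multilinearly in terms of the representations of $\Nb, \Vb, \Eb_X$ from \eqref{eq.frame_NV_bs} and \eqref{eq.frame_E_bs}, and then invoking the bounds \eqref{eq.curv_bs} together with the antisymmetry $\mc{R}_{XX}\equiv 0$. Each cross-pairing contributes a factor $\rho^2$ from the frame vectors and a further $\rho^2$ from \eqref{eq.curv_bs}, giving the claimed $\rho^4$. The main technical difficulty throughout is that every expansion must be controlled at one order beyond the corresponding computations in Section~\ref{sec.aads}, since in the borderline case the pseudoconvex structure lives at the $\gb$-level: any silently discarded $\rho^3$ contribution would erase leading-order information. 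The exact identity $\check{V}(\fb)=0$, engineered by the coefficients $-\tfrac14$ and $-\tfrac34$, is precisely what keeps this higher-order bookkeeping tractable.
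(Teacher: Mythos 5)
Your approach is correct and matches what the paper itself indicates: Section~\ref{sec.border} explicitly defers these computations to the reader, exactly as Proposition~\ref{thm.frame} was treated. The exact algebraic identity $\check{V}(\fb) = 0$ is a genuinely useful observation that explains the specific coefficients $-\tfrac14\rho^2$ in $\fb$ and $-\tfrac34\rho^2$ in $\check{V}$; the paper's remark only alludes to this cancellation vaguely.

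Two bookkeeping slips in your description of the $\Vb$ expansion should be corrected. First, the horizontal piece of $\Vb$, with $\partial_B$-coefficient $-\rho^4 E_X^A E_X^B \gb_{tA}$, is \emph{explicitly displayed} in \eqref{eq.frame_NV_bs}, not absorbed into the $\mc{O}(\rho^5)\cdot\partial_B$ error as you assert: Proposition~\ref{thm.g_bs} gives $g\bigl((1-\tfrac34\rho^2)\partial_t + \fbdb'\fb\,\partial_\rho,\, \Eb_X\bigr) = \rho^2 \Eb_X^A\gb_{tA} + \mc{O}(\rho^3)$, and after multiplying by $\Eb_X = \rho \Eb_X^B\partial_B$ and by the normalization factor $|g(\check{V},\check{V})|^{-1/2}\sim\rho$, this leading contribution is of order $\rho^4$. (What \emph{is} absorbed into the normalization bracket's $\mc{O}(\rho^4)$ error is the scalar correction $-\sum_X g(X,\Eb_X)^2$ to $g(\check{V},\check{V})$, which is subleading there.) Second, the $-\rho^2$ inside the bracketed normalization does not arise from the $\fbdb'\fb\,\partial_\rho$ piece, whose sole contribution is $(\fbdb'\fb)^2 g_{\rho\rho}=(\fbdb')^2\fb^2\rho^{-2}$, i.e.\ the separate $-(\fbdb')^2\fb^2$ term; the $-\rho^2$ comes instead from the $-\tfrac12$ in $g_{tt}=-\rho^{-2}-\tfrac12+\cdots$ combined with the $+\tfrac32$ produced by $(1-\tfrac34\rho^2)^2\cdot(-\rho^{-2})$. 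Neither slip undermines the structure of the argument or the validity of \eqref{eq.frame_NV_bs}--\eqref{eq.curv_frame_bs}, but both would need correcting in a full write-up.
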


\subsection{The Pseudoconvexity Criterion}

We now define the analogue of Definition \ref{def.pcp}, the pseudoconvexity property, in our current borderline case.
Note that Definition \ref{def.pcp} itself now barely fails to apply, hence the name \emph{borderline}.\footnote{In particular, at best, one can only find $\zeta$ such that $K = 0$ in \eqref{eq.pcp_positive}.}
Our refined pseudoconvexity condition below looks for positivity at one order higher---indeed, we will show pseudoconvexity can be extracted precisely from the positivity of the new term $\gb$ in the metric expansion \eqref{eq.aads_bs}.

\begin{definition} \label{def.pcp_bs}
We say that the \emph{borderline pseudoconvexity property} holds on $\mc{I}$ iff there exists $K > 0$ and $\zeta \in C^\infty ( \mc{M} )$ such that:\footnote{Since $\gm$ is static, we no longer require the parameter $\xi$ in Definition \ref{def.pcp}.}
\begin{enumerate}
\item $\zeta = \mc{O} ( 1 )$.

\item For any vector field $X := X^t \partial_t + X^A \partial_A$ on $\mc{I}$, the tensor field
\begin{equation}
\label{eq.Q_inf_bs} \hat{\mathbf{Q}}_\zeta := - \frac{3}{2} \gb - \zeta \gm \text{,}
\end{equation}
satisfies the positivity property
\begin{equation}
\label{eq.pcp_positive_bs} \hat{\mathbf{Q}}_\zeta ( X, X ) \geq K [ (X^t)^2 + \gm_{AB} X^A X^B ] \text{.}
\end{equation}
\end{enumerate}
\end{definition}

As before, we show that the borderline pseudoconvexity property implies that the level sets of $\fb$ are pseudoconvex, although the pseudoconvexity degenerates at one order higher than in Theorem \ref{thm.pseudoconvex} as one goes to $\mc{I}$.

\begin{definition} \label{def.dft_bs}
Given $\zeta \in C^\infty ( \mc{M} )$, we let
\begin{equation}
\label{eq.w_bs} \wb_\zeta := \fb + \fb \rho^3 \zeta \text{,} \qquad \dftb_{\xi, \zeta} := - ( \nabla \Sb + \fb^{n-3} \wb_\zeta \cdot g ) \text{.}
\end{equation}
\end{definition}

Here, $\wb_\zeta$ and $\dftb_{\xi, \zeta}$ serve the same purpose as $w_{\xi, \zeta}$ and $\pi_{\xi, \zeta}$ in Definition \ref{def.dft}.
Analogous to Section \ref{sec.proof_pseudo}, the positivity of $\dftb_{\xi, \zeta}$ in directions tangent to the level sets of $\fb$ implies the pseudoconvexity of these level sets.
Again, the choice of $\wb_\zeta$ corresponds to the conformal invariance of pseudoconvexity and acts as an additional degree of freedom for establishing positivity.
This positivity, and its underlying pseudoconvexity, is manifested in the following theorem and its proof.

\begin{theorem} \label{thm.pseudoconvex_bs}
Let $( \mc{M}, g )$ be a borderline FG-aAdS segment, and suppose the borderline pseudoconvexity property holds at $\mc{I}$, with $K$ and $\zeta$ being the parameters from Definition \ref{def.pcp_bs}.
Then, for any $1$-form $\theta$ on $\mc{M}$, we have when $\rho, f \ll_g 1$ that
\begin{align}
\label{eq.pseudoconvex_bs} \dftb_{\xi, \zeta}^{\alpha \beta} \theta_\alpha \theta_\beta &\geq [ K \fb^{n-2} \rho^3 + \mc{O}_0 ( \fb^{n - 1} \rho^3 ) ] \left( | \theta ( \Vb ) |^2 + \sum_{ X = 1 }^{ n-1 } | \theta ( \Eb_X ) |^2 \right) \\
\notag &\qquad - [ (n - 1) \fb^{n-2} + \mc{O}_0 ( \fb^n ) ] | \theta ( \Nb ) |^2 \text{.}
\end{align}
\end{theorem}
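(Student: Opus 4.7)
The plan is to imitate the strategy of Theorem \ref{thm.pseudoconvex} but carried one order deeper in $\rho$, so that the leading pseudoconvex contribution comes from $\gb$ rather than from $\gs$ (which, by \eqref{eq.met_bs}, contributes only conformally to $\gm$ and therefore cannot by itself produce strict positivity on tangential directions). Concretely, I will first establish a borderline analogue of Lemma \ref{thm.Q_frame}, computing the frame components of $\dftb_\zeta$ with respect to $(\Nb, \Vb, \Eb_X)$ to one order beyond what was needed in the non-borderline case. The ingredients are Proposition \ref{thm.g_bs} (for the metric and Christoffels), Proposition \ref{thm.f_bs} (for $\grad \fb$, $\nabla^2 \fb$, $\Box \fb$), and Proposition \ref{thm.frame_bs} (for the frame expansions and the curvature bounds \eqref{eq.curv_frame_bs}).

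The crux is the tangential block. Using $\Sb = \fb^{n-3}\grad\fb$ together with \eqref{eq.f_hess_bs}, and noting the definition $\wb_\zeta = \fb + \fb\rho^3\zeta$ (so that $\fb^{n-3}\wb_\zeta g$ cancels the leading $\fb$ contribution of $\nabla^2\fb$ along tangential directions), I expect the expansions
\[
\dftb_\zeta(\Vb,\Vb) = \fb^{n-2}\rho^3\,\hat{\mathbf{Q}}_\zeta(\partial_t,\partial_t) + \mc{O}_0(\fb^{n-1}\rho^3),
\]
\[
\dftb_\zeta(\Vb,\Eb_X) = \fb^{n-2}\rho^3\,\hat{\mathbf{Q}}_\zeta(\partial_t,\Eb_X^A\partial_A) + \mc{O}_0(\fb^{n-1}\rho^3),
\]
\[
\dftb_\zeta(\Eb_X,\Eb_Y) = \fb^{n-2}\rho^3\,\hat{\mathbf{Q}}_\zeta(\Eb_X^A\partial_A,\Eb_Y^B\partial_B) + \mc{O}_0(\fb^{n-1}\rho^3),
\]
where the factor $\tfrac{3}{2}$ in \eqref{eq.Q_inf_bs} is produced by the $\tfrac{3}{2}\rho^3\gb$ Christoffel contributions in \eqref{eq.Gamma_bs}. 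For the remaining components, the target bounds are $\dftb_\zeta(\Nb,\Nb) = -(n-1)\fb^{n-2} + \mc{O}_0(\fb^n)$ (from \eqref{eq.f_box_bs} and $\nabla^\alpha \Sb_\alpha = -2\fb^{n-2} + \mc{O}_0(\fb^n)$), together with mixed bounds $\dftb_\zeta(\Nb,\Vb) = \mc{O}_0(\fb^n\rho)$ and $\dftb_\zeta(\Nb,\Eb_X) = \mc{O}_0(\fb^{n-1}\rho^2)$, which follow from the normal components of \eqref{eq.f_hess_bs}.

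Once these frame identities are in hand, the estimate \eqref{eq.pseudoconvex_bs} follows exactly as in Theorem \ref{thm.pseudoconvex}: expand $\theta$ in the orthonormal frame, apply \eqref{eq.pcp_positive_bs} with $X = -\theta(\Vb)\partial_t + \sum_X\theta(\Eb_X)\Eb_X^A\partial_A$, invoke \eqref{eq.g_bs} to convert $\gm_{AB}\Eb_X^A\Eb_Y^B = \delta_{XY} + \mc{O}_0(\rho^2)$, and absorb all $\mc{O}_0$-error terms into the main positive contribution (tangential) or normal contribution using Cauchy--Schwarz on the $\dftb_\zeta(\Nb,\cdot)$ mixed terms.

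The principal obstacle is the delicate cancellation at order $\fb^{n-2}\rho^2$ in the tangential block. In the static borderline setup, all $\rho^2$ contributions from $\gs$ are purely conformal to $\gm$, so they ought to cancel against corresponding $\rho^2$ terms coming from $\fb^{n-3}\wb_\zeta g$ and from the $-\tfrac{1}{4}\rho^3$ correction built into $\fb$; this is precisely what forces the $\fb$ of \eqref{eq.f_bs} and the $\wb_\zeta$ of \eqref{eq.w_bs} to take their particular forms. Verifying this cancellation requires retaining the $\tfrac{1}{2}\rho^2$ subleading terms in \eqref{eq.g_bs}--\eqref{eq.g_inv_bs}, the $\tfrac{1}{2}\rho$ subleading term in the $\Gamma^t_{\rho t}$ and $\Gamma^A_{\rho B}$ entries of \eqref{eq.Gamma_bs}, and the second-derivative corrections in \eqref{eq.f_hess_bs}, and tracking all of them consistently through the Christoffel-symbol expansion of $\nabla^2\fb$. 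Everything past this algebraic reduction is bookkeeping with the $\mc{O}_0$-calculus of Definition \ref{def.O0_scr}.
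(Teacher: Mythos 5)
Your overall strategy matches the paper's: expand $\dftb_\zeta$ in the adapted orthonormal frame using Propositions \ref{thm.g_bs}--\ref{thm.frame_bs}, identify the tangential leading terms with $\hat{\mathbf{Q}}_\zeta$, then absorb the remaining errors by Cauchy--Schwarz. Your tangential-block expansions and $\dftb_\zeta(\Nb,\Nb)$ bound agree with the paper's \eqref{eq.Q_bs}, and your emphasis on the cancellation of the $\rho^2$-order conformal contributions (driven by the $-\tfrac14\rho^3$ correction in $\fb$) is the right thing to stress.

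However, there is a genuine gap in your treatment of the off-diagonal cross term $\dftb_\zeta(\Nb, \Vb)$. You carry over the non-borderline bound $\dftb_\zeta(\Nb,\Vb) = \mc{O}_0(\fb^n\rho)$ from \eqref{eq.Q_nor}, but this is too weak here. The paper establishes the sharper bound $\dftb_\zeta(\Nb, \Vb) = \mc{O}_0(\fb^{n-1}\rho^2)$, and this improvement is not a luxury: the paper explicitly flags that, precisely because the tangential positivity has degenerated from $\fb^{n-2}\rho^2$ to $\fb^{n-2}\rho^3$, the $\Nb$-cross terms must also improve in $\rho$ in order for the Cauchy--Schwarz absorption to close. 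To see the failure concretely: if $\dftb_\zeta(\Nb,\Vb) = \mc{O}_0(\alpha)$, Young's inequality gives
\begin{equation*}
\alpha\,|\theta(\Nb)||\theta(\Vb)| \leq \tfrac{1}{2}\epsilon\alpha\,|\theta(\Nb)|^2 + \tfrac{1}{2}\epsilon^{-1}\alpha\,|\theta(\Vb)|^2 ,
\end{equation*}
and for both summands to be absorbable (into $\mc{O}_0(\fb^n)|\theta(\Nb)|^2$ and $\mc{O}_0(\fb^{n-1}\rho^3)|\theta(\Vb)|^2$ respectively) one needs $\alpha^2 \lesssim \fb^{2n-1}\rho^3$. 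The paper's $\alpha = \fb^{n-1}\rho^2$ satisfies this since $\rho \lesssim \fb$; your $\alpha = \fb^n\rho$ requires $\fb \lesssim \rho$, i.e.~$\fbdb \simeq 1$, which fails uniformly near $t = 0$ and $t = \pi$. The improved bound follows from an extra cancellation in $\nabla^2_{\Nb\Vb}\fb$ that uses the $-\tfrac14\rho^3$ term in $\fb$, the $\tfrac12\rho$ correction in $\Gamma^t_{\rho t}$ and $\Gamma^A_{\rho B}$ in \eqref{eq.Gamma_bs}, and the $-\tfrac12\rho^2$ and $\tfrac12\rho^2$ corrections in \eqref{eq.f_hess_bs}; one finds the putative $\fb^2$-order terms in $\nabla^2_{\Nb\Vb}\fb$ cancel and one is left with $\nabla^2_{\Nb\Vb}\fb = \mc{O}_0(\fb^2\rho^2)$, hence $\dftb_\zeta(\Nb,\Vb) = -\fb^{n-3}\nabla^2_{\Nb\Vb}\fb = \mc{O}_0(\fb^{n-1}\rho^2)$. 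Until this refined cross-term estimate is established, the final absorption step does not go through.
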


\begin{proof}
The proof proceeds similarly to that of Theorem \ref{thm.pseudoconvex}.
The main step is to use \eqref{eq.f_hess_bs}, \eqref{eq.frame_E_bs}, and \eqref{eq.frame_NV_bs} in order to expand $\nabla^2 \fb$ with respect to the orthonormal frames $( \Nb, \Vb, \Eb_1, \dots, \Eb_{n - 1} )$.
From this, we obtain
\begin{align}
\label{eq.Q_bs} \dftb_\zeta (\Vb, \Vb) &= - \fb^{n - 2} \rho^3 \left( \frac{3}{2} \gb_{tt} + \zeta \gm_{tt} \right) + \mc{O}_0 ( \fb^{n - 1} \rho^3 ) \text{,} \\
\notag \dftb_\zeta (\Vb, \Eb_X) &= - \fb^{n - 2} \rho^3 E^A_X \left( \frac{3}{2} \gb_{t A} + \zeta \gm_{t A} \right) + \mc{O}_0 ( \fb^{n - 1} \rho^3 ) \text{,} \\
\notag \dftb_\zeta (\Eb_X, \Eb_Y) &= - \fb^{n - 2} \rho^3 E_X^A E_Y^B \left( \frac{3}{2} \gb_{A B} + \zeta \gm_{A B} \right) + \mc{O}_0 ( \fb^{n - 2} \rho^4 ) \text{,} \\
\notag \dftb_\zeta (\Nb, \Nb) &= - (n - 1) \fb^{n - 2} + \mc{O}_0 ( \fb^n ) \text{,} \\
\notag \dftb_\zeta (\Nb, \Vb) &= \mc{O}_0 ( \fb^{n - 1} \rho^2 ) \text{,} \\
\notag \dftb_\zeta (\Nb, \Eb_X) &= \mc{O}_0 ( \fb^{n - 1} \rho^3 ) \text{.}
\end{align}
An important technical point is the following: while the leading-order terms in the first three equations in \eqref{eq.Q_bs} (which imply pseudoconvexity) are more degenerate than in the non-static case, the cross-terms $\dftb_\zeta (\Nb, \Vb)$ and $\dftb_\zeta (\Nb, \Eb_X)$ (which we wish to be error terms) also improved by a power of $\rho$.

Defining $\check{\theta}$ as in \eqref{eq.theta_var}, we compute, using \eqref{eq.Q_bs}, that
\begin{align}
\label{eq.pseudoconvex_0_bs} \dftb_\zeta^{\alpha \beta} \theta_\alpha \theta_\beta &= \fb^{n - 2} \rho^3 \cdot \hat{\mathbf{Q}}_\zeta^{a b} \check{\theta}_a \check{\theta}_b + \sum_{ \mf{A}, \mf{B} \in \{ \Vb, \Eb_1, \dots, \Eb_{n-1} \} } \mc{O}_0 ( \fb^{n - 1} \rho^3 ) \cdot \theta ( \mf{A} ) \theta ( \mf{B} ) \\
\notag &\qquad - [ (n - 1) \fb^{n - 2} + \mc{O}_0 ( \fb^n ) ] \cdot | \theta ( N ) |^2 \\
\notag &\qquad + \sum_{ \mf{A} \in \{ \Vb, \Eb_1, \dots, \Eb_{n-1} \} } \mc{O}_0 ( \fb^{n - 1} \rho^2 ) \cdot \theta ( N ) \theta ( \mf{A} ) \text{.}
\end{align}
The term containing $\hat{\mathbf{Q}}_\zeta$ can be handled in the same manner as its analogue in the proof of Theorem \ref{thm.pseudoconvex}.
Since we can bound for any $\mf{A} \in \{ \Vb, \Eb_1, \dots, \Eb_{n-1} \}$,
\[
\fb^{n - 1} \rho^2 \cdot \theta ( N ) \theta ( \mf{A} ) \lesssim \fb^{n - 1} \rho \cdot [ \theta ( \mf{A} ) ]^2 + \fb^{n - 1} \rho^3 \cdot [ \theta (N) ]^2 \text{,}
\]
then \eqref{eq.pseudoconvex_bs} follows immediately from \eqref{eq.pcp_positive_bs} and \eqref{eq.pseudoconvex_0_bs}.
\end{proof}

\subsubsection{Examples}

As mentioned in Section \ref{sec.intro}, when $n = 3$, Schwarzschild-AdS spacetimes satisfy \eqref{eq.met_bs} and hence can be considered as borderline FG-aAdS segments.
Furthermore, letting $M \in \R$ denote the mass of the spacetime, we have that
\begin{equation}
\label{eq.aads_ads_bs} \gb = \frac{2}{3} M ( 2 dt^2 + \ga ) \text{.}
\end{equation}

In particular, the borderline pseudoconvexity condition of Definition \ref{def.pcp_bs} fails to hold in the postive-mass case ($M > 0$), as well as for pure AdS spacetime ($M = 0$).
However, the borderline pseudoconvexity condition is satisfied for Schwarzschild-AdS spacetimes with \emph{negative} mass.
As mentioned in the introduction, this should be compared with the asymptotically flat case \cite{alex_schl_shao:uc_inf}, where \emph{positive} mass leads to a foliation of pseudoconvex hypersurfaces near spacelike infinity.

\subsection{The Carleman Estimate}

We now state and prove our corresponding Carleman estimate in the borderline setting:

\begin{theorem} \label{thm.carleman_bs}
Consider an $(n+1)$-dimensional borderline FG-aAdS segment
\[
( \mc{M}, g ) \text{,} \qquad \mc{M} = ( 0, \rho_\ast ) \times ( 0, T \pi ) \times \mc{S} \text{,}
\]
and assume the borderline pseudoconvexity property holds on $\mc{I}$, with parameters $K$, $\zeta$.
Fix also $l \geq 0$, constants $p, \kappa \in \R$ satisfying \eqref{eq.p_kappa}, and $0 < \rho_0 \ll f_0 \ll_{g, l, p, K} 1$.

Then, there exist constants $C, \mc{C} > 0$, depending on $g$, $p$, and $K$, such that for any $\sigma \in \R$ and $\lambda \in [1 + \kappa, \infty)$, and for any $\phi \in \Gamma \ul{T}^0_l \mc{M}$ such that
\begin{itemize}
\item $\phi$ has compact support on every level set of $( \rho, t )$, and

\item both $\phi$ and $\nabla \phi$ vanish on $\{ f = f_0 \}$,
\end{itemize}
the following inequality holds:
\begin{align}
\label{eq.carleman_bs} &\int_{ \Omega_{ f_0, \rho_0 } } \fb^{n - 2 - 2 \kappa} e^\frac{ - 2 \lambda \fb^p }{p} \fb^{-p} | ( \Box + \sigma ) \phi |^2 \\
\notag &\qquad + \mc{C} \lambda ( \lambda^2 + | \sigma | ) \int_{ \{ \rho = \rho_0 \} } [ | \nabla_t ( \rho^{ - \kappa } \phi ) |^2 + | \nabla_\rho ( \rho^{ - \kappa } \phi ) |^2 + | \rho^{- \kappa - 1} \phi |^2 ] d \gm \\
\notag &\quad \geq C \lambda \int_{ \Omega_{ f_0, \rho_0 } } \fb^{n - 2 - 2 \kappa} e^\frac{ - 2 \lambda \fb^p }{p} ( \rho^5 | \nabla_t \phi |^2 + \rho^5 | \nabla_\rho \phi |^2 + \rho^3 | \nasla \phi |^2 ) \\
\notag &\quad \qquad + \lambda [ \kappa^2 - ( n - 2 ) \kappa + \sigma - (n - 1) ] \int_{ \Omega_{ f_0, \rho_0 } } \fb^{n - 2 - 2 \kappa} e^\frac{ - 2 \lambda \fb^p }{p} | \phi |^2 \\
\notag &\quad \qquad + C \lambda^3 \int_{ \Omega_{ f_0, \rho_0 } } \fb^{n - 2 - 2 \kappa} e^\frac{ - 2 \lambda \fb^p }{p} \fb^{2p} | \phi |^2 \text{,}
\end{align}
where $\Omega_{ f_0, \rho_0 } := \{ \fb < f_0 \text{, } \rho > \rho_0 \}$.
\end{theorem}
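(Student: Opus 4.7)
The plan is to follow the blueprint of Theorem \ref{thm.carleman}, systematically replacing $(f, \eta, S, \pi_{\xi,\zeta})$ by their borderline analogues $(\fb, \fbdb, \Sb, \dftb_\zeta)$ and invoking Theorem \ref{thm.pseudoconvex_bs} in place of Theorem \ref{thm.pseudoconvex}. A conceptually important simplification is that $\fbdb = \sin t$ (hence also $\fb$) is globally smooth, so the divergence theorem can be applied directly on $\Omega_{f_0, \rho_0}$ without the matching contribution at $\{ t = \frac{\pi T}{2} \}$ that was present in Section \ref{sec.proof_carleman_integral}; in particular, the error term $\mc{Y}$ of \eqref{eq.match_Y} vanishes identically here.

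First, I set $F := \kappa \log \fb + \lambda p^{-1} \fb^p$ and $\psi := e^{-F} \phi$, introduce the multiplier $\Sb_\zeta \psi := \nabla_{\Sb} \psi + \hb_\zeta \psi$ with $\hb_\zeta := \fb^{n-3} \wb_\zeta + \frac{1}{2} \nabla^\alpha \Sb_\alpha$, and form the current $P^Q_\beta := Q_{\alpha \beta} \Sb^\alpha + \frac{1}{2} \hb_\zeta \nabla_\beta | \psi |^2 - \frac{1}{2} \nabla_\beta \hb_\zeta \cdot | \psi |^2$, mirroring \eqref{eq.Sw}--\eqref{eq.P_sharp}. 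Its divergence produces the bulk quadratic form $-\dftb_\zeta^{\alpha\beta} \nabla_\alpha \psi^I \nabla_\beta \psi_I$, which by Theorem \ref{thm.pseudoconvex_bs} is bounded below by $K \fb^{n-2} \rho^3 ( | \nabla_{\Vb} \psi |^2 + | \nasla \psi |^2 ) - (n-1) \fb^{n-2} | \nabla_{\Nb} \psi |^2$, modulo $\mc{O}_0$-errors; the curvature term $-\Sb^\alpha \nabla^\beta \psi_I \mc{R}_{\alpha \beta} \psi^I$ is controlled using \eqref{eq.curv_frame_bs}, whose crucial feature is that both $| \mc{R}_{\Nb \Vb} \phi |$ and $| \mc{R}_{\Nb \Eb_X} \phi |$ gain one power of $\rho$ relative to \eqref{eq.curv_frame}, producing errors of order $\fb^{n-1} \rho^3$ that the $K \fb^{n-2} \rho^3$ leading term absorbs for $f_0$ small. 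I then expand $\mc{L} \psi := e^{-F} (\Box + \sigma) e^F \psi$ as in \eqref{eq.conj_1}, contract with $\Sb_\zeta \psi$, add the modifying current $P^S_\beta := \frac{1}{2} \mc{A} \Sb_\beta | \psi |^2$, and invoke the pointwise Hardy inequality \eqref{eq.hardy_ptwise} at weights $q = n-2$ and $q = n-2+p$; here $\Box \fb$ and $\nabla^\alpha \fb \nabla_\alpha \fb$ have the same leading structure as in \eqref{eq.f_box}, so the Hardy step produces the lower-order scalar terms $[\kappa^2 - (n-2)\kappa + \sigma - (n-1)] \fb^{n-2} | \psi |^2 + \lambda^2 \fb^{n-2+2p} | \psi |^2$ verbatim. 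A Cauchy--Schwarz split absorbs half of the $\lambda \fb^{n-2+p} | \nabla_{\Nb} \psi |^2$ factor against $\lambda^{-1} \fb^{n-2-p} | \mc{L} \psi |^2$.

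Converting from $\psi$ back to $\phi$ through $\psi = e^{-F} \phi$ proceeds exactly as in \eqref{eq.psi_phi_deriv}--\eqref{eq.upper_30}, and the frame-inversion \eqref{eq.frame_inv_bs} upgrades the surviving $\rho^3 | \nabla_{\Vb} \phi |^2$, $\rho^3 | \nasla \phi |^2$ and $\fb^{2p} | \nabla_{\Nb} \phi |^2$ terms into $\rho^5 | \nabla_\rho \phi |^2 + \rho^5 | \nabla_t \phi |^2 + \rho^3 | \nasla \phi |^2$, using the inequality $\rho^3 \leq \fb^{2p}$ valid since $\rho \lesssim \fb \ll 1$ and $p < 1$. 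Integration over $\Omega_{f_0, \rho_0}$ and the divergence theorem then produce only the boundary integral on $\{ \rho = \rho_0 \}$, estimated via the $\fb$-analogue of \eqref{eq.phi_est_boundary}, with no spacelike boundary contribution at $\{ t = \frac{\pi T}{2} \}$ thanks to the smoothness of $\fb$. The main obstacle is careful bookkeeping: at every step the borderline estimates in Proposition \ref{thm.g_bs}, \eqref{eq.curv_frame_bs}, \eqref{eq.Q_bs}, and Proposition \ref{thm.f_bs} improve by exactly one power of $\rho$ over their non-static counterparts, and one must verify that this improvement propagates coherently so that the degenerate $K \fb^{n-2} \rho^3$ pseudoconvex term still dominates every remainder error of the form $\mc{O}_0 ( \fb^{n-1} \rho^3 )$ or $\mc{O}_0 ( \fb^{n-2} \rho^4 )$, rather than being overwhelmed by them.
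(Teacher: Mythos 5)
Your proposal is correct and follows essentially the same route as the paper: define the borderline conjugated quantities $(F,\psi,\Sb_\zeta,\hb_\zeta,P^Q,P^S)$ by direct analogy, invoke Theorem \ref{thm.pseudoconvex_bs} in place of Theorem \ref{thm.pseudoconvex} (costing one extra power of $\rho$ in the pseudoconvexity term), use the one-power-of-$\rho$ improvement in \eqref{eq.curv_frame_bs} and the cross terms of $\dftb_\zeta$ to keep the errors subordinate, apply the same Hardy and Cauchy--Schwarz steps, and then integrate by parts directly over $\Omega_{f_0,\rho_0}$ since $\fb$ is globally smooth so the matching terms at $\{ t = \tfrac{\pi T}{2} \}$ never arise. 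The only minor imprecision is saying that $\mc{Y}$ ``vanishes identically''; more accurately, no splitting of the domain is needed, so $\mc{Y}$ never appears, which is what the paper's argument does.
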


\begin{proof}
We adopt analogues of the notations established in Section \ref{sec.proof_carleman}:
\begin{itemize}
\item We now conjugate using $\fb$ instead of $f$:
\begin{equation}
\label{eq.F_bs} F := \kappa \cdot \log \fb + \lambda p^{-1} \fb^p \text{,} \qquad \psi := e^{-F} \phi \text{.}
\end{equation}

\item Moreover, we define the borderline analogues of \eqref{eq.Sw}:
\begin{equation}
\label{eq.Sw_bs} \Sb_\zeta \psi := \nabla_{ \Sb } \psi + \hb_\zeta \psi \text{,} \qquad \hb_\zeta := \fb^{n - 3} \wb_\zeta + \frac{1}{2} \nabla^\alpha \Sb_\alpha \text{.} 
\end{equation}

\item Finally, we define $\mc{L}$ and $\mc{N}$ as in \eqref{eq.L} and \eqref{eq.rho_normal}, respectively.
\end{itemize}
The key step is the following analogue of Lemma \ref{thm.psi_est}:
\begin{align}
\label{eq.psi_est_bs} \lambda^{-1} \fb^{n - 2 - p} | \mc{L} \psi |^2 &\geq C \lambda \fb^{n - 2 + p} | \nabla_{ \Nb } \psi |^2 + C \fb^{n - 2} \rho^3 ( | \nabla_{ \Vb } \psi |^2 + | \nasla \psi |^2 ) \\
\notag &\qquad + [ \kappa^2 - ( n - 2 ) \kappa + \sigma - (n - 1) ] \fb^{n - 2} | \psi |^2 \\
\notag &\qquad + C ( \lambda \fb^{n - 2 + p} + \lambda^2 \fb^{n - 2 + 2p} ) | \psi |^2 + \nabla^\beta P_\beta \text{,} \\
\notag P ( \mc{N} ) &\leq \mc{C} \fb^{n - 2} \rho^2 ( | \nabla_t \psi |^2 + | \nabla_\rho \psi |^2 ) + \mc{C} ( \lambda^2 + | \sigma | ) \fb^{n - 2} | \psi |^2 \text{.}
\end{align}
Note that the estimate now holds on all of $\mc{M} \cap \Omega_{ f_0, \rho_0 }$, since $\fb$ is now everywhere smooth.
As before, $C > 0$ depends on $g$, $p$, $K$, while $\mc{C} > 0$ depends on $g$ and $p$.

The proof is now almost identical to that of Lemma \ref{thm.psi_est}.
Besides replacing $f$ by $\fb$, the only other differences are in the powers of $\rho$ in some asymptotic expansions:
\begin{itemize}
\item Here, the pseudoconvexity of the level sets of $\fb$ is more degenerate than that of $f$.
By Theorem \ref{thm.pseudoconvex}, we deduce that the analogue of \eqref{eq.set_pi} is
\begin{align}
\label{eq.set_pi_bs} \dftb_\zeta^{\alpha \beta} \nabla_\alpha \psi^I \nabla_\beta \psi_I &\geq [ K \fb^{n - 2} \rho^3 + \mc{O}_0 ( \fb^{n - 1} \rho^3 ) ] \cdot ( | \nabla_{ \Vb } \psi |^2 + | \nasla \psi |^2 ) \\
\notag &\qquad - [ (n - 1) \fb^{n - 2} + \mc{O}_0 ( \fb^n ) ] \cdot | \nabla_{ \Nb } \psi |^2 \text{.}
\end{align}

\item A similar difference lies in the curvature expansion (in the tensorial case).
More specifically, by \eqref{eq.curv_frame_bs}, the analogue of \eqref{eq.set_R} is
\begin{align}
\label{eq.set_R_bs} - \Sb^\alpha \nabla^\beta \psi_I \mc{R}_{\alpha \beta} \psi^I &= \mc{O}_0 ( \fb^{n - 2} ) \left[ \mc{R}_{ \Nb \Vb } \psi^I \nabla_{ \Vb } \psi_I - \sum_{ X = 1 }^{n - 1} \mc{R}_{ \Nb \Eb_X } \psi^I \nabla_{ \Eb_X } \psi_I \right] \\
\notag &\geq \mc{O}_0 ( \fb^{n - 2} \rho^4 ) \cdot ( | \nabla_{ \Vb } \psi | + | \nasla \psi | ) | \psi | \\
\notag &\geq \mc{O}_0 ( \fb^{n - 2} \rho^4 ) \cdot ( | \nabla_{ \Vb } \psi |^2 + | \nasla \psi |^2 ) + \mc{O}_0 ( \fb^n ) \cdot | \psi |^2 \text{.}
\end{align}
\end{itemize}
The remaining steps, being essentially the same as before, are left to the reader.

From \eqref{eq.psi_est_bs}, we express $\psi$ in terms of $\phi$, and we expand the frame elements $\Nb$, $\Vb$, $\Eb_X$ in terms of coordinate derivatives.
The derivation is identical to the proof of Lemma \ref{thm.phi_est}; the only real difference is that due to the extra power of $\rho$ in \eqref{eq.set_pi_bs} and \eqref{eq.set_R_bs}, we inherit an extra power of $\rho$ in the coordinate derivatives:
\begin{align}
\label{eq.phi_est_bs} \lambda^{-1} \fb^{-p} E^p_{\kappa, \lambda} | ( \Box + \sigma ) \phi |^2 &\geq C \hat{E}^p_{\kappa, \lambda} ( \rho^5 | \nabla_t \phi |^2 + \rho^5 | \nabla_\rho \phi |^2 + \rho^3 | \nasla \phi |^2 ) \\
\notag &\qquad + \hat{E}^p_{\kappa, \lambda} [ \kappa ^2 - ( n - 2 ) \kappa + \sigma - (n - 1) ] | \phi |^2 \\
\notag &\qquad + C \lambda^2 \hat{E}^p_{\kappa, \lambda} \fb^{2p} | \phi |^2 + \nabla^\beta P_\beta \text{,} \\
\notag \rho^{-n} \cdot P ( \mc{N} ) &\leq \mc{C} [ | \nabla_t ( \rho^{-\kappa} \phi ) |^2 + | \nabla_\rho ( \rho^{-\kappa} \phi ) |^2 ] \\
\notag &\qquad + \mc{C} ( \lambda^2 + | \sigma | ) | \rho^{-\kappa - 1} \phi |^2 \text{.}
\end{align}
Here, $C$ and $\mc{C}$ satisfy the same assumptions as above, and
\begin{equation}
\label{eq.carleman_exp_bs} \hat{E}^p_{\kappa, \lambda} := \fb^{n - 2 - 2 \kappa} e^\frac{ - 2 \lambda \fb^p }{p} \text{.}
\end{equation}

The final step is to integrate \eqref{eq.phi_est_bs} and apply the divergence theorem, which results in \eqref{eq.carleman_bs} and completes the proof.
This process is similar to that in Section \ref{sec.proof_carleman_integral} but is simpler since all quantities here are smooth on $\mc{M} \cap \Omega_{ f_0, \rho_0 }$.
Thus, here we do not have to deal with any discontinuities on $\{ t = \frac{ \pi }{2} \}$.
\end{proof}

\subsection{Unique Continuation} \label{sec.proof_borderline}

We conclude this section with the analogue of Theorem \ref{thm.uc_ads} in the static borderline case.

\begin{theorem} \label{thm.uc_ads_bs}
Consider an $(n+1)$-dimensional borderline FG-aAdS segment
\[
( \mc{M}, g ) \text{,} \qquad \mc{M} = ( 0, \rho_\ast ) \times ( 0, \pi ) \times \mc{S} \text{,}
\]
and consider on $( \mc{M}, g )$ the wave equation \eqref{eq.wave}, for some $l \geq 0$ and $\sigma \in \R$.  
Furthermore, assume that the following properties hold:
\begin{enumerate}
\item The borderline pseudoconvexity property (see Definition \ref{def.pcp_bs}) holds on $\mc{I}$.

\item There exist $0 < p < 1$ and $C > 0$ such that for any $\phi \in \Gamma \ul{T}^0_l \mc{M}$,
\begin{equation}
\label{eq.uc_ads_rhs_bs} | \mc{G} ( \phi, \nabla \phi) |^2 \leq C \rho^p [ \rho^5 | \nabla_\rho \phi |^2 + \rho^5 | \nabla_t \phi |^2 + \rho^3 | \nasla \phi |^2 + \rho^{2p} | \phi |^2 ] \text{.}
\end{equation}
\end{enumerate}
Then, the local unique continuation property holds on $( \mc{M}, g )$ for \eqref{eq.wave}.
\end{theorem}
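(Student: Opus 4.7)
The proof will closely mirror the argument for Theorem \ref{thm.uc_ads} sketched in Section \ref{sec.proof_uc}, with Theorem \ref{thm.carleman_bs} replacing Theorem \ref{thm.carleman} throughout. The plan is to apply the borderline Carleman estimate to a suitable cutoff of $\phi$, use the hypothesis \eqref{eq.uc_ads_rhs_bs} to absorb the nonlinearity into the positive bulk terms, eliminate the boundary term at $\{\rho=\rho_0\}$ via the vanishing condition \eqref{eq.ucp_vanish}, and finally send $\lambda \to \infty$ to conclude that $\phi$ must vanish near $\mc{I}$.

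Concretely, I would fix $0 < \rho_0 \ll f_0 \ll_{g,l,p,K} 1$ and take a smooth cutoff $\bar\chi:[0,f_0]\to[0,1]$ as in \eqref{eq.chitoff}, setting $\chi := \bar\chi \circ \fb$. The function $\bar\phi := \chi \phi$ then vanishes identically near $\{\fb = f_0\}$ (so that the hypotheses of Theorem \ref{thm.carleman_bs} apply with $\kappa$ chosen as in \eqref{eq.ucp_vanish} and $p$ as in \eqref{eq.ucp_finite}), and $\bar\phi$ coincides with $\phi$ on $\{\fb < f_0/2\}$. A computation analogous to \eqref{eq.wave_cutoff}--\eqref{eq.F_bound}, using Proposition \ref{thm.f_bs} (specifically \eqref{eq.f_grad_bs} and \eqref{eq.f_box_bs}) in place of Proposition \ref{thm.f_lo}, together with the assumption \eqref{eq.uc_ads_rhs_bs} on $\mc{G}$, yields
\[
|(\Box + \sigma)\bar\phi|^2 \lesssim_{g,f_0} \rho^p(\rho^5|\nabla_t \phi|^2 + \rho^5|\nabla_\rho \phi|^2 + \rho^3|\nasla \phi|^2 + \rho^{2p}|\phi|^2) \quad \text{on } \{\fb \leq f_0/2\},
\]
while on the corona $\{f_0/2 \leq \fb \leq 3f_0/4\}$ the cutoff derivatives produce only an $f_0$-bounded error involving $\phi$ and its first derivatives.

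Applying \eqref{eq.carleman_bs} to $\bar\phi$ and splitting the bulk left-hand side over the interior region $\Omega_i = \{\fb < f_0/2\}$ and the corona region $\Omega_e = \{f_0/2 < \fb < 3f_0/4\}$, one sees that the interior contribution to the left-hand side is controlled pointwise by a small constant (the smallness coming from $\rho \lesssim \fb$) times the positive bulk terms appearing on the right-hand side of \eqref{eq.carleman_bs}. Choosing $\lambda$ sufficiently large, these interior terms get absorbed into the right-hand side. The boundary term at $\{\rho = \rho_0\}$ vanishes in the limit $\rho_0 \searrow 0$ by \eqref{eq.ucp_vanish} (together with $|\partial_\rho \chi| + |\partial_t \chi| \lesssim_{g,f_0}\rho^{-1}$), leaving an inequality of the form
\[
\int_{\Omega_e} \hat{E}^p_{\kappa,\lambda}\bigl(|\phi|^2 + \rho^2|\nabla_\rho\phi|^2 + \rho^2|\nabla_t\phi|^2 + \rho^{2+p}|\nasla\phi|^2\bigr) \gtrsim \lambda \int_{\Omega_i} \hat{E}^p_{\kappa,\lambda}\,\rho^{2p}|\phi|^2,
\]
where $\hat{E}^p_{\kappa,\lambda}$ is as in \eqref{eq.carleman_exp_bs}. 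Using that $\hat{E}^p_{\kappa,\lambda}$ is uniformly bounded above on $\Omega_e$ by its minimum value on $\Omega_i$ (since $\fb \mapsto \hat{E}^p_{\kappa,\lambda}$ is monotone at fixed $f_0$), and invoking the finiteness condition \eqref{eq.ucp_finite} to ensure the left-hand side is finite, one lets $\lambda \nearrow \infty$ to conclude $\phi \equiv 0$ on $\{\fb < f_0/2\}$, an open neighborhood of $\mc{I}$.

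The one point requiring slight care, and what would be the main bookkeeping obstacle, is ensuring the matching of weights: the hypothesis \eqref{eq.uc_ads_rhs_bs} carries one extra power of $\rho$ on each derivative term relative to \eqref{eq.uc_ads_rhs} (namely $\rho^5, \rho^5, \rho^3$ instead of $\rho^4, \rho^4, \rho^2$), precisely compensating the one-power-of-$\rho$ degeneration in the bulk of \eqref{eq.carleman_bs} versus \eqref{eq.carleman}. This matching is exactly what enables the nonlinear terms generated from $\mc{G}(\phi, \nabla\phi)$ to be absorbed into the positive bulk gradient terms in \eqref{eq.carleman_bs} for $\lambda$ sufficiently large. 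Once one has verified this bookkeeping, the remainder of the argument is identical to that of Theorem \ref{thm.uc_ads}, and in fact is technically simpler since $\fb$ is everywhere smooth and no spacelike matching terms of the form \eqref{eq.match_Y} arise.
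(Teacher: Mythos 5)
Your proposal is correct and follows exactly the route the paper takes: the paper's entire proof of Theorem~\ref{thm.uc_ads_bs} is the one-sentence observation that the argument of Theorem~\ref{thm.uc_ads} transfers verbatim once Theorem~\ref{thm.carleman_bs} is substituted for Theorem~\ref{thm.carleman}, with details left to the reader. You have supplied those details faithfully, including the two points that actually need checking --- that the extra power of $\rho$ in \eqref{eq.uc_ads_rhs_bs} exactly compensates the degeneration of the bulk terms in \eqref{eq.carleman_bs}, and that the smoothness of $\fb$ removes the spacelike matching term $\mc{Y}$ from the integral step.
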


Theorem \ref{thm.uc_ads_bs} is proved in a completely analogous manner as Theorem \ref{thm.uc_ads}, except we now use the borderline Carleman estimate, Theorem \ref{thm.carleman_bs}, in the place of Theorem \ref{thm.carleman}.
The details are left to the reader.

\begin{remark}
By inspecting the proof of Theorem \ref{thm.uc_ads_bs}, one can observe that in the borderline case, the finiteness condition \eqref{eq.ucp_vanish} can be replaced by
\[
\int_{ \mc{M} } \rho^{3+p} | \nasla \phi |^2 < \infty \text{,} \qquad 0 < p < 1 \text{.}
\]
\end{remark}

\appendix

\section{Reduction to the Fefferman-Graham Gauge} \label{sec.fg}

In this appendix, we prove the following theorem, which demonstrates that any admissible aAdS segment can be rewritten as an admissible FG-aAdS segment via an appropriate change of coordinates.

\begin{theorem} \label{thm.fefferman_graham}
Let $( \ul{\mc{M}}, \ul{g} )$ be an admissible aAdS segment, with
\[
\ul{\mc{M}} := ( 0, \ul{\rho}_\ast ) \times \ul{\mc{I}} := ( 0, \ul{\rho}_\ast ) \times ( \ul{T}_-, \ul{T}_+ ) \times \mc{S} \text{,}
\]
and with induced AdS-type boundary $( \ul{\mc{I}}, \gm )$.
Then, the following statements hold:
\begin{enumerate}
\item Given any $\ul{T}_- < T_- < T_+ < \ul{T}_+$, there exists $\rho_\ast > 0$ and an isometry $\Phi$ between an open subset $\ul{\mc{D}}$ of $\ul{\mc{M}}$ and an admissible FG-aAdS segment
\[
( \mc{M}, g ) \text{,} \qquad \mc{M} := ( 0, \rho_\ast ) \times \mc{I} := ( 0, \rho_\ast ) \times ( T_-, T_+ ) \times \mc{S} \text{,}
\]
with induced AdS-type boundary $( \mc{I}, \gm |_{ \mc{I} } )$.

\item Letting $\urho$ and $\rho$ denote the projections to the first components of $\ul{\mc{M}}$ and $\mc{M}$, respectively, then the following comparison holds:\footnote{Much more precise comparisons hold; these can be found within the proof.}
\begin{equation}
\label{eq.fg_rho_est} \rho \simeq \urho \circ \Phi \text{.}
\end{equation}
In particular, $\Phi^{-1}$ maps the conformal boundary $\mc{I}$ of $\mc{M}$ to $\ul{\mc{I}}$.
\end{enumerate}
\end{theorem}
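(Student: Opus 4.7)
The plan is to construct a new radial function $\rho$ on $\ul{\mc M}$ near $\ul{\mc I}$ as the solution of the eikonal equation $\ul g^{-1}(d\rho, d\rho) = \rho^2$, and then to extend the boundary coordinates $(t, x^A)$ inward by requiring them to be constant along the integral curves of the $\ul g$-gradient of $\rho$. The map $\Phi$ of the theorem is then the assignment of the new coordinates $(\rho, t, x^A)$ to each point of $\ul{\mc M}$. In these coordinates the Fefferman--Graham conditions $g_{\rho\rho} = \rho^{-2}$ and $g_{\rho a} = 0$ follow, upon inverting the metric, respectively from the eikonal and from the transport condition on $(t, x^A)$; injectivity of $\Phi$ near $\ul{\mc I}$ reduces to transversality of the characteristic flow, which holds since the new radial direction is noncharacteristic at $\urho = 0$.

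For solvability of the eikonal, I would make the ansatz $\rho = \urho \cdot \Omega(\urho, t, x^A)$ with $\Omega|_{\urho = 0} = 1$. Using the asymptotic expansions of $\ul g^{-1}$ inherited from Definition \ref{def.aads_gen}---in particular $\ul g^{\urho\urho} = \urho^2 + \mc O(\urho^4)$, $\ul g^{\urho a} = \mc O(\urho^5)$, $\ul g^{ab} = \urho^2 \uggm^{ab} + \mc O(\urho^4)$---the eikonal for $\rho$ reduces, after dividing by $\urho^2$ and implicitly resolving for $\partial_\urho \Omega$, to an evolution equation of the form
\[
\partial_\urho \Omega = \urho \cdot F(\urho, t, x^A, \Omega, \partial_a \Omega)
\]
with $F$ smooth and initial data $\Omega|_{\urho = 0} = 1$ (which is noncharacteristic). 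A Picard iteration anchored at $\Omega \equiv 1$ yields a unique smooth solution on a neighborhood $\ul{\mc D}$ of the trimmed boundary $(T_-, T_+) \times \mc S$; shrinking $\rho_\ast$ if necessary, one arranges $\Phi(\ul{\mc D}) \supseteq (0, \rho_\ast) \times (T_-, T_+) \times \mc S$. Reading off the leading-order behavior of the equation gives $\partial_\urho \Omega|_{\urho = 0} = 0$, hence $\Omega = 1 + \mc O(\urho^2)$ and $\rho = \urho + \mc O(\urho^3)$, which immediately yields the comparison \eqref{eq.fg_rho_est}.

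The remaining task is to verify that the metric $g$ expressed in the new coordinates satisfies Definition \ref{def.aads}. Writing $g = \rho^{-2}[d\rho^2 + \mf g(\rho)]$ with $\mf g(\rho)$ tangent to the level sets of $\rho$ (automatic from $g_{\rho a} = 0$), one Taylor-expands the coefficients of $g$ at $\rho = 0$; the leading term of $\mf g(\rho)$ is identified with the pullback of the original boundary metric $\gm|_{\ul{\mc I}}$ via the identification of $(t, x^A)$, and the higher-order coefficients inherit tangency to $\mc I$ from $g_{\rho a} = 0$ together with smoothness from that of $\Omega$. The main obstacle will be propagating the $\mc O_\varphi$-type bounds of Definition \ref{def.O_scr} through the construction, so that the resulting FG-aAdS segment admits a bounded family of coordinates; this is handled by differentiating the evolution equation for $\Omega$ arbitrarily often in the tangential $(t, x^A)$-directions---an operation preserving the noncharacteristic structure at $\urho = 0$---and applying standard propagation-of-regularity estimates for the quasilinear evolution above, yielding the required uniform $\urho$-weighted bounds up to the boundary.
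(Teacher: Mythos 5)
Your overall strategy is the standard Fefferman--Graham normal-form construction: find a geodesic defining function $\rho$ solving the eikonal $\ul g^{-1}(d\rho, d\rho) = \rho^2$ with $\rho \sim \urho$ near $\ul{\mc I}$, then transport the boundary coordinates along the normal flow to kill the $d\rho\,dx^a$ cross terms. This is sound in outline and is closely related to the paper's argument, which instead builds Gaussian normal coordinates for the conformal metric $\urho^2\,\ul g$ by integrating its geodesic ODEs outward from $\{\urho = 0\}$ and then reparametrizes radially; the geodesics are precisely the characteristics of the relevant eikonal, so the two routes are essentially the same once the eikonal is actually solved.

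The genuine gap is your claimed solvability by Picard iteration. The reduced equation $\partial_\urho\Omega = \urho\cdot F(\urho, t, x^A, \Omega, \partial_a\Omega)$ is a \emph{fully nonlinear} first-order PDE: $F$ depends quadratically on the tangential gradient through a term proportional to $\Omega^{-1}\uggm^{ab}\partial_a\Omega\,\partial_b\Omega$, which is inherited from the tangential block $\ul g^{ab}\partial_a\rho\,\partial_b\rho$ of the eikonal. The naive Picard scheme $\Omega_{n+1} = 1 + \int_0^{\urho} s\,F(s,\cdot,\Omega_n,\partial_a\Omega_n)\,ds$ does not converge in any fixed $C^k$ or Sobolev class, because differentiating $\Omega_{n+1}$ tangentially requires $\partial_a\partial_b\Omega_n$; one derivative is lost per iterate, and the iteration map is not a contraction. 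The $\urho$-factor in front of $F$ gives $\urho$-smallness, not tangential smoothing, and does not repair this. To close the step you should solve the Hamilton--Jacobi characteristic ODE system for the eikonal (a well-posed ODE problem to which Picard legitimately applies, and which is in effect what the paper does by integrating the $\urho^2\ul g$-geodesic equations), or else invoke an abstract Cauchy--Kowalevski/Ovsyannikov argument in a scale of Banach spaces. Once a smooth $\Omega$ is secured by one of these, your verification of the $\mc O_\varphi$-asymptotics by repeatedly differentiating the evolution equation tangentially is fine.
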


\begin{remark}
In practice, the aAdS segments on which our unique continuation theorems apply are subsets of larger spacetimes.
Thus, the (arbitrarily small) shortening of the time interval in Theorem \ref{thm.fefferman_graham} would not result in any loss of generality.
\end{remark}

The remainder of this appendix is dedicated to the proof of Theorem \ref{thm.fefferman_graham}.
Let $\urho$ and $\ut$ be the usual projections on $\ul{\mc{M}}$, and let $\ul{h} := \urho^2 \ul{g}$.
Recall from \eqref{eq.aads_gen} that
\begin{align}
\label{eq.aads_h} \ul{h} &= ( 1 + \urho^2 \uggs_{ \rho \rho } ) d \urho^2 + ( - d \ut^2 + \uggm_{A B} d \ux^A d \ux^B + \urho^2 \uggs_{a b} d \ux^a d \ux^b ) \\
\notag &\qquad + \uO ( \urho^3 ) \cdot d \ux^\alpha d \ux^\beta \text{.}
\end{align}
A few notational clarifications are in order here:
\begin{itemize}
\item To avoid clutter, we do not underline symbols in superscript and subscript indices.
For $( \ul{\mc{M}}, \ul{g} )$-related quantities (e.g., $\ul{h}$, $\uggm$, $\uggs$), indices are understood to be with respect to $( \urho, \ut, \ux^A )$-coordinates.

\item We define the class $\uO ( \zeta )$ to be as in Definition \ref{def.O_scr}, except now to be with respect to the underlined $( \urho, \ut, \ux^A )$-coordinate systems.
\end{itemize}

Let $\ul{\Lambda}^\mu_{\alpha \beta}$ denote the Christoffel symbols for $\ul{h}$, in the $( \urho, \ut, \ux^A )$-coordinates.
From direct computations using \eqref{eq.aads_h}, we see that
\begin{align}
\label{eq.aads_h_deriv} \ul{\Lambda}^\rho_{ \rho \rho } = \uggs_{ \rho \rho } \urho + \uO ( \urho^2 ) \text{,} &\qquad \ul{\Lambda}^\rho_{ \rho a } = \uO ( \urho^2 ) \text{,} \\
\notag \ul{\Lambda}^\rho_{ a b } = - \ggs_{ab} \ul{\rho} + \uO ( \urho^2 ) \text{,} &\qquad \ul{\Lambda}^a_{ \rho \rho } = \uO ( \urho^2 ) \text{,} \\
\notag \ul{\Lambda}^a_{ \rho b } = \uggm^{a d} \uggs_{b d} \urho + \uO ( \urho^2 ) \text{,} &\qquad \ul{\Lambda}^a_{ b c } = \uO ( 1 ) \text{.}
\end{align}

\subsection{Geodesic Coordinates}

First, note that we can formally extend $\ul{h}$ to $\ul{\rho} \leq 0$ by dropping error terms and defining
\begin{equation}
\label{eq.aads_h_ext} \ul{h} |_{ \urho \leq 0 } := ( 1 + \urho^2 \uggs_{ \rho \rho } ) d \urho^2 + ( - d \ut^2 + \uggm_{A B} d \ux^A d \ux^B + \rho^2 \uggs_{a b} d \ux^a d \ux^b ) \text{.}
\end{equation}
Observe that this extended $\ul{h}$ is $C^2$ in $\urho$ and smooth in the remaining $\ux^a$-coordinates.
Thus, it makes sense to speak of derivatives of quantities ``at $\ul{\mc{I}} = \{ \urho = 0 \}$".

Let $\gamma$ be the family of $\ul{h}$-geodesics beginning on $\urho = 0$ and satisfying the initial conditions $\gamma' |_{ \urho = 0 } = \partial_{ \urho }$.
Let $\uaff$ denote the affine parameters (with respect to $\ul{h}$) of these $\gamma$, with $\uaff = 0$ on $\urho = 0$.
Given coordinates $( \ux^a ) = ( \ut, \ux^A )$ on $\urho = 0$, we define coordinates $( x^a ) := ( t, x^A )$ on the spacetime by transporting the $\ux^a$'s along $\gamma$.

Consider now the coordinates $( \uaff, x^a )$, and note that
\[
\ul{D}_{ \partial_{ \uaff } } \partial_{ \uaff } = 0 \text{,} \qquad \ul{h} ( \partial_{ \uaff }, \partial_{ \uaff } ) = \ul{h} ( \partial_{ \uaff }, \partial_{ \uaff } ) |_{ \urho = 0 } = 1 \text{,}
\]
where $\ul{D}$ is the Levi-Civita connection for $\ul{h}$.
Furthermore, we have
\[
\partial_{ \uaff } [ \ul{h} ( \partial_{ \uaff }, \partial_{ x^a } ) ] = \ul{h} ( \partial_{ \uaff } , D_{ \partial_{ \uaff } } \partial_{ x^a } ) = \ul{h} ( \partial_{ \uaff } , D_{ \partial_{ x^a } } \partial_{ \uaff } ) = 0 \text{.}
\]
Thus, we can write $\ul{h}$ as
\begin{equation}
\label{eq.fg_met_tilde} \ul{h} = d \uaff^2 + \ul{h}_{ a b } d x^a d x^b \text{.}
\end{equation}

In addition, we now restrict ourselves to $t \in ( T_-, T_+ )$, so that the geodesic $\gamma$ emanating from each $P \in \{ \urho = 0 \}$ with $t (P) \in (T_-, T_+)$ exists for some uniform interval $\uaff \in ( - \uaff_0, \uaff_0 )$ while remaining in a compact subset of the extended $\ul{\mc{M}}$.
In particular, in this region, all quantities under consideration will be bounded.

The remaining goal of this subsection is to compare vector fields in the $( \uaff, x^a )$-coordinates with those in the $( \urho, \ux^a )$-coordinates.
For this, we define the coefficients
\begin{equation}
\label{eq.fg_XA} \partial_{ \uaff } := X^\rho \partial_{ \urho } + X^a \partial_{ \ux^a } \text{,} \qquad \partial_{ x^a } = A_a^\rho \partial_{ \urho } + A_a^b \partial_{ \ux^b } \text{.}
\end{equation}
The goal, then, is to control the $X^\alpha$'s and $A_a^\alpha$'s.

\subsubsection{Bounds for the $X^\alpha$'s}

First, we note that
\[
0 = \ul{D}_{ \partial_{ \uaff } } \partial_{ \uaff } = \partial_{ \uaff } ( X^\mu ) \cdot \partial_\mu + X^\alpha X^\beta \ul{\Lambda}^\mu_{ \alpha \beta } \cdot \partial_\mu \text{,}
\]
which expands to a system of differential equations:
\begin{align}
\label{eq.fg_ev} \partial_{ \uaff } X^\rho &= - ( X^\rho )^2 [ \uggs_{\rho \rho} \ul{\rho} + \uO ( \urho^2 ) ] + X^a X^b [ \uggs_{ab} \urho + \uO ( \urho^2 ) ] + 2 X^\rho X^a \cdot \uO ( \urho^2 ) \text{,} \\
\notag \partial_{ \uaff } X^c &= ( X^\rho )^2 \cdot \uO ( \urho^2 ) - 2 X^\rho X^a \cdot [ \uggm^{cd} \uggs_{ad} \urho + \uO ( \urho^2 ) ] - X^a X^b \cdot \ul{\Lambda}^c_{ a b } \text{.}
\end{align}
Moreover, note that by \eqref{eq.aads_h_ext}, the equations \eqref{eq.fg_ev} extend to $\urho \leq 0$, with the $\uO ( \urho^2 )$-terms vanishing on $\urho \leq 0$.
The equations also imply that the $X^\alpha$'s are twice continuously differentiable in $\urho$ in this extended region.

We can now determine the asymptotics of the $X^\alpha$'s at $\uaff = 0$.
By definition,
\begin{equation}
\label{eq.fg_init_0} X^\rho |_{ \uaff = 0 } = 1 \text{,} \qquad X^a |_{ \uaff = 0 } = 0 \text{.}
\end{equation}
Furthermore, the evolution equations \eqref{eq.fg_ev} imply that
\begin{equation}
\label{eq.fg_init_1} \partial_{ \uaff } ( X^\alpha ) |_{ \uaff = 0 } = 0 \text{.}
\end{equation}
For second derivatives, we differentiate \eqref{eq.fg_ev}.
Noting $\partial_{ \uaff } \urho = X^\rho$, we have
\begin{align}
\label{eq.fg_init_2} \partial_{ \uaff }^2 ( X^\rho ) |_{ \uaff = 0 } &= - ( X^\rho )^2 \cdot \uggs_{\rho \rho} X^\rho |_{ \uaff = 0 } + X^a X^b \cdot \uggs_{ab} X^\rho |_{ \uaff = 0 } = - \uggs_{\rho \rho} \text{,} \\
\notag \partial_{ \uaff }^2 ( X^a ) |_{ \uaff = 0 } &= - 2 X^\rho X^a \cdot \uggm^{cd} \uggs_{ad} X^\rho |_{ \uaff = 0 } - \partial_{ \uaff } ( X^a X^b \cdot \ul{\Lambda}^c_{ a b } ) |_{ \uaff = 0 } = 0 \text{.}
\end{align}

Since the $\ux^a$- and $x^a$-coordinates coincide at $\uaff = 0$, we can also use \eqref{eq.fg_ev} to compute some higher derivatives at $\uaff = 0$.
Indeed, given any integer $l > 0$ and arbitrary indices $a_1, \dots, a_l$, we have that
\begin{align}
\label{eq.fg_init_high} \partial_{ x^{a_1} } \dots \partial_{ x^{a_l} } X^\alpha |_{ \uaff = 0 } &= 0 \text{,} \\
\notag \partial_{ \uaff } \partial_{ x^{a_1} } \dots \partial_{ x^{a_l} } X^\alpha |_{ \uaff = 0 } &= 0 \text{,} \\
\notag \partial_{ \uaff }^2 \partial_{ x^{a_1} } \dots \partial_{ x^{a_l} } X^\rho |_{ \uaff = 0 } &= - \partial_{ \ux^{a_1} } \dots \partial_{ \ux^{a_l} } \uggs_{\rho \rho} \text{,} \\
\notag \partial_{ \uaff }^2 \partial_{ x^{a_1} } \dots \partial_{ x^{a_l} } X^a |_{ \uaff = 0 } &= 0 \text{.}
\end{align}

Since we restrict ourselves to a relatively compact region in $\ul{\mc{M}}$, then applying Taylor's theorem along with \eqref{eq.fg_init_0}-\eqref{eq.fg_init_high} yields
\begin{equation}
\label{eq.fg_taylor_est} \left| \partial_{ x^{a_1} } \dots \partial_{ x^{a_l} } \left( X^\rho - 1 + \frac{1}{2} \uggs_{\rho \rho} \uaff^2 \right) \right| \lesssim_l \uaff^3 \text{,} \qquad | \partial_{ x^{a_1} } \dots \partial_{ x^{a_l} } X^a | \lesssim_l \uaff^3
\end{equation}
for any $l > 0$ and indices $a_1, \dots, a_l$.
Since $\partial_{ \uaff } \urho = X^\rho$, then integrating the first inequality in \eqref{eq.fg_taylor_est} with respect to $\uaff$ results in the estimate
\begin{equation}
\label{eq.fg_taylor_rho} \left| \partial_{ x^{a_1} } \dots \partial_{ x^{a_l} } \left( \ul{\rho} - \uaff + \frac{1}{6} \ul{\bar{\mf{g}}}_{\rho \rho} \tilde{\rho}^3 \right) \right| \lesssim_l \uaff^4 \text{.}
\end{equation}

Note also that \eqref{eq.fg_init_1}-\eqref{eq.fg_init_high} imply
\begin{align}
\label{eq.fg_taylor_est_deriv} \left| \partial_{ x^{a_1} } \dots \partial_{ x^{a_l} } \partial_{ \uaff } \left( X^\rho - 1 + \frac{1}{2} \uggs_{\rho \rho} \uaff^2 \right) \right| \lesssim_l \uaff^2 \text{,} &\qquad | \partial_{ x^{a_1} } \dots \partial_{ x^{a_l} } \partial_{ \uaff } X^c | \lesssim_l \uaff^2 \text{,} \\
\notag \left| \partial_{ x^{a_1} } \dots \partial_{ x^{a_l} } \partial_{ \uaff }^2 \left( X^\rho - 1 + \frac{1}{2} \uggs_{\rho \rho} \uaff^2 \right) \right| \lesssim_l \uaff \text{,} &\qquad | \partial_{ x^{a_1} } \dots \partial_{ x^{a_l} } \partial_{ \uaff }^2 X^c | \lesssim_l \uaff \text{.}
\end{align}

Next, by induction, we can take successive $\uaff$-derivatives of \eqref{eq.fg_ev} and control the left-hand side by the right-hand side (which is lower-order), using bounds already obtained in the previous iteration.
(In particular, throughout these differentiations, we recall that $\partial_{ \uaff } ( \urho - \uaff ) = X^\rho - 1$.)
From this process, we obtain
\begin{align}
\label{eq.fg_taylor_est_gen} \left| \partial_{ x^{a_1} } \dots \partial_{ x^{a_l} } \partial_{ \uaff }^k \left( X^\rho - 1 + \frac{1}{2} \uggs_{\rho \rho} \uaff^2 \right) \right| &\lesssim_l \uaff^{3 - k} \text{,} \\
\notag | \partial_{ x^{a_1} } \dots \partial_{ x^{a_l} } \partial_{ \uaff }^k X^c | &\lesssim_l \uaff^{3 - k} \text{,}
\end{align}
for each nonnegative integer $k$.
From \eqref{eq.fg_taylor_est_gen}, we obtain the asymptotic bounds
\begin{align}
\label{eq.fg_X_asymp} X^\rho - 1 + \frac{1}{2} \uggs_{\rho \rho} \uaff^2 &= \mc{O}_\sigma ( \uaff^3 ) \text{,} \\
\notag X^a &= \mc{O}_\sigma ( \uaff^3 ) \text{,} \\
\notag \urho - \uaff + \frac{1}{6} \uggs_{\rho \rho} \uaff^3 &= \mc{O}_\sigma ( \uaff^4 ) \text{,}
\end{align}
where we define $\mc{O}_\sigma ( \zeta )$ as in Definition \ref{def.O_scr}, but with respect to $( \uaff, x^a )$-coordinates.

\subsubsection{Bounds for the $A_a^\alpha$'s}

Observe that the $A_a^\alpha$'s satisfy
\begin{equation}
\label{eq.fg_A_obs} A_a^\rho = \partial_{ x^a } \ul{\rho} \text{,} \qquad A_a^b = \partial_{ x^a } \ul{x}^b \text{,} \qquad \partial_{ \uaff } A_a^b = \partial_{ x^a } X^b \text{.}
\end{equation}
Thus, from \eqref{eq.fg_X_asymp} and \eqref{eq.fg_A_obs}, we conclude that
\begin{equation}
\label{eq.fg_A_asymp} A_a^\rho + \frac{1}{6} \partial_{ x^a } \uggs_{ \rho \rho } \uaff^3 = \mc{O}_\sigma ( \uaff^4 ) \text{,} \qquad A_a^b = \delta_a^b + \mc{O}_\sigma ( \uaff^4 ) \text{.}
\end{equation}

\subsection{The Metric Expansion}

From \eqref{eq.fg_X_asymp} and \eqref{eq.fg_A_asymp}, we conclude that\footnote{From \eqref{eq.fg_X_asymp} and \eqref{eq.fg_A_asymp}, we infer that the classes $\uO ( \zeta )$ and $\mc{O}_\sigma ( \zeta )$ coincide.}
\begin{align}
\label{eq.fg_h_ab} \ul{h} ( \partial_{ x^a }, \partial_{ x^b } ) &= A_a^c A_b^d \cdot \ul{h} ( \partial_{ \ux^c }, \partial_{ \ux^d } ) + A_a^c A_b^\rho \cdot \ul{h} ( \partial_{ \ux^c }, \partial_{ \urho } ) \\
\notag &\qquad + A_a^\rho A_b^d \cdot \ul{h} ( \partial_{ \urho }, \partial_{ \ux^d } ) + A_a^\rho A_b^\rho \cdot \ul{h} ( \partial_{ \urho }, \partial_{ \urho } ) \\
\notag &= \ul{h} ( \partial_{ \ux^a }, \partial_{ \ux^b } ) + \mc{O}_\sigma ( \uaff^4 ) \text{.}
\end{align}
It then follows from \eqref{eq.aads_h}, \eqref{eq.fg_met_tilde}, and \eqref{eq.fg_h_ab} that
\begin{equation}
\label{eq.fg_g_pre} g = \urho^{-2} d \uaff^2 + \ul{\rho}^{-2} [ \uggm_{ab} + \uggs_{ab} \urho^2 + \mc{O}_\sigma ( \uaff^3 ) ] d x^a d x^b \text{.}
\end{equation}

\subsubsection{The Radial Normalization}

We make one final change of variables $\uaff \mapsto \rho$, satisfying that $\rho = 0$ and $\uaff = 0$ coincide at $\ul{\mc{I}}$, and that
\begin{equation}
\label{eq.fg_rho_def} \frac{ d \uaff }{ \urho } = \frac{ d \rho }{ \rho } \text{.}
\end{equation}
Rearranging the above and recalling the last part of \eqref{eq.fg_X_asymp} yields
\begin{equation}
\label{eq.fg_rho_1} \frac{ d ( \log \rho ) }{ d \uaff } = \frac{1}{ \urho } = \frac{1}{ \uaff } + \frac{1}{6} \uggs_{ \rho \rho } \uaff + \mc{O}_\sigma ( \uaff^2 ) \text{,}
\end{equation}
and integrating \eqref{eq.fg_rho_1} results in the relation
\begin{equation}
\label{eq.fg_rhos_pre} \rho = \uaff \cdot e^{ \frac{1}{12} \uggs_{ \rho \rho } \uaff^2 + \mc{O}_\sigma ( \uaff^3 ) } = \uaff + \frac{1}{12} \uggs_{ \rho \rho } \uaff^3 + \mc{O}_\sigma ( \uaff^4 ) \text{.}
\end{equation}

Inverting the relation in \eqref{eq.fg_rhos_pre} results in the expansion
\begin{equation}
\label{eq.fg_rhos} \uaff = \rho - \frac{1}{12} \uggs_{ \rho \rho } \rho^3 + \mc{O} ( \rho^4 ) \text{,}
\end{equation}
while \eqref{eq.fg_X_asymp} and \eqref{eq.fg_rhos} imply
\begin{equation}
\label{eq.fg_rhoss} \urho = \rho - \frac{1}{4} \uggs_{ \rho \rho } \rho^3 + \mc{O} ( \rho^4 ) \text{.}
\end{equation}
Applying \eqref{eq.fg_rho_def} and \eqref{eq.fg_rhoss} to \eqref{eq.fg_g_pre} yields the FG-aAdS expansion
\begin{equation}
\label{eq.fg_final} g = \rho^{-2} d \rho^2 + \rho^{-2} \left[ \uggm_{ab} + \left( \uggs_{ab} + \frac{1}{2} \uggs_{ \rho \rho } \uggm_{ab} \right) \rho^2 + \mc{O} ( \rho^3 ) \right] d x^a d x^b \text{.}
\end{equation}

Finally, to complete the proof, we set the isometry $\Phi$ to be the map represented by the compositions of the coordinate transformations
\[
( \urho, \ux^a ) \mapsto ( \uaff, x^a ) \mapsto ( \rho, x^a ) \text{.}
\]

\section{Einstein-Vacuum Spacetimes} \label{sec.vacuum}

Let $( \mc{M}, g )$ denote an admissible FG-aAdS segment.
In this appendix, we briefly elaborate on the case in which $( \mc{M}, g )$ also satisfies the Einstein-vacuum equations\footnote{We choose this particular normalization of the cosmological constant in equation \eqref{eq.einstein_vacuum} so that the AdS (and Kerr-AdS) metric has the expansion \eqref{eq.aads_ads} at infinity.}
\begin{equation}
\label{eq.einstein_vacuum} \operatorname{Ric} [g] = - \frac{ n (n - 1) }{2} g \text{.}
\end{equation}
Recall (see \cite{grah_witt:conf_ads}, for instance) the following:

\begin{proposition} \label{thm.einstein_vacuum}
Suppose $n \geq 3$, and suppose also that $( \mc{M}, g )$ satisfies \eqref{eq.einstein_vacuum}.
Then, $-\gs$ is precisely the Schouten tensor associated with $( \mc{I}, \gm )$,
\begin{equation}
\label{eq.schouten} \isch_{a b} := \frac{1}{n - 2} \left[ R\mathring{i}c_{a b} - \frac{1}{ 2 (n - 1) } \mathring{R} \cdot \gm_{a b} \right] \text{,}
\end{equation}
where $R\mathring{i}c$ and $\mathring{R}$ denote the Ricci and scalar curvatures on $( \mc{I}, \gm )$.
\end{proposition}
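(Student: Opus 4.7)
The plan is to derive the claim by asymptotic matching of the Einstein equation \eqref{eq.einstein_vacuum} when $g$ is expressed in Fefferman--Graham form $g = \rho^{-2}(d\rho^2 + \mf{g}(\rho))$ with $\mf{g}(\rho) = \gm + \rho^2 \gs + \rho^3 \gb + {\scriptstyle\mathcal{O}}(\rho^3)$. Since $g_{\rho\rho} = \rho^{-2}$ and $g_{\rho a} = 0$, the $\rho$-foliation is in a normal (geodesic) gauge, which gives explicit formulas for the Christoffel symbols, and hence the Ricci tensor of $g$, purely in terms of $\mf{g}(\rho)$ and its $\rho$-derivatives. The standard Fefferman--Graham computation yields schematic identities of the form
\begin{align*}
\operatorname{Ric}[g]_{ab} + n g_{ab} &= -\tfrac{1}{2}\partial_\rho^2 \mf{g}_{ab} + \tfrac{n-1}{2\rho}\mf{g}'_{ab} + R\mathring{i}c[\mf{g}]_{ab} + Q_{ab}(\mf{g},\mf{g}'),\\
\operatorname{Ric}[g]_{\rho\rho} + n g_{\rho\rho} &= -\tfrac{1}{2}\mf{g}^{ab}\partial_\rho^2 \mf{g}_{ab} + \tfrac{n-1}{2\rho}\mf{g}^{ab}\mf{g}'_{ab} + Q(\mf{g},\mf{g}'),
\end{align*}
together with an analogous Codazzi-type identity for $\operatorname{Ric}[g]_{\rho a}$, where $Q_{ab}$ and $Q$ denote tensorial expressions quadratic in $\mf{g}^{-1}\mf{g}'$ and $R\mathring{i}c[\mf{g}(\rho)]$ is the intrinsic Ricci tensor of the $\rho$-level set.

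The second step is to substitute the expansion of $\mf{g}$ and match the $\rho^0$ coefficients on both sides of \eqref{eq.einstein_vacuum}. Since $\mf{g}'|_{\rho=0}=0$ with $\rho^{-1}\mf{g}'|_{\rho=0} = 2\gs$ and $\partial_\rho^2\mf{g}|_{\rho=0} = 2\gs$, the apparently singular terms combine into finite limits. The tangential component reduces to an algebraic relation of the form $R\mathring{i}c[\gm]_{ab} + (n-2)\gs_{ab} + \mr{tr}_\gm(\gs) \cdot \gm_{ab} = 0$, while the leading normal component yields the trace identity $\iric + 2(n-1)\mr{tr}_\gm(\gs) = 0$. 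Eliminating $\mr{tr}_\gm(\gs)$ between these two produces precisely the formula \eqref{eq.schouten} for $-\gs$. The $\rho a$-component yields a first-order differential constraint on $\gs$ which, upon substituting $\gs = -\isch$, reduces to the twice-contracted second Bianchi identity for $\gm$, and so functions as a consistency check rather than an independent equation.

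The main obstacle is the bookkeeping required to derive the schematic Ricci expansion above and the tracking of normalization constants, which I would carry out following the standard reference \cite{grah_witt:conf_ads}. The assumption $n\geq 3$ is essential at the inversion step: the coefficient $n-2$ must be nonzero in order to solve the linear system for $\gs$. When $n=2$ the leading-order tangential equation fixes only the trace of $\gs$, leaving the trace-free part undetermined; relatedly, in even dimensions the FG expansion may be obstructed by logarithmic terms not present in \eqref{expand}, which explains the dimensional restriction.
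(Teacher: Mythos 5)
Your overall approach --- Fefferman--Graham asymptotic matching --- is exactly the standard derivation in the cited reference \cite{grah_witt:conf_ads}, which the paper invokes in lieu of a proof, and your final identification $-\gs = \isch$ is correct. The tangential relation you state at order $\rho^0$, namely $R\mathring{i}c[\gm]_{ab} + (n-2)\gs_{ab} + \mr{tr}_{\gm}(\gs)\,\gm_{ab} = 0$, is also correct, though your displayed schematic for $\operatorname{Ric}[g]_{ab} + n g_{ab}$ is missing the term $\tfrac{1}{2\rho}\,\mr{tr}(\mf{g}^{-1}\mf{g}')\,\mf{g}_{ab}$, which is what produces the $\mr{tr}_{\gm}(\gs)\,\gm_{ab}$ piece once you match at $\rho^0$ (this term is linear, not quadratic, in $\mf{g}^{-1}\mf{g}'$ and carries a $\rho^{-1}$, so it does not fit in your $Q_{ab}$).

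There is, however, a concrete error in where you source the trace identity. You attribute $\mathring{R} + 2(n-1)\mr{tr}_{\gm}(\gs) = 0$ to the leading normal component, but a direct computation gives
\[
\operatorname{Ric}[g]_{\rho\rho} + n\,g_{\rho\rho}
= -\tfrac{1}{2}\,\mf{g}^{ab}\mf{g}''_{ab}
  + \tfrac{1}{4}\,\mr{tr}\bigl((\mf{g}^{-1}\mf{g}')^2\bigr)
  + \tfrac{1}{2\rho}\,\mf{g}^{ab}\mf{g}'_{ab}\,,
\]
so the coefficient of $\mf{g}^{ab}\mf{g}'_{ab}$ is $\tfrac{1}{2\rho}$, not $\tfrac{n-1}{2\rho}$, and substituting $\mf{g} = \gm + \rho^2\gs + \rho^3\gb + \cdots$ shows the $\rho^0$ coefficient vanishes identically regardless of $\gs$; the first nontrivial information from this equation is at order $\rho$ and reads $\mr{tr}_{\gm}(\gb) = 0$. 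This is not an accident: the $\rho\rho$- and $\rho a$-components are constraint equations propagated by the second Bianchi identity (as you correctly note for the $\rho a$-component), so they supply nothing about $\gs$ that is not already in the tangential equation. The identity $\mathring{R} + 2(n-1)\mr{tr}_{\gm}(\gs) = 0$ should instead be obtained directly by contracting your tangential relation with $\gm^{ab}$. Once this change is made, your elimination step yields $\gs = -\isch$ exactly as claimed, and your remarks about the role of $n\geq 3$ are accurate.
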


\subsection{The Static Case}

We now further specialize to the case of static boundaries.
More specifically, we assume our conformal boundary has the form
\begin{equation}
\label{eq.aads_static} \mc{I} := ( 0, \pi T ) \times \mc{S} \text{,} \qquad \gm := - dt^2 + \gamma \text{,}
\end{equation}
where $( \mc{S}, \gamma )$ is an $(n - 1)$-dimensional Riemannian manifold, with $n \geq 3$.
Below, we show that in this setting, the pseudoconvexity property of Definition \ref{def.pcp} can be directly connected to positivity of the Ricci curvature $\mc{R}ic$ of $( \mc{S}, \gamma )$:

\begin{proposition} \label{thm.pseudoconvex_vacuum}
Suppose $n \geq 3$, and suppose $( \mc{M}, g )$ satisfies \eqref{eq.einstein_vacuum} and \eqref{eq.aads_static}.
If $\mc{R}ic \geq (n - 2) C$ uniformly on $\mc{S}$ for some $C > 0$, then:
\begin{enumerate}
\item For any $c \in ( 0, C )$, the following is uniformly positive definite:
\begin{equation}
\label{eq.pseudoconvex_vacuum} - \gs - c^2 dt^2 + \left[ \frac{1}{ 2 (n - 1) (n - 2) } \mc{R} - \frac{1}{2} ( C^2 + c^2 ) \right] \cdot \gm \text{,}
\end{equation}
where $\mc{R}$ denotes the scalar curvature of $\gamma$.

\item The pseudoconvexity property holds whenever $T > C^{-1}$, i.e., whenever $( \mc{M}, g )$ has a time span of greater than $C^{-1} \pi$.
\end{enumerate}

On the other hand, if $\mc{R}ic \leq 0$ at any point of $\mc{S}$, then the pseudoconvexity property, as expressed in Definition \ref{def.pcp}, fails to hold.\footnote{Here, we stress that the failure of the pseudoconvexity property only implies that the foliation \emph{defined by the particular $f$ in \eqref{eq.f}} needs not be pseudoconvex. In particular, this needs not imply that the local unique continuation from $\mc{I}$ must fail. See also the remark following the proof.}
\end{proposition}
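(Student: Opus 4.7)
The plan is to translate the pseudoconvexity criterion of Definition \ref{def.pcp} directly into a curvature condition on the spatial cross-section $(\mc{S}, \gamma)$, by writing $-\gs = \isch$ explicitly in the product gauge. First I would compute the Schouten tensor of $\gm = -dt^2 + \gamma$. Since $\gamma$ is $t$-independent and $\partial_t$ is parallel with respect to $\gm$, the Ricci tensor of $\gm$ has components $R\mathring{i}c_{tt} = 0$, $R\mathring{i}c_{tA} = 0$, $R\mathring{i}c_{AB} = \mc{R}ic_{AB}$, and the scalar curvatures satisfy $\mathring{R} = \mc{R}$. Plugging into \eqref{eq.schouten} and invoking Proposition \ref{thm.einstein_vacuum} gives
\begin{equation*}
-\gs_{tt} = \frac{\mc{R}}{2(n-1)(n-2)}, \quad -\gs_{tA} = 0, \quad -\gs_{AB} = \frac{1}{n-2}\mc{R}ic_{AB} - \frac{\mc{R}}{2(n-1)(n-2)}\gamma_{AB}.
\end{equation*}

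For part $(1)(a)$, I would simply substitute these into the tensor \eqref{eq.pseudoconvex_vacuum} and observe that the scalar-curvature contributions cancel in each block. The $tt$-component becomes $\tfrac{1}{2}(C^2 - c^2) > 0$; the $tA$-component vanishes; and the tangential $AB$-block reduces to $\tfrac{1}{n-2}\mc{R}ic_{AB} - \tfrac{1}{2}(C^2 + c^2)\gamma_{AB}$, which is uniformly positive definite by the Ricci lower bound. For part $(1)(b)$, the key observation is that the tensor in \eqref{eq.pseudoconvex_vacuum} coincides with $\mathbf{Q}_{0,\zeta}$ from Definition \ref{def.pcp} upon setting $\mu = c$ and $\zeta = \tfrac{1}{2}(C^2+c^2) - \tfrac{\mc{R}}{2(n-1)(n-2)}$; the latter is smooth and bounded because $\mc{R}$ is. Given $T > C^{-1}$, I would choose $c = T^{-1} < C$, which matches $\mu = T^{-1}$ — the correct value in the static limit $\xi \to 0$ of \eqref{eq.mu}, since $\arctan(4\mu/\xi) \to \pi/2$ gives $T = 1/\mu$. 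Part $(1)(a)$ then supplies the required positivity, verifying Definition \ref{def.pcp}.

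For part $(2)$, I would revisit the algebraic criterion for $\mathbf{Q}_{0,\zeta}$ to be positive definite. Since the tensor is block-diagonal in $(\partial_t, \partial_A)$, positivity of the $tt$- and $AB$-blocks separately yields the pair of inequalities
\begin{equation*}
\mu^2 - \frac{\mc{R}}{2(n-1)(n-2)} < \zeta < \frac{\lambda_{\min}(\mc{R}ic)}{n-2} - \frac{\mc{R}}{2(n-1)(n-2)},
\end{equation*}
where $\lambda_{\min}$ denotes the smallest eigenvalue with respect to $\gamma$. Compatibility forces $\mu^2 < \lambda_{\min}(\mc{R}ic)/(n-2)$ at every point of $\mc{I}$. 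If $\mc{R}ic \leq 0$ somewhere, the right-hand side is nonpositive there, contradicting $\mu > 0$; so no admissible $\zeta$ can exist.

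The argument is almost entirely algebraic, so there is no serious obstacle. The one step that warrants care is the bookkeeping in the Schouten computation — in particular, verifying that the two scalar-curvature terms cancel in both the $tt$- and $AB$-components of \eqref{eq.pseudoconvex_vacuum}, and confirming that the $\mu$-value implied by the $\xi=0$ limit of \eqref{eq.mu} is indeed $T^{-1}$, so that the connection between \eqref{eq.pseudoconvex_vacuum} and Definition \ref{def.pcp} is exact rather than approximate.
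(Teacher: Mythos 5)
Your proof follows the same route as the paper: compute the Schouten tensor of the product metric $\gm = -dt^2 + \gamma$ explicitly (the Ricci identities $R\mathring{i}c_{tt} = R\mathring{i}c_{tA} = 0$, $R\mathring{i}c_{AB} = \mc{R}ic_{AB}$, $\mathring{R} = \mc{R}$ are correct), substitute $-\gs = \isch$, and observe that the scalar-curvature terms cancel block by block. Your identification of $\eqref{eq.pseudoconvex_vacuum}$ with $\mathbf{Q}_{0,\zeta}$ by taking $\mu = c = T^{-1}$ and $\zeta = \tfrac{1}{2}(C^2+c^2) - \tfrac{\mc{R}}{2(n-1)(n-2)}$, together with the limiting computation $\arctan(4\mu/\xi) \to \pi/2$ giving $\mu = T^{-1}$, is exactly the paper's argument.

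Two small points. First, for part (2) you only treat $\xi = 0$. The paper's own argument is $\xi$-free: it notes that $-\gs - \zeta\gm$ already fails to be positive-definite at $Q$, and since $\mathbf{Q}_{\xi,\zeta} = (-\gs - \zeta\gm) - (\mu^2 + \xi^2/16)\,dt^2$ differs only by a term that lowers the $tt$-component, this rules out all $\xi \geq 0$ at once. Your compatibility argument works equally well for general $\xi$ (one just replaces $\mu^2$ by $\mu^2 + \xi^2/16 > 0$, still strictly positive), but you should say so, since Definition \ref{def.pcp} quantifies over $\xi \geq 0$. Second, your claim that the $AB$-block $\tfrac{1}{n-2}\mc{R}ic_{AB} - \tfrac{1}{2}(C^2+c^2)\gamma_{AB}$ is ``uniformly positive definite by the Ricci lower bound'' hides a mismatch: under the literal hypothesis $\mc{R}ic \geq (n-2)C\gamma$ one only gets the lower bound $\bigl[C - \tfrac{1}{2}(C^2+c^2)\bigr]\gamma$, which for $c$ near $C$ is $\approx C(1-C)\gamma$ and can be negative when $C > 1$. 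The bound that actually makes the block uniformly positive for all $c \in (0,C)$ is $\mc{R}ic \geq (n-2)C^2\gamma$, giving $\tfrac{1}{2}(C^2 - c^2)\gamma > 0$; this is consistent with the $T > C^{-1}$ threshold and with the AdS case $C=1$. This appears to be a slip in the statement rather than in your reasoning (the paper's own proof does not spell out the positivity check either), but your sentence is papering over a computation that, as written, does not close.
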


\begin{proof}
Observe that direct computations yield
\begin{align}
\label{eq.schouten_static} \isch_{t A} &\equiv 0 \text{,} \qquad \isch_{t t} = \frac{1}{ 2 (n - 1) (n - 2) } \mc{R} \text{,} \\
\notag \isch_{A B} &= \frac{1}{n - 2} \mc{R}ic_{A B} - \frac{1}{ 2 (n - 1) (n - 2) } \mc{R} \cdot \gamma_{A B} \text{,}
\end{align}
from which we obtain
\begin{equation}
\label{eq.schouten_main} \mathring{P} + \frac{1}{ 2 (n - 1) (n - 2) } \mc{R} \cdot \gm = 0 dt^2 + \frac{1}{n - 2} \mc{R}ic_{A B} d x^A d x^B \text{,}
\end{equation}
from which \eqref{eq.pseudoconvex_vacuum} follows.
In particular, \eqref{eq.pseudoconvex_vacuum} implies that Definition \ref{def.pcp} is indeed satisfied whenever $\xi = 0$ and $T := c^{-1} > C^{-1}$.

Finally, if $\mc{R}ic \leq 0$ at some point $Q \in \mc{S}$, then we can see that at $Q$,
\[
- \gs - \zeta \gm = \frac{1}{n - 2} \mc{R}ic_{A B} d x^A d x^B - \left[ \frac{1}{ 2 (n - 1) (n - 2) } \mc{R} + \zeta \right] \cdot \gm 
\]
cannot be made positive-definite for any $\zeta$, hence the pseudoconvexity property is violated.
This completes the proof of the proposition.
\end{proof}

\begin{remark}
Like for Definition \ref{def.pcp}, the positivity of $\mc{R}ic$ is a gauge-dependent property which is not necessarily preserved by a conformal transformation of $\gm$; see the remark below Theorem \ref{theo:mti}.
\end{remark}

In particular, in the classical gravity setting $n = 3$, we have
\begin{equation}
\label{eq.curvature_3} \mc{R}ic_{A B} = \mc{K} \cdot \gamma_{A B} \text{,}
\end{equation}
where $\mc{K}$ is the Gauss curvature of $( \mc{S}, \gamma )$.
Proposition \ref{thm.pseudoconvex_vacuum} and \eqref{eq.curvature_3} imply:

\begin{corollary} \label{thm.pseudoconvex_vacuum_class}
Suppose $n = 3$ and $( \mc{M}, g )$ satisfies \eqref{eq.einstein_vacuum} and \eqref{eq.aads_static}.
\begin{enumerate}
\item If $\mc{K} \geq C > 0$ uniformly on $\mc{S}$, then the pseudoconvexity property holds whenever $T > C^{-1}$, that is, when $( \mc{M}, g )$ has time span greater than $C^{-1} \pi$.

\item If $\mc{K} \leq 0$ at any point of $\mc{S}$, then the pseudoconvexity property, as expressed in Definition \ref{def.pcp}, fails to hold.
\end{enumerate}
\end{corollary}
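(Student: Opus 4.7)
The plan is to deduce the corollary directly from Proposition \ref{thm.pseudoconvex_vacuum} by specializing to $n=3$ and translating its Ricci-curvature hypothesis into a Gauss-curvature hypothesis on the $2$-dimensional cross-section $\mc{S}$. The essential input is identity \eqref{eq.curvature_3}, which in the classical gravity dimension $n=3$ converts the tensorial curvature condition into a scalar one. Since Proposition \ref{thm.pseudoconvex_vacuum} has already done the heavy lifting (connecting pseudoconvexity of the $f$-foliation to positivity of the Schouten tensor of $\gm$, and in turn to positivity of $\mathring{Ric}$ via the Einstein equation), there is no new geometric input needed here.

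For part (1), I would take $n=3$ in Proposition \ref{thm.pseudoconvex_vacuum}, so that the hypothesis $\mathring{R}ic \geq (n-2)\, C \cdot \gamma = C \cdot \gamma$ becomes, after applying \eqref{eq.curvature_3}, the condition $\mc{K}\,\gamma \geq C\,\gamma$ on the $2$-dimensional Riemannian manifold $(\mc{S},\gamma)$. Since $\gamma$ is positive-definite, this tensorial inequality is equivalent pointwise to the scalar bound $\mc{K} \geq C$. Conclusion (2) of Proposition \ref{thm.pseudoconvex_vacuum} then yields the pseudoconvexity property whenever the time span satisfies $T > C^{-1}$, which is exactly the claimed statement. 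No quantitative constants need to be adjusted in this translation.

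For part (2), I would argue by the contrapositive portion of Proposition \ref{thm.pseudoconvex_vacuum}. The hypothesis $\mc{K}\leq 0$ at a point $Q \in \mc{S}$ translates, again via \eqref{eq.curvature_3}, to $\mathring{R}ic_{AB}|_Q = \mc{K}(Q)\,\gamma_{AB}|_Q \leq 0$ (in the sense that the quadratic form is non-positive on tangent vectors at $Q$). The final assertion of Proposition \ref{thm.pseudoconvex_vacuum}, which states that pseudoconvexity fails whenever $\mathring{R}ic \leq 0$ somewhere on $\mc{S}$, then immediately rules out the pseudoconvexity property in the sense of Definition \ref{def.pcp}.

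There is no genuine obstacle to overcome; the only point worth verifying carefully is that the translation between $\mathring{R}ic$ and $\mc{K}$ via \eqref{eq.curvature_3} is \emph{pointwise} and preserves both the quantitative lower bound in part (1) and the non-positivity condition in part (2). This is immediate from the fact that $\gamma$ is positive-definite at every point, so eigenvalue comparisons for $\mathring{R}ic$ relative to $\gamma$ reduce to sign comparisons for the scalar $\mc{K}$.
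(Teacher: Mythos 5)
Your proposal is correct and follows exactly the route the paper intends: the paper simply observes that Proposition \ref{thm.pseudoconvex_vacuum} together with \eqref{eq.curvature_3} yields the corollary, and your step-by-step translation (using that $\gamma$ is positive-definite to convert the tensorial bounds on $\mc{R}ic$ into scalar bounds on $\mc{K}$, and noting $(n-2)=1$ when $n=3$) is precisely what this implication amounts to.
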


Note for AdS (as well as Kerr-AdS) spacetime, we have $\mc{K} \equiv 1$, hence the pseudoconvex condition holds for time intervals of length greater than $\pi$.

\section{A Modified Carleman Estimate} \label{sec.extra}

In this appendix, we establish, under additional assumptions, a technical refinement of the Carleman estimate of Theorem \ref{thm.carleman}.
This will be applied in upcoming works toward proving symmetry extension results; see \cite{hol_shao:symm_ext}.

More specifically, for ``non-borderline" values of $\kappa$, we can establish Carleman estimates applicable to wave equations with first order terms that:
\begin{itemize}
\item Are also small with respect to an $L^\infty$-norm, but

\item Decay slightly less than was allowed in Theorem \ref{thm.carleman}.
\end{itemize}
The precise statement of the refined estimate is given below:

\begin{theorem} \label{thm.carleman_extra}
Assume the hypotheses of Theorem \ref{thm.carleman}, and suppose also that
\begin{equation}
\label{eq.kappa_extra} \kappa > \kappa_0 \text{,} \qquad \kappa_0 := \begin{cases} \frac{ n - 2 }{2} + \sqrt{ \frac{ n^2 }{4} - \sigma } & \kappa \leq \frac{ n^2 - 1 }{4} \text{,} \\ \frac{ n - 1 }{2} & \kappa > \frac{ n^2 - 1 }{4} \text{.} \end{cases}
\end{equation}
In addition, let $\mc{X}$ be a vector field on $\mc{M}$ whose components satisfy
\begin{equation}
\label{eq.X_extra} | \mc{X}^\rho | \leq \varepsilon \rho^2 \text{,} \qquad | \mc{X}^t | \leq \varepsilon \rho^2 \text{,} \qquad | g ( \mc{X}, E_X ) | \leq \varepsilon \rho \text{,}
\end{equation}
where $E_X$ denotes any of the frame elements from Definition \ref{def.frame}, and where $\varepsilon > 0$ is sufficiently small with respect to $K$, $\kappa - \kappa_0$.

Then, there exist constants $C, \mc{C} > 0$, depending on $g$, $p$, $K$, and $\kappa - \kappa_0$, such that for any $\sigma \in \R$ and $\lambda \in [1 + \kappa, \infty)$, and for any $\phi \in \Gamma \ul{T}^0_l \mc{M}$ satisfying the conditions listed in Theorem \ref{thm.carleman}, the following inequality holds:
\begin{align}
\label{eq.carleman_extra} &\int_{ \Omega_{ f_0, \rho_0 } } f^{n - 2 - 2 \kappa} e^\frac{ - 2 \lambda f^p }{p} f^{-p} | ( \Box + \sigma + \nabla_{ \mc{X} } ) \phi |^2 \\
\notag &\qquad + \mc{C} \lambda ( \lambda^2 + | \sigma | ) \int_{ \{ \rho = \rho_0 \} } [ | \nabla_t ( \rho^{ - \kappa } \phi ) |^2 + | \nabla_\rho ( \rho^{ - \kappa } \phi ) |^2 + | \rho^{- \kappa - 1} \phi |^2 ] d \gm \\
\notag &\quad \geq C \lambda \int_{ \Omega_{ f_0, \rho_0 } } f^{n - 2 - 2 \kappa} e^\frac{ - 2 \lambda f^p }{p} ( \rho^4 | \nabla_t \phi |^2 + \rho^4 | \nabla_\rho \phi |^2 + \rho^2 | \nasla \phi |^2 ) \\
\notag &\quad \qquad + C \lambda \int_{ \Omega_{ f_0, \rho_0 } } f^{n - 2 - 2 \kappa} e^\frac{ - 2 \lambda f^p }{p} ( 1 + \lambda^2 f^{2p} ) | \phi |^2 \text{.}
\end{align}
\end{theorem}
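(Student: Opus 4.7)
The plan is to revisit the proof of Theorem \ref{thm.carleman}, viewing $\nabla_{\mc{X}}$ as a perturbative lower-order term that must be absorbed into the coercive bulk terms produced by Lemma \ref{thm.psi_est}. With $\psi := e^{-F}\phi$, the conjugated operator becomes
\[
\tilde{\mc{L}}\psi := e^{-F}(\Box + \sigma + \nabla_{\mc{X}})(e^F \psi) = \mc{L}\psi + \nabla_{\mc{X}} \psi + (\nabla_{\mc{X}} F) \psi \text{.}
\]
From the hypothesis \eqref{eq.X_extra} and the frame expansions \eqref{eq.frame_NV}, one has $|g(\mc{X}, N)|, |g(\mc{X}, V)|, |g(\mc{X}, E_X)| \lesssim \varepsilon \rho$, hence $|\nabla_{\mc{X}} \psi| \lesssim \varepsilon \rho (|\nabla_N \psi| + |\nabla_V \psi| + |\nasla \psi|)$. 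Using \eqref{eq.f_deriv} and $F' = \kappa f^{-1} + \lambda f^{p - 1}$, a direct computation gives $|\nabla_{\mc{X}} F| \lesssim \varepsilon \rho \lambda$. The strict inequality $\kappa > \kappa_0$ ensures that both $\delta := \kappa^2 - (n-2)\kappa + \sigma - (n-1)$ and $2\kappa - n + 1$ are strictly positive.

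The central computation multiplies $\tilde{\mc{L}} \psi$ by $S_{\xi,\zeta} \psi$ and splits
\[
\tilde{\mc{L}} \psi \cdot S_{\xi, \zeta} \psi = \mc{L} \psi \cdot S_{\xi, \zeta} \psi + \nabla_{\mc{X}} \psi \cdot S_{\xi, \zeta} \psi + (\nabla_{\mc{X}} F) \psi \cdot S_{\xi, \zeta} \psi \text{.}
\]
The first term, by the chain of identities leading to \eqref{eq.hardy_2}, lies above a sum of strictly coercive bulk terms, with leading coefficients $(2\kappa - n + 1) f^{n - 2}$ and $\lambda f^{n - 2 + p}$ on $|\nabla_N \psi|^2$, $K f^{n - 2} \rho^2$ on $|\nabla_V \psi|^2$ and $|\nasla \psi|^2$, and $\delta f^{n - 2}$, $\lambda f^{n - 2 + p}$, $\lambda^2 f^{n - 2 + 2p}$ on $|\psi|^2$. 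The cross terms are absorbed via Young inequalities with weights tailored to each piece: the off-diagonal contribution $\varepsilon \rho f^{n - 2} |\nabla_V \psi| |\nabla_N \psi|$ of $\nabla_{\mc{X}} \psi \cdot \nabla_S \psi$ is absorbed via Young with weight $\sim 2 \varepsilon^2 f^{n - 2} / K$ into $K f^{n - 2} \rho^2 |\nabla_V \psi|^2$ and $(2\kappa - n + 1) f^{n - 2} |\nabla_N \psi|^2$, requiring $\varepsilon^2 \lesssim K (2\kappa - n + 1)$; the cross piece $\varepsilon \rho \lambda f^{n - 2} |\psi| |\nabla_N \psi|$ of $(\nabla_{\mc{X}} F) \psi \cdot \nabla_S \psi$ is absorbed via Young with weight $\sim \lambda f^{n - 2 + p}$ into $\lambda f^{n - 2 + p} |\nabla_N \psi|^2$ and $\lambda f^{n - 2 + p} |\psi|^2$, leaving a residue of size $\varepsilon^2 \rho^2 f^{-2p} \lesssim \varepsilon^2 f^{2 - 2p}$, which is small for $p < 1$ and $f \ll_g 1$. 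The remaining cross pieces involving $h_{\xi, \zeta}$ are higher-order in $f$ and absorbed analogously.

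After absorption, one recovers a pointwise estimate for $\tilde{\mc{L}} \psi \cdot S_{\xi, \zeta} \psi$ of the same structure as \eqref{eq.conj} but with halved constants. Completing the Hardy step \eqref{eq.hardy_ptwise} and the Cauchy--Schwarz step \eqref{eq.hardy_21} of Lemma \ref{thm.psi_est} (now applied to $\tilde{\mc{L}} \psi$), then transitioning to $\phi$ via Lemma \ref{thm.phi_est} and integrating as in Section \ref{sec.proof_carleman_integral}, yields the target estimate \eqref{eq.carleman_extra}. The new $C \lambda \int E^p_{\kappa, \lambda} |\phi|^2$ contribution on the right-hand side (the ``$1$'' inside the factor $1 + \lambda^2 f^{2p}$) arises precisely from the $\delta f^{n - 2} |\psi|^2$ term after conversion to $\phi$ and multiplication by $\lambda$. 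The main obstacle is the delicate choice of Young weights: without the strict inequality $\kappa > \kappa_0$ (so that $\delta = 0$), the $|\psi|^2$ residue from absorbing $(\nabla_{\mc{X}} F) \psi \cdot S_{\xi, \zeta} \psi$ would have no $f^p$-unweighted coercive reservoir to land in, breaking the absorption scheme; the smallness of $\varepsilon$ with respect to $K$ and $\kappa - \kappa_0$ is exactly what the quantitative balance demands.
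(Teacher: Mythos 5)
The overall strategy matches the paper's: conjugate to $\tilde{\mc{L}} = \mc{L} + \nabla_{\mc{X}} + F'\,\mc{X}f$, pair against the multiplier $S_{\xi,\zeta}\psi$, and absorb the new cross terms by Young inequalities, using the strict inequality $\kappa > \kappa_0$ and the smallness of $\varepsilon$. However, there is a genuine gap in the central claim about the available coercive reservoirs.

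You assert that ``by the chain of identities leading to \eqref{eq.hardy_2}'' one obtains simultaneously a coefficient $(2\kappa - n + 1)f^{n-2}$ on $|\nabla_N\psi|^2$ and a coefficient $\delta f^{n-2}$ (with $\delta = \kappa^2 - (n-2)\kappa + \sigma - (n-1)$) on $|\psi|^2$. These cannot coexist: in \eqref{eq.conj}, before the Hardy step, the coefficient of $|\nabla_N\psi|^2$ is $(2\kappa - n + 1)f^{n-2} + 2\lambda f^{n-2+p}$ but the coefficient of $|\psi|^2$ is $(\kappa^2 - n\kappa + \sigma)f^{n-2}$ — which is generically negative for $\kappa$ near $\kappa_0$ — while after the full Hardy application in \eqref{eq.hardy_2}, the $|\psi|^2$ coefficient is indeed $\delta f^{n-2}$ but the $(2\kappa - n + 1)f^{n-2}$ piece on $|\nabla_N\psi|^2$ has been entirely consumed by Hardy, leaving only $\lambda f^{n-2+p}$. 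This matters: absorbing the off-diagonal piece of $I_1 = \nabla_{\mc{X}}\psi \cdot \nabla_S\psi$ into $Kf^{n-2}\rho^2(|\nabla_V\psi|^2 + |\nasla\psi|^2)$ by Young produces a residue $\sim (\varepsilon^2/K)\,f^{n-2}|\nabla_N\psi|^2$, and the reservoir $\lambda f^{n-2+p}|\nabla_N\psi|^2$ alone is not sufficient to absorb this uniformly in $f \to 0$ (one would need $\varepsilon^2 \lesssim \lambda f^p$, which fails for $f$ small and $\lambda = 1 + \kappa$). One genuinely needs an $\mathcal{O}(f^{n-2})$-strength reservoir on $|\nabla_N\psi|^2$.

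The paper resolves this by applying the Hardy step \eqref{eq.hardy_ptwise} to only a \emph{fraction} $B \in [0,1)$ of the $(2\kappa - n + 1)f^{n-2}|\nabla_N\psi|^2$ term, retaining $(1 - B)(2\kappa - n + 1)f^{n-2}|\nabla_N\psi|^2$ to absorb $I_1$ and $I_3$, while the converted portion $B(2\kappa - n + 1)f^{n-2}|\psi|^2$, added to $(\kappa^2 - n\kappa + \sigma)f^{n-2}|\psi|^2$, gives a positive coefficient on $|\psi|^2$ — see \eqref{eq.B_extra}. It is precisely the strict inequality $\kappa > \kappa_0$ that guarantees a $B < 1$ making both expressions simultaneously positive. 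Your proposal identifies the right role of $\kappa > \kappa_0$ and the dependence of $\varepsilon$ on $K$ and $\kappa - \kappa_0$, and the bounds on $|\nabla_{\mc{X}}\psi|$ and $F'\,\mc{X}f$ are correct; but without this $B$-interpolation of the Hardy step, the absorption scheme as written does not close.
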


\begin{remark}
With regards to unique continuation results, using the refined estimate \eqref{eq.carleman_extra}, one can remove the extra $\rho^p$ decay that was stipulated in \eqref{eq.uc_ads_rhs}.
However, this works only if the first-order coefficients in \eqref{eq.wave} are sufficiently small and if one chooses a non-optimal order of vanishing $\kappa$.
\end{remark}

\begin{remark}
Essentially, first-order terms in \eqref{eq.wave} with small coefficients that decay slightly less can be treated as a part of the operator on the left-hand side of \eqref{eq.carleman_extra}, rather than as an error term to be absorbed (as in the proof of Theorem \ref{thm.uc_ads}).
\end{remark}

As the proof of Theorem \ref{thm.carleman_extra} is largely similar to that of Theorem \ref{thm.carleman}, below we will only emphasize points where the two derivations differ.

\subsubsection{Proof of Theorem \ref{thm.carleman_extra}}

For convenience, we adopt the same notations as in the proof of Theorem \ref{thm.carleman}, in particular, \eqref{eq.F}-\eqref{eq.L}.
Define also the operator
\begin{equation}
\label{eq.L_extra} \tilde{\mc{L}} := e^{-F} ( \Box + \sigma + \nabla_{ \mc{X} } ) e^F = \mc{L} + \nabla_{ \mc{X} } + F' \cdot \mc{X} f \text{.}
\end{equation}
Computing as in the proof of Theorem \ref{thm.carleman}, we have
\begin{align}
\label{eq.carleman_extra_1} \tilde{\mc{L}} \psi^I S_{ \xi, \zeta } \psi_I &= \mc{L} \psi^I S_{ \xi, \zeta } \psi_I + \nabla_{ \mc{X} } \psi^I \nabla_S \psi_I + F' \mc{X} f \cdot \psi^I \nabla_S \psi_I \\
\notag &\qquad + h_{ \xi, \zeta } \psi^I \nabla_{ \mc{X} } \psi_I + h_{ \xi, \zeta } F' \mc{X} f \cdot | \psi |^2 \\
\notag &= \mc{L} \psi^I S_{ \xi, \zeta } \psi_I + I_1 + I_2 + I_3 + I_4 \text{.}
\end{align}
The first term on the right-hand side of \eqref{eq.carleman_extra_1} can be handled precisely as before, yielding \eqref{eq.conj}.
Combining all this results in the inequality
\begin{align}
\label{eq.conj_extra} \tilde{\mc{L}} \psi^I S_{\xi, \zeta} \psi_I &\geq [ ( 2 \kappa - n + 1 ) f^{n - 2} + 2 \lambda f^{n - 2 + p} + \lambda \cdot \mc{O}_0 (f^n) ] | \nabla_N \psi |^2 \\
\notag &\qquad + [ K f^{n - 2} \rho^2 + \mc{O}_0 ( f^{n - 1} \rho^2 ) ] ( | \nabla_V \psi |^2 + | \nasla \psi |^2 ) \\
\notag &\qquad + \left[ ( \kappa^2 - n \kappa + \sigma ) f^{n - 2} + \frac{2 - p}{2} \lambda ( 2 \kappa - n + p ) f^{n - 2 + p} \right] | \psi |^2 \\
\notag &\qquad + [ (1 - p) \lambda^2 f^{n - 2 + 2p} + \lambda^2 \cdot \mc{O}_0 ( f^n ) ] | \psi |^2 + \nabla^\beta ( P^Q_\beta + P^S_\beta ) \\
\notag &\qquad + I_1 + I_2 + I_3 + I_4 \text{.}
\end{align}

Now, \eqref{eq.kappa_extra} implies that there is some $0 \leq B < 1$, depending on $\kappa - \kappa_0$, with
\begin{equation}
\label{eq.B_extra} ( 1 - B ) ( 2 \kappa - n + 1 ) > 0 \text{,} \qquad ( \kappa^2 - n \kappa + \sigma ) + B ( 2 \kappa - n + 1 ) > 0 \text{.}
\end{equation}
We again apply \eqref{eq.hardy_ptwise} to the terms $f^{n - 2} | \nabla_N \psi |^2$ and $f^{n - 2 + p} | \nabla_N \psi |^2$ in the right-hand side of \eqref{eq.conj_extra}, except we also multiply the former term by $B$.
Defining
\begin{equation}
\label{eq.P_natural_extra} \tilde{P}^H_\beta := B ( 2 \kappa - n + 1 ) f^{n - 3} \nabla_\beta f \cdot | \psi |^2 + \frac{2 - p}{2} \lambda f^{n - 3 + p} \nabla_\beta f \cdot | \psi |^2 \text{,}
\end{equation}
then these computations yield the inequality
\begin{align}
\label{eq.hardy_extra} \tilde{\mc{L}} \psi^I S_{\xi, \zeta} \psi_I &\geq \nabla^\beta ( P^Q_\beta + P^S_\beta + \tilde{P}^H_\beta ) + ( 1 - B ) ( 2 \kappa - n + 1 ) f^{n - 2} | \nabla_N \psi |^2 \\
\notag &\qquad + [ \lambda f^{n - 2 + p} + \lambda \cdot \mc{O}_0 (f^n) ] | \nabla_N \psi |^2 \\
\notag &\qquad + [ K f^{n - 2} \rho^2 + \mc{O}_0 ( f^{n - 1} \rho^2 ) ] ( | \nabla_V \psi |^2 + | \nasla \psi |^2 ) \\
\notag &\qquad + [ ( \kappa^2 - n \kappa + \sigma ) + B ( 2 \kappa - n + 1 ) ] f^{n - 2} | \psi |^2 \\
\notag &\qquad + \frac{2 - p}{2} \, \lambda \left( 2 \kappa - n + 1 + \frac{p}{2} \right) f^{n - 2 + p} | \psi |^2 \\
\notag &\qquad + [ (1 - p) \lambda^2 f^{n - 2 + 2p} + \lambda^2 \cdot \mc{O}_0 ( f^n ) ] | \psi |^2 \\
\notag &\qquad + I_1 + I_2 + I_3 + I_4 \text{.}
\end{align}

Note that by \eqref{eq.X_extra}, along with computations in the proof of Lemma \ref{thm.phi_est}, we have
\begin{align}
\label{eq.X_extra_1} | \nabla_{ \mc{X} } \psi |^2 &\leq \varepsilon^2 ( \rho^4 | \nabla_\rho \psi |^2 + \rho^4 | \nabla_t \psi |^2 + \rho^2 | \nasla \psi |^2 ) \\
\notag &\lesssim \varepsilon^2 \rho^2 ( | \nabla_N \psi |^2 + | \nabla_V \psi |^2 + | \nasla \psi |^2 ) \text{.}
\end{align}
Thus, recalling \eqref{eq.S}, \eqref{eq.f_grad}, and \eqref{eq.X_extra_1}, we have
\begin{align}
\label{eq.I1_extra} | I_1 | &\leq [ f^{n - 2} + \mc{O}_0 ( f^n ) ] | \nabla_{ \mc{X} } \psi | | \nabla_N \psi | \\
\notag &\leq \varepsilon^2 D_1 f^{n - 2} \rho^2 ( | \nabla_N \psi |^2 + | \nabla_V \psi |^2 + | \nasla \psi |^2 ) \\
\notag &\qquad + \frac{ ( 1 - B ) ( 2 \kappa - n + 1 ) }{2} [ f^{ n - 2 } + \mc{O}_0 ( f^n ) ] | \nabla_N \psi |^2 \\
\notag &\leq \varepsilon^2 D_1 f^{n - 2} \rho^2 ( | \nabla_V \psi |^2 + | \nasla \psi |^2 ) \\
\notag &\qquad + \left[ \frac{ ( 1 - B ) ( 2 \kappa - n + 1 ) }{2} f^{ n - 2 } + \mc{O}_0 ( f^n ) \right] | \nabla_N \psi |^2 \text{.}
\end{align}
Similarly, recalling also \eqref{eq.h_est}, we obtain
\begin{align}
\label{eq.I3_extra} | I_3 | &\leq \mc{O}_0 ( f^n ) \cdot | \psi | | \nabla_{ \mc{X} } \psi | \\
\notag &\leq \varepsilon^2 D_1 f^{n - 2} \rho^2 ( | \nabla_N \psi | + | \nabla_V \psi |^2 + | \nasla \psi |^2 ) + \mc{O}_0 ( f^n ) \cdot | \psi |^2 \\
\notag &\leq \varepsilon^2 D_1 f^{n - 2} \rho^2 ( | \nabla_V \psi |^2 + | \nasla \psi |^2 ) + \mc{O}_0 ( f^n ) \cdot ( | \nabla_N \psi |^2 + | \psi |^2 ) \text{.}
\end{align}
In particular, as long as $\varepsilon$ is sufficiently small with respect to $K$ and $\kappa - \kappa_0$, then $I_1$ and $I_3$ can be absorbed into positive terms on the right-hand side of \eqref{eq.hardy_extra}.

For the remaining terms, we also note from \eqref{eq.f_deriv}, \eqref{eq.F_deriv}, and \eqref{eq.X_extra} that
\begin{align}
\label{eq.X_extra_2} | F' \mc{X} f | &= ( \kappa f^{-1} + \lambda f^{-1 + p} ) \cdot \varepsilon \rho^2 \cdot \mc{O}_0 ( f \rho^{-1} ) \\
\notag &= \mc{O}_0 ( \rho ) + \lambda \cdot \mc{O}_0 ( \rho f^p ) \text{.}
\end{align}
Using \eqref{eq.X_extra_2} and similar bounds as before, we then control
\begin{align}
\label{eq.I24_extra} | I_2 | &\leq \lambda \cdot \mc{O}_0 ( \rho ) \cdot | \psi | | \nabla_S \psi | \\
\notag &\lesssim \lambda \cdot \mc{O}_0 ( f^{ n - 1 } ) \cdot ( | \nabla_N \psi |^2 + | \psi |^2 ) \text{,} \\
\notag | I_4 | &\leq \lambda \cdot \mc{O}_0 ( f^n \rho ) \cdot | \psi |^2 \text{.}
\end{align}
Combining \eqref{eq.hardy_extra} with \eqref{eq.I1_extra}-\eqref{eq.I24_extra} and letting $\lambda$ be sufficiently large yields
\begin{align}
\label{eq.hardy_extra_1} \tilde{\mc{L}} \psi^I S_{\xi, \zeta} \psi_I &\geq \nabla^\beta ( P^Q_\beta + P^S_\beta + \tilde{P}^H_\beta ) + \left[ \frac{1}{2} \lambda f^{n - 2 + p} + \lambda \cdot \mc{O}_0 (f^n) \right] | \nabla_N \psi |^2 \\
\notag &\qquad + \left[ \frac{1}{2} K f^{n - 2} \rho^2 + \mc{O}_0 ( f^{n - 1} \rho^2 ) \right] ( | \nabla_V \psi |^2 + | \nasla \psi |^2 ) \\
\notag &\qquad + [ ( \kappa^2 - n \kappa + \sigma ) + B ( 2 \kappa - n + 1 ) ] f^{n - 2} | \psi |^2 \\
\notag &\qquad + \frac{2 - p}{4} \, \lambda \left( 2 \kappa - n + 1 + \frac{p}{2} \right) f^{n - 2 + p} | \psi |^2 \\
\notag &\qquad + [ (1 - p) \lambda^2 f^{n - 2 + 2p} + \lambda^2 \cdot \mc{O}_0 ( f^n ) ] | \psi |^2 \text{.}
\end{align}

Estimating as in the proof of Lemma \ref{thm.psi_est} results in the following variant of \eqref{eq.psi_est}:
\begin{align}
\label{eq.psi_est_extra} \lambda^{-1} f^{n - 2 - p} | \tilde{\mc{L}} \psi |^2 &\geq C \lambda f^{n - 2 + p} | \nabla_N \psi |^2 + C f^{n - 2} \rho^2 ( | \nabla_V \psi |^2 + | \nasla \psi |^2 ) \\
\notag &\qquad + C ( f^{n - 2} + \lambda f^{n - 2 + p} + \lambda^2 f^{n - 2 + 2p} ) | \psi |^2 \\
\notag &\qquad + \nabla^\beta ( P^Q + P^S + \tilde{P}^H )_\beta \text{,}
\end{align}
From this point, the proof of Theorem \ref{thm.carleman_extra} proceeds entirely analogously to that of Theorem \ref{thm.carleman}, hence we omit the details here.

\raggedright
\bibliographystyle{amsplain}
\bibliography{ads}

\end{document}